\numberwithin{equation}{section}
\numberwithin{equation}{section}
\theoremstyle{plain}
\newtheorem{Theorem}{Theorem}[section]
\newtheorem{Lemma}{Lemma}
\newtheorem{Proposition}{Proposition}
\newtheorem{Corollary}{Corollary}
\theoremstyle{definition}
\newtheorem{Definition}{Definition}
\newtheorem{Remark}{Remark}
\newcommand\crule[3][black]{\textcolor{#1}{\rule{#2}{#3}}}
\newcommand{\ra}{\rightarrow}
\newcommand{\lra}{\longrightarrow}
\newcommand{\la}{\leftarrow}
\newcommand{\Ra}{\Rightarrow}
\newcommand{\La}{\Leftarrow}
\newcommand{\Lra}{\Longrightarrow}
\newcommand{\midsp}{\;|\;}
\newcommand{\sub}[2]{#1_{{}_{#2}}}
\newcommand{\telos}{\hfill$\Box$}
\newcommand{\type}[1]{{\tt #1}}
\newcommand{\iso}{\backsimeq}
\newcommand{\val}[1]{\mbox{$[\![#1]\!]$}}
\newcommand{\forces}{\Vdash}
\newcommand{\dforces}{\forces^{\!\!\partial}}
\newcommand{\yvval}[1]{\mbox{$(\!|#1|\!) $}}
\newcommand{\infrule}[2]{\frac{\mbox{\rm $#1$}}{\mbox{\rm $#2$}}}
\newcommand{\proves}{\vdash}
\newcommand{\vproves}{\mbox{$\medvert\!\!\!\!\sim\;$}}
\newcommand{\vmodels}{\mbox{$\medvert\!\!\!\!\approx\;$}}
\newcommand{\zmodels}{\mbox{$\medvert\!\!\!\!\eqsim\;$}}
\newcommand{\upv}{\upVdash}
\newcommand{\rperp}{\mbox{${}^{\upv}$}}
\newcommand{\gphi}{{\mathcal  G}(W_\partial)}
\newcommand{\gpsi}{{\mathcal  G}(W_1)}
\newcommand{\lperp}{{}\rperp}
\newcommand{\lbdiamond}{\raisebox{-2pt}{\mbox{\Huge {$\filleddiamond$}}}}
\newcommand{\lbbox}{\raisebox{-1.2pt}[0pt][0pt]{\crule[black]{0.27cm}{0.27cm}}\hspace*{1pt}}
\newcommand{\stx}[2]{\mbox{ST$_{#1}(#2)$}}
\newcommand{\sty}[2]{\mbox{ST$_{#1}(#2)$}}
\newcommand{\filt}{\mbox{\rm Filt}}
\newcommand{\idl}{\mbox{\rm Idl}}
\newcommand{\lfspoon}{\leftfilledspoon}
\newcommand{\rfspoon}{\rightfilledspoon}
\newcommand{\tright}{\triangleright}
\newcommand{\tleft}{\triangleleft}
\newcommand{\Mtright}{\mathrel{\mbox{$|\!\!\largetriangleright$}}}
\newcommand{\Mtleft}{\mathrel{\mbox{$\largetriangleleft\!\!|$}}}
\newcommand{\mcite}[2]{\cite{#2}(#1)}
\newcommand{\ltright}{\largetriangleright\ }
\newcommand{\ltleft}{\largetriangleleft\ }
\newcommand{\lspoon}{\leftspoon}
\newcommand{\rspoon}{\rightspoon}
\title{{\bf A General (Uniform) Relational Semantics for Sentential Logics}}
\author{Chrysafis  Hartonas\\
Department of Digital Systems, University of Thessaly, Larissa, Greece\\
hartonas@uth.gr}
\begin{document}
\maketitle

\begin{abstract}
We present a  general relational semantics framework which, by varying the axiomatization and components of the relational structures, provides a uniform semantics for sentential logics, classical and non-classical alike.  The approach we take rests on a generalization of the J\'{o}nsson-Tarski representation (and duality) for Boolean algebras with operators to the cases of posets, semilattices, or bounded lattices (with, or without distribution) with quasi-operators. Completeness proofs rely on a choice-free construction of canonical extensions for the algebras in the quasivarieties of the equivalent algebraic semantics of the logics. Correspondence results for axiomatic extensions of the logics of implication that we study rely on a fully abstract translation into their modal companions and they are calculated using a generalized Sahlqvist - van Benthem algorithm.
\\
{\bf Keywords} logics of implication, canonical extension, relational semantics, generalized J\'{o}nsson-Tarski
\end{abstract}

\section{Introduction}
\label{intro}
Comparing to algebraic semantics \mcite{1996}{font-jansana-book}, relational semantics appears to be fragmented and ad hoc, even at times where the aim is to devise an all-encompassing semantic framework, such as in Dunn's project of generalized Galois logics  \mcite{1990,1993,2008}{dunn-ggl,dunn-partial,gglbook}. Indeed, in a joint article of Dunn, Gehrke and Palmigiano \mcite{2005}{dunn-gehrke} the point is made that in his article \cite{dunn-partial}   Dunn
``{\em attempted to provide a uniform approach to relational semantics for various implicative substructural logics}'', but  ``{\em while he achieved a uniform approach in the sense that the relational semantics obtained all arose through concrete representation of the algebras, he had to change his method of representation in ad hoc ways to fit the various logics.}''

As an alternative, an approach to relational semantics via canonical extensions is advocated in \cite{dunn-gehrke}. It should be mentioned that at the time that Dunn was developing his project, announced in \mcite{1990}{dunn-ggl}, the theory of canonical extensions was not yet in place, except for the case of Boolean algebras with operators, where they were called ``perfect extensions'' by J\'{o}nsson and Tarski \mcite{1951-52}{jt1,jt2}. It was only about a decade later that the theory was generalized to the case of general lattices by Gehrke and Harding \mcite{2001}{mai-harding} and to the case of distributive lattices by Gehrke and J\'{o}nsson \mcite{2004}{mai-jons}. The joint article \mcite{2005}{dunn-gehrke} generalized further the theory of canonical extensions to the case of partial orders with maps of given monotonicity properties. The missing case of canonical extensions for semilattices was addressed by Gouveia and Priestley in \mcite{2014}{hilary-sem}.

Gehrke's approach to relational semantics \mcite{2005 and 2006}{dunn-gehrke,mai-gen} is to uniformly use canonical extensions and the associated discrete duality between perfect lattices (with additional operations) and separated and reduced (RS) frames (with relations) to obtain complete semantics. 

A perfect lattice is a complete lattice $\mathbf{L}$ that is join-generated by its set $J^\infty(\mathbf{L})$ of completely join-irreducible elements and meet-generated by its set $M^\infty(\mathbf{L})$ of completely meet-irreducible elements. Given a perfect lattice, its dual frame $(J^\infty(\mathbf{L}),\leq~, M^\infty(\mathbf{L}))$ is an RS-frame, where $\leq$ is the lattice order. Conversely, given an RS-frame $(X,Y,R)$ its dual perfect lattice is the lattice $\mathcal{G}(X)$  of Galois stable subsets of $X$. Stability refers to the Galois connection generated by the binary relation $R\subseteq X\times Y$ and the closure operator formed by composition of the Galois maps. Galois stable sets are precisely the fixpoints of this closure operator. The discrete duality of perfect lattices and RS-frames extends to the case of perfect lattice-ordered algebras and RS-frames with relations. For example, given a lattice with an implication operation $(\mathbf{L},\ra)$, a canonical extension $((\alpha,\mathbf{A}),\ra^\pi)$ is first constructed, where $\alpha:\mathbf{L}\hookrightarrow\mathbf{A}$ is the canonical embedding and $\ra^\pi$ is defined as the $\pi$-extension of $\ra$, as specified in \cite{mai-harding}. The dual RS-frame $\mathfrak{G}=(J^\infty(\mathbf{A} ),\leq,M^\infty(\mathbf{A} ),R_\ra)$ is subsequently defined, where $\leq$ is the order in $\mathbf{A} $ and $R_\ra\subseteq J^\infty(\mathbf{A} )\times J^\infty(\mathbf{A} )\times M^\infty(\mathbf{A} )$ is defined from $\ra^\pi $ by setting $R_\ra(x,y,m)$ iff $y\leq x\ra^\pi  m$. The logic is complete if its associated (quasi)variety of equivalent algebraic semantics is closed under canonical extensions.

To show that the lattice $\mathcal{G}(X)$ is perfect in the above discrete duality argument, an appeal to the axiom of choice becomes necessary (see \mcite{Lemma~3.4}{mai-harding}, or \mcite{Remark~2.7 and Theorem~2.8}{dunn-gehrke}) and it is then worth devising an approach that avoids this. We do that in this article. The issue of constructive semantics even for classical logic has become one of current interest, triggered by a recent article of Holliday and Bezhanishvili \mcite{2020}{choice-free-BA} on choice-free duality for Boolean algebras, followed by several publications \cite{choice-free-dmitrieva-bezanishvili,choice-free-Ortho,choice-free-deVries,choice-free-inquisitive-DBLP:conf/wollic/BezhanishviliGH19, choiceFreeHA,choiceFreeStLog,choice-free-Tarski-Celani2025} in this research area.

Canonical extensions are constructed by a representation argument, in a tradition of topological representations that reaches back to the Stone representation of Boolean algebras \mcite{1938}{stone1}, the Stone \mcite{1937}{stone2} and Priestley \mcite{1970}{hilary} representations of distributive lattices, Esakia's representation and duality for Heyting algebras \mcite{1974}{esakia2}, the lattice representations of Urquhart \mcite{1978}{urq}, Hartung \mcite{1992}{hartung},  Allwein and Hartonas \mcite{1993}{iulg-bounded}, Hartonas and Dunn \mcite{1993}{iulg}, Plo\v{s}\v{c}ica \mcite{1995}{plo}, Hartonas \mcite{1997}{dloa}, Hartonas and Dunn \mcite{1997}{sdl},  Moshier and Jipsen \mcite{2014}{Moshier2014a,Moshier2014b}, Gehrke and Van Gool \mcite{2014}{lattice-duality-Gehrke2014}, Celani and Gonzales \mcite{2020}{lattice-duality-Celani2020} and Bezhanishvili et al \mcite{2024}{choice-free-dmitrieva-bezanishvili}.

Representation arguments mediate completeness proofs in logic, but invariably logics involve additional operators. 

For extensions of the classical propositional calculus the seminal work of J\'{o}nsson and Tarski \mcite{1951-52}{jt1,jt2} for Boolean algebras with operators provides the technical background of relational semantics for expansions of classical propositional logic, for example with modal operators (though historically the introduction of relational semantics originates with Kripke's work \mcite{1959 and 1963}{kripke1,kripke2} and for intuitionistic logic \mcite{1965}{Kripke-int}, followed by the Routley and Meyer semantics framework \mcite{1973}{routley1973semantics} for Relevance logic and by Goldblatt's orthoframe semantics \mcite{1974}{goldb} for Orthologic). 

The relational representation of distributive lattices with additional operations of various types, extending the J\'{o}nsson-Tarski framework, was studied by Urquhart \mcite{1979 and 1996}{urq1979,Urquhart1996}, by Sofronie-Stokkermans \mcite{2000}{Sofronie-Stokkermans00,Sofronie-Stokkermans00a}, and it has been the express objective of Dunn's gaggle theory project \mcite{1990}{dunn-ggl}. 

In \mcite{2018,2022,2023}{sdl-exp,kata2z,duality2} we extended the J\'{o}nsson-Tarski representation (and duality) framework to the case of bounded lattices with arbitrary quasi-operators (normal lattice operators, in our preferred terminology). Choice-free versions of the duality \mcite{2023}{duality2} were presented for some cases of interest \mcite{2023,2025}{choiceFreeHA,choiceFreeStLog} and applications to the semantics of some  calculi of interest were given in \mcite{2025}{choiceFreeStLog,dfnmlA}.

In this article, we extend the generalized J\'{o}nsson-Tarski framework of relational representation to the cases of logics whose induced partial order in their Lindenbaum-Tarski algebra is weaker than a lattice order, perhaps a semilattice, or even just a mere partial order. We also briefly review the case of lattice-based logics and discuss the particular cases of distributive logics, or of logics on a classical underlying propositional calculus. 

A uniform relational semantics is used as a convenient tool for addressing the correspondence problem for various logics in a uniform way. 
In all cases, the representation argument  ``calculates'' a modal companion for the logic of interest and a  GMT-like \cite{goedel} translation and dual translation (co-translation) can be given, proven to be fully abstract in \mcite{2025}{vb}. This identifies the logic of interest as a fragment of its sorted  companion modal logic. The translation reflects the projection of sorted image operators in the dual sorted powerset algebra of a frame, to normal operators in the complete lattice of Galois stable sets (the frame's full complex algebra). This allows for transferring techniques and results from the classical though sorted companion modal logic to the calculus of interest. 

We exploit in this article the generalized Sahlqvist - van Benthem correspondence algorithm presented in \mcite{2025}{dfmlC} for distribution-free modal logics, as it is also uniformly applicable to the cases of logics on a semilattice, or on a mere partial-order basis. 

Section~\ref{prelims section} presents a brief review of canonical extensions and of the background developed by the author in previous publications on the duality of normal lattice expansions and sorted residuated frames with relations and it can be safely skipped by the reader familiar with canonical extensions and with this author's recent previous work.

The remaining sections detail the framework by studying the logics of implicative posets (\cref{posets section}), the Lambek calculus (\cref{lambek section}) with, or without association, the logic of implicative semilattices (\cref{semilattices section}) and the logic of implicative lattices (\cref{lattices section}). \cref{substructural section} discusses the question of relational semantics for the full Lambek calculus (associative, or not) and of substructural logics. We conclude with \cref{classical section} where the application of the framework to distributive, intuitionistic and classical logic is addressed.

\section{Implicative Algebras, Frames, Logics, and Canonical Extensions}
\label{prelims section}
Abstract definitions of implicative algebras have been given by Rasiowa \mcite{1977}{rasiowa1977} and  by Cintula \mcite{2006}{cintula06}. Rasiowa's {\em implicative} (or {\em implication}) {\em algebras} are structures $\mathbf{A}=(A,\ra, \bigvee)$ satisfying four axioms from which it follows that the relation $\leq$ defined by $a\leq b$ iff $a\ra b=\bigvee$ turns $\mathbf{A}$ to a bounded above preorder with $\bigvee$ as its greatest element. Cintula, working more in the context of modern abstract algebraic logic, generalizes the concept to that of an {\em implicative matrix} $\mathbf{A}=(A,\ra,D)$, with $D\subseteq A$ a {\em subset of designated elements}, inducing an order on $\mathbf{A}$ by setting $a\leq b$ iff $a\ra b\in D$. In constructing the Lindenbaum-Tarski matrix of what is called in \cite{cintula06} a weakly implicative logic, the set of designated elements is simply the singleton $\{[\top]\}$ containing the top element (the equivalence class of the constant $\top$ (true)). The Lindenbaum-Tarski implicative matrix of a substructural logic, including a truth constant $\type{t}$, will then accordingly be a matrix $\mathbf{A}=(A,\ra,D)$ with $D$ being the principal filter $D=\{[\varphi]\midsp [\type{t}]\leq [\varphi]\}$.

In either case, the base definition does not imply any of the familiar properties of implication, such as being antitone in the first and monotone in the second argument place. It is only with a strengthened notion of {\em positive implication algebra} in Rasiowa's  \mcite{Chapter~II, Section~2}{rasiowa1977}, satisfying at least two additional axioms (p1),(p2), that the monotonicity properties can be deduced from the definition, see \mcite{2.3, page~22}{rasiowa1977}. But (p1) already introduces the weakening property $a\ra(b\ra a)=\bigvee$, both (p1) and (p2) are needed to conclude that the induced preorder is a partial order and, moreover, exchange $a\ra(b\ra c)=b\ra(a\ra c)$ can be proven in a positive implication algebra, as demonstrated in \mcite{2.3, page~22}{rasiowa1977}.

The typical definition of an algebra (a $\Sigma$-algebra, more accurately) in the context of universal algebra is as a pair $\mathbf{A}=(A,\Sigma)$ where $\Sigma$, the signature (sometimes referred to as the language) of the algebra, is a pair $(\{f_j\}_{j\in J},\alpha)$ and where the $f_j$ are operator names (symbols) corresponding to operators $f^\mathbf{A}_j$ on the algebra and $\alpha:J\lra\mathbb{N}$ is the arity map, fixing the number of argument places. If $\mathbf{A}$ is an ordered algebra, whose order is perhaps a (semi)lattice order, then monotonicity and (co)distribution properties are relegated to the axiomatization of the algebra. It facilitates specifying the correspondence of algebraic and relational semantics for a logic to mildly generalize and define a suitable notion of a qualified ordered algebra as a tuple $\mathbf{A}=(A,\leq,(\{f_j\}_{j\in J},\delta))$ where $\delta:J\lra\{1,\partial\}^{n(j)+1}$ assigns to $f_j$ an arity but also a monotonicity, or distribution type. We explain in the sequel.

\subsection{Logics of Ordered Algebras}
Let $\mathbf{L}$ be a (bounded) lattice, $\{1,\partial\}$ be a 2-element set, $\mathbf{L}^1=\mathbf{L}$ and $\mathbf{L}^\partial=\mathbf{L}^{\mathrm{op}}$ (the opposite lattice). Extending established terminology \cite{jt1}, a function $f:\mathbf{L}_1\times\cdots\times\mathbf{L}_n\lra\mathbf{L}_{n+1}$ will be called {\em additive} and {\em normal}, or a {\em normal operator}, if it distributes over finite joins of each lattice $\mathbf{L}_i$, for $i=1,\ldots n$, delivering a join in the lattice $\mathbf{L}_{n+1}$.

An $n$-ary operation $f$ on a bounded lattice $\mathbf{L}$ is {\em a normal lattice operator of distribution type  $\delta(f)=(i_1,\ldots,i_n;i_{n+1})\in\{1,\partial\}^{n+1}$}  if it is a normal additive function  $f:\mathbf{L}^{i_1}\times\cdots\times\mathbf{L}^{i_n}\lra\mathbf{L}^{i_{n+1}}$ (distributing over finite joins in each argument place), where  each $i_j$, for  $j=1,\ldots,n+1$,   is in the set $\{1,\partial\}$, hence $\mathbf{L}^{i_j}$ is either $\mathbf{L}$, or $\mathbf{L}^\partial$. For example, each of $\ra,\Box,\Diamond$ is a normal lattice operator of respective distribution type $\delta(\ra)=(1,\partial;\partial)$, $\delta(\Diamond)=(1;1)$, $\delta(\Box)=(\partial;\partial)$.

If $\tau$ is a tuple  of distribution types, a {\em normal lattice expansion of (similarity) type $\tau$} is a lattice with a normal lattice operator of distribution type $\delta$ for each $\delta$ in $\tau$.

A lattice-ordered algebra is a structure $\mathbf{A}=(A,\leq,\wedge,\vee,0,1,(\{f_j\}_{j\in J},\delta))$ where $\delta(j)\in\{1,\partial\}^{n(j)+1}$ is the distribution type of the normal lattice operator $f_j$. If the ordering is a mere partial-order, then we refer to $\delta(j)$ as the monotonicity type (order type) of $f_j$. For the semi-lattice case a mixed notion of monotonicity and distribution properties is captured by $\delta(j)$, but we shall not insist on that, restricting to the case of interest (implication) in the present article. For implication, $\delta(\ra)=(1,\partial;\partial)$. If the underlying poset $\mathbf{A}$ is a lattice, we understand $\delta(\ra)$ as the distribution type of $\ra$ (specifying that $\ra$ takes joins in the first argument place and meets in the second argument place to meets, in both cases). If the algebra is merely a partially ordered algebra, then we understand $\delta(\ra)$ as specifying that $\ra$ is antitone in the first and monotone in the second position. In the case of a  $\wedge$-semilattice-ordered algebra, $\delta(\ra)$ may be understood as specifying that $\ra$ is antitone in the first argument place and that it distributes over meets in the second place. In other words, an elementary part of the axiomatization is already dictated by the order/distribution type. 

A class of ordered algebras of type $\tau$ will be set in correspondence with a class of frames of type $\tau$, to be described in \cref{prelim frames}.

The propositional language $\mathcal{L}=\mathcal{L}(\tau)$ of normal lattice expansions of similarity type $\tau=(\delta_j)_{j\in J}$, for some countable index set $J$, is defined by the grammar
\begin{eqnarray}
\mathcal{L}\ni\varphi &=& p_i(i\in \mathbb{N})\midsp\top\midsp\bot\midsp \varphi\wedge\varphi\midsp\varphi\vee\varphi\midsp (f_j(\varphi_1,\ldots,\varphi_{n(j)}))_{j\in J}.\label{language}
\end{eqnarray}
 Where $\delta_j=(j_1,\ldots,j_{n(j)};j_{n(j)+1})$ we let $\delta_j(k)=j_k$ for $1\leq k\leq j_{n(j)+1}$. Of particular interest is the case where an implication connective $\ra$ and a truth constant $\type{t}$ (a 0-ary operator) are included in the logic, of distribution type $\delta(\ra)=(1,\partial;\partial)$ and $\delta(\type{t})=(\ ;1)$.
 
The proof system for the minimal logic $\mathbf{\Lambda}_\tau$ (for any similarity type $\tau$), assuming also constants (0-ary operators) $\top$ and $\bot$,  is presented in Table~\ref{minimal proof system}. To simplify, we use vectorial notation $\vec{\varphi}$ for a tuple of sentences, we let $\vec{\varphi}[\;]_k$ be the vector with an empty spot at the $k$-th argument place and $\vec{\varphi}[\psi]_k$ either to display the sentence at the $k$-position, or to designate the result of filling the empty place in $\vec{\varphi}[\;]_k$, or the result of replacing the entry $\varphi_k$ with $\psi$ at the $k$-th argument place.

\begin{table}[!htbp]
\caption{Proof System for the Minimal Logic $\mathbf{\Lambda}_\tau$ of Normal Lattice Expansions of type $\tau$}
\label{minimal proof system}
\hrule
\mbox{}\\[1mm]
\begin{tabular}{lllllllll}
\multicolumn{2}{l}{Ground axioms/rules} &
 $\varphi\proves\varphi$ \hskip8mm $\bot\proves\varphi$  &  \hskip6mm$\varphi\proves\top$  &
     $\infrule{\varphi\proves\psi\hskip5mm\psi\proves\vartheta}{\varphi\proves\vartheta}$(Cut)  
   \\[3mm]
 \multicolumn{3}{l}{Substitution rule} && $\infrule{\varphi(p)\proves\psi(p)}{\varphi[\vartheta/p]\proves\psi[\vartheta/p]}$
 \\[3mm]
 \multicolumn{2}{l}{Meet semilattice rules} &
 $\infrule{\varphi\proves\vartheta}{\varphi\wedge\psi\proves\vartheta}$  & $\infrule{\psi\proves\vartheta}{\varphi\wedge\psi\proves\vartheta}$ &
 $\infrule{\varphi\proves\psi\hskip5mm\varphi\proves\vartheta}{\varphi\proves\psi\wedge\vartheta}$
 \\[3mm]
 \multicolumn{2}{l}{Join semilattice rules} &
  $\infrule{\vartheta\proves\varphi}{\vartheta\proves\varphi\vee\psi}$ & $\infrule{\vartheta\proves\psi}{\vartheta\proves\varphi\vee\psi}$ & $\infrule{\varphi\proves\vartheta\hskip5mm\psi\proves\vartheta}{\varphi\vee\psi\proves\vartheta}$
  \\[5mm]
  \multicolumn{5}{l}{
 For each $j\in J$ and where $\delta_j=(j_1,\ldots,j_k,\ldots, j_{n(j)};j_{n(j)+1})$ is the distribution type of $f_j$}\\[2mm]
 \multicolumn{4}{l}{$\bullet$ Monotonicity/Antitonicity rules (for poset-based logics)}\\[2mm]
   \multicolumn{3}{l}{
  $\infrule{\psi\proves\vartheta}{f_j(\vec{\varphi}[\psi]_k)\proves f_j(\vec{\varphi}[\vartheta]_k)}$ if $j_k=j_{n(j)+1}$
  }
  &
  \multicolumn{3}{l}{
  $\infrule{\psi\proves\vartheta}{f_j(\vec{\varphi}[\vartheta]_k)\proves f_j(\vec{\varphi}[\psi]_k)}$ if $j_k\neq j_{n(j)+1}$
  }\\[4mm]
\multicolumn{4}{l}{$\bullet$  Distribution axioms (for semilattice-based logics)}\\[2mm]
 \multicolumn{4}{l}{
 $f_j(\vec{\varphi}[\psi\vee\chi]_k)\proves f_j(\vec{\varphi}[\psi]_k)\vee f_j(\vec{\varphi}[\chi]_k)$\hskip1cm if $j_k=1=j_{n(j)+1}$
 }\\[3mm]
  \multicolumn{4}{l}{
 $f_j(\vec{\varphi}[\psi]_k)\wedge f_j(\vec{\varphi}[\chi]_k)\proves f_j(\vec{\varphi}[\psi\wedge\chi]_k)$ \hskip1cm if $j_k=\partial=j_{n(j)+1}$
 }\\[4mm]
 \multicolumn{4}{l}{$\bullet$  Co-Distribution axioms (for lattice-based logics)}\\[2mm]
 \multicolumn{4}{l}{
$f_j(\vec{\varphi}[\psi\vee\chi]_k)\proves f_j(\vec{\varphi}[\psi]_k)\wedge f_j(\vec{\varphi}[\chi]_k)$\hskip1cm   if $j_k=1\neq\partial=j_{n(j)+1}$}\\[3mm]
 \multicolumn{4}{l}{
 $f_j(\vec{\varphi}[\psi]_k)\vee f_j(\vec{\varphi}[\chi]_k)\proves  f_j(\vec{\varphi}[\psi\wedge\chi]_k)$
  \hskip1cm if $j_k=\partial\neq 1=j_{n(j)+1}$}
  \\[5mm]
\multicolumn{3}{l}{$\bullet$  Normality axioms}\\[2mm] 
 \multicolumn{4}{l}{
$f_j(\vec{\varphi}[\bot]_k)\proves\bot$ \hskip1cm if $j_k=1=j_{n(j)+1}$}\\[3mm]
 \multicolumn{4}{l}{
$\top\proves f_j(\vec{\varphi}[\top]_k)$ \hskip1cm if $j_k=\partial=j_{n(j)+1}$}\\[3mm]
 \multicolumn{4}{l}{
 $\top\proves f_j(\vec{\varphi}[\bot]_k)$ \hskip1cm if  $j_k=1\neq\partial=j_{n(j)+1}$}\\[3mm]
 \multicolumn{4}{l}{
 $f_j(\vec{\varphi}[\top]_k)\proves\bot$      \hskip1cm   if $j_k=\partial\neq 1=j_{n(j)+1}$}
 \\[5mm]
 \multicolumn{5}{l}{$\bullet$ Implication and truth constant (when included in the language of the logic)}\\[2mm]
\hskip6mm $\proves\type{t}$ && $\infrule{\varphi\proves\psi}{\overline{\type{t}\proves\varphi\rfspoon\psi}}$
\end{tabular}
\mbox{}\\[1mm]
\hrule
\end{table}

Writing $\varphi\equiv\psi$ to mean that both $\varphi\proves\psi$ and $\psi\proves\varphi$ are provable sequents and $[\varphi]$ for the equivalence class of $\varphi$, the axioms and rules of the minimal logic $\mathbf{\Lambda}_\tau$ ensure that $\equiv$ is a congruence and that the Lindenbaum-Tarski algebra of the logic is a normal lattice expansion of type $\tau=(\delta_j)_{j\in J}$.

\subsection{Sorted Residuated Frames with Relations}
\label{prelim frames}
\begin{Definition}
\label{frame}
By a {\em sorted residuated frame} (simply a {\em frame}, in the sequel) we mean a structure $\mathfrak{F}=(s,W,I,(R_j)_{j\in J},\sigma)$, where
\begin{itemize}
\item $s$ is a finite list of sorts,
\item $W=(W_t)_{t\in s}$ is a nonempty sorted set (none of the sorts $W_t$ is allowed to be empty),
\item $I\subseteq\prod_{t\in s}W_t$ is a distinguished sorted relation,
\item $J$ is a countable index set,
\item $\sigma$ is a sorting map on $J$ with  $\sigma(j)\in s^{n(j)+1}$ and
\item $(R_j)_{j\in J}$ is a family of sorted relations such that if $\sigma(j)=(i_{n(j)+1}:i_1\cdots i_{n(j)})\in s^{n(j)+1}$, then $R_j\subseteq W_{i_{n(j)+1}}\times\prod_{j=1}^{n(j)}W_{i_j}$. 
\end{itemize}
\end{Definition} 
We make no assumption of disjointness of sorts. For the purposes of this article $s$ will be fixed to be a list of two sorts, $s=\{1,\partial\}$.

For mnemonic reasons, we often display the sort of a relation as a superscript, as in $\mathfrak{F}=(s,W,I,(R^{\sigma(j)}_j)_{j\in J})$. For example, $R^{11}, T^{\partial 1\partial}$ designate sorted relations $R\subseteq W_1\times W_1$ and $T\subseteq W_\partial\times (W_1\times W_\partial)$. 

\begin{Definition}
\label{corresponding frames}
  To every similarity type $\tau=(\delta_j)_{j\in J}$ we associate a class $\mathbb{F}_\tau$ of frames (the $\tau$-frames) $\mathfrak{F}=(s,W,I,(R_j)_{j\in J},\sigma)$, where if $\delta(j)=(i_1,\ldots,i_{n(j)};i_{n(j)+1})$, then $\sigma(j)=(i_{n(j)+1}:i_1\cdots i_{n(j)})$.
\end{Definition}
Thus to the language of Positive Lattice Logic (the logic of bounded lattices) we associate frames $\mathfrak{F}=(s,W,I)$. To the language of the logic of implicative posets $\mathbf{P}=(P,\leq,1,\ra)$, or implicative lattices $\mathbf{L}=(L,\leq,\wedge,\vee,0,1,\ra)$ and given that $\delta(\ra)=(1,\partial;\partial)$ we associate frames $\mathfrak{F}=(s,W,I,T,\sigma)$, where $\sigma(T)=(\partial:1\partial)$, i.e. $T\subseteq W_\partial\times(W_1\times W_\partial)$.

\begin{Remark}[Kripke, Routley-Meyer and Goldblatt Frames]
\label{special case frames}
Structures $\mathfrak{F}=(s,W,I,(R_j)_{j\in J},\sigma)$, as described above, generalize Kripke frames for classical, intuitionistic, or relevance logic. Indeed, classical Kripke frames constitute the subclass of frames arising by letting  $W_1=W_\partial$ and where $I\subseteq W_1\times W_\partial$ is the identity relation. Mentioning the sort list $s$ and the relation $I$ becomes superfluous in this case. 

For an intuitionistic Kripke frame, take again $s=\{1,\partial\}$, but with $W_1=W_\partial$ and $I$ the identity relation and equip the frame with a single relation $R$, axiomatized as a preorder relation. A Routley-Meyer frame for relevance logic is typically described as a quadruple $(W,R,*,0)$, with $0\in W$, $R$ a ternary relation on $W$ and $*$ a point star operator. The semantics can be equivalently given in frames with a perp relation $(W,R,\bot,0)$, see Dunn \cite{star-perp}. For Dunn's frames for relevance logic take again $s,I$ as in the previous cases and equip the frame with a ternary relation $R$, a binary relation $\bot$ and a unary relation (a set) $O=\{0\}$ on $W$. Similarly, for Goldblatt's orthoframes $(W,\bot)$ we let $s,I$ be as above and $\bot$ a binary relation.
\end{Remark}

\subsubsection{The Underlying Polarity of a Frame -- The Lattices of Stable and Co-stable Sets}
The relation $I$ generates a residuated pair $\sub{\largediamond}{I}:\powerset(W_1)\leftrightarrows\powerset(W_\partial):\sub{\lbbox}{I}$, defined  by
\[
\sub{\largediamond}{I}(U)=\{y\in W_\partial\midsp\exists x\in W_1(xIy\wedge x\in U)\}\hskip1cm
\sub{\lbbox}{I}(V)=\{x\in W_1\midsp\forall y\in W_\partial(xIy\lra y\in V)\}.
\]

The complement of $I$ will be designated by $\upv$ and we refer to it as the {\em Galois relation of the frame}. It generates a Galois connection $(\;)\rperp:\powerset(W_1)\leftrightarrows\powerset(W_\partial)^{\rm op}:\rperp(\;)$ defined by
\[
U\rperp=\{y\in W_\partial\midsp\forall u\in W_1 (u\in U\lra u\upv y)\}\hskip1cm
\rperp V=\{x\in W_1\midsp\forall y\in W_\partial(y\in V\lra x\upv y)\}.
\]

The closure operators generated by the residuated pair and the Galois connection are identical, i.e. $\sub{\lbbox}{I}\sub{\largediamond}{I} U=\rperp(U\rperp)$ and $\sub{\largesquare}{I}\sub{\lbdiamond}{I} V=(\rperp V)\rperp$,  by the fact that $U\rperp=\sub{\largesquare}{I}(-U)$ and ${}\rperp V=\sub{\lbbox}{I}(-V)$ (where $\sub{\largesquare}{I}=-\sub{\largediamond}{I}-$ and $\sub{\lbdiamond}{I}=-\sub{\lbbox}{I}-$). 

To simplify, we often use a priming notation for both Galois maps $(\;)\rperp$ and $\rperp(\;)$, i.e. we let $U'=U\rperp$, for $U\subseteq W_1$, and $V'=\rperp V$, for $V\subseteq W_\partial$. Hence $U''=\rperp(U\rperp)=\sub{\lbbox}{I}\sub{\largediamond}{I} U$ and $V''=(\rperp V)\rperp=\sub{\largesquare}{I}\sub{\lbdiamond}{I} V$. 

\begin{Remark}
\label{Galois map is complement}
The reader can easily verify that in the case of a classical Kripke frame (\cref{special case frames}) the Galois connection is simply the set-complement operation, $U'=-U$.
\end{Remark}

\begin{Definition}\label{Galois set lattice}
The complete lattice of all {\em Galois stable} sets $A''=A\subseteq W_1$ will be designated by $\mathcal{G}(W_1)$ and the complete lattice of all {\em Galois co-stable} sets $B''=B\subseteq W_\partial$ will be similarly denoted by $\mathcal{G}(W_\partial)$. We refer to Galois stable and co-stable sets as {\em Galois sets}.
\end{Definition} 

\begin{Remark}
\label{all Galois}
If the frame is a classical Kripke frame, then $A''=--A=A$, for any subset $A$, hence all subsets are Galois stable and thereby $\gpsi=\{A\subseteq W_1\midsp A=A''\}=\powerset(W_1)$.
\end{Remark}

Note that each of $W_1, W_\partial$ is a Galois set, but the empty set need not be Galois. The quasi-seriality condition~\eqref{quasi-seriality}
\begin{equation}
\label{quasi-seriality}
(\forall x\in W_1\exists y\in W_\partial\; xIy)\wedge(\forall y\in W_\partial\exists x\in W_1\; xIy)
\end{equation}
enforces that $\emptyset''=\emptyset$ (in both sorts).

A preorder relation is defined on each of $W_1,W_\partial$ by $u\leq w$ iff $\{u\}'\subseteq\{w\}'$. We call a frame {\em separated} if $\leq$ is in fact a partial order. For an element $u$ (of either $W_1$ or $W_\partial$) we write $\Gamma u$ for the set of elements $\leq$-above it. 

Sets $\Gamma w$ and $\{w\}'$ will be referred to as {\em principal elements}.  $\Gamma w$ will be referred to as a {\em closed element} and $\{w\}'$ as an {\em open element}. 

\begin{Remark}
\label{discrete order}
If the frame $\mathfrak{F}=(s,W,I,(R_j)_{j\in J},\sigma)$ is classical, then for $x,z\in W_1$ (= $W_\partial$), $x\leq z$ iff $\{x\}'\subseteq\{z\}'$ iff $-\{x\}\subseteq-\{z\}$ iff $x=z$. In other words the order is discrete, hence every classical frame is separated. By discreteness of the order, $\Gamma x=\{x\}$ and $\{x\}'=-\{x\}$, i.e. the closed and open elements are the atoms and co-atoms of the powerset algebra.
\end{Remark}

The following basic facts have a straightforward proof and they will be often used. For $G\subseteq W_1, v\in W_\partial$, we write $G\upv v$,  as an abbreviation for $\forall x(x\in G\lra x\upv v)$ and similarly for $u\upv G$, for $u\in W_1$ and $G\subseteq W_\partial$.
Let $\mathfrak{F}=(s,W,I,(R_j)_{j\in J},\sigma)$ be a  frame, $u\in W_1\cup W_\partial$ and $\upv$ the Galois relation of the frame. Let $v|G$ refer to either $G\upv v$, if $G\in\gpsi, v\in W_\partial$, or $v\upv G$, if $v\in W_1$ and $G\in\gphi$.
\begin{itemize}
\item $\upv$ is increasing in each argument place (and thereby its complement $I$ is decreasing in each argument place).
\item $(\Gamma u)'=\{u\}'$ and $\Gamma u=\{u\}^{\prime\prime}$ is a Galois set.
\item Galois sets are increasing, i.e. $u\in G$ implies $\Gamma u\subseteq G$.
\item For a Galois set $G$, $G=\bigcup_{u\in G}\Gamma u$.
\item For a Galois set $G$, $G=\bigvee_{u\in G}\Gamma u=\bigcap_{v|G}\{v\}'$.
\item For a Galois set $G$ and any set $F$, $F^{\prime\prime}\subseteq G$ iff $F\subseteq G$.
\end{itemize}

\begin{Definition}
\label{Galois dual relation}
For a sorted $(n+1)$-ary frame relation $R_j$, its {\em Galois dual relation} $R^\prime_j$ is defined by $R_j^\prime u_1\cdots u_n=(R_ju_1\cdots u_n)'$, where $R_ju_1\cdots u_n=\{u\midsp uR_ju_1\cdots u_n\}$.
\end{Definition}

\begin{Remark}
Evidently, in a classical Kripke frame $R'_ju_1\cdots u_n=-R_ju_1\cdots u_n$.
\end{Remark}

Notation is simplified by using vectors $\vec{u}=u_1\cdots u_n$, so that the definition is $R_j^\prime\vec{u}=(R_j\vec{u})'$. We let $\vec{u}[\;]_k$ be the vector with a hole (or just a place-holder) at the $k$-th position and write $u[w]_k$ either to display the element at the $k$-th place, or to designate the result of filling the $k$-th place of $u[\;]_k$, or to denote the result of replacing the element $u_k$ in $\vec{u}$ by the element $w$.
\begin{Definition}
For $1\leq k\leq n$, the {\em $k$-th section of an $(n+1)$-ary relation $S$} is the set $wS\vec{u}[\;]_k=\{x\in W_k\midsp wS\vec{u}[x]_k\}$. For $k=n+1$ the section is simply the set $S\vec{u}$.
\end{Definition}

\subsubsection{The Dual Sorted Powerset and Full Complex Algebra of a Frame}
Given a frame $\mathfrak{F}=(s,W,I,(R_j)_{j\in J},\sigma)$, each relation $R_j\subseteq W_{j_{n(j)+1}}\times\prod_{k=1}^{n(j)}W_{j_k}$ generates a sorted image operator, defined as in the Boolean case, except for the sorting
\begin{align}\label{sorted image ops}
F_j(\vec{U})&=\;\{w\in U_{i_{n(j)+1}}\midsp \exists \vec{u}\;(wR_j\vec{u}\wedge\bigwedge_{s=1}^{n(j)}(u_s\in U_s))\} &=\; \bigcup_{\vec{u}\in\vec{U}}R_j\vec{u}.
\end{align}
Equation~\eqref{sorted image ops} generalizes the J\'{o}nsson-Tarski definition of image operators in the representation of Boolean Algebras with Operators (BAOs) \cite{jt1}.

\begin{Definition}\label{sorted powerset algebra defn}
The {\em dual sorted powerset algebra} of a frame $\mathfrak{F}=(s,W,I,(R_j)_{j\in J},\sigma)$ is the algebra $\mathbf{P}=((\;)':\powerset(W_1)\leftrightarrows\powerset(W_\partial):(\;)',(F_j)_{j\in J} )$, where for each $j\in J$, $F_j$ is the sorted image operator  generated by the frame relation $R_j$ by~\eqref{sorted image ops}. 
\end{Definition}

For a  subset $U$ of $W_1$ or $W_\partial$ let $\overline{U}=U''$ be its closure and
if $F_j$ is the (sorted) image operator generated by the frame relation $R_j$, let $\overline{F}_j$ be the closure of the restriction of $F_j$ to Galois sets (stable, or co-stable, according to sort).
\[
\xymatrix{
\prod_{k=1}^{n(j)}\powerset(W_{j_k})\ar^{F_j}[rr]\ar^{(\;)''}@<0.5ex>@{->>}[d] && \powerset(W_{j_{n(j)+1}})\ar^{(\;)''}@<0.5ex>@{->>}[d]\\
\prod_{k=1}^{n(j)}\mathcal{G}(W_{j_k})\ar^{\overline{F}_j}[rr]\ar@<0.5ex>@{^{(}->}[u] && \mathcal{G}(W_{j_{n(j)+1}})\ar@<0.5ex>@{^{(}->}[u]
}
\]

The Galois connection is a dual isomorphism of the complete lattices of stable and co-stable sets, $(\;)':\mathcal{G}(W_1)\iso\mathcal{G}(W_\partial)^\mathrm{op}:(\;)'$. This allows for extracting single-sorted operators $\overline{F}_j^1:\prod_{k=1}^{n(j)}\mathcal{G}(W_1)\lra\mathcal{G}(W_1)$ and $\overline{F}_j^\partial:\prod_{k=1}^{n(j)}\mathcal{G}(W_\partial)\lra\mathcal{G}(W_\partial)$, by composition with the Galois connection maps
\begin{equation}\label{2single-sorted}
  \overline{F}^1_j(A_1,\ldots,A_{n(j)})=
  \left\{
  \begin{array}{cl}
  \overline{F}_j(\ldots,\underbrace{A_k}_{j_k=1},\ldots,\underbrace{A'_r}_{j_r=\partial},\ldots) & \mbox{if } j_{n(j)+1}=1
  \\
  (\overline{F}_j(\ldots,\underbrace{A_k}_{j_k=1},\ldots,\underbrace{A'_r}_{j_r=\partial},\ldots))' & \mbox{if } j_{n(j)+1}=\partial.
  \end{array}
  \right.
\end{equation}
The dual operators $\overline{F}_j^\partial:\prod_{k=1}^{n(j)}\mathcal{G}(W_\partial)\lra\mathcal{G}(W_\partial)$ are defined similarly.

\begin{Definition}\label{full complex algebra defn}
For a frame $\mathfrak{F}=(s,W,I,(R_j)_{j\in J},\sigma)$, its full complex algebra $\mathfrak{Cm}(\mathfrak{F})$ is defined as the normal lattice expansion $\mathfrak{Cm}(\mathfrak{F})=(\gpsi,\subseteq,\bigcap,\bigvee,\emptyset'', W_1,(\overline{F}^1_j)_{j\in J})$. 

A frame $\mathfrak{F}$ is {\em distributive} if its full complex algebra $\mathfrak{Cm}(\mathfrak{F})$ is a (completely) distributive lattice. Analogously for a Heyting and a Boolean frame.
\end{Definition}

\cref{classical section} discusses the cases of distributive, intuitionistic and Boolean frames.

\subsection{From Frames to Models}
\label{models section}
A relational model $\mathfrak{M}=(\mathfrak{F},V)$ consists of a frame $\mathfrak{F}$ and a sorted valuation $V=(V^1,V^\partial)$ of propositional variables, interpreting a variable $p$ as a Galois stable set $V^1(p)\in\gpsi$ and co-interpreting it as a Galois co-stable set $V^\partial(p)=V^1(p)\rperp\in\gphi$.  Interpretations and co-interpretations determine each other in the sense that for any sentence $\varphi\in\mathcal{L}_\tau$, if $\val{\varphi}\in\gpsi$ is an interpretation extending a valuation $V^1$ of propositional variables as stable sets, then $\val{\varphi}\rperp=\yvval{\varphi}\in\gphi$  is the co-interpretation extending the valuation $V^\partial$.

Satisfaction ${\forces}\subseteq W_1\times\mathcal{L}_\tau$ and co-satisfaction (refutation) ${\dforces}\subseteq W_\partial\times\mathcal{L}_\tau$ relations are then defined as expected, by $W_1\ni x\forces\varphi$ iff $x\in\val{\varphi}$ and $W_\partial\ni y\dforces\varphi$ iff $y\in\yvval{\varphi}$. Since satisfaction and co-satisfaction determine each other, for each operator it suffices to provide either its satisfaction, or its co-satisfaction (refutation) clause.

A generic, uniform relational interpretation for the language \eqref{language} of normal lattice expansions is presented in Table~\ref{sat}, following the principle of order-dual relational semantics introduced in \cite{odigpl}. The same interpretation applies to the case of the language of semilattice, or poset expansions (omitting the clauses for conjunction and/or disjunction when absent from the language). 

The interpretation pattern for the logical connectives $f_j$ depends only on their distribution (or monotonicity) types, in particular on their output types 1 or $\partial$, hence we present just two cases $f_1$ and $f_\partial$. In Table~\ref{sat} we assume that $x,u_j,v_j\in W_1$, $y,u_r,v_r\in W_\partial$ and that $R'_1,S'_\partial$ are the Galois duals of the relations $R_1,S_\partial$ corresponding to the logical operators $f_1,f_\partial$. 

\begin{table}[!htbp]
\caption{(Co)Satisfaction relations}
\label{sat}
\hrule
\begin{tabbing}
$x\forces p_i$\hskip8mm\=iff\hskip3mm\= $x\in V(p_i)$\\
$x\forces\top$ \>iff\> $x=x$\\
$y\dforces\bot$\>iff\> $y=y$\\
$x\forces\varphi\wedge\psi$\>iff\> $x\forces\varphi$ and $x\forces\psi$\\
$y\dforces\varphi\vee\psi$\>iff\> $y\dforces\varphi$ and $y\dforces\psi$\\
$y\dforces f_1(\varphi_1,\ldots,\varphi_n)$
\hskip.5cm\= iff\hskip3mm\= $\forall u_1\cdots u_n\;[\bigwedge_j^{i_j=1}(u_j\forces\varphi_j) \wedge\bigwedge_r^{i_r=\partial}(u_r\dforces\varphi_r)\lra yR'_1 u_1\cdots u_n]$\\
$x\forces f_\partial(\varphi_1,\ldots,\varphi_n)$
\>iff\> $\forall v_1\cdots v_n\;[\bigwedge_j^{i_j=1}(v_j\forces\varphi_j)\wedge\bigwedge_r^{i_r=\partial}(v_r\dforces\varphi_r)\lra xS'_\partial v_1\cdots v_n]$
\end{tabbing}
\hrule
\end{table}

Alternative, but equivalent interpretations can be given in concrete cases that may be more suitable for some purposes. For related examples, see \cite{choiceFreeStLog,dfnmlA,dfmlC} and \cref{imp frames section}.

\subsection{Canonical Extensions}
Define a {\em  filter} of a poset  $\mathbf{P}$ as a downwards directed upset $x\subseteq P$, i.e. $x$ is upwards closed in the $\leq$-order ($a\in x$ and $a\leq b$ imply $b\in x$) and it contains a lower bound $c_{a,b}\in x$ for each pair of elements $a,b\in x$. Dually, an {\em  ideal} is an upwards directed downset $y\subseteq P$, i.e. $y$ is downwards closed in the $\leq$-order and it contains an upper bound $d_{a,b}\in y$ for each pair of elements $a,b\in y$. Let $\filt(\mathbf{P}),\idl(\mathbf{P})$ be the sets of filters and ideals, respectively, of $\mathbf{P}$. 

If the poset is a meet-semilattice, then a filter $x$ as defined above is a meet-semilattice filter (an upset with $a,b\in x$ iff $a\wedge b\in x$) and similarly for ideals and join-semilattices. If the poset is a lattice, then filters and ideals as defined are the usual lattice filters and ideals. 

Throughout this article, we use $x_a=a{\uparrow}$ and $y_a=a{\downarrow}$ for the principal filter and ideal, respectively, generated by the element $a$.

To every poset $\mathbf{P}$ we associate the frame $\mathfrak{F}=\mathbf{P}_+=(s,W,I)$ where $s=\{1,\partial\}$,  $W=(\filt(\mathbf{P}),\idl(\mathbf{P}))$ and $I\subseteq \filt(\mathbf{P})\times\idl(\mathbf{P})$ is the non-intersection relation $xIy$ iff $x\cap y=\emptyset$ (so $x\upv y$ iff $x\cap y\neq\emptyset$). 

A completion of a poset (a mere partial order, or semilattice, or lattice) $\mathbf{P}$ is a complete lattice $\mathbf{C}$ with an order embedding $\alpha:\mathbf{P}\hookrightarrow\mathbf{C}$ which preserves existing finite joins and meets in $\mathbf{P}$. 
For an example, if $\mathbf{P}$ is a join semilattice, or a lattice, then $\alpha:\mathbf{P}\hookrightarrow\idl(\mathbf{P})$ is a completion, the ideal completion of $\mathbf{P}$, which is a join-dense (a $\Sigma_0$-) completion.

For a subset $A\subseteq P$ the Dedekind-MacNeille maps, defined by letting $A^\ell$ be the set of lower bounds of $A$, $A^u$ be its set of upper bounds, form a Galois connection: $B\subseteq A^\ell$ iff $A\subseteq B^u$. The collection $\mathrm{DM}(\mathbf{P})$ of all Galois closed subsets $A=A^{u\ell}$ is a complete lattice and the map $\alpha:\mathbf{P}\hookrightarrow \mathrm{DM}(\mathbf{P})$ defined by $\alpha(p)=\{p\}^{\ell}$ is an order embedding which preserves all existing joins and meets in $\mathbf{P}$, hence it is a completion of $\mathbf{P}$.  $\mathrm{DM}(\mathbf{P})$ is characterized \mcite{Theorem~7.41}{hilary_davey_2002} as the unique, up to an order-isomorphism, completion $\mathbf{C}$ of $\mathbf{P}$ in which the image of $\mathbf{P}$ under the embedding $\alpha:\mathbf{P}\hookrightarrow\mathbf{C}$  is both join-dense and meet-dense in $\mathbf{C}$. We say that $\mathbf{C}$ is a $\Delta_0$-completion of $\mathbf{P}$ (both a $\Sigma_0$- and a $\Pi_0$-completion).

If $\alpha:\mathbf{P}\hookrightarrow\mathbf{C}$ is a completion then a {\em filter-element} $F$ (also referred to as a closed element) of $\mathbf{C}$ is a meet in $\mathbf{C}$ of the image $\alpha[x]\subseteq C$ of a filter $x$ of $\mathbf{P}$, i.e. $F=F_x=\bigwedge_\mathbf{C}\alpha[x]$, for $x\in\filt(\mathbf{P})$. Let $K(\mathbf{C})$ be the set of filter (closed)-elements.
Dually, an {\em ideal-element} (also referred to as an open element) of $\mathbf{C}$ is a join in $\mathbf{C}$ of the image $\alpha[y]\subseteq C$ of an ideal $y$ of $\mathbf{P}$, i.e. $G=G_y=\bigvee_\mathbf{C}\alpha[y]$, for $y\in\idl(\mathbf{P})$. Let $O(\mathbf{C})$ be the set of ideal (open)-elements.

If the set $\mathcal{F}=K(\mathbf{C})$ of filter elements is join-dense in $\mathbf{C}$, then $\mathbf{C}$ is a $\Sigma_1$-completion of $\mathbf{P}$ (with respect to $\mathcal{F}$). Every element of $\mathbf{C}$ is a join of meets of elements from the image of $\mathbf{P}$. It is also common to say that $\mathbf{C}$ is a $\bigvee\bigwedge$-dense completion. Dually, if the set $\mathcal{I}=O(\mathbf{C})$ of ideal elements is meet dense in $\mathbf{C}$, then $\mathbf{C}$ is a $\Pi_1$-completion of $\mathbf{P}$ (with respect to $\mathcal{I}$). Every element of $\mathbf{C}$ is a meet of joins of elements from the image of $\mathbf{P}$. We also say that $\mathbf{C}$ is a $\bigwedge\bigvee$-dense completion. If $\mathbf{C}$ is both a $\bigvee\bigwedge$-dense (so $\Sigma_1$) completion and a $\bigwedge\bigvee$-dense (so $\Pi_1$) completion, then we refer to $\mathbf{C}$ as a $\Delta_1$-completion, or a doubly-dense, or just dense completion (with respect to $(\mathcal{F,I})$). $\Delta_1$-completions are studied in \cite{delta1}.

Note that if $\mathbf{C}$ is a $\Delta_1$-completion of $\mathbf{P}$, then it is the MacNeille completion of the poset $(\mathcal{F}\cup \mathcal{I},\leq_\mathbf{C})$, which may be referred to as the {\em intermediate structure}.  

The choice of $(\mathcal{F,I})$ in a proof of $\Delta_1$-completion is not unique \cite{delta1,Morton2014}. In the proof of existence of canonical extensions for bounded lattices \mcite{Proposition~2.6}{mai-harding} the sets of filter and ideal elements are used. In \cite{delta1} the choice is made to take $\mathcal{F}$ to consist of order-filters (upsets) and $\mathcal{I}$ of order-ideals (downsets) of a partial-order $\mathbf{P}$. For partial orders and lattices, the RS-frames project \cite{dunn-gehrke,mai-gen} takes $\mathcal{F}=J^\infty(\mathbf{C}), \mathcal{I}=M^\infty(\mathbf{C})$, the sets of completely join and completely meet irreducible elements, respectively, of the complete lattice $\mathbf{C}$ of Galois stable sets. In the latter case, choice is needed in the proof of $\Delta_1$-completion. Appeal to choice is unnecessary when taking $\mathcal{F}=K(\mathbf{C})$, the set of filter elements, and  $\mathcal{I}=O(\mathbf{C})$, the set of ideal elements, as seen in the proof of \cref{delta1-lemma} below.

\begin{Proposition}
\label{delta1-lemma}
  Let $\mathbf{P}$ be a poset, a semilattice or a lattice and let $\mathfrak{F}=\mathbf{P}_+$ be its associated frame $(s,W,I)$ where $W_1=\filt(\mathbf{P}), W_\partial=\idl(\mathbf{P})$ and $xIy$ iff $x\cap y=\emptyset$. Let $\gpsi$ be the complete lattice of Galois stable sets of filters and $\alpha:\mathbf{P}\hookrightarrow\gpsi$ be the map $\alpha(a)=\{x\in W_1=\filt(\mathbf{P})\midsp a\in x\}$. Then $\gpsi$ is a $\Delta_1$-completion of $\mathbf{P}$.
\end{Proposition}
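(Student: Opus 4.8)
The plan is to verify the three conditions defining a $\Delta_1$-completion. Since $\gpsi$ is already a complete lattice (meets being intersections of Galois stable sets), it suffices to show: (i) $\alpha$ is an order embedding preserving all existing finite meets and joins of $\mathbf{P}$, so that $\gpsi$ is a \emph{completion} of $\mathbf{P}$; (ii) the set $K(\gpsi)$ of filter-elements is join-dense in $\gpsi$ (the $\Sigma_1$ condition); and (iii) the set $O(\gpsi)$ of ideal-elements is meet-dense in $\gpsi$ (the $\Pi_1$ condition). First I would establish the computational backbone: for $a\in P$ the set $\alpha(a)=\{x\in\filt(\mathbf{P})\midsp a\in x\}$ is Galois stable, with $\alpha(a)'=\{y\in\idl(\mathbf{P})\midsp a\in y\}$ and hence $\alpha(a)''=\alpha(a)$. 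Both identities follow by testing an arbitrary filter or ideal against the principal filter $x_a=a{\uparrow}$ and the principal ideal $y_a=a{\downarrow}$, using that filters are upsets and ideals are downsets.

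Next I would check (i). From $\alpha(a)\subseteq\alpha(b)$, testing $x_a$, one gets $a\leq b$; the converse is immediate since filters are upsets, so $\alpha$ is an order embedding. If $a\wedge b$ exists then $\alpha(a\wedge b)=\alpha(a)\cap\alpha(b)$, because a filter containing both $a$ and $b$ contains a common lower bound and hence $a\wedge b$. If $a\vee b$ exists then $\alpha(a\vee b)=(\alpha(a)\cup\alpha(b))''$ (the join in $\gpsi$): the inclusion $\supseteq$ holds since $\alpha(a\vee b)$ is a Galois set containing $\alpha(a)\cup\alpha(b)$, and for $\subseteq$ one notes that any $y\in(\alpha(a)\cup\alpha(b))'=\alpha(a)'\cap\alpha(b)'$ has $a,b\in y$, hence $a\vee b\in y$ since the ideal $y$ is upward directed, so any filter $x\ni a\vee b$ meets $y$. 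The general finite case follows by induction, and preservation of the bounds of $\mathbf{P}$, whichever exist, is immediate from the same ideas.

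Then I would turn to (ii) and (iii). For a filter $x$, the filter-element $F_x=\bigwedge_{\gpsi}\alpha[x]$ is the intersection $\bigcap_{a\in x}\alpha(a)=\{z\in\filt(\mathbf{P})\midsp x\subseteq z\}$, and a short calculation identifies this with $\{x\}''=\Gamma x$, the closed element at the point $x\in W_1$ (equivalently, the frame preorders on filters and on ideals are just inclusion, so $\mathbf{P}_+$ is separated). By the basic fact $G=\bigvee_{u\in G}\Gamma u$, every $G\in\gpsi$ is a join of filter-elements, giving $\Sigma_1$-density. Dually, for an ideal $y$ the ideal-element $G_y=\bigvee_{\gpsi}\alpha[y]$ equals $\bigl(\bigcup_{a\in y}\alpha(a)\bigr)''=\{z\in\filt(\mathbf{P})\midsp z\cap y\neq\emptyset\}=\{y\}'$, which is already Galois stable, being a Galois image. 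By the basic fact $G=\bigcap_{v|G}\{v\}'$, every $G\in\gpsi$ is a meet of ideal-elements, giving $\Pi_1$-density. Together with (i), this shows $\gpsi$ is a $\Delta_1$-completion of $\mathbf{P}$ with respect to $(K(\gpsi),O(\gpsi))$.

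The step I expect to be the main obstacle is the preservation of joins, together with the dual identification $G_y=\{y\}'$: this is the only place where the semilattice or lattice structure of $\mathbf{P}$ (beyond its bare order) is genuinely used, and it turns on the upward directedness of ideals interacting with the Galois closure — a naive poset argument would here deliver only an order embedding. The remaining care is bookkeeping: tracking exactly which finite meets, joins, and bounds exist in the poset, semilattice, and lattice cases, and checking that the argument is uniform across all three.
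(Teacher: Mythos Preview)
Your proposal is correct and follows essentially the same line as the paper: identify the filter-elements with the closed elements $\Gamma x$ and the ideal-elements with the open elements $\{y\}'$, then invoke the basic facts $A=\bigvee_{x\in A}\Gamma x$ and $A=\bigcap_{A\upv y}\{y\}'$ for $\Sigma_1$- and $\Pi_1$-density. The only difference is that you additionally spell out why $\alpha$ is a completion (order embedding, preservation of existing finite meets and joins), which the paper's proof leaves implicit; your computation of $G_y$ is also slightly more direct than the paper's, which routes through $\gphi$ via the duality $\gpsi\iso\gphi^\partial$, but the content is the same.
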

\begin{proof}
If $x$ is a filter of $\mathbf{P}$, then $\alpha[x]=\{\alpha(a)\midsp a\in x\}=\{\Gamma x_a\midsp a\in x\}$. The filter-elements are then of the form $K=\bigcap_{a\in x}\Gamma x_a$. So $u\in K$ iff $\forall a(a\in x\lra x_a\subseteq u)$ iff $x\subseteq u$. Hence $K=\Gamma x$ which we had called a closed element in any frame. If $A$ is any stable set, then it is straightforward to verify that $A=\bigvee_{x\in A}\Gamma x=\bigcup_{x\in A}\Gamma x$. 

An ideal of $\mathbf{P}$ is a filter of $\mathbf{P}^\partial$ (order reversed). A filter element of $\gphi$ is then a principal upper set $\Gamma y$, i.e. a closed element of $\gphi$. Then for its Galois image $(\Gamma y)'=\{y\}'$ we have that $\{y\}'=(\Gamma y)'=(\bigcap_{a\in y}\Gamma y_a)'=\bigvee_{a\in y}\{y_a\}'=\bigvee_{a\in y}\Gamma x_a=\bigvee\{\alpha(a)\midsp a\in y\}$, using the fact that $\{y_a\}'=\{x\in W_1\midsp x\upv y_a\}=\{x\in W_1\midsp a\in x\}=\Gamma x_a$. Therefore, an ideal-element $I\in O(\gpsi)$ is of the form $I=\{y\}'=\lperp\{y\}$, which we called an open element in any frame. For any stable set $A$ it is immediate that $A=\bigcap_{A\upv y}\lperp\{y\}=\bigcap_{A\upv y}\{y\}'$, where $A\upv y$ iff $A\subseteq\lperp\{y\}$.

Hence $\gpsi$ is both a $\Sigma_1$- and a $\Pi_1$-completion, so a $\Delta_1$-completion of $\mathbf{P}$.
\end{proof}

For a bounded lattice $\mathbf{L}$, a {\em canonical extension} of $\mathbf{L}$ was defined in \cite{mai-harding} as a {\em compact $\Delta_1$-completion}, where compactness can be defined in any of the equivalent ways below, by \mcite{Definition~2.3, Lemma~2.4}{mai-harding}:
\begin{itemize}
\item[(1)] For any set $A$ of filter (closed) elements and any set $B$ of ideal (open) elements $\bigwedge A\leq\bigvee B$ iff there are finite subsets $A_f\subseteq A$, $B_f\subseteq B$ such that $\bigwedge A_f\leq\bigvee B_f$.
\item[(2)] For any $S,T\subseteq L$,   $\bigwedge\alpha[S]\leq\bigvee\alpha[T]$  iff there exist finite subsets $S_f\subseteq S,  T_f\subseteq T$ such that $\bigwedge\alpha[S_f]\leq\bigvee\alpha[T_f]$.
\item[(3)] For any lattice filter $x$ and ideal $y$, and any $S\subseteq x, T\subseteq y$, $\bigwedge\alpha[S]\leq\bigvee\alpha[T]$ iff $x\cap y\neq\emptyset$.
\end{itemize}
The following is readily verified.
\begin{Proposition}
\label{compactness lemma}
The completion of a lattice proven to be a $\Delta_1$-completion in~\cref{delta1-lemma} is compact. 
\end{Proposition}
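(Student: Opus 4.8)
The plan is to verify clause~(1) of the compactness characterization above (the other two clauses then follow routinely), building directly on the explicit description of closed and open elements obtained in the proof of \cref{delta1-lemma}. Recall from there that the closed (filter) elements of $\gpsi$ are exactly the sets $\Gamma x$ with $x\in\filt(\mathbf{L})$, and $\Gamma x=\bigcap_{a\in x}\alpha(a)=\bigwedge\alpha[x]$; dually the open (ideal) elements are exactly the sets $\{y\}'$ with $y\in\idl(\mathbf{L})$, and $\{y\}'=\bigvee_{a\in y}\alpha(a)=\bigvee\alpha[y]$. A first thing I would record is that, for arbitrary $S,T\subseteq L$, $\bigwedge\alpha[S]=\Gamma\langle S\rangle$ and $\bigvee\alpha[T]=\{\langle T\rangle\}'$, where $\langle S\rangle$ is the lattice filter and $\langle T\rangle$ the lattice ideal generated by $S$ and $T$; for the first identity just note that a filter $z$ satisfies $S\subseteq z$ iff $\langle S\rangle\subseteq z$, and the second is the order dual.

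Next I would reduce an inequality between a meet of closed elements and a join of open elements to a single inequality $\Gamma x\subseteq\{y\}'$. For a family $\{x_i\}_{i}$ of filters, $\bigcap_i\Gamma x_i=\Gamma x$ with $x$ the filter generated by $\bigcup_i x_i$ (any filter $z$ contains every $x_i$ iff it contains $\bigcup_i x_i$ iff it contains $\langle\bigcup_i x_i\rangle$); dually, using that the Galois map $(\;)'\colon\gpsi\iso\gphi^{\mathrm{op}}$ sends joins to intersections and that $\{y_j\}''=\Gamma y_j$ in $\gphi$, one gets $\bigvee_j\{y_j\}'=\{y\}'$ with $y$ the ideal generated by $\bigcup_j y_j$. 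So for any set $A$ of closed and any set $B$ of open elements, $\bigwedge A=\Gamma x$ and $\bigvee B=\{y\}'$ for suitable $x\in\filt(\mathbf{L})$, $y\in\idl(\mathbf{L})$, and since $\Gamma x=\{x\}''$ is the least stable set containing $x$, the inequality $\bigwedge A\leq\bigvee B$ is equivalent to $x\in\{y\}'$, i.e.\ to $x\upv y$, i.e.\ to $x\cap y\neq\emptyset$ (recalling that $\upv$ is the complement of $I$ and $xIy$ iff $x\cap y=\emptyset$).

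The last step is to extract the finite subcover. If $d\in x\cap y$, then, because $\mathbf{L}$ is a lattice, membership of $d$ in the filter generated by $\bigcup_i x_i$ is witnessed by a single finite meet: there are indices $i_1,\ldots,i_n$ and elements $b_k\in x_{i_k}$ with $b_1\wedge\cdots\wedge b_n\leq d$; dually there are $j_1,\ldots,j_m$ and $c_l\in y_{j_l}$ with $d\leq c_1\vee\cdots\vee c_m$. Put $A_f=\{\Gamma x_{i_1},\ldots,\Gamma x_{i_n}\}\subseteq A$ and $B_f=\{\{y_{j_1}\}',\ldots,\{y_{j_m}\}'\}\subseteq B$; applying the reduction of the previous step to these finite families, $\bigwedge A_f=\Gamma x'$ and $\bigvee B_f=\{y'\}'$ for the generated filter $x'\ni b_1\wedge\cdots\wedge b_n$ and ideal $y'\ni c_1\vee\cdots\vee c_m$, and since $b_1\wedge\cdots\wedge b_n\leq c_1\vee\cdots\vee c_m$ the element $b_1\wedge\cdots\wedge b_n$ lies in $x'\cap y'$, whence $\bigwedge A_f\leq\bigvee B_f$. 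The converse implication in clause~(1) is immediate from monotonicity of $\bigwedge$ and $\bigvee$. Clauses~(2) and~(3) then follow by running the same argument with $A=\alpha[S]$, $B=\alpha[T]$ (resp.\ $A=\alpha[x]$, $B=\alpha[y]$) and invoking the identities of the first paragraph together with the fact that $\alpha$ is an order embedding preserving finite meets and joins; note also that, $\mathbf{L}$ being bounded, empty meets and joins are $1$ and $0$, so all displayed inequalities are genuine inequalities of $\mathbf{L}$.

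I do not anticipate any real obstacle: once \cref{delta1-lemma} is available, the proof is essentially bookkeeping. The only point that calls for a little care is the reduction-and-subcover step, namely the observation that in a \emph{lattice} --- as opposed to a mere poset or a meet-semilattice --- membership in the filter (respectively, ideal) generated by a union of filters (respectively, ideals) is always witnessed by a \emph{single} finite meet (respectively, join) of elements drawn from finitely many members of the family. It is precisely this finitary witnessing that yields the finite subsets $A_f$ and $B_f$ and hence compactness.
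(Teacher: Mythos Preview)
Your proof is correct and follows essentially the same approach as the paper: both reduce the compactness inequality to the equivalence $\Gamma x\subseteq\{y\}'$ iff $x\cap y\neq\emptyset$, and then extract finite witnesses using the lattice structure of $\mathbf{L}$. The only cosmetic difference is that you verify clause~(1) directly (arbitrary families of closed and open elements), whereas the paper verifies clause~(2) (families of the form $\alpha[S],\alpha[T]$); the reduction step and the finite-subcover extraction are otherwise identical.
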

\begin{proof}
Given the dual isomorphism $\gpsi\iso\gphi^\partial$, let $\eta(a)=(\alpha(a))'\in\gphi$, so that $\eta(a)=\{y\in\idl(\mathbf{L})\midsp a\in y\}=\Gamma y_a$. Note that for any $S,T\subseteq L$,  $\bigwedge\alpha[S]=\bigwedge_{a\in S}\Gamma x_a=\Gamma(\bigvee_{a\in S}x_a)$ and letting $x^s$ be the filter generated by $S$, $x^s=\bigvee_{a\in S}x_a$, we obtain $\bigwedge\alpha[S]=\Gamma x^s$. Similarly, $\bigwedge\eta[T]=\Gamma y^t$, where $y^t$ is the ideal generated by $T$. Then $\bigvee\alpha[T]=(\bigwedge\eta[T])'=(\Gamma y^t)'=\lperp\{y^t\}$. 

The assumption that $\bigwedge\alpha[S]\leq\bigvee\alpha[T]$ means that $\Gamma x^s\subseteq\lperp\{y^t\}$, which is equivalent to $x^s\upv y^t$. But this means, by definition, that $\exists a\in L$ such that $a\in x^s\cap y^t\neq\emptyset$. 

Hence there are finite subsets $\{e_1,\ldots,e_n\}=S_f\subseteq S$ and $\{b_1,\ldots,b_m\}=T_f\subseteq T$ such that $e=e_1\wedge\cdots\wedge e_n\leq a\leq b_1\vee\cdots\vee b_m=b$. It follows that $\bigwedge\alpha[S_f]=\Gamma x_e\subseteq\Gamma x_a\subseteq \Gamma x_b=\lperp\{y_b\}=\lperp(\Gamma y_b)=\lperp(\bigwedge\eta[T_f])=\bigvee\alpha[T_f]$. 
\end{proof}

An inspection of the proof shows that compactness (for a lattice canonical extension) can be equivalently defined by a simplified version of condition (3) in \mcite{Lemma~2.4}{mai-harding}:
\begin{itemize}
\item[] ({\bf Compactness})\hskip2cm For any filter $x$ and any ideal $y$, $\bigwedge\alpha[x]\leq\bigvee\alpha[y]$ iff $x\cap y\neq\emptyset$.
\end{itemize}
As stated above the compactness condition applies uniformly to a poset $\mathbf{P}$, a semilattice $\mathbf{S}$, or a lattice $\mathbf{L}$ and it is indeed used as a definition of compactness by Gouveia and Priestley~\cite{hilary-sem} in their study of canonical extensions of semilattices (and their underlying posets). The uniformity of the definition of $\Delta_1$-completion and of compactness for all three cases aligns perfectly with the fact that the canonical extension of a lattice is the same as the canonical extensions of each of  its semilattice and its poset reducts. 

\begin{Proposition}
  \label{compactness of delta1}
Let $\mathbf{P}$ be a poset, a meet-semilattice, or a lattice. Then its $\Delta_1$-completion defined as in \cref{delta1-lemma} is compact, hence a canonical extension of $\mathbf{P}$.  
\end{Proposition}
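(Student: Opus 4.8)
The statement to prove is that the $\Delta_1$-completion $\gpsi$ of a poset, meet-semilattice, or lattice $\mathbf{P}$ constructed in \cref{delta1-lemma} is compact, and hence is a canonical extension. The plan is to verify the uniform \textbf{Compactness} condition displayed just above the statement: for any filter $x$ and any ideal $y$ of $\mathbf{P}$, $\bigwedge\alpha[x]\leq\bigvee\alpha[y]$ iff $x\cap y\neq\emptyset$. Once this is established, the poset/semilattice case follows by exactly the same computation as in the proof of \cref{compactness lemma}, since that computation never used any lattice-specific structure beyond the existence of principal filters/ideals and the description of filter- and ideal-elements already available in any frame of the form $\mathbf{P}_+$.

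\textbf{Key steps.} First I would recall from the proof of \cref{delta1-lemma} the explicit descriptions of the relevant elements: for a filter $x\in\filt(\mathbf{P})$, $\alpha[x]=\{\Gamma x_a\midsp a\in x\}$ and the associated filter-element is $K=\bigcap_{a\in x}\Gamma x_a=\Gamma x$; dually, for an ideal $y\in\idl(\mathbf{P})$, the associated ideal-element is $\{y\}'=\lperp\{y\}=\bigvee_{a\in y}\Gamma x_a=\bigvee\alpha[y]$, using $\{y_a\}'=\Gamma x_a$. Thus $\bigwedge\alpha[x]=\Gamma x$ and $\bigvee\alpha[y]=\lperp\{y\}$, so the inequality $\bigwedge\alpha[x]\leq\bigvee\alpha[y]$ in $\gpsi$ is literally $\Gamma x\subseteq\lperp\{y\}$. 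Second, I would invoke the basic frame facts listed in the preliminaries — in particular that for a Galois set $G$ and any point, $G\upv v$ iff $G\subseteq\{v\}'$, applied with $G=\Gamma x$ and the open element $\lperp\{y\}=\{y\}'$ — to rewrite $\Gamma x\subseteq\lperp\{y\}$ as $x\upv y$ (note $\Gamma x$ is the closed element $\{x\}''$ and $x\in\Gamma x$, so membership of $x$ in the open set is what is at stake). Third, by definition of the Galois relation $\upv$ as the complement of the non-intersection relation $I$ (i.e. $x\upv y$ iff $x\cap y\neq\emptyset$), this is precisely $x\cap y\neq\emptyset$, which closes the equivalence. Finally, since this is the \textbf{Compactness} condition and $\gpsi$ is already a $\Delta_1$-completion by \cref{delta1-lemma}, $\gpsi$ is a compact $\Delta_1$-completion of $\mathbf{P}$, i.e. a canonical extension, uniformly in the poset, meet-semilattice, and lattice cases.

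\textbf{Main obstacle.} There is no serious obstacle: the content is essentially a relabelling of the computation in \cref{compactness lemma} together with the observation that nothing there was lattice-specific. The one point that needs a little care is the direction $\Gamma x\subseteq\lperp\{y\}\Rightarrow x\cap y\neq\emptyset$ when $\mathbf{P}$ is a mere poset: one must check that "$\Gamma x\subseteq\{y\}'$" really does translate to "$x\upv y$", which follows because $x$ itself, qua element of the frame $W_1=\filt(\mathbf{P})$, lies in its own principal-closed set $\Gamma x$, and membership in the open set $\{y\}'$ is by definition $x\upv y$. The converse direction is immediate from monotonicity of $\upv$ in each argument (a listed basic fact), since $a\in x\cap y$ forces $x_a\subseteq x$ with $x_a\upv y$, hence $x\upv y$. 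I expect the whole argument to occupy only a few lines.
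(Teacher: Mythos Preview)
Your proposal is correct and follows essentially the same route as the paper: compute $\bigwedge\alpha[x]=\Gamma x$ and $\bigvee\alpha[y]=\lperp\{y\}$, then observe $\Gamma x\subseteq\lperp\{y\}$ iff $x\upv y$ iff $x\cap y\neq\emptyset$. The paper's proof is the same chain of equivalences in a single line, without the extra justification you give for the middle step.
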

\begin{proof}
Let $x,y$ be a filter and an ideal of $\mathbf{P}$. Then $\bigwedge\alpha[x]=\bigcap_{a\in x}\Gamma x_a=\Gamma x$ and $\bigvee\alpha[y]=(\bigwedge\eta[y])'=(\Gamma y)'=\rperp\{y\}$, where $\eta:\mathbf{P}\hookrightarrow\gphi\iso\gpsi^\partial$ is the co-embedding of $\mathbf{P}$. Thus $\bigwedge\alpha[x]\leq\bigvee\alpha[y]$ iff $\Gamma x\subseteq\lperp\{y\}$ iff $x\upv y$ iff $x\cap y\neq\emptyset$.
\end{proof}

\begin{Definition}
  \label{canext defn}
  Let $\mathbf{P}=(P,(f_j)_{j\in J})$ be a poset, or semilattice, or a lattice with operations $f_j$, for $j\in J$. The canonical extension $\mathbf{P}^\delta$ of $\mathbf{P}$ consists of a compact $\Delta_1$-completion of $\mathbf{P}$ together with an extension $F_j$ of each map $f_j$ such that if the type of $f_j$ is $\delta(j)=(\ldots;1)$, then $F_j=f^\sigma_j$ is its $\sigma$-extension and if its type is $\delta(j)=(\ldots;\partial)$, then $F_j=f^\pi_j$ is its $\pi$-extension, as these are defined in \cite{mai-harding}.
\end{Definition}

\section{Pure Implicative Logics}
\label{posets section}
\subsection{Algebras, Languages and Logics}
The focus in this Section is on the sentential language of the quasi-variety of implicative bounded above posets $\mathbf{P}=(P,\leq,1,\type{e},\ra)$, i.e. bounded above posets with a binary (implication) operation and a distinguished element $\type{e}$ subject to the axioms 
\begin{description}
\item $a\leq b\leftrightarrow \type{e}\le a\ra b$,\hskip1cm 
\item $(a\leq b)\wedge(c\leq d)\lra (b\ra c\leq a\ra d)$.
\end{description}
The poset is an {\em integral} implicative poset if $\type{e}=1$.

The sentential language of implicative bounded above posets is the language  $\mathcal{L}\ni\varphi:= p_i\;(i\in\mathbb{N})\midsp\top\midsp\type{t}\midsp\varphi~\rfspoon~\varphi$, where the monotonicity type of implication is $\delta(\rfspoon)=(1,\partial;\partial)$. Viewed as a 0-ary operation, we may assign to the truth constant $\type{t}$ the monotonicity type $\delta(\type{t})=(\;;1)$.  The relevant fragment of the full logic of normal lattice expansions of Table~\ref{minimal proof system} is displayed in~\eqref{implicative logic}. 

\begin{align}
\label{implicative logic}
\begin{tabular}{cccccc}
$\varphi\proves\varphi$ & $\varphi\proves\top$ & $\infrule{\varphi\proves\psi\hskip4mm\psi\proves\vartheta}{\varphi\proves\vartheta}$ & $\infrule{\varphi(p)\proves\psi(p)}{\varphi(\vartheta_{/p})\proves\psi(\vartheta_{/p})}$ 
\\[2mm]
$\proves\type{t}$ & $\infrule{\varphi\proves\psi}{\overline{\type{t}\proves\varphi\rfspoon\psi}}$ & $\infrule{\psi\proves\vartheta}{\varphi\rfspoon\psi\proves\varphi\rfspoon\vartheta}$ & $\infrule{\psi\proves\vartheta}{\vartheta\rfspoon\varphi\proves\psi\rfspoon\varphi}$ 
\end{tabular}
\end{align}

The integral system is the  special case, arising by adding the axiom $\top\proves\type{t}$. We let $\mathbf{\Lambda}_t$ be the logic generated by the proof system (equivalently viewed either as a set of provable sequents $\varphi\proves\psi$ or theorems $\varphi$, provided $\type{t}\proves\varphi$ is provable). 

The monotonicity and antitonicity rules ensure that provable equivalence is a congruence. It is straightforward to verify that the Lindenbaum-Tarski algebra of the logic is an implicative bounded above poset.

\subsection{Implicative Frames and Relational Semantics}
\label{imp frames section}
By \cref{corresponding frames}, frames $\mathfrak{F}=(s,W,I,U,T,\sigma)$ for the logic of implicative posets are  structures with $\sigma(U)=(1:\;)$ and $\sigma(T)=(\partial:1\partial)$, i.e. $U\in\gpsi$ is a distinguished Galois stable set (a unary relation). The frame is integral if $U=W_1$.

Given a model $\mathfrak{M}=(\mathfrak{F},V)$ on the frame $\mathfrak{F}$, the satisfaction clause for $\rfspoon$ is obtained by instantiating the relevant clause in Table~\ref{sat} for a connective $f_\partial$, hence we obtain the clause
\begin{equation}\label{sat implication}
  x\forces\varphi\rfspoon\psi\mbox{ iff }\forall z\in W_1\forall y\in W_\partial(z\forces\varphi\wedge y\dforces\psi\lra xT'zy)
\end{equation}
and for the constant $\type{t}$ we add the clause $u\forces\type{t}$ iff $u\in U$. If a residual product operator is included in the language (as in the Lambek calculus, Relevance logic, or Linear logic), then the interpretation can be equivalently given by the Routley-Meyer clause
\begin{equation}
x\forces \varphi\rfspoon\psi\mbox{ iff } \forall u,z\in W_1(u\forces\varphi\wedge  zR^{111}ux\lra z\forces\psi)
\end{equation}
where $R$ is a frame relation interpreting the product operator (see \cref{lambek section}). In the case of Intuitionistic logic, the interpretation can be equivalently given by the Kripke clause
\begin{equation}
x\forces \varphi\rfspoon\psi\mbox{ iff }\forall z\in W_1(z\forces\varphi\wedge x\leq z\lra z\forces\psi),
\end{equation}
see \mcite{Section~4.3}{choiceFreeHA} and \cref{classical section}, this article.

In a classical frame, where the order is discrete, the satisfaction clause reduces to the usual clause for Boolean implication
\begin{equation}\label{classical imp}
x\forces \varphi\rfspoon\psi\mbox{ iff }x\forces\varphi\mbox{ implies } x\forces\psi,
\end{equation}
see \cref{classical section}, this article.

Validity of a sequent $\varphi\proves\psi$ means that $\val{\varphi}\subseteq\val{\psi}$ and we say that a sentence $\varphi$ is valid if $\val{\type{t}}=U\subseteq\val{\varphi}$.

Let $\mathbb{PU}$ be the frame class axiomatized in \cref{substructural implicative poset frame axioms}.

\begin{table}[!htbp]
\caption{$\mathbb{PU}$ Frame Class axioms for the Logic of Implicative Posets with a Truth Constant}
\label{substructural implicative poset frame axioms}
\hrule
\begin{enumerate}
\item[(F1)] For all $x\in W_1$ and all $v\in W_\partial$, the section $Txv\subseteq W_\partial$ of the relation $T^{\partial 1\partial}$ is a Galois set.
\\[1mm] \underline{Unit Axiom}
\item[(U)] $U$ is a Galois stable set such that for all $x\in W_1$ and all $v\in W_\partial$, $x\upv v$ holds iff for all $u$ in $U$, $uT'xv$ holds.
\end{enumerate}
\hrule
\end{table}

\begin{Theorem}
\label{lambda t soundness}
The logic $\mathbf{\Lambda}_t$ is sound  in the class $\mathbb{PU}$ of frames axiomatized in Table~\ref{substructural implicative poset frame axioms}.
\end{Theorem}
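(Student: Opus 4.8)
The plan is to prove soundness by induction on the length of derivations in the system \eqref{implicative logic}, showing that every axiom is valid in every model $\mathfrak{M}=(\mathfrak{F},V)$ with $\mathfrak{F}\in\mathbb{PU}$ and that each rule preserves validity. As a preliminary one checks, by a routine induction on $\varphi$, that $\val{\varphi}\in\gpsi$ in any such model: the atomic case is built into the notion of model, $\val{\top}=W_1$ and $\val{\type{t}}=U$ are Galois stable (the latter by the first clause of Axiom (U)), and by the satisfaction clause \eqref{sat implication} one has $\val{\varphi\rfspoon\psi}=\bigcap\{T'zy\midsp z\in\val{\varphi},\ y\in\yvval{\psi}\}$, an intersection of the Galois-dual sections $T'zy=(Tzy)'$, each of which is stable, hence $\val{\varphi\rfspoon\psi}$ is stable too; Axiom (F1) is the frame-class condition making this an instance of the semantics of \cref{models section}. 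Since validity of a theorem $\varphi$ means $U=\val{\type{t}}\subseteq\val{\varphi}$ and validity of a sequent $\varphi\proves\psi$ means $\val{\varphi}\subseteq\val{\psi}$, and the two presentations are interderivable, it suffices to treat provable sequents.

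The axioms $\varphi\proves\varphi$ and $\varphi\proves\top$ are immediate from $\val{\varphi}\subseteq\val{\varphi}$ and $\val{\varphi}\subseteq W_1=\val{\top}$, and $\proves\type{t}$ from $U\subseteq U$; Cut is sound by transitivity of $\subseteq$. For Substitution one invokes the usual substitution lemma: in a model $\mathfrak{M}$ the set $\val{\vartheta}^{\mathfrak{M}}$ is Galois stable, hence an admissible value for $p$, and induction on $\chi$ gives $\val{\chi(\vartheta/p)}^{\mathfrak{M}}=\val{\chi}^{\mathfrak{M}'}$, where $\mathfrak{M}'$ revalues $p$ as $\val{\vartheta}^{\mathfrak{M}}$; applying the (assumed valid) premise in $\mathfrak{M}'$ yields the conclusion in $\mathfrak{M}$. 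For the monotonicity rule $\infrule{\psi\proves\vartheta}{\varphi\rfspoon\psi\proves\varphi\rfspoon\vartheta}$, antitonicity of the Galois map gives $\yvval{\vartheta}=\val{\vartheta}'\subseteq\val{\psi}'=\yvval{\psi}$ from $\val{\psi}\subseteq\val{\vartheta}$, so by \eqref{sat implication} any $x$ satisfying $\varphi\rfspoon\psi$ satisfies $\varphi\rfspoon\vartheta$; the antitonicity rule $\infrule{\psi\proves\vartheta}{\vartheta\rfspoon\varphi\proves\psi\rfspoon\varphi}$ is symmetric, using $\val{\psi}\subseteq\val{\vartheta}$ directly on the first-argument quantifier of \eqref{sat implication}.

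The only rule that genuinely uses the frame axioms is the bidirectional rule $\infrule{\varphi\proves\psi}{\overline{\type{t}\proves\varphi\rfspoon\psi}}$, and it rests on the biconditional of Axiom (U): for all $x\in W_1$ and $v\in W_\partial$, $x\upv v$ iff $uT'xv$ for every $u\in U$. Recall $\val{\type{t}}=U$, and that since $\val{\psi}$ is stable, $z\in\val{\psi}$ iff $z\upv y$ for all $y\in\yvval{\psi}$, while $y\in\yvval{\psi}=\val{\psi}'$ iff $a\upv y$ for all $a\in\val{\psi}$. For the downward direction, assume $\val{\varphi}\subseteq\val{\psi}$ and fix $u\in U$, $z\in\val{\varphi}$, $y\in\yvval{\psi}$; then $z\in\val{\psi}$ gives $z\upv y$, whence $uT'zy$ by (U), so $u\in\val{\varphi\rfspoon\psi}$ and $U\subseteq\val{\varphi\rfspoon\psi}$. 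For the upward direction, assume $U\subseteq\val{\varphi\rfspoon\psi}$ and fix $z\in\val{\varphi}$; for every $y\in\yvval{\psi}$, picking any $u\in U$ (and $U$ is necessarily non-empty, else (U) would force the frame's Galois relation to be vacuous, contradicting stability of $U$) we get $u\in\val{\varphi\rfspoon\psi}$, hence $uT'zy$, hence $z\upv y$ by (U); as this holds for all $y\in\yvval{\psi}$ we get $z\in\val{\psi}$, so $\val{\varphi}\subseteq\val{\psi}$. I expect the only real obstacle to be bookkeeping — keeping the sorting of the Galois-dual relation $T'$ straight, shuttling correctly between $\forces$ and $\dforces$ through the Galois connection, and invoking the right direction of (U); there is no deep content, since the class $\mathbb{PU}$ was designed precisely to make these axioms and rules sound.
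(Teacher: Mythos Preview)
Your proposal is correct and follows essentially the same strategy as the paper: both verify the rules against the satisfaction clause \eqref{sat implication} and invoke axiom (U) for the double-line rule. The paper first sets up the operators $\ltright$, $\Mtright$, $\Ra$ and uses (F1) to show that membership in $A\Ra C$ agrees with \eqref{sat implication}; you bypass this by working directly from \eqref{sat implication}, which is fine---and indeed makes your invocation of (F1) somewhat idle, since each $T'zy=(Tzy)'$ is stable automatically as the image of a Galois map.

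One point to tighten: in the upward direction of the double-line rule you write ``picking any $u\in U$ \ldots hence $uT'zy$, hence $z\upv y$ by (U)'', with a parenthetical insisting that $U$ be non-empty. Axiom (U) requires $uT'zy$ for \emph{all} $u\in U$ in order to conclude $z\upv y$, not for a single witness; your hypothesis $U\subseteq\val{\varphi\rfspoon\psi}$ delivers exactly that universal statement (vacuously if $U=\emptyset$), so the argument is fine, but the phrasing and the non-emptiness remark suggest existential reasoning and should be rewritten as ``for every $u\in U$ we have $uT'zy$; hence by (U), $z\upv y$''. The paper's proof handles this step by explicitly commuting the quantifiers before applying (U).
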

\begin{proof}
Given a frame  $\mathfrak{F}=(s,W,I,U,T^{\partial 1\partial})$, where $T\subseteq W_\partial\times(W_1\times W_\partial)$,
a sorted image operator $F_T=\largetriangleright:\powerset(W_1)\times\powerset(W_\partial)\lra\powerset(W_\partial)$ is defined on $X\subseteq W_1, Y\subseteq W_\partial$ by instantiating the definition in \eqref{sorted image ops} as shown below
\[
X{\largetriangleright} Y=F_T(X,Y)=\{y\in W_\partial\midsp\exists x,v(x\in X\;\wedge\;v\in Y\;\wedge\;yTxv)\}=\bigcup_{x\in X}^{v\in Y}Txv.
\]
Let ${\Mtright}:\gpsi\times\gphi\lra\gphi$ designate the closure $\overline{F}_T$ of the restriction of $F_T=\largetriangleright$ to Galois sets, defined on $A\in\gpsi, B\in\gphi$ by
\[
A\Mtright B=(A\ltright B)''=\left(\{y\in W_\partial\midsp\exists x,v(x\in A\;\wedge\;v\in B\;\wedge\;yTxv)\}\right)''=\bigvee_{x\in A, v\in B}Txv.
\]
Instantiating the definition in \eqref{2single-sorted}, we define a single-sorted binary operation $\Ra\;=\;\overline{F}^1_T$ . 
For $A,C\in\gpsi$, $A\Ra C=(A\Mtright C')'={}\rperp(A\Mtright C\rperp)$ and
observe that $A\Ra C=(A\Mtright C')'=(A\largetriangleright C')'''=(A\largetriangleright C')'$.
For membership in $A\Ra C$, we calculate that
\begin{tabbing}
$u\in(A\Ra C)$\hskip3mm\= iff\hskip2mm\= $u\upv (A\ltright C')$\\
\>iff\> $\forall y\in W_\partial(y\in A\ltright C'\lra u\upv y)$\\
\>iff\> $\forall y\in W_\partial(\exists x\in W_1\exists v\in W_\partial(x\in A\wedge C\upv v\wedge yTxv)\lra u\upv y)$\\
\>iff\> $\forall v\in W_\partial\forall x\in W_1(x\in A\wedge C\upv v\lra(Txv\subseteq\{u\}'))$\\
\>\> (using the stability axiom (F1))\\
\>iff\> $\forall v\in W_\partial\forall x\in W_1(x\in A\wedge C\upv v\lra(\Gamma u\subseteq T'xv))$\\
\>iff\>  $\forall v\in W_\partial\forall x\in W_1(x\in A\wedge C\upv v\lra uT'xv)$.
\end{tabbing}
Given the satisfaction clause for implication in~\eqref{sat implication}, for any model $\mathfrak{M}=(\mathfrak{F},V)$, $\val{\varphi\rfspoon\psi}_\mathfrak{M}=\val{\varphi}_\mathfrak{M}{\Ra}\val{\psi}_\mathfrak{M}= (\val{\varphi}_\mathfrak{M}\ltright\yvval{\psi}_\mathfrak{M})'$, recalling that $\val{\psi}'_\mathfrak{M}=\yvval{\psi}_\mathfrak{M}$.
By the definition of $\Ra$, the monotonicity rules are valid. 

For validity of the double-line rule {\small $\infrule{\varphi\proves\psi}{\overline{\type{t}\proves\varphi\rfspoon\psi}}$}, assume first that $A\subseteq C$, for $A,C\in\gpsi$, and let $u\in U$. To show $u\in A\Ra C$ and given the membership condition computed above, let $z\in A$ and $C\upv v$. The assumption implies $z\upv v$ and then by the frame axiom (U) it follows that $uT'zv$. Hence $u\in A\Ra C$ and so $U\subseteq A\Ra C$. For the converse, the inclusion $U\subseteq A\Ra C$ is equivalent to the condition $\forall u\in U\forall z,v(z\in A\wedge C\upv v\lra uT'zv)$, which is equivalent to $\forall z,v(z\in A\wedge C\upv v\lra\forall u\in U uT'zv)$. By the frame axiom (U), this is equivalent to $\forall z,v(z\in A\wedge C\upv v\lra z\upv v)$, which means that $C'\subseteq A'$ and so $A\subseteq C$.
\end{proof}

\begin{Remark}
\label{Ra as restriction rem}
A relation $R^{111}$ can be defined from $T^{\partial 1\partial}$, as in \mcite{Definition~6}{dfnmlA}. It generates an image operator $\bigodot$, residuated with implication operations ${}_T\hskip-1mm\La$ and $\Ra_T$. It was shown in \mcite{Proposition~5}{dfnmlA}, as an application of \mcite{Theorem~3.14}{duality2}, that $\Ra$ is the restriction of $\Ra_T$ to stable sets. This result is not available in the present context of frames for partial orders as it requires either the additional frame axioms corresponding to the (co)distribution properties of implication in a lattice, or the inclusion in the logic of residuals for implication, as in the Lambek calculus, see \cref{lambek section}.
\end{Remark}

\subsection{Representation, Canonical Extension and Completeness}
\label{rep section}
Let $\mathbf{P}$ be an implicative bounded above poset. 

\begin{Definition}
\label{canonical frame defn}
Define the  structure $\mathbf{P}_+=\mathfrak{F}=(s,W,I,U,T,\sigma)$, the dual frame of the implicative poset $\mathbf{P}$, by setting $s=\{1,\partial\}$, $\sigma(T)=(\partial: 1\partial)$ and
\begin{itemize}
\item $W_1=\filt(\mathbf{P})$, the set of filters of $\mathbf{P}$
\item $W_\partial=\idl(\mathbf{P})$, the set of ideals of $\mathbf{P}$
\item $I\subseteq W_1\times W_\partial$, defined by $xIy$ iff $x\cap y=\emptyset$
\item $U=\Gamma x_\type{e}$,
\item $T\subseteq W_\partial\times(W_1\times W_\partial)$, defined by 
\[
\mbox{$yTxv$ iff $\forall a,b\in P(a\in x\wedge b\in v\lra (a\ra b)\in y)$.}
\]
\end{itemize}
Let $\gpsi$, $\gphi$ be the respective complete lattices of Galois stable and costable sets. The representation and co-representation maps $\alpha,\eta$ are given by $\alpha(a)=\{x\in\filt(\mathbf{P})\midsp a\in x\}=\Gamma x_a$ and $\eta(a)=\{y\in\idl(\mathbf{P})\midsp a\in y\}=\Gamma y_a$.
\end{Definition}

We establish that both frame axioms (F1) and (U) hold for $\mathfrak{F}$, in Proposition~\ref{F1 is canonical} and Proposition~\ref{U is canonical}, respectively.

Recall first that in a frame $\mathfrak{F}=(s,W,I,(R_j)_{j\in J},\sigma)$ and $x\in W_1, y\in W_\partial$ we call the principal upper sets $\Gamma x, \Gamma y$ {\em closed elements} of $\gpsi$ and of $\gphi$, respectively. And that we also call $\rperp\{y\},\{x\}\rperp$ {\em open elements} of $\gpsi$ and of $\gphi$, respectively.  In the canonical frame we have
\begin{description}
  \item  $(\Gamma x_a)\rperp=\{x_a\}\rperp=\Gamma y_a$ and $\lperp\{y_a\}=\lperp(\Gamma y_a)=\Gamma x_a$.
\end{description}
Indeed, $x\upv y_a$ iff $a\in x$, hence $\rperp\{y_a\}=\eta(a)'=\Gamma x_a=\alpha(a)$ and similarly $\{x_a\}\rperp=\alpha(a)'=\eta(a)=\Gamma y_a$. 

\begin{Lemma}
\label{tright lemma}
For a filter $x$ and an ideal $v$, the set of elements $x\tright v$ defined in~\eqref{x-tright-v} is an ideal
\begin{equation}\label{x-tright-v}
x\tright v=\{e\midsp\exists a\in x\exists b\in v\;e\leq a\ra b\}
\end{equation}
and for poset elements $a,b$, the identity $x_a{\tright}y_b=\Gamma y_{a\ra b}=\eta(a\ra b)$ holds.
\end{Lemma}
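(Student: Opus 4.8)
\textbf{Proof plan for Lemma~\ref{tright lemma}.}

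The plan is to verify separately that $x\tright v$ is (i) a downset and (ii) upwards directed, and then to compute the special case $x_a\tright y_b$ directly from the definitions. For the downset property, suppose $e\in x\tright v$, witnessed by $a\in x$, $b\in v$ with $e\leq a\ra b$, and let $d\leq e$. Then $d\leq a\ra b$ by transitivity, and the same witnesses $a,b$ show $d\in x\tright v$. For upward directedness, take $e_1,e_2\in x\tright v$ with witnesses $a_1\in x, b_1\in v, e_1\leq a_1\ra b_1$ and $a_2\in x, b_2\in v, e_2\leq a_2\ra b_2$. Since $x$ is a filter it is downwards directed, so there is $a\in x$ with $a\leq a_1$ and $a\leq a_2$; since $v$ is an ideal it is upwards directed, so there is $b\in v$ with $b_1\leq b$ and $b_2\leq b$. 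The antitone/monotone axiom $(a\leq a_i)\wedge(b_i\leq b)\lra(a_i\ra b_i\leq a\ra b)$ gives $e_i\leq a_i\ra b_i\leq a\ra b$ for $i=1,2$, so $a\ra b$ itself lies in $x\tright v$ and is an upper bound of $\{e_1,e_2\}$. Hence $x\tright v\in\idl(\mathbf{P})$.

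For the identity $x_a\tright y_b=\Gamma y_{a\ra b}=\eta(a\ra b)$, recall $x_a=a{\uparrow}$ and $y_b=b{\downarrow}$. An element $e$ lies in $x_a\tright y_b$ iff there exist $c\geq a$ and $d\leq b$ with $e\leq c\ra d$; by the monotonicity axiom applied with $a\leq c$ and $d\leq b$ we get $c\ra d\leq a\ra b$, hence $e\leq a\ra b$, i.e. $e\in\Gamma y_{a\ra b}$ (using that $\Gamma$ of an element in $W_\partial=\idl(\mathbf{P})$ collects the $\leq$-smaller ideals, and for the principal ideal $y_{a\ra b}$ this is $(a\ra b){\downarrow}$ in the sense already fixed by the notation). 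Conversely, if $e\leq a\ra b$ then taking $c=a\in x_a$, $d=b\in y_b$ witnesses $e\in x_a\tright y_b$. The equality $\Gamma y_{a\ra b}=\eta(a\ra b)$ is just the definition of $\eta$ recalled immediately before the lemma.

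I do not expect a serious obstacle here; the only point requiring a little care is the correct bookkeeping of which direction of the monotonicity axiom is applied where, since $\ra$ is antitone in its first argument and monotone in its second, so in the directedness step one pushes the first coordinates \emph{down} (using that filters are downwards directed) and the second coordinates \emph{up} (using that ideals are upwards directed), and in both cases the axiom then moves $a_i\ra b_i$ \emph{up} to $a\ra b$. One should also note at the outset that $x\tright v$ is nonempty: since $\mathbf{P}$ is bounded above with top $1$ and $a\ra b\leq 1=\top$-style reasoning is not needed — simply pick any $a\in x$, $b\in v$ (both nonempty) and observe $a\ra b\in x\tright v$ via $a\ra b\leq a\ra b$. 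This also re-proves nonemptiness of the ideal directly.
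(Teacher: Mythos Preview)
Your argument is correct and matches the paper's approach: directedness of $x$ and $v$ together with the antitone/monotone law for $\ra$ produce a common upper bound in $x\tright v$, and the principal case reduces by the same monotonicity to $x_a\tright y_b=y_{a\ra b}$. Your parenthetical about $\Gamma$ collecting the $\leq$-\emph{smaller} ideals is backwards ($\Gamma u$ is the principal \emph{upper} set in the frame order), but this does not affect the proof, since what you actually establish---that $e\in x_a\tright y_b$ iff $e\leq a\ra b$---is exactly the content the paper proves.
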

\begin{proof}
Clearly $x{\tright}v$ is a downset. Let now $e_1,e_2\in (x\tright v)$ so that, by definition, there exist $a_1,a_2\in x$ and $b_1,b_2\in v$ such that $e_1\leq a_1\ra b_1$ and $e_2\leq a_2\ra b_2$.
  
Since $x$ is a filter, let $c_{12}\leq a_1, a_2$ be a lower bound in $x$. Since $v$ is an ideal, let $d_{12}\geq b_1,b_2$ be an upper bound in $v$. From $c_{12}\leq a_1$ and $b_1\leq d_{12}$ we obtain that $a_1\ra b_1\leq c_{12}\ra d_{12}$ and, similarly also that $a_2\ra b_2\leq c_{12}\ra d_{12}$. Then $c_{12}\ra d_{12}$ is in $x\tright v$, by definition of the latter, and it is also an upper bound of $e_1,e_2$. Thus $x\tright v$, which is obviously a downset, is also upwards directed, hence an ideal.

The point operator $\tright:W_1\times W_\partial\lra W_\partial$ defined in~\eqref{x-tright-v} is monotone in both argument places. Furthermore, $e\in x_a{\tright}y_b$ iff for some $a_1,b_1$ with $a\leq a_1,b_1\leq b$ we have $e\leq a_1\ra b_1\leq a\ra b$. Conclude that $x_a{\tright}y_b=\Gamma y_{a\ra b}$.
\end{proof}

\begin{Proposition}
\label{F1 is canonical}
  $Txv$ is a Galois set, for any filter $x$ and ideal $v$. More precisely, it is a closed element $Txv=\Gamma(x{\tright}v)$ of $\gphi$. Hence the canonical frame satisfies the frame axiom (F1).
\end{Proposition}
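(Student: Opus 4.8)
The plan is to show directly that the section $Txv \subseteq W_\partial$ coincides with the principal upper set $\Gamma(x \tright v)$, where $x \tright v$ is the ideal defined in Lemma~\ref{tright lemma}; since $\Gamma(\text{ideal})$ is a closed element of $\gphi$ and closed elements are Galois sets by the basic facts recorded after Remark~\ref{discrete order}, the two claims of the proposition (Galois set, and in fact closed element) follow at once, and hence axiom (F1) holds in the canonical frame.

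First I would unfold the definition of $T$ in the canonical frame: for an ideal $y \in W_\partial$, we have $yTxv$ iff for all $a \in x$ and all $b \in v$, $(a \ra b) \in y$. I would then show this is equivalent to $x \tright v \subseteq y$. The forward direction: if $e \in x \tright v$, then $e \leq a \ra b$ for some $a \in x$, $b \in v$; since $y$ is a downset (being an ideal) and $(a \ra b) \in y$ by assumption, we get $e \in y$. The backward direction is immediate, since for every $a \in x$, $b \in v$ we have $a \ra b \in x \tright v \subseteq y$ by the definition of $x \tright v$ (taking $e = a \ra b$ and using reflexivity $a \ra b \leq a \ra b$). Thus $yTxv$ iff $x \tright v \subseteq y$, i.e. iff $y \in \Gamma(x \tright v)$ in the order on $W_\partial = \idl(\mathbf{P})$ (recall the order on ideals, read as elements of the $\partial$-sort, is such that $\Gamma w$ is the set of ideals containing $w$; equivalently one checks $w \leq y$ in the frame preorder iff $w \subseteq y$ for the poset's associated frame). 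Hence $Txv = \Gamma(x \tright v)$.

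Then, invoking Lemma~\ref{tright lemma}, $x \tright v$ is an ideal, so $\Gamma(x \tright v)$ is by definition a closed element of $\gphi$, and closed elements are Galois sets (indeed $\Gamma u = \{u\}''$ for any element $u$). This establishes that $Txv$ is a closed Galois set, which is exactly frame axiom (F1) together with the sharper statement. The one point requiring minor care — and the only place I expect any friction — is getting the direction of the frame order on ideals right, so that "$x \tright v \subseteq y$" genuinely reads as "$y \in \Gamma(x \tright v)$"; this amounts to recalling that for $\mathbf{P}_+$ the preorder $\leq$ on $W_\partial = \idl(\mathbf{P})$ defined via $\{y\}' \subseteq \{y_1\}'$ is just reverse inclusion—or rather, matches inclusion of ideals under the convention used in Proposition~\ref{delta1-lemma}, where the filter elements of $\gphi$ are the sets $\Gamma y$ with $y \subseteq y_1$ characterizing membership. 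Once that bookkeeping is fixed, everything else is the short computation above.
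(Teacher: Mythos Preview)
Your proof is correct and follows essentially the same approach as the paper's own proof: both establish the equivalence $yTxv$ iff $x\tright v\subseteq y$ via the same forward/backward argument, then identify $Txv$ with the closed element $\Gamma(x\tright v)$. Your caution about the frame order on ideals is warranted but easily discharged (one checks $y_1\leq y_2$ in the frame preorder iff $y_1\subseteq y_2$ as sets, using principal filters $x_a$), and the paper simply takes this identification for granted.
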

\begin{proof}
We establish that $yTxv$ iff $(x\tright v)\subseteq y$.

Assume $yTxv$ and let $e\in(x{\tright}v)$. Let $a\in x, b\in v$ so that $e\leq a\ra b$. By definition of $T$, $a\in x, b\in v$ implies that $a\ra b\in y$. Now $y$ is an ideal and $e\leq a\ra b$, so $e\in y$.

For the converse, assuming $(x\tright v)\subseteq y$ and $a\in x, b\in v$ we need to show that $a\ra b\in y$. But $(a\ra b)\in (x{\tright}v)\subseteq y$. 

The argument showed that $Txv$, for any filter $x$ and ideal $v$, is a Galois set, more precisely that it is a closed element $\Gamma(x{\tright}v)=\{y\midsp (x{\tright}v)\subseteq y\}$ of $\gphi$.
\end{proof}

\begin{Proposition}
\label{U is canonical}
For any filter $x\in W_1$ and ideal $v\in W_\partial$, $x\upv v$ iff $\forall u\in U\; uT'xv$. Therefore the canonical frame satisfies the frame axiom (U), as well, and so it belongs to the frame class $\mathbb{PU}$.
\end{Proposition}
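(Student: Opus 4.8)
The plan is to unwind both the frame relation $T$ and the unit set $U$ down to concrete statements about elements of $\mathbf{P}$, after which each direction becomes a one-line argument using the two defining poset axioms. First I would invoke Proposition~\ref{F1 is canonical}, which gives $Txv=\Gamma(x\tright v)$, the closed element of $\gphi$ determined by the ideal $x\tright v$ of Lemma~\ref{tright lemma}. Combined with the basic identity $(\Gamma w)'=\{w\}'$, this yields $T'xv=(Txv)'=\{x\tright v\}'$, so that $uT'xv$ holds precisely when $u\upv(x\tright v)$, i.e.\ (in the canonical frame, where $\upv$ is non-disjointness) when $u\cap(x\tright v)\neq\emptyset$. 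I would also record that $U=\Gamma x_\type{e}=\{u\in\filt(\mathbf{P})\midsp\type{e}\in u\}$, so the right-hand side of the claimed equivalence reads: every filter containing $\type{e}$ meets the ideal $x\tright v$.

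For the forward direction, suppose $x\upv v$ and pick $c\in x\cap v$. From $c\leq c$ and the axiom $a\leq b\leftrightarrow\type{e}\leq a\ra b$ we get $\type{e}\leq c\ra c$, while $c\in x$, $c\in v$ and $c\ra c\leq c\ra c$ put $c\ra c\in x\tright v$. Hence for every $u\in U$ the element $c\ra c$ lies in $u$ (since $\type{e}\in u$ and $u$ is an upset) and in $x\tright v$, witnessing $u\cap(x\tright v)\neq\emptyset$, i.e.\ $uT'xv$.

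For the converse, I would instantiate the hypothesis at the principal filter $x_\type{e}$, which belongs to $U$ because $\type{e}\in x_\type{e}$. Thus $x_\type{e}\cap(x\tright v)\neq\emptyset$: choose $e$ with $\type{e}\leq e$ and, for some $a\in x$ and $b\in v$, $e\leq a\ra b$. Then $\type{e}\leq a\ra b$, so $a\leq b$ by the axiom; since $x$ is an upset and $a\in x$, this gives $b\in x$, and as $b\in v$ we obtain $b\in x\cap v$, i.e.\ $x\upv v$. Finally, $U=\Gamma x_\type{e}=\{x_\type{e}\}''$ is a Galois stable set by the basic facts, so the canonical frame satisfies axiom (U) in full and hence $\mathfrak{F}\in\mathbb{PU}$. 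The only point requiring care — hardly a real obstacle — is keeping the Galois dualizations straight, so as to see that $uT'xv$ is genuinely the assertion $u\cap(x\tright v)\neq\emptyset$; and in the converse one must have the concrete member $x_\type{e}\in U$ available to feed the universal quantifier, which is precisely where the distinguished element $\type{e}$ of an implicative poset does its work.
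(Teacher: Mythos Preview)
The proposal is correct and follows essentially the same route as the paper: both first identify $uT'xv$ with $u\upv(x{\tright}v)$ via Proposition~\ref{F1 is canonical}, then for the forward direction use that $\type{e}\leq c\ra c\in x{\tright}v$ for any $c\in x\cap v$, and for the converse instantiate at the principal filter $x_\type{e}\in U$ to extract $\type{e}\leq a\ra b$ and hence $a\leq b$. The only cosmetic difference is that the paper pushes the witness down to $\type{e}$ itself (using that $x{\tright}v$ is a downset), whereas you take the intersection at $c\ra c$; both are fine.
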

\begin{proof}
First, notice that $uT'xv$ iff $u\upv(x{\tright}v)$, i.e. $T'xv=\lperp\{x{\tright}v\}$. This follows from the definition of the Galois dual relation $T'$ of $T$, from the fact that $yTxv$ iff $x{\tright}v\subseteq y$, as established in the proof of Proposition~\ref{F1 is canonical} and from the fact that $\upv$ is increasing in both argument places, by the way the order was defined.

Now assume $x\upv v$ and let $a\in x\cap v$. Then $\type{e}\leq a\ra a\in (x{\tright}v)$, hence $\type{e}\in x{\tright}v$. Then for any filter $u\in U=\Gamma x_\type{e}$, $u\cap(x{\tright}v)\neq\emptyset$. In other words, $u\upv x{\tright}v$, i.e. $uT'xv$. For the converse, assuming that $x{\tright}v$ intersects every filter $u\in U$ and taking in particular $u=x_\type{e}$, the principal filter generated by $\type{e}$, we obtain that $\type{e}\in(x{\tright}v)$. By definition of the ideal $x{\tright}v$, let $a\in x, b\in v$ such that $\type{e}\leq a\ra b$. By the double-line rule of the proof system this is equivalent to $a\leq b$ and this implies that $b\in x$ as well, hence $x\upv v$.
\end{proof}

Let $\gpsi$, $\gphi$ be the full complex and the dual full complex algebra of the dual frame of $\mathbf{P}$. 

\begin{Proposition}
\label{rep embedding prop}
The representation map $\alpha$ is an embedding $\alpha:\mathbf{P}\hookrightarrow\gpsi$ of implicative posets.
\end{Proposition}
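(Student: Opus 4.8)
The plan is to verify, in turn, that $\alpha$ is (i) well-defined, i.e.\ $\alpha(a)=\Gamma x_a\in\gpsi$ is a Galois stable set; (ii) an order embedding; (iii) preserves the truth constant $\type{e}$; and (iv) a homomorphism for $\ra$, i.e.\ $\alpha(a\ra b)=\alpha(a)\Ra\alpha(b)$. For (i), note $\alpha(a)=\Gamma x_a=\{x\in\filt(\mathbf{P})\midsp a\in x\}$ is the principal closed element $\Gamma x_a$ of $\gpsi$, which is Galois stable by the basic facts listed after \cref{discrete order} ($\Gamma u=\{u\}''$ for any $u$). For (ii), $\alpha$ is clearly monotone; conversely, if $\alpha(a)\subseteq\alpha(b)$ then since $x_a\in\alpha(a)$ we get $x_a\in\alpha(b)$, i.e.\ $b\in x_a=a{\uparrow}$, so $a\leq b$. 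Injectivity follows from antisymmetry of $\leq$ in $\mathbf{P}$. For (iii), by the definition $U=\Gamma x_\type{e}=\alpha(\type{e})$ (a frame axiom side-condition already recorded in \cref{canonical frame defn}), so $\alpha$ maps the distinguished element $\type{e}$ to the distinguished stable set $U$, which is what preservation of the truth constant means.

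The substantive step is (iv). The key computation has essentially been done already: by \cref{tright lemma} we have the identity $x_a\tright y_b=\Gamma y_{a\ra b}=\eta(a\ra b)$, and by \cref{F1 is canonical} we know $T x v=\Gamma(x\tright v)$, so $T'xv=\lperp\{x\tright v\}$ (as noted in the proof of \cref{U is canonical}). Recalling the membership condition for $\Ra$ derived in the proof of \cref{lambda t soundness}, namely that $u\in(A\Ra C)$ iff for all $x\in W_1$ and $v\in W_\partial$ with $x\in A$ and $C\upv v$ one has $u T'xv$, I will instantiate $A=\alpha(a)=\Gamma x_a$ and $C=\alpha(b)=\Gamma x_b$. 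Here the range $\{v\midsp C\upv v\}$ is exactly $\{v\midsp v\upv x_b\}$, and since $\alpha(b)=\lperp\{y_b\}$ is an open element, the quantifier over such $v$ is controlled by the single ideal $y_b$ (using that $\Gamma u\subseteq T'xv$ reduces to $uT'xv$ via stability, and the monotonicity of $\tright$ from \cref{tright lemma}): concretely, $u\in\alpha(a)\Ra\alpha(b)$ iff $u\upv(x_a\tright y_b)$ iff $u\upv\eta(a\ra b)=\Gamma y_{a\ra b}$ iff $(a\ra b)\in u$ iff $u\in\alpha(a\ra b)$.

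The main obstacle I anticipate is not any deep idea but the careful bookkeeping in (iv): one must argue that restricting the universal quantifier over all $x\in A$ and all $v$ with $C\upv v$ down to the single principal filter $x_a$ and single principal ideal $y_b$ is legitimate. This uses monotonicity of $\tright$ in both arguments (from \cref{tright lemma}), the facts that $\Gamma x_a=\bigcap_{v\upv x_a}\{v\}'$-type density representations of closed and open elements, and the increasing-ness of $\upv$; i.e.\ one shows $u\upv(x\tright v)$ for all $(x,v)$ with $x_a\subseteq x$ and $v\subseteq y_b$ is equivalent to $u\upv(x_a\tright y_b)$, since $x_a\tright y_b$ is the $\subseteq$-largest such ideal. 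Once this reduction is in place, the chain of equivalences above closes the proof. I would also remark that the integral case ($\type{e}=1$, $U=W_1$) is the obvious specialization and requires no separate argument.
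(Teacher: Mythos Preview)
Your plan is essentially the paper's own proof: reduce to showing $\alpha(a\ra b)=\alpha(a)\Ra\alpha(b)$, use the membership condition for $\Ra$ from the soundness proof, and close the equivalence via the identity $x_a\tright y_b=\Gamma y_{a\ra b}$ together with monotonicity of $\tright$ and of $\upv$.

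There is, however, a slip in exactly the bookkeeping step you flag as the obstacle. The condition $C\upv v$ for $C=\alpha(b)=\Gamma x_b={}\rperp\{y_b\}$ unwinds to $b\in v$, i.e.\ $y_b\subseteq v$, not $v\subseteq y_b$. Consequently, since $\tright$ is monotone in both arguments, $x_a\tright y_b$ is the \emph{smallest} ideal among the $x\tright v$ with $x_a\subseteq x$ and $y_b\subseteq v$, not the largest. The direction of the argument then runs: from $(a\ra b)\in u$ one gets $u\upv(x_a\tright y_b)$, and since $x_a\tright y_b\subseteq x\tright v$ for every such $(x,v)$ and $\upv$ is increasing, it follows that $u\upv(x\tright v)$ for all of them; the converse is just instantiation at $(x_a,y_b)$. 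With this correction your argument goes through and matches the paper's.
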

\begin{proof}
It suffices to show that for any elements $a,b$ in $\mathbf{P}$, $\alpha(a\ra b)=\alpha(a)\Ra\alpha(b)$.

By definition, $\alpha(a)\Ra\alpha(b)=(\alpha(a)\ltright(\alpha(b))')'=(\alpha(a)\ltright\eta(b))'=\Gamma x_a\Ra\lperp\{y_b\}$.
\begin{tabbing}
$\alpha(a)\Ra\alpha(b)$\hskip2mm\==\hskip2mm\=$\Gamma x_a\Ra\rperp\{y_b\}$\hskip2mm\==\hskip2mm\=$\{u\in W_1\midsp \forall x,v(x\in\Gamma x_a\wedge \rperp\{y_b\}\upv v\lra uT'xv)\}$\\
\>\>\>=\> $\{u\in W_1\midsp \forall x,v(a\in x\wedge b\in v\lra uT'xv)\}$\\
\>\>\>=\> $\{u\in W_1\midsp \forall x,v(a\in x\wedge b\in v\lra u\upv(x{\tright}v))\}$.
\end{tabbing}
If $u\in (\Gamma x_a\Ra\rperp\{y_b\})$, then in particular $u\upv(x_a{\tright}y_b)$ and since $(x_a{\tright}y_b)=\Gamma y_{a\ra b}$ it follows that $(a\ra b)\in u$, i.e. $u\in \Gamma x_{a\ra b}=\alpha(a\ra b)$, and so $(\alpha(a)\Ra\alpha(b))\subseteq \alpha(a\ra b)$. 

Conversely, if $u\in \alpha(a\ra b)=\Gamma x_{a\ra b}$, i.e. $(a\ra b)\in u$, then $u\upv\Gamma y_{a\ra b}$, equivalently $u\upv (x_a{\tright}y_b)$ and thus, by monotonicity of $\tright$ and of $\upv$, for any $x,v$ such that $x_a\subseteq x, y_b\subseteq v$ we shall have $u\upv (x{\tright}v)$. This means that $\alpha(a\ra b)\subseteq(\alpha(a)\Ra\alpha(b))$. 

Conclude that $\alpha(a\ra b)=(\alpha(a)\Ra\alpha(b))=(\alpha(a){\ltright}(\alpha(b))')'=(\alpha(a){\ltright}\eta(b))'$. 
\end{proof}

\begin{Theorem}[Completeness]
The logic $\mathbf{\Lambda}_t$ of implicative posets is sound and complete in the class $\mathbb{PU}$ of frames validating the axioms (F1), (U) stated in Table~\ref{substructural implicative poset frame axioms}.
\end{Theorem}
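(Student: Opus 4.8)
The plan is to prove completeness by the standard canonical-frame (Lindenbaum--Tarski) method, soundness being already established in \cref{lambda t soundness}. First I would take $\mathbf{P}$ to be the Lindenbaum--Tarski algebra of $\mathbf{\Lambda}_t$, which, as noted, is an implicative bounded above poset; write $[\varphi]$ for the equivalence class of $\varphi$ and set $\type{e}=[\type{t}]$, so that $[\varphi]\leq[\psi]$ iff $\varphi\proves\psi$ is provable and $\type{e}\leq[\varphi]$ iff $\type{t}\proves\varphi$ is provable. Form its dual frame $\mathfrak{F}=\mathbf{P}_+=(s,W,I,U,T,\sigma)$ of \cref{canonical frame defn} (both $\filt(\mathbf{P})$ and $\idl(\mathbf{P})$ are nonempty, since $\mathbf{P}$ has a top and all principal filters and ideals exist). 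By \cref{F1 is canonical} and \cref{U is canonical}, $\mathfrak{F}$ validates (F1) and (U), hence $\mathfrak{F}\in\mathbb{PU}$.

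Next I would set up the canonical model $\mathfrak{M}=(\mathfrak{F},V)$ with $V^1(p)=\alpha([p])$, which is a Galois stable set (the closed element $\Gamma x_{[p]}$). The key step is the \emph{truth lemma}: $\val{\varphi}_{\mathfrak{M}}=\alpha([\varphi])$ for all $\varphi\in\mathcal{L}$, by induction on the structure of $\varphi$. The variable case holds by definition of $V$. For $\top$: every filter of $\mathbf{P}$ contains the top $1$, so $\val{\top}_{\mathfrak{M}}=W_1=\alpha(1)=\alpha([\top])$. For $\type{t}$: by the satisfaction clause, $u\forces\type{t}$ iff $u\in U$, and $U=\Gamma x_{\type{e}}=\alpha(\type{e})=\alpha([\type{t}])$. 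For $\varphi\rfspoon\psi$: the clause \eqref{sat implication} says $\val{\varphi\rfspoon\psi}_{\mathfrak{M}}=\val{\varphi}_{\mathfrak{M}}\Ra\val{\psi}_{\mathfrak{M}}$, and applying the induction hypothesis and then \cref{rep embedding prop} --- which asserts precisely that $\alpha$ commutes with $\ra$, i.e.\ $\alpha(a)\Ra\alpha(b)=\alpha(a\ra b)$ --- gives $\alpha([\varphi])\Ra\alpha([\psi])=\alpha([\varphi]\ra[\psi])=\alpha([\varphi\rfspoon\psi])$.

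Finally I would conclude. Since $\alpha:\mathbf{P}\hookrightarrow\gpsi$ is an order embedding (if $a\not\leq b$ then the principal filter $x_a$ witnesses $x_a\in\alpha(a)\setminus\alpha(b)$), the truth lemma transfers orderings faithfully: $\val{\varphi}_{\mathfrak{M}}\subseteq\val{\psi}_{\mathfrak{M}}$ iff $[\varphi]\leq[\psi]$ iff $\varphi\proves\psi$ is provable, so an unprovable sequent $\varphi\proves\psi$ is refuted in the $\mathbb{PU}$-model $\mathfrak{M}$, which gives completeness for sequents. For single sentences, $\varphi$ is valid in $\mathfrak{M}$ iff $U=\val{\type{t}}_{\mathfrak{M}}\subseteq\val{\varphi}_{\mathfrak{M}}$ iff $\alpha(\type{e})\subseteq\alpha([\varphi])$ iff $\type{e}\leq[\varphi]$ iff $\type{t}\proves\varphi$ is provable. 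Given how much of the work is already packaged into \cref{F1 is canonical}, \cref{U is canonical} and \cref{rep embedding prop}, no serious obstacle remains; the only places demanding care are the base case for $\type{t}$ (correctly identifying the distinguished stable set $U$ with $\alpha$ of the designated element $\type{e}$) and checking that the binary operation $\Ra$ appearing in the satisfaction clause \eqref{sat implication} is literally the operation $\overline{F}^1_T$ for which \cref{rep embedding prop} was proved, so that the implication step of the truth lemma becomes a one-line citation rather than a fresh computation.
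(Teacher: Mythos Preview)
Your proposal is correct and follows essentially the same approach as the paper: soundness is cited from \cref{lambda t soundness}, the canonical frame is shown to lie in $\mathbb{PU}$ via \cref{F1 is canonical} and \cref{U is canonical}, and completeness is deduced from the fact that $\alpha$ is an embedding of implicative posets (\cref{rep embedding prop}). The paper's proof is simply terser, leaving the truth lemma implicit, whereas you spell out the induction cases explicitly; the underlying argument is identical.
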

\begin{proof}
The claim is that
for any sentences $\varphi,\psi$ in the language of the logic and any model $\mathfrak{M}=(\mathfrak{F},V)$, whose base frame $\mathfrak{F}$ validates axioms (F1) and (U), $\varphi\proves\psi$ iff $\val{\varphi}_\mathfrak{M}\subseteq\val{\psi}_\mathfrak{M}$. The soundness direction was proven in \cref{lambda t soundness}. For the converse, if $\varphi\not\proves\psi$, then we verify that in the canonical model $\val{\varphi}_\mathfrak{M}\not\subseteq\val{\psi}_\mathfrak{M}$. This follows from the fact that $\varphi\not\proves\psi$ iff $[\varphi]\not\leq[\psi]$ in the Lindenbaum-Tarski algebra of the logic and then by the fact that the representation map $\alpha$ is an implicative bounded above poset embedding, proven in Proposition~\ref{rep embedding prop}.
\end{proof}

\begin{Proposition}
\label{pi extension lemma}
The complete lattice $\mathcal{G}(\filt(\mathbf{P}))$ of Galois stable sets of the frame $(\filt(\mathbf{P}),I,\idl(\mathbf{P}))$ is a canonical extension of the poset $\mathbf{P}$. In addition, the stable-sets operation $\Ra$, defined by $A\Ra C=(A\ltright C')'$ is the $\pi$-extension of the implication operation $\ra$ in $\mathbf{P}$. 
\end{Proposition}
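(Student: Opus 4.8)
The first assertion needs no new work: by \cref{delta1-lemma} the complete lattice $\gpsi=\mathcal{G}(\filt(\mathbf{P}))$ is a $\Delta_1$-completion of $\mathbf{P}$ via the embedding $\alpha$, and by \cref{compactness of delta1} it is compact; hence, in the sense of \cref{canext defn}, $(\gpsi,\Ra)$ is a canonical extension of $\mathbf{P}$ as soon as $\Ra$ is identified with the $\pi$-extension $\ra^\pi$ of $\ra$ --- and it is the $\pi$-, not the $\sigma$-, extension that is called for, because the output type of $\ra$ is $\partial$. So the substance of the proposition is the identity $A\Ra C=\ra^\pi(A,C)$ for all $A,C\in\gpsi$.

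My plan is to verify directly that $\Ra$ obeys the defining recipe of the $\pi$-extension of the monotone map $\ra\colon\mathbf{P}^1\times\mathbf{P}^\partial\to\mathbf{P}^\partial$ of type $(1,\partial;\partial)$. First I would recall, from \cref{delta1-lemma}, that the closed elements of $\gpsi$ are exactly the sets $\Gamma x$ ($x\in\filt(\mathbf{P})$), the open elements exactly the sets $\rperp\{v\}$ ($v\in\idl(\mathbf{P})$), that each $\alpha(a)=\Gamma x_a=\rperp\{y_a\}$ is clopen, and that $(\alpha(a))'=\eta(a)=\Gamma y_a$. The first step is that $\Ra$ agrees with $\ra$ on the dense subset $\alpha[\mathbf{P}]$, which is precisely \cref{rep embedding prop}. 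The second step is the ``one-step'' value: for the type $(1,\partial;\partial)$ the $\pi$-recipe approaches the first argument from above by open elements and the second from below by closed elements, and on an extreme pair $(\rperp\{v\},\Gamma x)$ its value is the meet (in $\gpsi$) of the clopen values $\alpha(a)\Ra\alpha(b)=\alpha(a\ra b)$ with $a$ ranging over $v$ and $b$ over $x$. I would obtain this by unfolding $\rperp\{v\}\Ra\Gamma x=(\rperp\{v\}\ltright(\Gamma x)')'$ and invoking the membership characterisation of $\Ra$ computed in the proof of \cref{lambda t soundness}, together with $Txv=\Gamma(x\tright v)$ (\cref{F1 is canonical}) and the identities for $x_a\tright y_b$ from \cref{tright lemma}; the stability axiom (F1), canonical by \cref{F1 is canonical}, is what makes the closures inside the image operator collapse as required.

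The third step recovers $\Ra$ everywhere as the appropriate limit of these one-step values, that is, as the join in $\gpsi$ (the meet in $\mathbf{P}^\partial$) of $\rperp\{v\}\Ra\Gamma x$ taken over all opens $\rperp\{v\}\supseteq A$ and closeds $\Gamma x\subseteq C$. Here I would use the $\Sigma_1$- and $\Pi_1$-density recorded in \cref{delta1-lemma} and \cref{prelims section} --- every stable set is simultaneously a join of closed elements and a meet of open elements, and every closed, resp.\ open, element is a meet, resp.\ join, of clopen $\alpha$-images --- together with the elementary facts that the sorted image operator $\ltright$ preserves arbitrary unions in each argument while the Galois maps $(\;)'$ interchange unions and intersections; this is exactly the amount of continuity that the $\pi$-formula demands.

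The hard part is this third step, together with getting the bookkeeping of the mixed type $(1,\partial;\partial)$ right in the second: one must approximate each argument from the side dictated by its $1$/$\partial$ flag together with the output flag $\partial$, so that it is the $\pi$- and not the $\sigma$-extension that is reproduced, and one must check that the concretely defined $\Ra$ is continuous in precisely that manner rather than merely monotone. Once these are in place the identification is routine; Steps one to three are in fact the specialisation to the present poset setting of the general coincidence of the induced operator $\overline{F}^1_j$ on Galois stable sets with the $\sigma$-/$\pi$-canonical extension established for normal lattice expansions in the author's earlier work.
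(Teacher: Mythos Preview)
Your plan has the approximation directions reversed, and this is precisely the ``bookkeeping of the mixed type'' you flagged as the hard part. You treat $\ra$ as a monotone map $\mathbf{P}\times\mathbf{P}^\partial\to\mathbf{P}^\partial$ and take its $\pi$-extension there; the paper instead (and this is the intended reading of \cref{canext defn}) views $\ra$ with \emph{order type} $(\partial,1)$, as a monotone map $\mathbf{P}^\partial\times\mathbf{P}\to\mathbf{P}$, and computes the $\pi$-extension of \emph{that}. These two $\pi$-extensions are not the same: by the standard duality $(f^\partial)^\pi=(f^\sigma)^\partial$, your recipe is the $\sigma$-extension in the paper's sense, not the $\pi$-extension. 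Concretely, the correct one-step pairs are $(\Gamma x,\rperp\{v\})$ --- closed in the first slot, open in the second --- with value $\Gamma x\ra^\pi\rperp\{v\}=\bigvee\{\alpha(a\ra b)\mid a\in x,\ b\in v\}=\rperp\{x{\tright}v\}$, and then $A\Ra C=\bigcap\{\Gamma x\ra^\pi\rperp\{v\}\mid x\in A,\ C\upv v\}$. Your pairs $(\rperp\{v\},\Gamma x)$ and the outer \emph{join} give the $\bigvee\bigwedge$ formula instead of the required $\bigwedge\bigvee$.

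The error shows up operationally in your Step~3. You invoke that $\ltright$ preserves arbitrary unions, but with your decomposition $A=\bigcap_{A\upv v}\rperp\{v\}$ and $C'=\bigcap_{x\in C}\{x\}'$ both arguments of $\ltright$ are \emph{intersections}, over which an image operator has no reason to distribute; so the continuity you need simply is not there. With the paper's decomposition, by contrast, one writes $A=\bigcup_{x\in A}\Gamma x$ and $C'=\bigcup_{C\upv v}\Gamma v$, applies union-preservation of $\ltright$ to get $A\ltright C'=\bigcup_{x\in A,\,C\upv v}(\Gamma x\ltright\Gamma v)$, and then the Galois map turns this union into the required intersection $A\Ra C=\bigcap_{x\in A,\,C\upv v}(\Gamma x\ltright\Gamma v)'=\bigcap T'xv$. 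This is exactly the identity $A\Ra C=\bigcap_{x\in A,\,C\upv v}T'xv$ already extracted in the soundness argument; the remaining work is to identify $T'xv=\rperp\{x{\tright}v\}$ with $\Gamma x\ra^\pi\rperp\{v\}$, which follows from $Txv=\Gamma(x{\tright}v)=\bigcap_{a\in x,\,b\in v}\Gamma y_{a\ra b}$ (\cref{F1 is canonical} and \cref{tright lemma}). Once you swap to the correct closed/open roles, your outline goes through essentially as the paper's proof.
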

\begin{proof}
The first claim was proven in \cref{compactness of delta1}, so we only argue that $\Ra$ is the $\pi$-extension of the implication operator $\ra$. 

Clarifying a point that might cause some confusion, we note that in the RS-frames approach maps are characterized by their {\em order types}, which is common in the literature on ordered algebras. To an $n$-ary map, a sequence in $\{1,\partial\}^n$ is associated, with the convention that a $1$ in the $k$-th place means that the map is monotone at the $k$-th argument place and a $\partial$ in that place means that the map is antitone in that place. Hence implication has the order-type $(\partial, 1)$, i.e. in this tradition it is regarded as a map $f:\mathbf{P}^\partial\times\mathbf{P}\lra\mathbf{P}$. This is significant because it affects what is to be considered as a closed or open element in the definition of $\sigma$ and $\pi$-extensions in \cite{mai-harding}.

Recalling definitions, the $\pi$-extension $f^\pi$ of a monotone unary map $f:\mathbf{P}\lra\mathbf{P}$ (and analogously for an $n$-ary map of any given order type) is defined on an open element $q$ of the canonical extension $\mathbf{P}^\sigma$ of $\mathbf{P}$ by $f^\pi(q)=\bigvee\{f(a)\midsp a\in P\mbox{ and }a\leq q\}$, where $P$ is viewed, for simplicity, as a subset of $P^\sigma$. It is then extended to any element $p$ of the canonical extension by setting $f^\pi(p)=\bigwedge\{f^\pi(q)\midsp p\leq q \mbox{ and } q \mbox{ an open element}\}$.

Implication has the order-type $(\partial,1)$ so that an open element in $(\mathbf{P}^\sigma)^\partial\times\mathbf{P}^\sigma$ is a pair $(\Gamma x, \lperp\{v\})$ of a closed element $\Gamma x$ of $\mathbf{P}^\sigma$ (which is an open element of its dual, $(\mathbf{P}^\sigma)^\partial$) and an open element $\lperp\{v\}$ of $\mathbf{P}^\sigma$. The definition of $\ra^\pi $, after working out details, is then given by
\begin{equation}
\label{pi extension of implication}
\begin{array}{ccl}
  \Gamma x\ra^\pi\lperp\{y\} &=& \bigvee\{\Gamma x_{a\ra b}\midsp a\in x\mbox{ and }b\in y\} \\
  A \ra^\pi C &=& \bigcap\{\Gamma x\ra^\pi \lperp\{y\}\midsp x\in A\mbox{ and }C\upv y\}.
  \end{array}
\end{equation}
It was shown in the course of the soundness proof, \cref{lambda t soundness}, that for any stable sets $A,C$, we have that
$A\Ra C=(A\ltright C')'=\{u\in W_1\midsp\forall x\in W_1\forall v\in W_\partial(x\in A\wedge C\upv v\lra uT'xv)\}$. Hence, $A\Ra C=\bigcap_{C\upv v}^{x\in A}T'xv$. Given the definition \eqref{x-tright-v} of the ideal $x{\tright}v$ in the canonical frame (see Lemma~\ref{tright lemma}), it was shown in Proposition~\ref{F1 is canonical} for the canonical relation $T$ that $yTxv$ iff $(x{\tright}v)\subseteq y$, so that $Txv=\Gamma(x{\tright}v)$ and then $T'xv=\lperp\{x{\tright}v\}$. We then obtain that $A\Ra C=\bigcap\{\lperp\{x{\tright}v\}\midsp{C\upv v}\mbox{ and }{x\in A}\}$. 

We claim that the identity $\Gamma x{\ra^\pi} \lperp\{v\}=\lperp\{x{\tright}v\}$ holds in the canonical frame. Given the definition of $\ra^\pi $ and taking duals, this is equivalent to showing that $\Gamma(x{\tright}v)=\bigcap^{a\in x}_{b\in y}\Gamma y_{a\ra b}$, recalling that $\{x_{a\ra b}\}\rperp=\Gamma y_{a\ra b}$. It was shown in \cref{F1 is canonical} that $\Gamma(x{\tright}v)=Txv$ where, by \cref{canonical frame defn}, $yTxv$ iff $\forall a,b(a\in x\mbox{ and }b\in v \lra (a\ra b)\in y)$. Since $(a\ra b)\in y$ iff $y_{a\ra b}\subseteq y$ iff $y\in \Gamma y_{a\ra b}$ it follows that $\Gamma(x{\tright}v)=\bigcap^{a\in x}_{b\in y}\Gamma y_{a\ra b}$, as needed.

Conclude that the operation $\Ra$, defined by $A\Ra C=(A\ltright C')'$, satisfies the identities~\eqref{pi extension of implication} defining the $\pi$-extension of implication in the canonical frame. This means that $(\gpsi,\subseteq,W_1,\Ra)$ is a canonical extension of the implicative poset $\mathbf{P}=(P,\leq,1,\ra)$.
\end{proof}

\subsection{The Unit as a Left Identity Element}
We only assumed so far that the distinguished element $\type{e}$ of the implicative poset satisfies the axiom $a\leq b$ iff $\type{e}\leq a\ra b$, characterizing the order in the poset by means of implication. In substructural logics $\type{t}$ is an identity element (a unit element) for a product operator $\circ$ residuated with implication. Being a left identity element for $\circ$ can be expressed entirely in terms of implication alone, due to residuation, as the law $a=\type{e}\ra a$, corresponding to the two logic axioms $\varphi\proves\type{t}\rfspoon\varphi$ and $\type{t}\rfspoon\varphi\proves\varphi$. 

We identify three classes $\mathbb{PU}\ell_*\subseteq\mathbb{PU}\ell^*\subseteq\mathbb{PU}\ell$ with respect to which soundness and completeness of the logic $\mathbf{\Lambda}_t+\{\varphi\proves\type{t}\rfspoon\varphi\}+\{\type{t}\rfspoon\varphi\proves\varphi\}$ can be proven. 

Table~\ref{substructural implicative poset wth unit frame axioms} presents the axiomatization of the largest frame class $\mathbb{PU}\ell$.

\begin{table}[!htbp]
\caption{$\mathbb{PU}\ell$ Frame Class axioms for the Logic of Implicative Posets with Left-Unit}
\label{substructural implicative poset wth unit frame axioms}
\hrule
\begin{enumerate}
\item[(F1)] For all $x\in W_1$ and all $v\in W_\partial$, the section $Txv\subseteq W_\partial$ of the relation $T^{\partial 1\partial}$ is a Galois set.
\\[1mm] \underline{Unit Axioms}
\item[(U)] $U$ is a Galois stable set such that for all $x\in W_1$ and all $v\in W_\partial$, $x\upv v$ holds iff for all $u$ in $U$, $uT'xv$ holds. 
\item[(U1)] For all $y,v\in W_\partial$ and $x\in W_1$, if $yT^{\partial 1\partial}xv$ and $x\in U$, then $v\leq y$.
\item[(U2)] $\forall y{\in} W_\partial\forall x{\in} W_1[xIy\lra\exists v{\in} W_\partial(xIv\wedge \exists x_1{\in} W_1\exists v_1{\in} W_\partial(vT^{\partial 1\partial}x_1v_1\wedge y\leq v_1\wedge x_1\in U))]$
\end{enumerate}
\hrule
\end{table}
Axiom (U1) validates the inclusion $U\ltright B\subseteq B$, for any co-stable set $B\in\gphi$ and thereby it validates the logic axiom $p\proves\type{t}\rfspoon p$. Axiom (U2) only validates the inclusion $B\subseteq (U\ltright B)''$, for $B\in\gphi$, which is sufficient in order to validate the axiom $\type{t}\rfspoon p\proves p$. This frame axiom results by calculating the first-order local correspondent of the axiom $\type{t}\rfspoon p\proves p$, using the generalized Sahlqvist -- van Benthem algorithm presented in \cite{dfmlC}, which uses a translation and a co-translation of the logic into the language of its sorted modal companion logic, originally presented in \cite{pll7,redm}. 

The question arises whether the logic can be proven sound and complete in the smaller frame class $\mathbb{PU}\ell^*$ which is axiomatized in \cref{substructural up star implicative poset wth unit frame axioms} so as to validate the inclusion $B\subseteq U\ltright B$, rather than only the weaker inclusion $B\subseteq(U\ltright B)''$ validated in $\mathbb{PU}\ell$.

\begin{table}[!htbp]
\caption{$\mathbb{PU}\ell^*$ Frame Class axioms for the Logic of Implicative Posets with Left-Unit}
\label{substructural up star implicative poset wth unit frame axioms}
\hrule
\begin{enumerate}
\item[(F1)] For all $x\in W_1$ and all $v\in W_\partial$, the section $Txv\subseteq W_\partial$ of the relation $T^{\partial 1\partial}$ is a Galois set.
\\[1mm] \underline{Unit Axioms}
\item[(U)] $U$ is a Galois stable set such that for all $x\in W_1$ and all $v\in W_\partial$, $x\upv v$ holds iff for all $u$ in $U$, $uT'xv$ holds. 
\item[(U1)] For all $y,v\in W_\partial$ and $x\in W_1$, if $yT^{\partial 1\partial}xv$ and $x\in U$, then $v\leq y$.
\item[(U$^*$2)]  $\forall y\in W_\partial\exists x\in W_1\exists v\in W_\partial(x\in U\wedge yT^{\partial 1\partial}xv\wedge y\leq v)$
\end{enumerate}
\hrule
\end{table}

In the duality argument of \cite{duality2} several additional frame axioms are assumed, including an axiom (M) on the monotonicity properties of frame relations which in the case at hand is the axiom that for any $y\in W_\partial$ the binary relation $yT^{\partial 1\partial}$ is decreasing in each argument place. In other words, if $yTxv$ holds and $z\leq x, w\leq v$, then also $yTzw$ obtains. 

If the frame axiom (M) is assumed, then the unit axiom (U$^*$2) simplifies further to
\begin{tabbing}
(U$_*$2)\hskip5mm\= $\forall y\in W_\partial\exists x\in W_1(x\in U\wedge yT^{\partial 1\partial}xy)$.
\end{tabbing} 
Let $\mathbb{PU}\ell_*$, with axiomatization displayed in \cref{substructural down star implicative poset wth unit frame axioms}, be the respective frame class.

\begin{table}[!htbp]
\caption{$\mathbb{PU}\ell_*$ Frame Class axioms for the Logic of Implicative Posets with Left-Unit}
\label{substructural down star implicative poset wth unit frame axioms}
\hrule
\begin{enumerate}
\item[(F1)] For all $x\in W_1$ and all $v\in W_\partial$, the section $Txv\subseteq W_\partial$ of the relation $T^{\partial 1\partial}$ is a Galois set.
\item[(M)] For any $y\in W_\partial$ the binary relation $yT^{\partial 1\partial}$ is decreasing in each argument place
\\[1mm] \underline{Unit Axioms}
\item[(U)] $U$ is a Galois stable set such that for all $x\in W_1$ and all $v\in W_\partial$, $x\upv v$ holds iff for all $u$ in $U$, $uT'xv$ holds. 
\item[(U1)] For all $y,v\in W_\partial$ and $x\in W_1$, if $yT^{\partial 1\partial}xv$ and $x\in U$, then $v\leq y$.
\item[(U$_*$2)]  $\forall y\in W_\partial\exists x\in W_1(x\in U\wedge yT^{\partial 1\partial}xy)$
\end{enumerate}
\hrule
\end{table}

For the soundness proof, we use the correspondence algorithm published in \cite{vb,dfmlC}. The algorithm relies on a translation of the language of the logic into the language of its modal companion. A brief review for the modal translation, adapted to the present case, is presented in the next section. \cref{sahlqvist section} gives a brief review of the correspondence algorithm.

\subsection{Modal Translation and Correspondence}
In \cite{dfmlC} a generalization of the classical Sahlqvist - Van Benthem correspondence result was presented, applied to distribution-free modal logics with implication and negation operations. The underlying lattice structure, when it exists, affects the ways in which sentences can be modally translated but it is not relevant to the correspondence argument per se and we adapt it here to the case of the logics of implicative posets. The argument is based on a fully abstract translation of the language of the logic of interest into the language of the sorted modal logic (its modal companion) which is the logic of the dual sorted powerset algebra of frames. In our case of interest, the sorted powerset algebra is the algebra $(\;)':\powerset(W_1)\leftrightarrows\powerset(W_\partial)^{\rm op}:(\;)'$ with the sorted image operator $\ltright:\powerset(W_1)\times\powerset(W_\partial)\lra \powerset(W_\partial)$ generated by the frame relation $T^{\partial 1\partial}$. 

The language $\mathcal{L}_s=(\mathcal{L}_1,\mathcal{L}_\partial)$ of the sorted companion modal logic of pure implicative logic is therefore defined by the syntax
\begin{eqnarray*}
\mathcal{L}_1\ni\alpha,\eta,\zeta &=& P_i\;(i\in\mathbb{N})\midsp \top\midsp\bot\midsp \alpha\cap\alpha\midsp\alpha\cup\alpha \midsp\type{u} \midsp\beta'
\\
\mathcal{L}_\partial\ni\beta,\delta,\xi &=& P^i(i\in\mathbb{N})\midsp\top\midsp\bot\midsp \beta\cap\beta\midsp\beta\cup\beta\midsp\alpha'\midsp \alpha\tright\beta. 
\end{eqnarray*}
Given a frame $\mathfrak{F}$, an interpretation of $\mathcal{L}_s$ is a sorted function $V=(V_1,V_\partial)$ assigning a subset $V_1(P_i)\subseteq Z_1$ and $V_\partial(P^i)\subseteq Z_\partial$ to the propositional variables of each sort.  A model $\mathfrak{M}$ is a pair $\mathfrak{M}=(\mathfrak{F},V)$ consisting of a frame and an interpretation of sorted propositional variables in the structure, as above.

Given a model $\mathfrak{M}$, its interpretation function generates a sorted satisfaction relation $\zmodels=(\models,\vmodels)$, where ${\models}\subseteq W_1\times\mathcal{L}_1$ and ${\vmodels}\subseteq W_\partial\times\mathcal{L}_\partial$, defined in Table~\ref{sorted sat table}. For $\alpha\in\mathcal{L}_1$ and $\beta\in\mathcal{L}_\partial$, we let $\val{\alpha}_\mathfrak{M}=\{x\in W_1\midsp x\models\alpha\}$ and $\yvval{\beta}_\mathfrak{M}=\{y\in W_\partial\midsp y\vmodels\beta\}$. If the model $\mathfrak{M}$ is understood from context, we omit the subscript.

\begin{table}[!htbp]
\caption{Sorted satisfaction relation, given a model $\mathfrak{M}=(\mathfrak{F},V)$}
\label{sorted sat table}
 ($u\in W_1, v\in W_\partial$)
 \hrule
\begin{tabbing}
$u\models P_i$\hskip7mm\=iff\hskip2mm\= $u\in V_1(P_i)$\hskip3.2cm\= $v\vmodels P^i$ \hskip6mm\=iff\hskip2mm\= $v\in V_\partial(P^i)$
\\[2mm]
$u\models\top$ \>iff\> $u=u$ \> $v\vmodels\top$\>iff\> $v=v$
\\[2mm]
$u\models\bot$ \>iff\> $u\neq u$ \> $v\vmodels\bot$ \>iff\> $v\neq v$
\\[2mm]
$u\models\beta'$ \>iff\> $\forall y(uIy\lra y\not\vmodels\beta)$ \>
$v\vmodels\alpha'$ \>iff\> $\forall x(xIv\lra x\not\models\alpha)$
\\[2mm]
$u\models\alpha\cap\eta$\>iff\> $u\models\alpha$  and $u\models\eta$   \>  $v\vmodels\beta\cap\delta$ \>iff\> $v\vmodels\beta$ and $v\vmodels\delta$
\\[2mm]
$u\models\alpha\cup\eta$ \>iff\> $u\models\alpha$ or $u\models\eta$
    \>
    $v\vmodels\beta\cup\delta$ \>iff\> $v\vmodels\beta$ or $v\vmodels\delta$
\\[2mm]
$u\models\type{u}$\>iff\> $u\in U$\hskip3cm $v\vmodels\alpha\tright\beta\;$ iff $\;\exists x\in Z_1\exists y\in Z_\partial(vT^{\partial 1\partial}xy\wedge x\models\alpha\wedge y\vmodels\beta)$
\end{tabbing}
\hrule
\end{table}

Proof-theoretic consequence is defined as a sorted relation ${\Ra} =(\proves,\vproves)$, with ${\proves}\subseteq\mathcal{L}_1\times\mathcal{L}_1$ and ${\vproves}\subseteq\mathcal{L}_\partial\times\mathcal{L}_\partial$.  We refer to sequents $\alpha\proves\eta$ as {\em 1-sequents} and to sequents $\beta\vproves\delta$ as {\em $\partial$-sequents}. Validity of a sequent in a model is defined as usual. A sorted proof system is defined in Table~\ref{sorted proof system table 1}, the soundness of which can be easily verified by the reader.

\begin{table}[t]
\caption{Sorted Proof System}
\hrule
\label{sorted proof system table 1}
A. Axioms and Rules for Both Sorts\\
(if $\sigma$ is $\alpha\in\mathcal{L}_1$, then $\Mapsto$ is $\proves$, and if $\sigma$ is $\beta\in\mathcal{L}_\partial$, then $\Mapsto$ is $\vproves$)
\begin{tabbing}
$\sigma\Mapsto\sigma$ \hskip2cm\= $\bot\Mapsto\sigma$\hskip3cm\= $\sigma\Mapsto\top$ \hskip2cm\= $\top\Mapsto\bot'$
\\
$\sigma_1\Mapsto\sigma_1\cup\sigma_2$  \> $\sigma_2\Mapsto\sigma_1\cup\sigma_2$  \>  $\sigma_1\cap\sigma_2\Mapsto\sigma_1$  \> $\sigma_1\cap\sigma_2\Mapsto\sigma_2$\\
$\sigma_1\cap(\sigma_2\cup\sigma_3)\Mapsto(\sigma_1\cap\sigma_2)\cup(\sigma_1\cap\sigma_3)$
\\[2mm]
$\infrule{\sigma_1\Mapsto\sigma_2}{\sigma_1[\sigma/P]\Mapsto\sigma_2[\sigma/P]}$ \>\> $\infrule{\sigma_1\Mapsto\sigma_2\hskip4mm\sigma_2\Mapsto\sigma_3}{\sigma_1\Mapsto\sigma_3}$
\\[2mm]
$\infrule{\sigma_1\Mapsto\sigma\hskip4mm \sigma_2\Mapsto\sigma}{\sigma_1\cup\sigma_2\Mapsto\sigma}$ \>\>
$\infrule{\sigma\Mapsto\sigma_1\hskip4mm\sigma\Mapsto\sigma_2}{\sigma\Mapsto\sigma_1\cap\sigma_2}$
\end{tabbing}
B. Axioms and Rules for the (sorted) Modal Operator $\tright$
\begin{tabbing}
$(\alpha\cup\eta)\tright\beta\vproves(\alpha\tright\beta)\cup(\eta\tright\beta)$ \hskip2cm\= $\infrule{\alpha\proves\eta\hskip4mm \beta\vproves\delta}{\alpha\tright\beta\vproves\eta\tright\delta}$
\\
$\bot\tright\beta\vproves\bot$ \hskip7mm $\alpha\tright\bot\vproves\bot$  \> $\alpha\tright(\beta\cup\delta)\vproves(\alpha\tright\beta)\cup(\alpha\tright\delta)$
\end{tabbing}
C. Axioms and Rules for Sorted Negation
\begin{tabbing}
$\alpha\proves\alpha''$\hskip2.5cm\= $\beta\vproves\beta''$\hskip2.3cm\=   $\infrule{\beta\vproves\delta}{\delta'\proves\beta'}$ \hskip2.3cm\= $\infrule{\alpha\proves\eta}{\eta'\vproves\alpha'} $                                         
\end{tabbing}
D.  Unit\hskip2.3cm
$\infrule{\alpha''\proves\alpha\hskip4mm\eta''\proves\eta\hskip4mm\alpha\proves\eta}{\type{u}\proves(\alpha\tright\eta')'}$
\\[2mm]
\begin{tabbing}
E. Left-Identity\hskip1cm\=
$\infrule{\beta''\vproves \beta}{\type{u}\tright \beta\vproves \beta}$\hskip1.8cm\= $\infrule{\beta''\vproves \beta}{\beta\vproves\type{u}\tright \beta}$
\end{tabbing}
\hrule
\end{table} 

Table~\ref{syntactic translation into sorted} defines by mutual recursion a syntactic translation $(\;)^\bullet$ and co-translation $(\;)^\circ$ of the language $\mathcal{L}$ of modal lattices into the language $\mathcal{L}_s=(\mathcal{L}_1,\mathcal{L}_\partial)$ of sorted modal logic.

\begin{table}[!htbp]
\caption{Translation and co-translation of the language of implicative posets}
\label{syntactic translation into sorted}
\hrule
\begin{tabbing}
\hspace*{1cm}\=$p_i^\bullet$ \hskip1.5cm\==\hskip4mm\= $P_i''$ \hskip3cm\= $p_i^\circ$\hskip1.5cm\==\hskip4mm\= $P_i'$\\
\> $\top^\bullet$\>=\>$\top$ \> $\top^\circ$\>=\> $\bot''$\\
\> $\type{t}^\bullet$ \>=\> $\type{u}$ \> $\type{t}^\circ$ \>=\> $\type{u}'$
\\
\> $\bot^\bullet$ \>=\> $\bot''$ \> $\bot^\circ$ \>=\> $\top$\\
\> $(\varphi\wedge\psi)^\bullet$ \>=\> $\varphi^\bullet\cap\psi^\bullet$ \> $(\varphi\wedge\psi)^\circ$ \>=\> $(\varphi^\circ\cup\psi^\circ)''$\\
\> $(\varphi\vee\psi)^\bullet$ \>=\> $(\varphi^\bullet\cup\psi^\bullet)''$ \> $(\varphi\vee\psi)^\circ$ \>=\> $\varphi^\circ\cap\psi^\circ$
\\
\> $(\varphi\rfspoon\psi)^\bullet$ \>=\> $(\varphi^\bullet\tright\psi^\circ)'$ \> $(\varphi\rfspoon\psi)^\circ$ \>=\> $(\varphi^\bullet\tright\psi^\circ)''$ 
\\[2mm]
\> Translation and Co-translation of Sequents
\\
\> $(\varphi\proves\psi)^\bullet$ \>=\> $\varphi^\bullet\proves\psi^\bullet$
\>
$(\varphi\proves\psi)^\circ$ \>=\> $\psi^\circ\vproves\varphi^\circ$
\end{tabbing}
\hrule
\end{table}

\begin{Remark}
In \cite{dfnmlA}, an alternative, but equivalent translation $(\varphi\rfspoon\psi)^\bullet=\varphi^\bullet\rightspoon\psi^\bullet$ was given, where $\rightspoon$ is an implication operation in the language of sorted modal logic. This move is not available in the context of frames for partial orders and Remark~\ref{Ra as restriction rem} explained why. It does become available however in out treatment of the Lambek calculus, see \cref{lambek section}.
\end{Remark}

Given a model $\mathfrak{M}=(\mathfrak{F},V)$ for $\mathcal{L}_s$, a model $\mathcal{N}=(\mathfrak{F},\bar{V})$ for the language $\mathcal{L}$ of implicative posets is obtained by setting $\bar{V}^1(p_i)=V_1(P_i)''$, generating an interpretation and a co-interpretation of $\mathcal{L}$-sentences.

A sentence $\alpha\in\mathcal{L}_1$ is a {\em classical modal correspondent} of a sentence $\varphi\in\mathcal{L}$ iff for any $\mathcal{L}_s$-model $\mathfrak{M}=(\mathfrak{F},V)$, $\val{\alpha}_\mathfrak{M}=\val{\varphi}_\mathfrak{N}$, where $\mathfrak{N}$ is defined as above. Modal correspondents of sequents $\varphi\proves\psi$ are typically rules whose premisses are stability assumptions. For example, the weakening sequent $\varphi\proves\psi\rfspoon\varphi$ (or, $\psi\circ\varphi\proves\varphi$, when the residual $\circ$ is included) corresponds to the rule $\infrule{\eta''\proves\eta}{\alpha\odot\eta\proves\eta}$ in the companion modal logic (see \cref{adding section} and \cref{corr table}).

\begin{Theorem}[Full Abstraction]
\label{full abstraction of trans in sorted modal}
Let $\mathfrak{M}=(\mathfrak{F},V)$ be a model of the sorted modal language $\mathcal{L}_s=(\mathcal{L}_1,\mathcal{L}_\partial)$. Then, for any sentence $\varphi\in\mathcal{L}$,
\begin{enumerate}
\item  its translation $\varphi^\bullet$ is a classical modal correspondent of $\varphi$. In other words, $\val{\varphi^\bullet}_\mathfrak{M}=\val{\varphi}_\mathfrak{N}= \val{(\varphi^\circ)'}_\mathfrak{M}=\val{(\varphi^\bullet)''}_\mathfrak{M}$
\item  $\yvval{\varphi^\circ}_\mathfrak{M}=\yvval{\varphi}_\mathfrak{N}= \yvval{(\varphi^\bullet)'}_\mathfrak{M}=\yvval{(\varphi^\circ)''}_\mathfrak{M}$
\item for any sequent $\varphi\proves\psi$ in the language $\mathcal{L}$ of the logic of implicative posets   $\mathfrak{M}\models\varphi^\bullet\proves\psi^\bullet$ iff $\mathfrak{N}\models\varphi\proves\psi$ iff $\mathfrak{M}\vmodels\psi^\circ\vproves\varphi^\circ$,
\end{enumerate}
where $\mathfrak{N}$ is defined as above, by setting $\bar{V}^1(p_i)=V_1(P_i)''$.
\end{Theorem}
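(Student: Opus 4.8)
The plan is a simultaneous induction on the structure of $\varphi\in\mathcal{L}$, carrying along for every subformula the five invariants: (i) $\val{\varphi^\bullet}_\mathfrak{M}$ is Galois stable; (ii) $\yvval{\varphi^\circ}_\mathfrak{M}$ is Galois co-stable; (iii) $(\val{\varphi^\bullet}_\mathfrak{M})'=\yvval{\varphi^\circ}_\mathfrak{M}$ and dually $(\yvval{\varphi^\circ}_\mathfrak{M})'=\val{\varphi^\bullet}_\mathfrak{M}$; (iv) $\val{\varphi^\bullet}_\mathfrak{M}=\val{\varphi}_\mathfrak{N}$; (v) $\yvval{\varphi^\circ}_\mathfrak{M}=\yvval{\varphi}_\mathfrak{N}$, where $(\;)'$ denotes the relevant Galois map in each sort. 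Granting the invariants, the three clauses of the theorem drop out. By the sorted-negation satisfaction clause of Table~\ref{sorted sat table}, $\val{(\varphi^\circ)'}_\mathfrak{M}=(\yvval{\varphi^\circ}_\mathfrak{M})'$, which is $\val{\varphi^\bullet}_\mathfrak{M}$ by (iii), while $\val{(\varphi^\bullet)''}_\mathfrak{M}=(\val{\varphi^\bullet}_\mathfrak{M})''=\val{\varphi^\bullet}_\mathfrak{M}$ by (i); together with (iv) this is clause~1, and the dual computation gives clause~2. For clause~3, $\mathfrak{M}\models\varphi^\bullet\proves\psi^\bullet$ unfolds to $\val{\varphi^\bullet}_\mathfrak{M}\subseteq\val{\psi^\bullet}_\mathfrak{M}$, which by (iv) is $\mathfrak{N}\models\varphi\proves\psi$; and since for Galois stable $A,B$ one has $B'\subseteq A'$ iff $A\subseteq B$ (apply the antitone $(\;)'$ and use $A''=A$, $B''=B$), this is equivalent, via (v) and the fact that in $\mathfrak{N}$ the co-interpretation is the Galois image of the interpretation, to $\yvval{\psi}_\mathfrak{N}\subseteq\yvval{\varphi}_\mathfrak{N}$, i.e. $\mathfrak{M}\vmodels\psi^\circ\vproves\varphi^\circ$.

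The base cases are read off Table~\ref{syntactic translation into sorted} against Tables~\ref{sat} and~\ref{sorted sat table}. For a propositional variable, $\val{p_i^\bullet}_\mathfrak{M}=\val{P_i''}_\mathfrak{M}=(V_1(P_i))''=\bar{V}^1(p_i)=\val{p_i}_\mathfrak{N}$ (stable by construction) and $\yvval{p_i^\circ}_\mathfrak{M}=(V_1(P_i))'=(\bar{V}^1(p_i))'=\yvval{p_i}_\mathfrak{N}$, with (iii) holding since $(V_1(P_i))''$ and $(V_1(P_i))'$ form a Galois pair. The constants $\top$ and $\bot$ (when present) translate to the top and bottom of the appropriate Galois lattice, matching the interpretations forced by Table~\ref{sat}. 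The constant $\type{t}$ translates to $\type{u}$, interpreted as the distinguished Galois stable set $U$, which is exactly $\val{\type{t}}_\mathfrak{N}$, and $\type{t}^\circ=\type{u}'$ as $U'$.

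For $\wedge$ and $\vee$ the inductive step uses only that the Galois maps turn unions into intersections and that binary meets of (co-)stable sets are intersections while joins are double-primings of unions: e.g. $\val{(\varphi\vee\psi)^\bullet}_\mathfrak{M}=(\val{\varphi^\bullet}_\mathfrak{M}\cup\val{\psi^\bullet}_\mathfrak{M})''=(\val{\varphi}_\mathfrak{N}\cup\val{\psi}_\mathfrak{N})''$, and the refutation clause for $\vee$ gives $\yvval{\varphi\vee\psi}_\mathfrak{N}=\yvval{\varphi}_\mathfrak{N}\cap\yvval{\psi}_\mathfrak{N}=(\val{\varphi}_\mathfrak{N}\cup\val{\psi}_\mathfrak{N})'$, whence $\val{\varphi\vee\psi}_\mathfrak{N}=(\yvval{\varphi\vee\psi}_\mathfrak{N})'=(\val{\varphi}_\mathfrak{N}\cup\val{\psi}_\mathfrak{N})''$; the $\wedge$ case is dual. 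The crucial step is implication. Put $A=\val{\varphi}_\mathfrak{N}$ and $B=\yvval{\psi}_\mathfrak{N}$; then (iv), (v) and the definition~\eqref{sorted image ops} of the sorted image operator give $\yvval{\varphi^\bullet\tright\psi^\circ}_\mathfrak{M}=\bigcup_{x\in A,\,v\in B}Txv=A\ltright B$, hence $\val{(\varphi\rfspoon\psi)^\bullet}_\mathfrak{M}=\val{(\varphi^\bullet\tright\psi^\circ)'}_\mathfrak{M}=(A\ltright B)'$ and $\yvval{(\varphi\rfspoon\psi)^\circ}_\mathfrak{M}=(A\ltright B)''$. On the other side, unfolding the satisfaction clause~\eqref{sat implication} in $\mathfrak{N}$ with $xT'zv$ iff $x\in(Tzv)'$ collapses the quantifier prefix to $\val{\varphi\rfspoon\psi}_\mathfrak{N}=\{x\midsp\forall w\,(w\in A\ltright B\lra x\upv w)\}=(A\ltright B)'$, and then $\yvval{\varphi\rfspoon\psi}_\mathfrak{N}=(A\ltright B)''$; so (iv), (v) hold and (iii) is the identity $((A\ltright B)')'=(A\ltright B)''$. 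Note that this argument works with the \emph{unrestricted} image operator $\ltright$ and so invokes neither (F1) nor (U); those axioms matter only when one wants the stable-set operation $\Ra$ to coincide with a closure of $\ltright$, as in \cref{lambda t soundness}, which is not needed here.

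The only point that demands care, rather than a genuine difficulty, is the priming/sort bookkeeping: one must track which copy of the Galois map is in play — for $U\subseteq W_1$ one has $U'=U\rperp\subseteq W_\partial$, while for $V\subseteq W_\partial$ one has $V'=\rperp V\subseteq W_1$ and $V''=(\rperp V)\rperp\subseteq W_\partial$ — and use at each step that in $\mathfrak{N}$ the co-interpretation is by construction the Galois image of the interpretation, so the $\mathfrak{N}$-analogue of invariant (iii) is available throughout. Everything else is routine unwinding of the satisfaction clauses.
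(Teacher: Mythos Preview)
Your proof is correct. The paper does not give a self-contained argument here: it observes that the translation and co-translation are instances of the general scheme for normal lattice expansions and cites \cite[Theorem~3.2]{vb} for the proof of all three claims. What you have written is precisely the structural induction that underlies that cited result, specialized to the language $\mathcal{L}$ at hand, so the approaches coincide in substance; you have simply supplied the details the paper outsources. Your remark that the argument uses only the unrestricted operator $\ltright$ and requires neither (F1) nor (U) is a worthwhile clarification --- the theorem is stated for an arbitrary model $\mathfrak{M}$ with no frame-class hypothesis, and your computation for $\rfspoon$ confirms that none is needed, since $(A\ltright B)'$ is automatically Galois stable and the passage from the satisfaction clause~\eqref{sat implication} to membership in $(A\ltright B)'$ is pure quantifier manipulation.
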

\begin{proof}
The translation and co-translation are special instances of the case of the languages and logics of arbitrary normal lattices expansions. A proof of all three claims for the general case was given in \cite[Theorem~3.2]{vb}.
\end{proof}

The standard translation of sorted modal logic into sorted  FOL  is exactly as in the single-sorted case, except for the relativization to two sorts, displayed in Table \ref{std-trans}, where $\stx{u}{}, \sty{v}{}$ are defined by mutual recursion and $u,v$ are individual variables of sort $1,\partial$, respectively.

\begin{table}[!htbp]
\caption{Standard Translation of the sorted modal language $\mathcal{L}_s=(\mathcal{L}_1,\mathcal{L}_\partial)$}
\label{std-trans}
($u$ a sort-1 variable, $v$ a sort-$\partial$ variable)
\hrule
\begin{tabbing}
$\stx{u}{P_i}$\hskip0.8cm\==\hskip2mm\= ${\bf P}_i(u)$ \hskip3.5cm\= $\sty{v}{P^i}$\hskip0.8cm\==\hskip2mm\= ${\bf P}^i(v)$\\
$\stx{u}{\top}$ \>=\> $u=u$ \> $\stx{v}{\top}$\>=\> $v=v$\\
$\stx{u}{\bot}$ \>=\> $u\neq u$ \> $\sty{v}{\bot}$ \>=\> $v\neq v$\\
$\stx{u}{\alpha\cap \eta}$\>=\> $\stx{u}{\alpha}\cap\stx{u}{\eta}$\> $\sty{v}{\beta\cap \delta}$\>=\> $\sty{v}{\beta}\cap\sty{v}{\delta}$\\
$\stx{u}{\alpha\cup \eta}$\>=\> $\stx{u}{\alpha}\cup\stx{u}{\eta}$\> $\sty{v}{\beta\cup \delta}$\>=\> $\sty{v}{\beta}\cup\sty{v}{\delta}$\\
$\stx{u}{\beta'}$ \>=\> $\forall^\partial v\;(u{\bf I}v\;\lra\;\neg\sty{v}{\beta})$ \> $\sty{v}{\alpha'}$ \>=\> $\forall^1 u\;(u{\bf I}v\;\lra\;\neg\stx{u}{\alpha})$
\\
$\stx{u}{\type{u}}$\>=\> $\mathbf{U}(u)$\hskip3cm $\sty{v}{\alpha\tright\beta}$ \;=\; $\exists^1x\exists^\partial y(v\mathbf{T}^{\partial 1\partial}xy\wedge\stx{x}{\alpha}\wedge\sty{y}{\beta})$
\end{tabbing}
\hrule
\end{table}

If $\alpha(Q_{i_1},\ldots,Q_{i_n})$, where for each $j$, $Q_{i_j}\in\{P_{i_j}, P^{i_j}\}$, is an $\mathcal{L}_1$-sentence with propositional variables among the $Q_{i_j}$, then its second-order translation is defined to be the sentence $\mathrm{ST}^2_{x}(\alpha)=\forall Q_{i_1}\ldots\forall Q_{i_n}\forall^1 u\;\stx{u}{\alpha}$. It is understood that $\forall Q_{i_j}$ is $\forall^1 P_{i_j}$, if $Q_{i_j}=P_{i_j}\in\mathcal{L}_1$ and it is $\forall^\partial P^{i_j}$ otherwise. Similarly for $\beta(Q_{i_1},\ldots,Q_{i_n})$ and $\mathrm{ST}^2_v(\beta)=\forall Q_{i_1}\ldots\forall Q_{i_n}\forall^\partial v\sty{v}{\beta}$.

\begin{Proposition}
\label{std trans prop}
For any sorted modal formulae $\alpha,\beta$ (of sort $1, \partial$, respectively), any model $\mathfrak{M}=(\mathfrak{F},V)$ for $\mathcal{L}^1_s$  and any $x\in Z_1, y\in Z_\partial$, $\mathfrak{F}\models\mathrm{ST}^2_{u}(\alpha)[u:=x][Q_{i_j}:=V(Q_{i_j})]_{j=1}^n$ iff $\mathfrak{M},x\models \alpha$ iff $\mathfrak{F}\models\stx{u}{\alpha}[u:=x]$, where $x=V(u)$.

Similarly, $\mathfrak{M},y\vmodels \beta$ iff $\mathfrak{F}\vmodels\sty{v}{\beta}[v:=y]$ iff $\mathfrak{F}\vmodels\mathrm{ST}_v^2(\beta)[v:=y][Q_{i_j}:=V(Q_{i_j})]_{j=1}^n$.
\end{Proposition}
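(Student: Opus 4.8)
The plan is to prove both halves of the statement by one simultaneous structural induction over the sorted grammar of $\mathcal{L}_s=(\mathcal{L}_1,\mathcal{L}_\partial)$. The reason the induction must be run over both sorts at once is that the standard-translation clauses of Table~\ref{std-trans} for the two negations ($\beta'$ at sort $1$ and $\alpha'$ at sort $\partial$) and for the modal operator $\tright$ refer to the translation of a subformula of the opposite sort, so an induction on $\mathcal{L}_1$ alone would not be well-founded. First I would fix a model $\mathfrak{M}=(\mathfrak{F},V)$ and make explicit the (standard) convention under which ``$\mathfrak{F}\models\stx{u}{\alpha}$'' is to be read: the first-order structure is the frame $\mathfrak{F}$, supplying interpretations of $\mathbf{I}$, $\mathbf{T}^{\partial 1\partial}$ and $\mathbf{U}$, expanded by the valuation $V$ so that each propositional predicate $\mathbf{P}_i$ (resp. $\mathbf{P}^i$) denotes $V_1(P_i)$ (resp. $V_\partial(P^i)$). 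Under this reading the leftmost member of the displayed chain is literally the matrix $\stx{u}{\alpha}$ evaluated at $u:=x$ with the $Q_{i_j}$ set to $V(Q_{i_j})$, because $\mathrm{ST}^2_u(\alpha)$ is by definition nothing but the universal closure $\forall Q_{i_1}\cdots\forall Q_{i_n}\forall^1u\,\stx{u}{\alpha}$; hence the first and third terms of the chain coincide by unfolding definitions, and the only content of the proposition is the middle equivalence $\mathfrak{M},x\models\alpha$ iff $\mathfrak{F}\models\stx{u}{\alpha}[u:=x]$ (and, symmetrically, $\mathfrak{M},y\vmodels\beta$ iff $\mathfrak{F}\vmodels\sty{v}{\beta}[v:=y]$).

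The induction itself is a clause-by-clause match of Table~\ref{sorted sat table} against Table~\ref{std-trans}. The base cases $P_i,P^i,\top,\bot$ and the unit constant $\type{u}$ hold by definition of the two satisfaction relations. The cases of $\cap$ and $\cup$ at either sort are immediate, since the translation applies the same Boolean connective to the already-translated (same-sort) subformulae, to which the induction hypothesis applies directly. For $\beta'\in\mathcal{L}_1$ one expands $x\models\beta'$ as $\forall y\,(xIy\lra y\not\vmodels\beta)$ and applies the induction hypothesis for $\beta$ at sort $\partial$ to rewrite $y\not\vmodels\beta$ as $\neg\sty{v}{\beta}$ (at $v:=y$), obtaining exactly $\stx{u}{\beta'}=\forall^\partial v\,(x\mathbf{I}v\lra\neg\sty{v}{\beta})$ at $u:=x$; the case $\alpha'\in\mathcal{L}_\partial$ is the mirror image, using the hypothesis at sort $1$. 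Finally, for $\alpha\tright\beta$ one expands $y\vmodels\alpha\tright\beta$ as: there are a sort-$1$ element $a$ and a sort-$\partial$ element $b$ with $yTab$, $a\models\alpha$ and $b\vmodels\beta$; applying the induction hypothesis to $\alpha$ (sort $1$) and to $\beta$ (sort $\partial$) converts this into $\exists^1a\exists^\partial b\,(y\mathbf{T}^{\partial 1\partial}ab\wedge\stx{a}{\alpha}\wedge\sty{b}{\beta})$, which is $\sty{v}{\alpha\tright\beta}$ at $v:=y$. This exhausts the grammar, so the induction closes and both displayed chains follow; the final clause of the proposition, on $\mathrm{ST}^2_v(\beta)$, is recovered from the sort-$\partial$ equivalence exactly as in the sort-$1$ case.

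I do not anticipate any genuine obstacle: this is the routine verification that the standard translation preserves truth, and all the real work was already done in setting up the satisfaction relations in Table~\ref{sorted sat table} and the translation in Table~\ref{std-trans}. The only two points that need to be stated rather than computed are (i) running the induction simultaneously over the two sorts, so that the negation and $\tright$ clauses have the cross-sort hypotheses they require, and (ii) fixing the convention that ``$\mathfrak{F}\models$'' tacitly uses $V$ to interpret the propositional predicate symbols, which is precisely what makes the passage between $\stx{u}{\alpha}$ and its second-order closure $\mathrm{ST}^2_u(\alpha)$ a matter of unfolding a definition.
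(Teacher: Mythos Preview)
Your argument is correct and is exactly the routine structural induction one expects here. The paper, however, states this proposition without proof (there is no proof environment between the statement and the subsequent Corollary), treating it as the standard fact that the sorted standard translation preserves truth; your write-up simply supplies the omitted details.
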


\begin{Corollary}
A sequent $\alpha\proves\eta$ in the sorted modal logic corresponds to the implication $\stx{u}{\alpha}\ra\stx{u}{\eta}$. In other words,  for any model $\mathfrak{M}=(\mathfrak{F},V)$, $\mathfrak{M}\models\alpha\proves\eta$ iff $\val{\alpha}_\mathfrak{M}\subseteq\val{\eta}_\mathfrak{M}$ iff $\mathfrak{F}\models(\stx{u}{\alpha}\ra\stx{u}{\eta})[V]$.
\end{Corollary}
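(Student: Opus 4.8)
The plan is to obtain this as an immediate consequence of \cref{std trans prop} together with the definition of validity of a sequent in a model; no new combinatorial content is needed beyond what was already established for the standard translation. The statement is a chain of three equivalences, and I would prove them from left to right.

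First I would unfold the left equivalence. By the definition of validity of a sequent in a model (as stated just before Table~\ref{sorted proof system table 1}), $\mathfrak{M}\models\alpha\proves\eta$ holds precisely when every point of $W_1$ satisfying $\alpha$ also satisfies $\eta$, i.e. when $\val{\alpha}_\mathfrak{M}\subseteq\val{\eta}_\mathfrak{M}$; this step is purely definitional. For the right equivalence I would rewrite each of the two truth sets by means of the standard translation: using $\val{\alpha}_\mathfrak{M}=\{x\in W_1\midsp \mathfrak{M},x\models\alpha\}$ and the first clause of \cref{std trans prop}, $\mathfrak{M},x\models\alpha$ iff $\mathfrak{F}\models\stx{u}{\alpha}[u:=x][Q_{i_j}:=V(Q_{i_j})]_{j}$, so that $\val{\alpha}_\mathfrak{M}$ is exactly the extension of the first-order formula $\stx{u}{\alpha}$ in the free sort-$1$ variable $u$, computed in $\mathfrak{F}$ under the assignment of the predicate variables dictated by $V$. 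The same identity holds for $\eta$ with the same assignment to the (possibly shared) predicate variables. Consequently $\val{\alpha}_\mathfrak{M}\subseteq\val{\eta}_\mathfrak{M}$ holds iff for every $x\in W_1$, $\mathfrak{F}\models\stx{u}{\alpha}[u:=x]$ implies $\mathfrak{F}\models\stx{u}{\eta}[u:=x]$, and this is by definition $\mathfrak{F}\models\forall^1 u\,(\stx{u}{\alpha}\ra\stx{u}{\eta})$ under the valuation $V$ of the propositional variables, i.e. the assertion $\mathfrak{F}\models(\stx{u}{\alpha}\ra\stx{u}{\eta})[V]$.

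I do not expect a genuine obstacle here: the substantive fact — compositional soundness of the sorted standard translation — is already packaged in \cref{std trans prop}. The only things that require a word of care are bookkeeping points: one must apply \cref{std trans prop} with a single fixed valuation $V$ to both $\alpha$ and $\eta$, so that their standard translations are interpreted in the same first-order structure and the implication $\stx{u}{\alpha}\ra\stx{u}{\eta}$ is meaningful; and one must respect the sort of $u$, so that the implicit universal closure in ``$\mathfrak{F}\models(\stx{u}{\alpha}\ra\stx{u}{\eta})[V]$'' is the sorted quantifier $\forall^1 u$ over $W_1$ rather than an unsorted one. With these observations the three conditions are seen to be equivalent.
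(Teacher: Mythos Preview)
Your proposal is correct and is precisely the intended argument: the paper states this as an immediate corollary of \cref{std trans prop} without giving a separate proof, and your unpacking of the two equivalences via the definition of sequent validity and the compositional soundness of the standard translation is exactly what that omission relies on. The bookkeeping remarks you add (fixed valuation $V$, sorted quantification $\forall^1 u$) are apt but not obstacles.
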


\subsection{Soundness and Completeness}
\begin{Proposition}
\label{unit completeness}
The extension of the logic $\mathbf{\Lambda}_t$ with the left-unit axioms $\varphi\proves\type{t}\rfspoon\varphi$ and $\type{t}\rfspoon\varphi\proves\varphi$ is sound and complete in the class $\mathbb{PU}\ell$ of frames axiomatized in Table~\ref{substructural implicative poset wth unit frame axioms}.
\end{Proposition}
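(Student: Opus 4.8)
The plan is to prove soundness and completeness separately, reusing almost everything already set up for $\mathbf{\Lambda}_t$ and its canonical frame. For soundness, observe first that every $\mathbb{PU}\ell$-frame satisfies (F1) and (U) and hence lies in $\mathbb{PU}$, so \cref{lambda t soundness} already gives soundness of the base axioms and rules of $\mathbf{\Lambda}_t$; what remains is to verify that the two left-unit axioms $\varphi\proves\type{t}\rfspoon\varphi$ and $\type{t}\rfspoon\varphi\proves\varphi$ are valid in every $\mathbb{PU}\ell$-frame. For completeness, I would form the dual frame $\mathbf{P}_+$ of the Lindenbaum--Tarski algebra $\mathbf{P}$ of the extended logic --- which is an implicative bounded above poset additionally satisfying the left-identity law $a=\type{e}\ra a$ (the algebraic reading of the two new axioms) --- show that $\mathbf{P}_+$ validates (F1), (U), (U1), (U2), and then close exactly as in the completeness proof for $\mathbf{\Lambda}_t$.

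For the soundness step, I would first unwind, using the computations in the proof of \cref{lambda t soundness}, that in any model $\val{\type{t}\rfspoon\varphi}=U\Ra\val{\varphi}=(U\ltright\yvval{\varphi})'$, where $U=\val{\type{t}}\in\gpsi$ (by (U)) and $B:=\yvval{\varphi}\in\gphi$. Validity of $\varphi\proves\type{t}\rfspoon\varphi$ then reduces, after taking Galois duals, to the inclusion $U\ltright B\subseteq B$, and validity of $\type{t}\rfspoon\varphi\proves\varphi$ reduces to $(U\ltright B)'\subseteq B'$, which (since $C'''=C'$ for $C=U\ltright B$) is implied by $B\subseteq(U\ltright B)''$. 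So it suffices to prove, for every $B\in\gphi$, that $U\ltright B\subseteq B$ and $B\subseteq(U\ltright B)''$, using that Galois sets are increasing. The first is immediate from (U1): if $y\in U\ltright B$ then $yTxv$ for some $x\in U$, $v\in B$, so $v\leq y$ by (U1), whence $y\in B$. The second I would argue contrapositively: if $y\in B$ and $xIy$ for some $x\in W_1$, then (U2) supplies $v$ with $xIv$ together with $x_1\in U$ and $v_1\in W_\partial$ with $vTx_1v_1$ and $y\leq v_1$; since $B$ is increasing, $v_1\in B$, so $v\in U\ltright B$ with $xIv$, witnessing $x\notin(U\ltright B)'$ --- which is exactly what is needed for $y\in(U\ltright B)''$.

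For the completeness step, after recalling from \cref{F1 is canonical} that in $\mathbf{P}_+$ one has $yTxv$ iff $(x\tright v)\subseteq y$ (and, for ideals, $v\leq y$ iff $v\subseteq y$), the one computation doing the real work is that $x_\type{e}\tright v=v$ for any ideal $v$: if $a\in x_\type{e}$ then $\type{e}\leq a$, so $a\ra b\leq\type{e}\ra b=b$ by antitonicity of $\ra$ in its first argument together with the left-identity law, giving $x_\type{e}\tright v\subseteq v$; and $b=\type{e}\ra b$ witnesses $b\in x_\type{e}\tright v$ for each $b\in v$. Using this, (U1) follows from the left-identity law ($yTxv$ and $\type{e}\in x$ force $b=\type{e}\ra b\in x\tright v\subseteq y$ for every $b\in v$, so $v\subseteq y$), and (U2) follows with the trivial witnesses $v=v_1=y$ and $x_1=x_\type{e}$: then $\type{e}\in x_1$, $xIv$ holds since $xIy$ was assumed, $y\leq v_1$ holds, and $vTx_1v_1$ holds because $x_\type{e}\tright y=y\subseteq y$. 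Having placed $\mathbf{P}_+$ in $\mathbb{PU}\ell$, I would finish as in the $\mathbf{\Lambda}_t$ case: by \cref{rep embedding prop} the representation map $\alpha$ embeds $\mathbf{P}$ into $\gpsi$ as an implicative poset with $\alpha(\type{e})=U$, so in the canonical model $\mathfrak{M}$ with $V^1(p)=\alpha([p])$ one has $\val{\varphi}_\mathfrak{M}=\alpha([\varphi])$ for every $\varphi$; thus $\varphi\not\proves\psi$ forces $[\varphi]\not\leq[\psi]$, hence $\val{\varphi}_\mathfrak{M}\not\subseteq\val{\psi}_\mathfrak{M}$, refuting $\varphi\proves\psi$ over $\mathbb{PU}\ell$.

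The part I expect to need the most care is the Galois bookkeeping on the soundness side --- in particular, recognizing that (U2) is designed to deliver only the weaker, \emph{closed} inclusion $B\subseteq(U\ltright B)''$ rather than $B\subseteq U\ltright B$, and that this weaker inclusion is nonetheless enough to validate $\type{t}\rfspoon\varphi\proves\varphi$ because $(\,\cdot\,)'''=(\,\cdot\,)'$. On the completeness side there is no real obstacle once the identity $x_\type{e}\tright v=v$ is isolated: that identity is precisely where the left-identity law of $\mathbf{P}$ enters, and it collapses the elaborate-looking axiom (U2) to a triviality in the canonical frame.
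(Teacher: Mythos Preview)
Your proposal is correct. The completeness argument is essentially the paper's, only organized slightly differently: you isolate the identity $x_\type{e}\tright v=v$ (using both directions of the left-identity law at once), whereas the paper uses $b\leq\type{e}\ra b$ for (U1) and $\type{e}\ra b\leq b$ for (U2) separately, and phrases the (U2) verification as establishing the stronger inclusion $B\subseteq U\ltright B$ (which then feeds the corollary about the smaller class $\mathbb{PU}\ell^*$). Your explicit witnesses $v=v_1=y$, $x_1=x_\type{e}$ give the same stronger fact.

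The genuine difference is on the soundness side. The paper does not verify the two left-unit axioms by a direct semantic computation; instead it appeals to the generalized Sahlqvist--van Benthem algorithm, showing that running the co-translation and reduction on $p\proves\type{t}\rfspoon p$ and $\type{t}\rfspoon p\proves p$ returns precisely (U1) and (U2) as first-order local correspondents, so that soundness is a corollary of the correctness of the algorithm. Your route is more elementary and self-contained: you reduce validity to the inclusions $U\ltright B\subseteq B$ and $B\subseteq(U\ltright B)''$ for $B\in\gphi$ and read these off directly from (U1) and (U2). What the paper's approach buys is uniformity (the same machinery handles all the axioms in \cref{corr table}) and the explicit identification of $\mathbb{PU}\ell$ as the \emph{largest} frame class validating the axioms; what yours buys is that no detour through the companion modal logic or the correspondence machinery is needed for this particular result.
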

\begin{proof}
For soundness, it suffices to verify that axioms (U1) and (U2) are returned as the first-order correspondents by the generalized Sahlqvist -- van Benthem algorithm. See \cref{sahlqvist section} for a brief review and \cite{dfmlC} for details.
\paragraph{Case $p\proves\type{t}\rfspoon p$.}
The co-translation returns the formal inequality $(\type{u}\tright P')''\leq_\partial P'$ and by reduction we obtain the formal system $\langle Q=_\partial P'\midsp \type{u}\tright Q\leq_\partial Q\rangle$. Letting $\type{t}-\mathrm{INV}_Q$ be the $\type{t}$-invariance constraint for the predicate variable $Q$, the guarded second-order translation is
\[
\forall^\partial Q\forall^\partial y[\type{t}-\mathrm{INV}_Q\wedge\exists^1x\exists^\partial v(\mathbf{U}(x)\wedge Q(v)\wedge y\mathbf{T}xv)\lra Q(y)]
\]
which by the standard steps taken in the algorithm return the sentence \[\forall^\partial y\forall^1x\forall^\partial v(\mathbf{U}(x)\wedge y\mathbf{T}xv\lra v\leq y),\] which is just the axiom (U1).

\paragraph{Case $\type{t}\rfspoon p\proves p$.}
The co-translation yields the formal inequality $P'\leq_\partial (\type{u}\tright P')''$ which reduces to $\langle Q=_\partial P'\midsp Q\leq_\partial (\type{u}\tright Q)''$. We trust the reader to write the guarded second-order translation and run the remaining steps of the correspondence algorithm to verify that the first-order correspondent returned is just the axiom (U2).

By the above, $\mathbf{\Lambda}_t+\{p\proves\type{t}\rfspoon p\}+\{\type{t}\rfspoon p\proves p\}$ is sound in the class $\mathbb{PU}\ell$ axiomatized in Table~\ref{substructural implicative poset wth unit frame axioms}.

For completeness, in the canonical frame we have defined $U=\Gamma x_\type{e}$, where $\type{e}=[\type{t}]$ is the equivalence class of the truth constant $\type{t}$. 

For axiom (U1), assume $x\in U$ and $yTxv$, where $x$ is a filter and $y,v$ are ideals. This is equivalent to $\type{e}\in x$ and $x\tright v\subseteq y$. For any $b\in v$ the implication element $\type{e}\ra b$ is in $x\tright v$, by definition of the point operator $\tright$. Hence $\type{e}\ra b\in y$. Since the axiom $p\proves\type{t}\rfspoon p$ is assumed in the logic we obtain $b\leq(\type{e}\ra b)\in y$, hence $b\in y$. This means $v\subseteq y$ and so the canonical frame satisfies axiom (U1).

By reduction, axiom (U2) is semantically equivalent to the rule $\infrule{Q''\vproves Q}{Q\vproves (\type{u}\tright Q)''}$ in the sorted modal logic. Therefore the frame axiom (U2) holds in the canonical frame if it can be shown that for any Galois set $B$ of ideals, $B\subseteq\Gamma x_\mathrm{e}\Mtright B =(\Gamma x_\mathrm{e}\ltright B)''$. We verify that the stronger fact $B\subseteq \Gamma x_\mathrm{e}\ltright B$ actually holds in the canonical frame.  

By definition,  $\Gamma x_\mathrm{e}\ltright B$ is the set of ideals $v$ such that for some $u\in\Gamma x_\mathrm{e}$ and $w\in B$ we have $yTuw$. The latter is equivalent to $u{\tright} w\subseteq y$. Given $y\in B$, the ideal $x_\mathrm{e}{\tright} y\in \Gamma x_\mathrm{e}\ltright B$ (obviously $(x_\mathrm{e}{\tright} y)Tx_\mathrm{e}y$). But $x_\mathrm{e}{\tright} y\subseteq y$. This is because if $a\leq \mathrm{e}\ra b\in x_\mathrm{e}{\tright} y$, for $b\in y$, then by $\mathrm{e}\ra b\leq b$ (because the axiom $\type{t}\rfspoon p\proves p$ is assumed in the logic) we obtain $a\leq b\in y$, and since $y$ is an ideal we obtain $a\in y$. Hence $x_\mathrm{e}{\tright}y\subseteq y$, i.e. $yTx_\mathrm{e}y$ holds. Hence $y\in\Gamma x_\mathrm{e}\ltright B$, which proves $B\subseteq \Gamma x_\mathrm{e}\ltright B$, as needed.
\end{proof}

\begin{Corollary}
By the proof of \cref{unit completeness},  the logic $\mathbf{\Lambda}_t+\{\varphi\proves\type{t}\rfspoon\varphi\}+\{\type{t}\rfspoon\varphi\proves\varphi\}$, extending $\mathbf{\Lambda}_t$ with the left-unit axioms, is sound and complete in the smaller class $\mathbb{PU}\ell^*$ of frames, axiomatized in \cref{substructural up star implicative poset wth unit frame axioms}, which is the same set of axioms as in \cref{substructural implicative poset wth unit frame axioms} but replacing axiom (U2) by the stronger axiom  $\forall y\in W_\partial\exists x\in W_1\exists v\in W_\partial(x\in U\wedge yT^{\partial 1\partial}xv\wedge y\leq v)$, 
which corresponds to the fact that for any co-stable set $B$, the inclusion $B\subseteq U\ltright B$ obtains.
\end{Corollary}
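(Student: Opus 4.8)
The plan is to recycle the argument for \cref{unit completeness} almost verbatim, noticing that its completeness half already establishes the \emph{stronger} frame condition (U$^*$2) on the canonical frame, while its soundness half needs only the obvious modification.

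For \textbf{soundness}, I would check directly that every frame in $\mathbb{PU}\ell^*$ validates all axioms and rules of $\mathbf{\Lambda}_t+\{\varphi\proves\type{t}\rfspoon\varphi\}+\{\type{t}\rfspoon\varphi\proves\varphi\}$. The rules and axioms of $\mathbf{\Lambda}_t$ are validated by (F1) and (U), as in \cref{lambda t soundness}; the axiom $\varphi\proves\type{t}\rfspoon\varphi$ is validated by (U1), which forces $U\ltright B\subseteq B$ for every co-stable $B$; and the axiom $\type{t}\rfspoon\varphi\proves\varphi$ is validated by (U$^*$2), which forces the inclusion $B\subseteq U\ltright B$ for every $B\in\gphi$, a strengthening of the inclusion $B\subseteq(U\ltright B)''$ that (U2) was introduced to secure. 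In particular every $\mathbb{PU}\ell^*$-frame is a $\mathbb{PU}\ell$-frame, consistent with the hierarchy $\mathbb{PU}\ell_*\subseteq\mathbb{PU}\ell^*\subseteq\mathbb{PU}\ell$ noted earlier, so this is just the soundness half of \cref{unit completeness} with the verification of the second unit axiom strengthened.

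For \textbf{completeness}, I would revisit the canonical frame $\mathbf{P}_+$ of \cref{canonical frame defn} with $U=\Gamma x_\type{e}$. Validity there of (F1), (U) and (U1) was already shown in \cref{F1 is canonical}, \cref{U is canonical} and in the proof of \cref{unit completeness}, so the only new obligation is (U$^*$2). But the last paragraph of the proof of \cref{unit completeness} proves, for every ideal $y\in W_\partial$, that $x_\type{e}{\tright}y\subseteq y$ (using the assumed axiom $\type{t}\rfspoon p\proves p$, which gives $\type{e}\ra b\leq b$ in the Lindenbaum--Tarski algebra), and by \cref{F1 is canonical} this is exactly $yT^{\partial 1\partial}x_\type{e}y$. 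Since $\type{e}\in x_\type{e}$ we have $x_\type{e}\in U$, and of course $y\leq y$; taking the witnesses $x:=x_\type{e}$ and $v:=y$ yields $x\in U\wedge yT^{\partial 1\partial}xv\wedge y\leq v$, i.e. precisely (U$^*$2). Hence the canonical frame lies in $\mathbb{PU}\ell^*$, and completeness follows exactly as in \cref{unit completeness}: if $\varphi\not\proves\psi$ then $[\varphi]\not\leq[\psi]$ in the Lindenbaum--Tarski algebra, and since $\alpha$ is an embedding of implicative posets (\cref{rep embedding prop}) the canonical model, whose underlying frame is now certified to belong to $\mathbb{PU}\ell^*$, refutes $\varphi\proves\psi$.

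I do not expect a genuine obstacle, since the mathematical content is already present in \cref{unit completeness} (which deliberately proved the stronger inclusion on the canonical side); the only point deserving a moment's care is the soundness bookkeeping, namely confirming that replacing (U2) by the stronger (U$^*$2) only shrinks the class of frames and hence cannot create new valid sequents.
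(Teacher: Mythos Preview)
Your proposal is correct and matches the paper's intent: the Corollary carries no separate proof in the paper, since it is meant to be read off directly from the proof of \cref{unit completeness}, exactly as you do (soundness inherited from $\mathbb{PU}\ell^*\subseteq\mathbb{PU}\ell$, completeness from the canonical verification that $yT^{\partial 1\partial}x_\type{e}y$, hence (U$^*$2)). One small wording issue: your final clause ``only shrinks the class of frames and hence cannot create new valid sequents'' is backwards---shrinking a frame class can only \emph{enlarge} the set of valid sequents---but this is harmless for soundness, which is what you actually need and which does follow trivially from the inclusion.
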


\begin{Corollary}
The canonical relation $T^{\partial 1\partial}$, satisfying $yTxv$ iff $x{\tright}v\subseteq y$ is decreasing in the second and third argument place. Thereby  $\forall y\in W_\partial\exists x\in W_1\exists v\in W_\partial(x\in U\wedge yT^{\partial 1\partial}xv\wedge y\leq v)$ reduces, equivalently, to the axiom  $\forall y\in W_\partial\exists x\in W_1(x\in U\wedge yT^{\partial 1\partial}xy)$. 

Hence, by the proof of \cref{unit completeness}, the logic $\mathbf{\Lambda}_t+\{\varphi\proves\type{t}\rfspoon\varphi\}+\{\type{t}\rfspoon\varphi\proves\varphi\}$ is sound and complete in the smaller frame class $\mathbb{PU}\ell_*$.
\end{Corollary}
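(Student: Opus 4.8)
The plan is to dispatch the three assertions of the corollary in order, each reducing to something already in hand. \emph{First}, I would verify that the canonical relation $T^{\partial 1\partial}$ is decreasing in its second and third places. By \cref{tright lemma} the point operator ${\tright}\colon W_1\times W_\partial\lra W_\partial$ is monotone in both arguments, and in the canonical frame the orders on $W_1=\filt(\mathbf{P})$ and $W_\partial=\idl(\mathbf{P})$ are set inclusion; hence $z\subseteq x$ gives $z{\tright}v\subseteq x{\tright}v$ and $w\subseteq v$ gives $x{\tright}w\subseteq x{\tright}v$. Since $yT^{\partial 1\partial}xv$ holds iff $x{\tright}v\subseteq y$ (\cref{F1 is canonical}), it follows at once that whenever $yT^{\partial 1\partial}xv$, $z\leq x$ and $w\leq v$ we have $z{\tright}w\subseteq x{\tright}v\subseteq y$, i.e.\ $yT^{\partial 1\partial}zw$. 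Thus $T^{\partial 1\partial}$ is decreasing in the second and third argument places, which is exactly the assertion that the canonical frame validates axiom (M).

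\emph{Second}, I would prove that in any frame in which $T^{\partial 1\partial}$ is decreasing in the third argument place, the axioms (U$_*$2) and (U$^*$2) are equivalent. From (U$_*$2) to (U$^*$2) is immediate: given $y\in W_\partial$, choose $x\in U$ with $yT^{\partial 1\partial}xy$ and set $v:=y$, using reflexivity of $\leq$. Conversely, given $x\in U$ and $v\in W_\partial$ with $yT^{\partial 1\partial}xv$ and $y\leq v$, then since $T^{\partial 1\partial}$ is decreasing in its third place we obtain $yT^{\partial 1\partial}xy$, which is (U$_*$2). The same observation shows $\mathbb{PU}\ell_*\subseteq\mathbb{PU}\ell^*$ at the level of frame classes: any frame satisfying (F1), (U), (U1) and (U$_*$2) also satisfies (U$^*$2) (take $v:=y$), hence lies in $\mathbb{PU}\ell^*$.

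\emph{Third}, soundness of $\mathbf{\Lambda}_t+\{\varphi\proves\type{t}\rfspoon\varphi\}+\{\type{t}\rfspoon\varphi\proves\varphi\}$ over $\mathbb{PU}\ell_*$ follows from soundness over $\mathbb{PU}\ell^*$ (the preceding corollary) together with the inclusion $\mathbb{PU}\ell_*\subseteq\mathbb{PU}\ell^*$. For completeness I would check that the canonical frame $\mathbf{P}_+$ of an implicative bounded-above poset $\mathbf{P}$ with $\type{e}$ a left identity lies in $\mathbb{PU}\ell_*$: (F1) holds by \cref{F1 is canonical}, (M) by the first step, (U) by \cref{U is canonical}, (U1) by the verification already made in the proof of \cref{unit completeness}, and (U$_*$2) is precisely what the last paragraph of that proof establishes --- for every ideal $y$ one has $x_\type{e}{\tright}y\subseteq y$ (using the logic axiom $\type{t}\rfspoon p\proves p$), i.e.\ $yT^{\partial 1\partial}x_\type{e}y$ with $x_\type{e}\in\Gamma x_\type{e}=U$. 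Then, given $\varphi\not\proves\psi$, hence $[\varphi]\not\leq[\psi]$ in the Lindenbaum--Tarski algebra, \cref{rep embedding prop} gives $\val{\varphi}\not\subseteq\val{\psi}$ in the canonical model over $\mathbf{P}_+\in\mathbb{PU}\ell_*$, so the logic is complete.

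The argument is mostly bookkeeping; the one point that deserves care is that the orders on $W_1$ and $W_\partial$ in the canonical frame coincide with set inclusion, as this is what converts the plain monotonicity of ${\tright}$ from \cref{tright lemma} into the downset property of the second and third sections of $T^{\partial 1\partial}$ demanded by axiom (M). One should also confirm that the proof of \cref{unit completeness} genuinely outputs the first-order sentence (U$_*$2), not merely a closure statement about co-stable sets; this is transparent once one unwinds $yT^{\partial 1\partial}x_\type{e}y\Leftrightarrow x_\type{e}{\tright}y\subseteq y$.
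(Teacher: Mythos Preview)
Your proof is correct and follows essentially the same approach as the paper, which offers no separate proof for this corollary but embeds the reasoning in the statement itself and in surrounding remarks (notably the observation that monotonicity of the point operator ${\tright}$, together with the characterization $yTxv\Leftrightarrow x{\tright}v\subseteq y$, immediately yields the decreasing property). Your care in noting that the frame order on $W_1,W_\partial$ in the canonical frame coincides with set inclusion is well placed, as is your explicit unpacking of how the last paragraph of \cref{unit completeness} delivers the witness $x_\type{e}$ for (U$_*$2).
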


It is established knowledge that every implicative poset (integral, or with a unit element) is a fragment of a residuated implicative poset. For details, \cite{dunn-partial,choiceFreeHA} can be consulted. The logic of implicative residuated posets is the Lambek calculus, to which we turn in the next Section.

\section{The Lambek Calculus}
\label{lambek section}
\subsection{The Non-Associative Calculus}
A partially ordered residuated unital groupoid $\mathbf{P}=(P,\leq,1,\type{e},\la,\circ,\ra)$ is an implicative poset with two additional operations $\la$ and $\circ$ satisfying the residuation condition $a\leq c\la b$ iff $a\circ b\leq c$ iff $b\leq a\ra c$. Equivalently, the axioms $b\leq a\ra a\circ b$ and $a\circ(a\ra b)\leq b$ hold, for residuation of $\circ$ with $\ra$, and similarly the axioms $(b\la a)\circ a\leq b$ and $a\leq a\circ b\la b$ hold, for residuation of $\circ,\la$. In addition, the special element $\type{e}$ satisfies the left-right unit identities $a\circ\type{e}=a=\type{e}\circ a$. Note that it then follows that $a\leq b$ iff $\type{e}\leq a\ra b$ iff $\type{e}\leq b\la a$.

The logic of partially ordered residuated unital groupoids is the non-associative Lambek Calculus. The monotonicity types of the additional operators are $\delta(\circ)=(1,1;1)$ and $\delta(\la)=(\partial, 1;\partial)$. The language and proof system of the calculus are too familiar to need reviewing here. Given the monotonicity types of the operators, frames for the Lambek calculus in the generalized J\'{o}nsson-Tarski framework approach will have three ternary relations of corresponding sorts, as indicated, $S^{\partial\partial 1},R^{111},T^{\partial 1\partial}$, in accordance to \cref{corresponding frames}. Each generates a sorted image operator in the dual sorted powerset algebra of the frame, as shown below,  defined by instantiating~\eqref{sorted image ops},
\begin{align*}
\ltleft &:\powerset(W_\partial)\times\powerset(W_1)\lra\powerset(W_\partial) \\
\mbox{$\bigodot$}&:\powerset(W_1)\times\powerset(W_1)\lra\powerset(W_1)\\
\ltright &:\powerset(W_1)\times\powerset(W_\partial)\lra\powerset(W_\partial).
\end{align*}

For semantics, the (co)satisfaction relation in a model $\mathfrak{M}=(\mathfrak{F},V)$ is specified by instantiating the general definition in \cref{sat}, shown in \cref{sat-Lambek}  for the logical operators $\la,\circ,\rfspoon$ and $\type{t}$.

\begin{equation}\label{sat-Lambek}
\begin{array}{lcl}
u\forces\type{t} &\mbox{ iff }& u\in U\\
 z\forces\varphi\rfspoon\psi &\mbox{ iff }&\forall x\in W_1\forall v\in W_\partial(x\forces\varphi\wedge v\dforces\psi\lra zT'xv)\\
 v\dforces\varphi\circ\psi &\mbox{ iff }& \forall x,z\in W_1(x\forces\varphi\wedge z\forces\psi\lra vR'xz)\\
 x\forces\psi\lfspoon\varphi &\mbox{ iff }& \forall z\in W_1\forall v\in W_\partial(z\forces\varphi\wedge v\dforces\psi\lra xS'vz).
\end{array}
\end{equation}

The language of the companion  modal logic of the Lambek calculus extends the sorted language $\mathcal{L}_s$  to also include binary diamond operators $\tleft,\odot$ where for $\alpha,\eta\in\mathcal{L}_1$ and $\beta\in\mathcal{L}_\partial$ new sentences $\beta\tleft\alpha\in\mathcal{L}_\partial$ and $\alpha\odot\eta\in\mathcal{L}_1$ are added to the sorted language. 
\begin{eqnarray*}
\mathcal{L}_1\ni\alpha,\eta,\zeta &=& P_i\;(i\in\mathbb{N})\midsp \top\midsp\bot\midsp \alpha\cap\alpha\midsp\alpha\cup\alpha\midsp\beta' \midsp\alpha\odot\alpha
\\
\mathcal{L}_\partial\ni\beta,\delta,\xi &=& P^i(i\in\mathbb{N})\midsp\top\midsp\bot\midsp \beta\cap\beta\midsp\beta\cup\beta\midsp\alpha'\midsp \beta\tleft\alpha\midsp \alpha\tright\beta. 
\end{eqnarray*}

They are interpreted by  $\val{\alpha\odot\eta}_\mathfrak{M}=\val{\alpha}_\mathfrak{M}\bigodot\val{\eta}_\mathfrak{M}$ and $\yvval{\beta\tleft\alpha}_\mathfrak{M}=\yvval{\beta}_\mathfrak{M}\ltleft\val{\alpha}_\mathfrak{M}$. Explicitly, the semantic clauses of \cref{sorted sat table} are extended by adding the clauses for $\odot$ and $\tleft$ in \cref{sat for odot-tleft}
 
 \begin{equation}\label{sat for odot-tleft}
 \begin{array}{lcl}
 W_1\ni u\models\alpha\odot\eta &\mbox{ iff } &\exists z_1,z_2\in W_1(z_1\models\alpha\wedge z_2\models\eta \wedge uR^{111}z_1z_2)\\
 W_\partial\ni v\vmodels \beta\tleft\alpha &\mbox{ iff } &\exists y\in W_\partial\exists u\in W_1(y\vmodels\beta\wedge u\models\alpha\wedge vS^{\partial\partial 1}yu).
 \end{array}
 \end{equation}
 
The proof system of Table~\ref{sorted proof system table 1} of the sorted modal companion logic needs to be extended, but we defer details for after a discussion on the structure of frames and their dual sorted powerset algebras.

Letting $\Mtleft,\bigovert,\Mtright$ stand for the closure of the restriction of $\ltleft,\bigodot,\ltright$ to Galois sets, implication operations $\La,\Ra$ are defined by $C\La A=(C'\Mtleft A)'=(C'\ltleft A)'$ and $A\Ra C=(A\Mtright C')'=(A\ltright C')'$, as in~\eqref{2single-sorted}.

The class $\mathbb{LK}$ of frames $\mathfrak{F}=(s,W,I,U,S^{\partial\partial 1},R^{111},T^{\partial 1\partial})$ for the Lambek calculus is axiomatized  in Table~\ref{lambek frame class axioms}. Given residuation, there are alternative/equivalent ways to axiomatize frames so as to validate the unit axioms of the logic, involving either the relations $S,T$ of the frame, or the relation $R$. 

\begin{table}[!htbp]
\caption{Axioms for the Lambek Frame Class $\mathbb{LK}$}
\label{lambek frame class axioms}
\hrule
\begin{enumerate}
\item[(F1)]\hskip4mm For all $x,z\in W_1$ and all $v\in W_\partial$, the sections $Txv$, $Rxz$, $Svx$ of the relations $S,R,T$ are Galois sets.
\item[(RES)]\hskip4mm For all $x,z\in W_1$ and all $v\in W_\partial$, the equivalences $xS'vz$ iff $vR'xz$ iff $zT'xv$ hold, where
$S',R'$ and $T'$ are the Galois dual relations of the frame relations $S,R,T$.
\\[1mm] \underline{Unit Axioms}
\item[(U$^*$)]\hskip4mm $U$ is a Galois stable set.
\item[(F2.1)]\hskip4mm $\forall x,z_1,z_2\in W_1(xR^{111}z_1z_2\ra(z_1\in U\ra z_2\leq x) )$.
\item[(F2.2)]\hskip4mm $\forall x,z_1,z_2\in W_1(xR^{111}z_1z_2\ra(z_2\in U\ra z_1\leq x))$
\item[(F3.1)]\hskip4mm $\forall x\in W_1\forall y\in W_\partial(xIy\ra\exists z\in W_1(zIy\wedge\exists z_1,z_2\in W_1(z_1\in U\wedge zR^{111}z_1z_2\wedge x\leq z_2)))$.
\item[(F3.2)]\hskip4mm $\forall x\in W_1\forall y\in W_\partial(xIy\ra\exists z\in W_1(zIy\wedge\exists z_1,z_2\in W_1(z_2\in U\wedge zR^{111}z_1z_2\wedge x\leq z_1)))$.
\end{enumerate}
\hrule
\end{table}

\begin{Theorem}
\label{lambek completeness}
The non-associative Lambek calculus is sound and complete in the frame class $\mathbb{LK}$ axiomatized as in Table~\ref{lambek frame class axioms}.
\end{Theorem}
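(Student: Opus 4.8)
The plan is to follow the same two-part strategy (soundness in the frame class, then completeness via the canonical frame) used for \cref{lambda t soundness} and \cref{unit completeness}, but now carrying along the two extra ternary relations $R^{111}$ and $S^{\partial\partial 1}$ and the residuation bridge (RES). For soundness, I would first observe that by axiom (F1) each section $Txv$, $Rxz$, $Svx$ is a Galois set, so the sorted image operators $\ltright,\bigodot,\ltleft$ restrict (via closure) to well-defined operators $\Mtright,\bigovert,\Mtleft$ on Galois sets, and hence $\bigodot$ on stable sets together with the implications $\La,\Ra$ are well-defined on $\gpsi$. The key computational step is to verify, using (RES) together with (F1), the residuation law $A\bigovert B\subseteq C$ iff $B\subseteq A\Ra C$ iff $A\subseteq C\La B$ in $\gpsi$; this is the semantic content of the residuation axioms of the calculus and is exactly analogous to the membership calculation in the proof of \cref{lambda t soundness}. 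The monotonicity rules are then immediate from the definitions of $\bigodot,\La,\Ra$ as (closures of) unions of sections. For the unit axioms $\varphi\circ\type t\equiv\varphi\equiv\type t\circ\varphi$ (equivalently the implicational forms $\varphi\proves\type t\rfspoon\varphi$, $\type t\rfspoon\varphi\proves\varphi$, $\varphi\proves\varphi\lfspoon\type t$, $\type t\lfspoon\varphi\proves\varphi$ once $\circ$ is present), I would check that (F2.1),(F2.2) validate the inclusions $U\bigodot A\subseteq A$ and $A\bigodot U\subseteq A$ (so $A\bigovert\alpha[\type e]\subseteq A$ etc.), while (F3.1),(F3.2) validate the reverse inclusions $A\subseteq (U\bigodot A)''$ and $A\subseteq (A\bigodot U)''$ — these are obtained by running the generalized Sahlqvist--van Benthem algorithm of \cite{dfmlC} on the translated sequents, exactly as (U1),(U2) were obtained in \cref{unit completeness}; I would state this and refer to that algorithm rather than reproduce it.

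For completeness, I would define the canonical frame $\mathbf{P}_+$ on $W_1=\filt(\mathbf{P})$, $W_\partial=\idl(\mathbf{P})$, $xIy$ iff $x\cap y=\emptyset$, $U=\Gamma x_\type e$, with $T$ as in \cref{canonical frame defn} ($yTxv$ iff $x\tright v\subseteq y$, where $x\tright v=\{e\midsp\exists a\in x\,\exists b\in v\ e\leq a\ra b\}$), and dually $S$ defined by $xSvz$ iff $v\lleft z\subseteq x$ (with $v\lleft z=\{e\midsp\exists b\in v\,\exists a\in z\ e\leq b\la a\}$), and $R$ defined by $zR z_1 z_2$ iff $z_1\odot z_2\subseteq z$, where $z_1\odot z_2$ is the filter generated by $\{a\circ b\midsp a\in z_1, b\in z_2\}$. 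I would then prove, one relation at a time, the canonical-frame versions of the axioms: (F1) that each of $Txv$, $Rxz$, $Svx$ is a closed element (the argument for $T$ is \cref{F1 is canonical}; the arguments for $R$ and $S$ are parallel, using that $z_1\odot z_2$ is a filter and $v\lleft z$ an ideal, which one checks just as in \cref{tright lemma}); (RES) that $xS'vz$ iff $vR'xz$ iff $zT'xv$, which unwinds via the priming notation to the statement that $z_1\odot z_2\subseteq z$ iff $z_1\tright y \subseteq \dots$ — i.e. it is the frame-level shadow of the algebraic residuation $a\circ b\leq c$ iff $b\leq a\ra c$ iff $a\leq c\la b$, proved by chasing membership through the principal-element identities $x_a\odot x_b = x_{a\circ b}$, $x_a\tright y_b=\Gamma y_{a\ra b}$, and the analogous one for $\lleft$; (U$^*$) that $U=\Gamma x_\type e$ is stable; and (F2.1),(F2.2),(F3.1),(F3.2) from the algebraic unit laws, exactly as (U1),(U2) were checked in \cref{unit completeness} (e.g. for (F2.1): $zRz_1z_2$ and $z_1\in U$ give $\type e\circ b = b \in z_1\odot z_2\subseteq z$ for each $b\in z_2$, so $z_2\subseteq z$). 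Finally I would prove the representation map $\alpha(a)=\Gamma x_a$ is an embedding of residuated unital groupoids into $\mathfrak{Cm}(\mathbf{P}_+)$: the clause $\alpha(a\ra b)=\alpha(a)\Ra\alpha(b)$ is \cref{rep embedding prop}, and $\alpha(a\circ b)=\alpha(a)\bigodot\alpha(b)$ and $\alpha(b\la a)=\alpha(b)\La\alpha(a)$ follow by the same pattern using the principal-element identities and (F1), with the unit clause $\alpha(\type e)=U$ holding by definition. Soundness plus the embedding then yields: $\varphi\proves\psi$ iff $[\varphi]\leq[\psi]$ in the Lindenbaum--Tarski algebra iff $\alpha[\varphi]\subseteq\alpha[\psi]$ in the canonical model, which is completeness.

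The main obstacle I anticipate is the careful bookkeeping around (RES) in the canonical frame and its interaction with the Galois duals: one must show that the three ostensibly different conditions $xS'vz$, $vR'xz$, $zT'xv$ coincide, and the cleanest route is to first prove that each of $Svx$, $Rxz$, $Txv$ is a \emph{closed} element (a principal $\Gamma$ of, respectively, $v\lleft x$, $x\odot z$, $x\tright v$) so that its Galois dual is the corresponding \emph{open} element $\lperp\{\,\cdot\,\}$, and only then translate the three membership conditions into the single algebraic residuation chain $a\circ b\leq c\Leftrightarrow b\leq a\ra c\Leftrightarrow a\leq c\la b$ evaluated on principal filters/ideals and transported along directedness. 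A secondary subtlety is that $z_1\odot z_2$ as defined must genuinely be a filter — this needs the monotonicity $\delta(\circ)=(1,1;1)$ plus the groupoid operation being order-preserving, and is the $\circ$-analogue of \cref{tright lemma}; I would isolate it as a small lemma before the main argument. Everything else is a direct transcription of the poset-case proofs (\cref{lambda t soundness}, \cref{F1 is canonical}, \cref{U is canonical}, \cref{rep embedding prop}, \cref{unit completeness}) with two more relations in tow, so I would present those parts tersely and cite the earlier results.
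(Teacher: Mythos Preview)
Your proposal is correct and follows essentially the same approach as the paper: soundness via (F1) and (RES) to obtain the residuation law in $\gpsi$, with the unit axioms handled by the Sahlqvist--van Benthem correspondents (F2.$i$), (F3.$i$); completeness via the canonical filter-ideal frame equipped with point operators $\tright,\tleft,\medcircle$ so that each of $Txv,Svx,Rxz$ is a closed element, after which (RES) unwinds to the equivalence $x\medcircle z\upv y$ iff $z\upv x\tright y$ iff $x\upv y\tleft z$, and the unit frame axioms follow from the algebraic unit laws. One small point: for (RES) in the canonical frame the paper argues directly with the residuation inequalities $b\leq a\ra a\circ b$ and $a\circ(a\ra b)\leq b$ on arbitrary filters and ideals, rather than going through principal elements and then ``transporting along directedness''; your route works too, but the direct argument is cleaner and avoids the bookkeeping you flagged as the main obstacle.
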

\begin{proof}
For soundness of the residuation axioms in the logic, given that for a Galois set $G$ and any set $X$, $X''\subseteq G$ iff $X\subseteq G$, it is immediate that the residuation condition in $\gpsi$, 
\[
\mbox{$A\subseteq C\La F$ iff $A\bigovert F\subseteq C$ iff $F\subseteq A\Ra C$},
\] 
is equivalent to the condition
\[
\mbox{$C'{\ltleft}F\subseteq A'$ iff $A\bigodot F\subseteq C$ iff $A{\ltright}C'\subseteq F'$.}
\]
We give details for the proof of the equivalence $A\bigodot F\subseteq C$ iff $A{\ltright}C'\subseteq F'$ and leave the similar proof of the equivalence $C'{\ltleft}F\subseteq A'$ iff $A\bigodot F\subseteq C$ to the interested reader.
\begin{tabbing}
$A\bigodot F\subseteq C$\hskip3mm\= iff\hskip2mm\= $\forall u,x,z(x\in A\wedge z\in F\wedge uRxz\lra u\in C)$\hskip6mm\=\\
\>iff\> $\forall x,z(x\in A\wedge z\in F\lra Rxz\subseteq C)$ \> $Rxz$ is a Galois set\\
\>iff\> $\forall x,z(x\in A\wedge z\in F\lra  C'\subseteq R'xz)$\\
\>iff\> $\forall y,x,z(x\in A\wedge z\in F\wedge C\upv y\lra yR'xz)$ \> Using axiom (RES)\\
\>iff\> $\forall y,x(x\in A\wedge  C\upv y\lra F\subseteq T'xy)$ \> $Txy$ is a Galois set\\
\>iff\> $\forall y,x(x\in A\wedge  C\upv y\lra Txy\subseteq F')$\\
\>iff\> $\forall y,x,v(x\in A\wedge  C\upv y\wedge vTxy\lra F\upv v)$\\
\>iff\> $A{\ltright}C'\subseteq F'$.
\end{tabbing}
For the unit axioms, (F2.1) validates the axiom $\type{t}\circ p\proves p$ and similarly for (F2.2) and the axiom $p\circ\type{t}\proves p$. This is established by calculating the first-order correspondents of the axioms, using the generalized Sahlqvist -- van Benthem algorithm of \cref{sahlqvist section}. Similarly for (F3.1) and (F3.2) which are the first-order correspondents of the axioms $p\proves \type{t}\circ p$ and $p\proves p\circ\type{t}$, respectively. We leave details to the reader.

For completeness, the canonical frame is as in Definition~\ref{canonical frame defn}, except for the addition of two relations $S^{\partial\partial 1},R^{111}$ and for the distinguished subset $U=\Gamma x_\mathrm{e}$, where $\type{e}=[\type{t}]$ as in 
the amendment of the canonical frame for implication in \cref{posets section}. The canonical relations $S,R$ are defined as in the relational representation of operators in Boolean algebras \cite{jt1}, by
\begin{align}\label{S-R-canonical}
  ySvx & \mbox{ iff } \forall a,b(a\in x\wedge b\in v\lra (b\la a)\in y)\\
  uRxz & \mbox{ iff }\forall a,b(a\in x\wedge b\in z\lra a\circ b\in u).
\end{align}
Let $\tright$ be the point operator defined in~\eqref{x-tright-v}, where for a filter $x$ and ideal $v$, $x{\tright}v$ is an ideal, by Lemma~\ref{tright lemma}. It is useful to consider also point operators $\tleft,\medcircle$ where for an ideal $v$ and filters $x,z$ we set
\begin{equation}\label{x-tleft-v}
  v{\tleft}x=\{e\midsp\exists a\in x\exists b\in v\; e\leq b\la a\}\hskip1cm x\medcircle z=\{e\midsp\exists a\in x\exists b\in z\; a\circ b\leq e\}.
\end{equation}
The reader can verify that $v{\tleft}x$ is an ideal and $x\medcircle z$ is a filter. By an argument similar to that in Proposition~\ref{F1 is canonical} it is shown that $Syz=\Gamma(y{\tleft}z), Rxz=\Gamma(x\medcircle z)$ and $Txy=\Gamma(x{\tright}y)$, hence axiom (F1) holds in the canonical frame. Since we set $U=\Gamma x_\mathrm{e}$, a stable set, axiom (U$^*$) is valid in the canonical frame.

Given definitions, we have $yR'xz$ iff $x\medcircle z\upv y$, $zT'xy$ iff $z\upv x{\tright}y$ and $xS'yz$ iff $x\upv y{\tleft}z$. To verify that axiom (RES) holds in the canonical frame it suffices to show that
\[
x\upv y{\tleft}z\;\mbox{ iff }\; x\medcircle z\upv y\;\mbox{ iff }\; z\upv x{\tright}y.
\]
We give details for the proof of the second equivalence, leaving the first to the reader.

If $x\medcircle z\upv y$, let $a\in x, b\in z$ such that $a\circ b\in y$. By residuation, $b\leq a\ra a\circ b$ and since $b$ is in the filter $z$ we get $(a\ra a\circ b)\in z$, as well. But $a\in x$ and $a\circ b\in y$, hence $a\ra a\circ b\in x{\tright}y$. This means $z\upv x{\tright}y$.

Conversely, let $a\in x, b\in y$ be such that $(a\ra b)\in z$. Then $a\circ(a\ra b)\in x\medcircle z$, which is a filter and then since $a\circ(a\ra b)\leq b$, by residuation, we get $b\in x\medcircle z$. Hence $x\medcircle z\upv y$.

Conclude that the residuation frame axiom (RES) is valid in the canonical frame. 

Validity of (F2.1) and (F2.2) in the canonical frame is immediate. Indeed, let $x,z_1,z_2$ be filters. The assumption $xR^{111}z_1z_2$ in the canonical frame is equivalent to the inclusion $z_1\medcircle z_2\subseteq x$. If for $i=1,2$, $z_i\in U=\Gamma x_\mathrm{e}$, this means that $\mathrm{e}\in z_i$. Then for $i=1$, $e\in z_1$ and for any $b\in z_2$ we get $e\circ b\in z_1\medcircle z_2\subseteq x$, so $b=e\circ b\in x$, i.e. $z_2\subseteq x$. Similarly for $i=2$.

The frame axiom (F3.1) ensures validity of the axiom $p\proves \type{t}\circ p$, while (F3.2) ensures that of the axiom $p\proves p\circ\type{t}$. In the correspondence argument for the first, its translation reduces to $P\leq_1(\type{u}\odot P)''$, with the guard $P''\leq_1 P$, which is semantically equivalent to the inclusion $A\subseteq (U\bigodot A)''$, for any stable set $A\in\gpsi$. Similarly for (F3.2), ensuring that the inclusion $A\subseteq (A\bigodot U)''$, for $A\in\gpsi$, holds in the frame. We show that the canonical frame validates the stronger inclusions $A\subseteq U\bigodot A$ and $A\subseteq A\bigodot U$, where $U=\Gamma x_\mathrm{e}$.

If $x$ is a filter $x\in A$, then consider the filters $x_\mathrm{e}\medcircle x\in U\bigodot A$ and $x\medcircle x_\mathrm{e}\in A\bigodot U$. By definition, $x_\mathrm{e}\medcircle x$ is the filter generated by the elements $d\circ a$ with $\mathrm{e}\leq d$ and $a\in x$. Thus $a=\mathrm{e}\circ a\leq d\circ a\in x$, hence $x_\mathrm{e}\medcircle x\subseteq x$. Conversely, with $a\in x$ we obtain that $a=\mathrm{e}\circ a\in x_\mathrm{e}\medcircle x$. Conclude that $x_\mathrm{e}\medcircle x=x=x\medcircle x_\mathrm{e}$, from which it follows that  $A\subseteq U\bigodot A$ and $A\subseteq A\bigodot U$, where $U=\Gamma x_\mathrm{e}$.
\end{proof}

\begin{Proposition}
\label{sigma-pi in canonical}
The dual full complex algebra of the canonical frame (the dual frame of the Lindenbaum-Tarski algebra) of the non-associative Lambek calculus is a canonical extension of its Lindenbaum-Tarski algebra.
\end{Proposition}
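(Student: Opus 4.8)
The plan is to verify the two clauses of \cref{canext defn} in turn. First, as a complete lattice, $\gpsi$ together with $\alpha(a)=\Gamma x_a$ is a compact $\Delta_1$-completion of the Lindenbaum--Tarski algebra $\mathbf{P}=(P,\leq,1,\type{e},\la,\circ,\ra)$: the poset reduct of $\mathbf{P}$ is a mere poset, the canonical frame of \cref{lambek completeness} is built from $\filt(\mathbf{P}),\idl(\mathbf{P})$ and the non-intersection relation exactly as in \cref{delta1-lemma}, so \cref{compactness of delta1} applies directly. Moreover $\alpha(1)=W_1$ and $\alpha(\type{e})=\Gamma x_\type{e}=U$, so the constants go to their canonical images, and everything reduces to identifying the three binary operations $\overline{F}^1_T=\Ra$, $\overline{F}^1_S=\La$, $\overline{F}^1_R=\bigovert$ carried by the full complex algebra $\mathfrak{Cm}(\mathfrak{F})$ with $\ra^\pi$, $\la^\pi$ and $\circ^\sigma$ respectively.

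For $\Ra=\ra^\pi$ there is nothing new to do. The canonical $T^{\partial 1\partial}$ and the point operator $\tright$ of \eqref{x-tright-v} are exactly as in \cref{posets section}, and the computation in \cref{pi extension lemma} identifying $\Ra$ with $\ra^\pi$ uses only $\alpha$, the relation $T$, \cref{tright lemma} and \cref{F1 is canonical}; it never mentions $\type{e}$, $\circ$ or $\la$, and the $\pi$-extension of $\ra$ depends on $\mathbf{P}$ only through the embedding and through $\ra$ itself, so it transfers verbatim. The identification $\La=\la^\pi$ is the mirror image (the monotonicity type $(\partial,1;\partial)$ of $\la$ is the transpose of that of $\ra$): the point operator $v\tleft x$ of \eqref{x-tleft-v} is an ideal with $y_b\tleft x_a=\Gamma y_{b\la a}$, and, by the analogue of \cref{F1 is canonical} already used in the proof of \cref{lambek completeness}, the canonical $S^{\partial\partial 1}$ satisfies $ySvx$ iff $v\tleft x\subseteq y$, so $Svx=\Gamma(v\tleft x)$ and $S'vx=\lperp\{v\tleft x\}$; re-running the \cref{pi extension lemma} argument with $S,\tleft$ in place of $T,\tright$ gives $\La=\la^\pi$.

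The genuinely new case is $\bigovert=\circ^\sigma$, the first operator of output sort $1$ we must handle, for which one uses the $\sigma$-extension (dual to the $\pi$-extension recalled in \cref{pi extension lemma}). Since $\circ$ is monotone in both argument places, a closed element of the relevant product of copies of the canonical extension is a pair $(\Gamma x_1,\Gamma x_2)$ of closed elements of $\gpsi$, and $\circ^\sigma(\Gamma x_1,\Gamma x_2)=\bigcap\{\alpha(a_1\circ a_2)\midsp a_1\in x_1,\ a_2\in x_2\}$, while $\circ^\sigma(A,C)=\bigvee\{\circ^\sigma(\Gamma x,\Gamma z)\midsp \Gamma x\subseteq A,\ \Gamma z\subseteq C\}$ on arbitrary stable $A,C$. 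On the frame side the canonical $R^{111}$ satisfies $uR^{111}x_1x_2$ iff $x_1\medcircle x_2\subseteq u$, where $x_1\medcircle x_2$ (see \eqref{x-tleft-v}) is the filter generated by $\{a_1\circ a_2\midsp a_i\in x_i\}$; hence $R^{111}x_1x_2=\Gamma(x_1\medcircle x_2)=\bigcap_{a_i\in x_i}\Gamma x_{a_1\circ a_2}$ is already a Galois set, so, using the monotonicity of $\medcircle$ (hence of $R^{111}$), the outer closure in $\overline{F}_R$ is vacuous on pairs of closed arguments and $\Gamma x_1\bigovert\Gamma x_2=R^{111}x_1x_2=\circ^\sigma(\Gamma x_1,\Gamma x_2)$. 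For general stable $A,C$, since every Galois set is the union --- hence the join --- of the closed elements it contains, and $\bigovert$ is the closure of the image operator $\bigodot$ of \eqref{sorted image ops}, we get $A\bigovert C=\bigl(\bigcup_{x\in A,\,z\in C}R^{111}xz\bigr)''=\bigvee\{\Gamma x\bigovert\Gamma z\midsp \Gamma x\subseteq A,\ \Gamma z\subseteq C\}$, which is precisely the $\sigma$-extension formula. Thus $\bigovert=\circ^\sigma$, and by \cref{canext defn} the full complex algebra of the canonical frame is a canonical extension of $\mathbf{P}$.

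The main obstacle is the bookkeeping in this last step: keeping the order-type conventions straight (both places of $\circ$ are monotone, so the closed elements of its product are pairs of closed elements, whereas for $\la$ one argument place is a dualized position, treated exactly as $\ra$ is in \cref{pi extension lemma}) and verifying carefully that the canonical relations $R^{111}xz$, $S^{\partial\partial 1}vx$, $T^{\partial 1\partial}xv$ are Galois --- equal to $\Gamma$ of the corresponding point operator --- which is exactly what makes the outer closure in $\bigovert,\La,\Ra$ collapse on closed arguments and the frame-side formulas align with the $\sigma$- and $\pi$-extension formulas. A shortcut worth noting: once $\Ra=\ra^\pi$ and $\La=\la^\pi$ are established, $\langle\bigovert,\La,\Ra\rangle$ is residuated in $\gpsi$ by the soundness part of \cref{lambek completeness}, and a residuated triple in a complete lattice is determined by any one of its members, so $\bigovert$ must coincide with $\circ^\sigma$ as soon as $\circ^\sigma$ is known to residuate with $\ra^\pi,\la^\pi$; but this relies on the (standard, though not reproved here) preservation of residuation by $\sigma$- and $\pi$-extensions, so I would present the direct computation above instead.
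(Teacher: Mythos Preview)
Your proposal is correct and follows essentially the same route as the paper: invoke \cref{compactness of delta1} for the compact $\Delta_1$-completion, cite \cref{pi extension lemma} for $\Ra=\ra^\pi$, mirror the argument for $\La=\la^\pi$, and then verify $\bigovert=\circ^\sigma$ by showing $\Gamma x\bigovert\Gamma z=R^{111}xz=\bigcap_{a\in x,\,b\in z}\Gamma x_{a\circ b}$ and assembling joins. The one minor variation is that the paper obtains $A\bigovert C=\bigvee_{x\in A,\,z\in C}(\Gamma x\bigovert\Gamma z)$ by first invoking residuation (established in the soundness half of \cref{lambek completeness}) to get complete additivity of $\bigovert$, whereas you get the same identity directly from $A\bigovert C=(A\bigodot C)''=(\bigcup_{x\in A,\,z\in C}R^{111}xz)''$; both are fine, and your final paragraph already notes the residuation shortcut.
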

\begin{proof}
That the filter-ideal frame is a canonical extension of the underlying poset of the Lindenbaum algebra was proven in \cref{delta1-lemma} and \cref{compactness lemma}. That $\Ra$ is the $\pi$-extension of the implication operation $\ra$ was also dealt with in \cref{pi extension lemma}. That $\La$ is the $\pi$-extension of $\la$ is proven by a completely analogous argument, left to the reader. It remains to argue that $\bigovert$ is the $\sigma$ extension of the Lambek product operator $\circ$. By residuation, $\bigovert$ distributes over arbitrary joins in each argument place, so that $A\bigovert C=\left(\bigvee_{x\in A}\Gamma x\right)\bigovert\left(\bigvee_{z\in C}\Gamma z\right)=\bigvee^{x\in A}_{z\in C}(\Gamma x\bigovert\Gamma z)$. It is then enough to argue that $\Gamma x\bigovert\Gamma z=\bigcap\{\Gamma x_{a\circ b}\midsp a\in x\mbox{ and }b\in z\}$. 

First $\bigcap\{\Gamma x_{a\circ b}\midsp a\in x\mbox{ and }b\in z\}=\{u\midsp\forall a,b(a\in x\mbox{ and }b\in z,\mbox{ then }a\circ b\in u\}=R^{111}xz$, by the way the relation $R^{111}$ was defined in the canonical frame. Furthermore,
\begin{tabbing}
 $\Gamma x\bigovert\Gamma z$ \hskip2mm\==\hskip2mm\= $(\Gamma x\bigodot\Gamma z)''=\left(\{u\midsp\exists x_1\exists z_1(x\leq x_1\wedge z\leq z_1\wedge uR^{111}x_1z_1)\}\right)''$\\
 \>=\> $\left(\{u\midsp\exists x_1\exists z_1(x\leq x_1\wedge z\leq z_1\wedge x_1\medcircle z_1\subseteq u)\}\right)''$\\
 \>=\> $\left(\{u\midsp x\medcircle z\subseteq u)\}\right)''=(\Gamma(x\medcircle z))''=\Gamma(x\medcircle z)=R^{111}xz$
 \end{tabbing}
 and this concludes the proof.
\end{proof}

\begin{Proposition}
\label{alternative for implication}
The left $\La$ and right $\Ra$ stable set operations in any frame $\mathfrak{F}\in\mathbb{LK}$ are the restrictions of the residuals ${}_R\hskip-1mm\La,\Ra_R$ of $\bigodot$ in the dual powerset algebra of the frame.

Consequently, the left $\lfspoon$ and right $\rfspoon$ implication connectives of the Lambek calculus can be equivalently modeled by the satisfaction clauses
\begin{tabbing}
$W_1\ni x\forces \varphi\rfspoon\psi$\hskip6mm\= iff\hskip4mm\=  $\forall u,w,z\in W_1(u\forces\varphi\wedge x\leq w\wedge zR^{111}uw\lra z\forces\psi)$  \\
$W_1\ni x\forces \psi\lfspoon\varphi$ \> iff \> $\forall u,w,z(x\leq w\wedge u\forces\varphi\wedge zR^{111}wu\lra z\forces\psi)$
\end{tabbing}
\end{Proposition}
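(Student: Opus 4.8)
The plan is to reduce the statement to the residuated structure of the image operator $\bigodot:\powerset(W_1)\times\powerset(W_1)\lra\powerset(W_1)$ generated by $R^{111}$ and then transfer it to Galois sets. Being an image operator, $\bigodot$ distributes over arbitrary unions in each argument place, hence it admits residuals $\,{}_R\hskip-1mm\La\,$ and $\Ra_R$ on $\powerset(W_1)$, determined by the adjunctions $X\bigodot Z\subseteq C$ iff $Z\subseteq X\Ra_R C$ iff $X\subseteq C\,{}_R\hskip-1mm\La\, Z$; concretely $X\Ra_R C=\{w\in W_1\midsp\forall u,z(u\in X\wedge zR^{111}uw\lra z\in C)\}$, and symmetrically for $C\,{}_R\hskip-1mm\La\, Z$. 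I would first establish the identities $A\Ra C=A\Ra_R C$ and $C\La A=C\,{}_R\hskip-1mm\La\, A$ for all $A,C\in\gpsi$ (so that, in particular, the powerset residuals already return Galois sets on Galois arguments); the two satisfaction clauses then drop out by unwinding the residual together with stability.

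For the first identity I would compute membership directly. Fix $A,C\in\gpsi$ and $u\in W_1$. Then $u\in A\Ra_R C$ iff $A\bigodot\{u\}\subseteq C$, i.e. iff $R^{111}xu\subseteq C$ for every $x\in A$. As $C$ is Galois stable, $C=\bigcap_{C\upv v}\{v\}'$, so this holds iff $R^{111}xu\subseteq\{v\}'$ whenever $C\upv v$, that is iff $vR'xu$ for all such $v$ (using the Galois connection $G\subseteq H'$ iff $H\subseteq G'$, with $H=\{v\}$). By the residuation frame axiom (RES), $vR'xu$ iff $uT'xv$; hence $u\in A\Ra_R C$ iff $\forall x\in A\,\forall v(C\upv v\lra uT'xv)$. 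On the other side, using the stability axiom (F1) — which makes each section $Txv$ a Galois set, so that $Txv\subseteq\{u\}'$ iff $uT'xv$ — the same membership computation as in the soundness proof of \cref{lambda t soundness} shows that $u\in A\Ra C=(A\ltright C')'$ iff exactly this condition holds. Therefore $A\Ra_R C=A\Ra C\in\gpsi$. The identity $C\La A=C\,{}_R\hskip-1mm\La\, A$ follows by the symmetric argument, now invoking the $xS'vz$ clause of (RES) together with the Galois-set property of the sections $Svx$.

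Finally, for the satisfaction clauses: by the interpretation of $\rfspoon$ and the identity $A\Ra C=(A\ltright C')'$ recalled before \cref{lambek completeness}, $\val{\varphi\rfspoon\psi}_\mathfrak{M}=\val{\varphi}_\mathfrak{M}\Ra\val{\psi}_\mathfrak{M}=\val{\varphi}_\mathfrak{M}\Ra_R\val{\psi}_\mathfrak{M}$, and $x$ lies in this set iff $\val{\varphi}_\mathfrak{M}\bigodot\{x\}\subseteq\val{\psi}_\mathfrak{M}$, i.e. iff $\forall u,z\in W_1(u\forces\varphi\wedge zR^{111}ux\lra z\forces\psi)$. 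Since $\val{\varphi\rfspoon\psi}_\mathfrak{M}$ is a Galois set, hence a $\leq$-upset, the displayed clause with the extra quantifier ``$\forall w\geq x$'' is equivalent to this: one direction is the instance $w:=x$, the other holds because $x\in\val{\varphi\rfspoon\psi}_\mathfrak{M}$ entails $w\in\val{\varphi\rfspoon\psi}_\mathfrak{M}$ for every $w\geq x$, and membership of $w$ unwinds to the required implication. The clause for $\lfspoon$ is obtained the same way from $\val{\psi\lfspoon\varphi}_\mathfrak{M}=\val{\psi}_\mathfrak{M}\La\val{\varphi}_\mathfrak{M}=\val{\psi}_\mathfrak{M}\,{}_R\hskip-1mm\La\,\val{\varphi}_\mathfrak{M}$. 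The one delicate point — the main obstacle — is this transfer step: one must be sure that no extra closure $(\;)''$ intervenes when the powerset residuals are restricted to $\gpsi$, which is exactly what (RES) and (F1) guarantee (abstractly, it is the familiar fact that a normal image operator commutes with the Galois closure, so its residual lands in $\gpsi$ on stable arguments).
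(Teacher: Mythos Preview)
Your argument is correct and essentially complete. The route differs from the paper's: the paper invokes two external results (\mcite{Theorem~3.14 and Lemma~3.15}{duality2}), first observing from the soundness part of \cref{lambek completeness} that $\bigovert$ is residuated with $\La,\Ra$, then concluding abstractly that these are the restrictions of ${}_R\hskip-1mm\La,\Ra_R$, and finally reading off the description $A\Ra C=\{x\midsp A\bigodot\Gamma x\subseteq C\}$, which unwinds directly to the clause with the extra quantifier $w\geq x$. You instead compute membership in $A\Ra_R C$ by hand, decompose $C$ as $\bigcap_{C\upv v}\{v\}'$, and use (RES) to land exactly on the membership condition for $A\Ra C$ established in \cref{lambda t soundness}; then you recover the $w\geq x$ form from the upset property of Galois sets. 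Your approach is more self-contained and makes the dependence on (F1) and (RES) explicit at each step; the paper's is shorter on the page but defers the work to the cited lemmas. One minor remark: your closing parenthetical (``a normal image operator commutes with the Galois closure'') is not quite the right slogan for what is happening --- the relevant fact is that the residuals of the closed restriction coincide with the restrictions of the residuals --- but this does not affect your argument, which nowhere relies on that formulation.
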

\begin{proof}
The image operator $\bigodot$ is completely additive in each argument place. Hence it is residuated with operations ${}_R\hskip-1mm\La,\Ra_R$. They are respectively defined by 
\[
X\Ra_R Y=\bigcup\{Z\subseteq W_1\midsp X\mbox{$\bigodot$} Z\subseteq Y\}\hskip1cm Y{}_R\hskip-1mm\La X=\bigcup\{Z\subseteq W_1\midsp Z\mbox{$\bigodot$} X\subseteq Y\}
\]
By the soundness part in the proof of \cref{lambek completeness}, the closure $\bigovert$ of the restriction of $\bigodot$ to Galois (stable) sets is residuated with $\La,\Ra$. By \mcite{Theorem~3.14}{duality2}, $\La,\Ra$ are the restrictions of ${}_R\hskip-1mm\La,\Ra_R$ to Galois stable sets. By \mcite{Lemma~3.15}{duality2} they are equivalently defined by
\[
A\Ra C=\{x\in W_1\midsp A\mbox{$\bigodot$}\Gamma x\subseteq C\}\hskip1cm C\La A=\{x\in W_1\midsp \Gamma x\mbox{$\bigodot$}A\subseteq C\}
\]
Given the definition of $\bigodot$ it follows that
\begin{tabbing}
$x\in A\Ra C$ \hskip5mm\= iff\hskip3mm\= $\forall u,w,z(u\in A\wedge x\leq w\wedge zR^{111}uw\lra z\in C)$\\
$x\in C\La A$ \>iff\> $\forall u,w,z(x\leq w\wedge u\in A\wedge zR^{111}wu\lra z\in C)$.
\end{tabbing}
It follows that the satisfaction clauses for the implication constructs are as in the statement of the Proposition.
 \end{proof}

\begin{Proposition}
\label{star corollary}
By the proof of \cref{lambek completeness},  the non-associative Lambek calculus is sound and complete in the smaller class $\mathbb{LK}^*$ of frames, see \cref{lambek star frame class axioms}, axiomatized as in \cref{lambek frame class axioms} but replacing axioms (F3.1) and (F3.2) by the stronger axioms
\begin{tabbing}
(F3$^*$.1) \hskip5mm\= $\forall x\in W_1\exists z_1,z_2\in W_1(z_1\in U\wedge xR^{111}z_1z_2\wedge x\leq z_2)$ \\
(F3$^*$.2) \> $\forall x\in W_1\exists z_1,z_2\in W_1(z_2\in U\wedge xR^{111}z_1z_2\wedge x\leq z_1)$ 
\end{tabbing}
corresponding to the fact that both inclusions $A\subseteq U\bigodot A$ and $A\subseteq A\bigodot U$ hold in the frame, for any stable set $A\in\gpsi$.
\end{Proposition}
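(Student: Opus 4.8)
The plan is to obtain this directly from the proof of \cref{lambek completeness}, which already contains everything needed. For soundness I would first observe that $\mathbb{LK}^{*}\subseteq\mathbb{LK}$: given $xIy$ and applying (F3$^{*}$.1) to $x$ one gets $z_1,z_2\in W_1$ with $z_1\in U$, $xR^{111}z_1z_2$ and $x\leq z_2$, so $z:=x$ witnesses (F3.1); symmetrically (F3$^{*}$.2) implies (F3.2). Since the remaining axioms of $\mathbb{LK}^{*}$ coincide with those of $\mathbb{LK}$, every $\mathbb{LK}^{*}$-frame is an $\mathbb{LK}$-frame, and soundness of the calculus over $\mathbb{LK}$ transfers. (Equivalently, and more transparently, (F3$^{*}$.1) says exactly that $A\subseteq U\bigodot A$ for every $A\in\gpsi$ --- instantiate the generic valuation at $A=\Gamma x=\{z\midsp x\leq z\}$ --- so $\val{p}\subseteq U\bigodot\val{p}\subseteq(U\bigodot\val{p})''=\val{\type{t}\circ p}$ in any $\mathbb{LK}^{*}$-frame, and dually $\val{p}\subseteq\val{p\circ\type{t}}$ from (F3$^{*}$.2).)

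For completeness I would reuse verbatim the canonical frame of the proof of \cref{lambek completeness}, in which $U=\Gamma x_\mathrm{e}$ and $uR^{111}xz$ iff $x\medcircle z\subseteq u$. That proof already verifies (F1), (RES), (U$^{*}$), (F2.1) and (F2.2), so only (F3$^{*}$.1) and (F3$^{*}$.2) remain; and for these the key identities $x_\mathrm{e}\medcircle x=x=x\medcircle x_\mathrm{e}$, valid in the canonical frame for every filter $x$, were also established there. Indeed, given a filter $x$, the choice $z_1:=x_\mathrm{e}$, $z_2:=x$ gives $z_1\in U$, $z_1\medcircle z_2=x_\mathrm{e}\medcircle x=x\subseteq x$ (that is, $xR^{111}z_1z_2$) and $x\leq z_2$, which is (F3$^{*}$.1); the mirror choice $z_1:=x$, $z_2:=x_\mathrm{e}$, using $x\medcircle x_\mathrm{e}=x$, gives (F3$^{*}$.2). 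Since the representation map is still an embedding of the Lindenbaum--Tarski algebra into the full complex algebra of this frame, $\varphi\not\proves\psi$ yields $\val{\varphi}\not\subseteq\val{\psi}$ in the canonical model, which is completeness over $\mathbb{LK}^{*}$.

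I do not anticipate a real obstacle: this is a corollary in which the canonical frame is seen to witness the stronger inclusions $A\subseteq U\bigodot A$ and $A\subseteq A\bigodot U$ themselves, not merely $A\subseteq(U\bigodot A)''$ and $A\subseteq(A\bigodot U)''$. The one point that deserves care is confirming that (F3$^{*}$.1) and (F3$^{*}$.2) are the correct first-order correspondents of these strengthened inclusions; this is a routine run of the generalized Sahlqvist--van Benthem algorithm of \cref{sahlqvist section} on the associated guarded inequalities $P\leq_1\type{u}\odot P$ and $P\leq_1 P\odot\type{u}$ (with guard $P''\leq_1 P$), which are the outer-closure-free strengthenings of the inequalities $P\leq_1(\type{u}\odot P)''$ and $P\leq_1(P\odot\type{u})''$ treated for (F3.1) and (F3.2) in the proof of \cref{lambek completeness}; the substantive verification in the canonical model has already been carried out there.
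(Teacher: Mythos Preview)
Your proposal is correct and matches the paper's approach exactly: the paper offers no separate proof, since the statement itself says ``By the proof of \cref{lambek completeness}'' --- that proof already established $x_\mathrm{e}\medcircle x=x=x\medcircle x_\mathrm{e}$ in the canonical frame, hence the stronger inclusions $A\subseteq U\bigodot A$ and $A\subseteq A\bigodot U$, which is precisely what you unpack. Your verification that $\mathbb{LK}^*\subseteq\mathbb{LK}$ (taking $z:=x$) and your direct witnesses $(z_1,z_2)=(x_\mathrm{e},x)$ and $(x,x_\mathrm{e})$ for the starred axioms in the canonical frame are exactly the intended argument.
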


\begin{table}[!htbp]
\caption{Axioms for the Lambek Frame Class $\mathbb{LK}^*$}
\label{lambek star frame class axioms}
\hrule
\begin{enumerate}
\item[(F1)] For all $x,z\in W_1$ and all $v\in W_\partial$, the sections $Txv$, $Rxz$, $Svx$ of the relations $S,R,T$ are Galois sets.
\item[(RES)] For all $x,z\in W_1$ and all $v\in W_\partial$, the equivalences $xS'vz$ iff $vR'xz$ iff $zT'xv$ hold, where\\
$S',R'$ and $T'$ are the Galois dual relations of the frame relations $S,R,T$.
\\[1mm] \underline{Unit Axioms}
\item[(U$^*$)] $U$ is a Galois stable set.
\item[(F2.1)] $\forall x,z_1,z_2\in W_1(xR^{111}z_1z_2\ra(z_1\in U\ra z_2\leq x) )$.
\item[(F2.2)] $\forall x,z_1,z_2\in W_1(xR^{111}z_1z_2\ra(z_2\in U\ra z_1\leq x))$
\item[(F3$^*$.1)]\hskip2mm $\forall x\in W_1\exists z_1,z_2\in W_1(z_1\in U\wedge xR^{111}z_1z_2\wedge x\leq z_2)$.
\item[(F3$^*$.2)]\hskip2mm $\forall x\in W_1\exists z_1,z_2\in W_1(z_2\in U\wedge xR^{111}z_1z_2\wedge x\leq z_1)$ .
\end{enumerate}
\hrule
\end{table}

For duality purposes \cite{duality2,choiceFreeHA,choiceFreeStLog}, additional frame axioms are assumed. They include the axiom on monotonicity properties of frame relations, referred to in the definition of the frame class $\mathbb{PU}\ell_*$  and stated as follows:
\begin{itemize}
\item[(M)] For any frame relation $R_j$ of sort $(i_{n(j)+1}:i_1\cdots i_{n(j)})$ and any $z\in W_{n(j)+1}$, the $n(j)$-ary relation $zR_j$ is decreasing in every argument place. 
\end{itemize}
The axiom is useful in many cases for simplifying conditions, such as correspondence conditions for various sentential axioms. Assuming this axiom, the satisfaction clauses in the statement of the Proposition simplify to the following
\begin{tabbing}
$W_1\ni x\forces \varphi\rfspoon\psi$\hskip6mm\= iff\hskip4mm\=  $\forall u,z\in W_1(u\forces\varphi\wedge  zR^{111}ux\lra z\forces\psi)$  \\
$W_1\ni x\forces \psi\lfspoon\varphi$ \> iff \> $\forall u,w( u\forces\varphi\wedge zR^{111}xu\lra z\forces\psi)$,
\end{tabbing}
which were used as the alternative modeling of implication in \cite{dfnmlA}.

Assuming the frame axiom (M) simplifies the statement of frame axioms in the restricted frame class $\mathbb{LK}_*\subseteq\mathbb{LK}^*\subseteq\mathbb{LK}$. 

\begin{table}[!htbp]
\caption{Frame axioms for the class $\mathbb{LK}_*$}
\label{LKstar frame axioms}
\hrule
\begin{enumerate}
\item[(F1)] For all $x,z\in W_1$ and all $v\in W_\partial$, the sections $Txv$, $Rxz$, $Svx$ of the frame relations $S^{\partial\partial 1},R^{111},T^{\partial 1\partial}$ are Galois sets.
\item[(M)] For any $y\in W_\partial$ and $x\in W_1$, the binary relations $yT,xR,yS$ are decreasing in both argument places. 
\item[(RES)]\hskip4mm For all $x,z\in W_1$ and $v\in W_\partial$, the equivalences $xS'vz$ iff $vR'xz$ iff $zT'xv$ hold.
\\[1mm] \underline{Unit Axioms}
\item[(U$^*$)] \hskip4mm $U$ is a Galois stable set.
\item[(F2.1)]\hskip4mm $\forall x,z_1,z_2\in W_1(xR^{111}z_1z_2\ra(z_1\in U\ra z_2\leq x) )$.
\item[(F2.2)]\hskip4mm $\forall x,z_1,z_2\in W_1(xR^{111}z_1z_2\ra(z_2\in U\ra z_1\leq x))$
\item[(F3$_*$.1)]\hskip4mm $\forall x\in W_1\exists z\in W_1(z\in U\wedge xR^{111}zx)$ .
\item[(F3$_*$.2)]\hskip4mm $\forall x\in W_1\exists z\in W_1(z\in U\wedge xR^{111}xz)$.
\end{enumerate}
\hrule
\end{table}

\begin{Proposition}
The non-associative Lambek calculus is sound and complete in the frame class $\mathbb{LK}_*\subseteq\mathbb{LK}^*\subseteq\mathbb{LK}$  axiomatized in \cref{LKstar frame axioms}, which is the axiom set in \cref{lambek frame class axioms} but including the monotonicity axiom (M) and replacing axioms (F3.1), (F3.2) by the axioms (F3$_*$.1), (F3$_*$.2).
\end{Proposition}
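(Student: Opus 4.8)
The plan is to transfer both halves of the statement from the results already obtained for the wider classes $\mathbb{LK}$ and $\mathbb{LK}^*$, the only new ingredient being to place the monotonicity axiom (M) on both sides of the argument. For \emph{soundness} I would first observe that $\mathbb{LK}_*\subseteq\mathbb{LK}$. A frame satisfying (F3$_*$.1) satisfies (F3$^*$.1): given $x\in W_1$, choose $z\in U$ with $xR^{111}zx$ and put $z_1:=z$, $z_2:=x$, so that $z_1\in U$, $xR^{111}z_1z_2$ and $x\leq z_2$. And (F3$^*$.1) in turn entails (F3.1): given $xIy$, apply (F3$^*$.1) to $x$ to get $z_1\in U$ and $z_2$ with $xR^{111}z_1z_2$, $x\leq z_2$, and take the $z$ required by (F3.1) to be $x$ itself. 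The same chain of implications relates (F3$_*$.2), (F3$^*$.2), (F3.2). Since (F1), (RES), (U$^*$), (F2.1), (F2.2) are literally common to \cref{LKstar frame axioms} and \cref{lambek frame class axioms}, every $\mathbb{LK}_*$-frame is an $\mathbb{LK}$-frame, and soundness follows at once from \cref{lambek completeness}.

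For \emph{completeness} I would show that the canonical frame of the non-associative Lambek calculus — namely the filter--ideal frame of \cref{canonical frame defn} enriched, as in the proof of \cref{lambek completeness}, with the canonical relations $S^{\partial\partial 1}$, $R^{111}$ and with $U=\Gamma x_\mathrm{e}$ — lies in $\mathbb{LK}_*$. Axioms (F1), (RES), (U$^*$), (F2.1), (F2.2) were verified in the proof of \cref{lambek completeness}, and (F3$^*$.1), (F3$^*$.2) in \cref{star corollary}; so only (M) and (F3$_*$.1), (F3$_*$.2) remain. For (M): on $W_1=\filt(\mathbf{P})$ and $W_\partial=\idl(\mathbf{P})$ the frame preorder $\leq$ coincides with set inclusion, and in the canonical frame $yT^{\partial 1\partial}xv$ iff $x\tright v\subseteq y$, $uR^{111}xz$ iff $x\medcircle z\subseteq u$, and correspondingly $ySvx$ iff $v\tleft x\subseteq y$; since each of the point operators $\tright,\tleft,\medcircle$ is monotone in each of its arguments with respect to inclusion, the sections $yT$, $xR$, $yS$ are decreasing in every argument place — exactly the computation already recorded for $T$ in the discussion following \cref{unit completeness}. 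For (F3$_*$.1), (F3$_*$.2): given a filter $x$ take $z:=x_\mathrm{e}\in U$; as computed in the proof of \cref{lambek completeness}, $x_\mathrm{e}\medcircle x=x=x\medcircle x_\mathrm{e}$, so $x_\mathrm{e}\medcircle x\subseteq x$ and $x\medcircle x_\mathrm{e}\subseteq x$, i.e. $xR^{111}x_\mathrm{e}x$ and $xR^{111}xx_\mathrm{e}$. Hence the canonical frame is a $\mathbb{LK}_*$-frame, and completeness follows in the usual way: the representation map $\alpha$ embeds the Lindenbaum--Tarski algebra into the full complex algebra of this frame (the defining identities for $\ra$, $\la$, $\circ$ being those of \cref{rep embedding prop} and \cref{sigma-pi in canonical}), so any non-provable sequent $\varphi\proves\psi$, which gives $[\varphi]\not\leq[\psi]$ and hence $\alpha([\varphi])\not\subseteq\alpha([\psi])$, is already refuted in the canonical model over a $\mathbb{LK}_*$-frame.

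I expect the only delicate point to be the verification of (M): one has to match the word ``decreasing'' in (M) against the direction in which the preorder on filters and ideals is induced from the Galois relation $\upv$ — that is, confirm that $\leq$ on $\filt(\mathbf{P})$ and on $\idl(\mathbf{P})$ is honest set inclusion — so that the elementary monotonicity of the point operators $\tright$, $\tleft$, $\medcircle$ delivers the antitonicity, rather than the monotonicity, of the sections $yT$, $xR$, $yS$. Everything else is routine rearrangement of facts established for $\mathbb{LK}$ and $\mathbb{LK}^*$.
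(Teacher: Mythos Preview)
Your proposal is correct and follows essentially the same approach as the paper. The paper's own proof is briefer: it notes that under (M) the axioms (F3$^*$.$i$) and (F3$_*$.$i$) are equivalent (since, e.g., $\exists z_2(x\leq z_2\wedge xRz_1z_2)$ collapses to $xRz_1x$ when $xR$ is decreasing), and then verifies (M) in the canonical frame exactly as you do, via monotonicity of the point operators $\tleft,\medcircle,\tright$ and the identities $uRxz$ iff $x\medcircle z\subseteq u$, etc. Your direct verification of (F3$_*$.1), (F3$_*$.2) from $x_\mathrm{e}\medcircle x=x=x\medcircle x_\mathrm{e}$ is a minor presentational variant of the same fact; your concern about matching the frame preorder with set inclusion is well placed and your handling of it is correct.
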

\begin{proof}
  Assuming the monotonicity properties axiom (M) for the frame relations results in simplifying the statement of axioms (F3$^*$.1) and (F3$^*$.2) given in \cref{star corollary}, since for example $\exists w(x\leq w\wedge xRwz)$ is equivalent to $xRxz$. That the canonical relations satisfy the monotonicity requirement has been proven for the general case in \mcite{Lemma~4.3, Case (3)}{duality2}, but it is immediate in the concrete case at hand by monotonicity of the point operators $\tleft,\medcircle,\tright$ and the way the canonical relations can be defined from them, since we have for example $uRxz$ iff $x\medcircle z\subseteq u$ and similarly for $S$ and $T$.
\end{proof}

\begin{Corollary}
\label{alternative for implication simplified}
For frames $\mathfrak{F}\in\mathbb{LK}_*\subseteq\mathbb{LK}^*\subseteq\mathbb{LK}$, which assume axiom (M), the membership conditions simplify to
\begin{tabbing}
$x\in A\Ra C$ \hskip5mm\= iff\hskip3mm\= $\forall u,z(u\in A\wedge  zR^{111}ux\lra z\in C)$\\
$x\in C\La A$ \>iff\> $\forall u,z(u\in A\wedge zR^{111}xu\lra z\in C)$.
\end{tabbing}
The satisfaction clauses for implication in \cref{alternative for implication} are then simplified accordingly.\telos
\end{Corollary}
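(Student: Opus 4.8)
The plan is to read the simplification off Proposition~\ref{alternative for implication} by eliminating the auxiliary witness $w$ with the help of the monotonicity axiom (M). Since $\mathbb{LK}_*\subseteq\mathbb{LK}^*\subseteq\mathbb{LK}$, that proposition applies to any $\mathfrak{F}\in\mathbb{LK}_*$ and gives
\[
x\in A\Ra C\iff\forall u,w,z\,(u\in A\wedge x\leq w\wedge zR^{111}uw\lra z\in C),
\]
and dually $x\in C\La A\iff\forall u,w,z\,(x\leq w\wedge u\in A\wedge zR^{111}wu\lra z\in C)$. Axiom (M), as stated in \cref{LKstar frame axioms}, says in particular that for every $z\in W_1$ the binary relation $zR^{111}$ on $W_1\times W_1$ is decreasing in each argument place; the whole content of the corollary is that this lets us collapse $w$ to $x$.

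I would prove the two inclusions separately. For $\Leftarrow$: assuming the simplified condition $\forall u,z\,(u\in A\wedge zR^{111}ux\lra z\in C)$, take $u,w,z$ with $u\in A$, $x\leq w$ and $zR^{111}uw$; decreasingness of $zR^{111}$ in its second argument place together with $x\leq w$ yields $zR^{111}ux$, hence $z\in C$, which is exactly the condition of Proposition~\ref{alternative for implication}, so $x\in A\Ra C$. For $\Rightarrow$: instantiate $w:=x$ in that proposition's condition (using $x\leq x$) to recover the simplified form directly --- this direction uses neither (M) nor anything else. The case of $C\La A$ is verbatim the same, except that one now invokes decreasingness of $zR^{111}$ in its \emph{first} argument place to pass from $zR^{111}wu$ and $x\leq w$ to $zR^{111}xu$.

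The satisfaction clauses then follow by specialization: in any model $\mathfrak{M}=(\mathfrak{F},V)$ the sets $\val{\varphi}_\mathfrak{M}$ and $\val{\psi}_\mathfrak{M}$ are Galois stable, the connectives $\rfspoon$ and $\lfspoon$ are interpreted by $\Ra$ and $\La$ on these stable sets (Proposition~\ref{alternative for implication}), and $u\forces\chi$ iff $u\in\val{\chi}_\mathfrak{M}$; substituting $A=\val{\varphi}_\mathfrak{M}$ and $C=\val{\psi}_\mathfrak{M}$ into the simplified membership conditions yields the two displayed clauses. I do not anticipate a genuine obstacle; the only point requiring care is the bookkeeping of variance --- since (M) furnishes \emph{decreasing} relations, substituting the smaller element $x$ for $w$ is admissible only in the $\Leftarrow$ direction, and one must track which argument place of $R^{111}$ is affected in each of the two clauses.
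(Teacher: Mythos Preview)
Your proposal is correct and matches the paper's intended reasoning. The paper states the corollary without proof (the $\telos$ at the end of the statement), treating it as immediate from Proposition~\ref{alternative for implication} together with axiom (M); your argument spells out precisely this implicit step, using decreasingness of $zR^{111}$ in the relevant argument place to collapse the witness $w$ to $x$.
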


Given \cref{alternative for implication}, it is useful to include in the language of the sorted companion  modal logic the implication connectives $\lspoon,\rspoon$,
\begin{eqnarray*}
\mathcal{L}_1\ni\alpha,\eta,\zeta &=& P_i\;(i\in\mathbb{N})\midsp \top\midsp\bot\midsp \alpha\cap\alpha\midsp\alpha\cup\alpha\midsp\beta' \midsp\alpha\lspoon\alpha\midsp\alpha\odot\alpha\midsp\alpha\rspoon\alpha\midsp\type{u}
\\
\mathcal{L}_\partial\ni\beta,\delta,\xi &=& P^i(i\in\mathbb{N})\midsp\top\midsp\bot\midsp \beta\cap\beta\midsp\beta\cup\beta\midsp\alpha'\midsp \beta\tleft\alpha\midsp \alpha\tright\beta, 
\end{eqnarray*}
to be interpreted as the residuals ${}_R\hskip-1mm\La,\Ra_R$ of $\bigodot$ in the dual sorted powerset algebra of a frame. From the proof of \cref{alternative for implication} it follows that the satisfaction clauses for frames in $\mathbb{LK}$ are given by $x\models\type{u}$ iff $x\in U$ and
\begin{tabbing}
$x\models \alpha\rspoon\eta$\hskip4mm\= iff \hskip2mm\= $\forall u,w,z(u\models\alpha\wedge x\leq w\wedge zR^{111}uw\lra z\models\eta)$\\
$x\models \eta\lspoon\alpha$ \>iff\> $\forall u,w,z(x\leq w\wedge u\models\alpha\wedge zR^{111}wu\lra z\models\eta)$
\end{tabbing}
simplifying to the clauses
\begin{tabbing}
$x\models \alpha\rspoon\eta$\hskip4mm\= iff \hskip2mm\= $\forall u,z(u\models\alpha\wedge  zR^{111}ux\lra z\models\eta)$\\
$x\models \eta\lspoon\alpha$ \>iff\> $\forall u,z( u\models\alpha\wedge zR^{111}xu\lra z\models\eta)$
\end{tabbing}
for models on frames $\mathfrak{F}\in\mathbb{LK}_*\subseteq\mathbb{LK}^*\subseteq\mathbb{LK}$, by \cref{alternative for implication simplified}.

For clarity's sake we restate the (extended) full proof system in \cref{extended sorted proof system} in which we have included the association axiom of the associative Lambek calculus, to be studied in the next \cref{associative calculus}.
\begin{table}[!t]
\caption{Sorted Proof System}
\hrule
\label{extended sorted proof system}
A. Axioms and Rules for Both Sorts\\
(if $\sigma$ is $\alpha\in\mathcal{L}_1$, then $\Mapsto$ is $\proves$, and if $\sigma$ is $\beta\in\mathcal{L}_\partial$, then $\Mapsto$ is $\vproves$)
\begin{tabbing}
$\sigma\Mapsto\sigma$ \hskip2cm\= $\bot\Mapsto\sigma$\hskip3cm\= $\sigma\Mapsto\top$ \hskip2cm\= $\top\Mapsto\bot'$
\\
$\sigma_1\Mapsto\sigma_1\cup\sigma_2$  \> $\sigma_2\Mapsto\sigma_1\cup\sigma_2$  \>  $\sigma_1\cap\sigma_2\Mapsto\sigma_1$  \> $\sigma_1\cap\sigma_2\Mapsto\sigma_2$\\
$\sigma_1\cap(\sigma_2\cup\sigma_3)\Mapsto(\sigma_1\cap\sigma_2)\cup(\sigma_1\cap\sigma_3)$
\\[2mm]
$\infrule{\sigma_1\Mapsto\sigma_2}{\sigma_1[\sigma/P]\Mapsto\sigma_2[\sigma/P]}$ \>\> $\infrule{\sigma_1\Mapsto\sigma_2\hskip4mm\sigma_2\Mapsto\sigma_3}{\sigma_1\Mapsto\sigma_3}$
\\[2mm]
$\infrule{\sigma_1\Mapsto\sigma\hskip4mm \sigma_2\Mapsto\sigma}{\sigma_1\cup\sigma_2\Mapsto\sigma}$ \>\>
$\infrule{\sigma\Mapsto\sigma_1\hskip4mm\sigma\Mapsto\sigma_2}{\sigma\Mapsto\sigma_1\cap\sigma_2}$
\end{tabbing}
B. Axioms and Rules for the (sorted) Modal Operators $\tleft,\odot,\tright$
\begin{tabbing}
$(\alpha\cup\eta)\tright\beta\vproves(\alpha\tright\beta)\cup(\eta\tright\beta)$ \hskip2cm\= $\infrule{\alpha\proves\eta\hskip4mm \beta\vproves\delta}{\alpha\tright\beta\vproves\eta\tright\delta}$
\\
$\bot\tright\beta\vproves\bot$ \hskip7mm $\alpha\tright\bot\vproves\bot$  \> $\alpha\tright(\beta\cup\delta)\vproves(\alpha\tright\beta)\cup(\alpha\tright\delta)$
\\[3mm]
$\beta\tleft(\alpha\cup\eta)\proves(\beta\tleft\alpha)\cup(\beta\tleft\eta)$\> $\infrule{\beta\vproves\delta\hskip4mm\alpha\proves\eta}{\beta\tleft\alpha\vproves\delta\tleft\eta}$
\\[1mm]
$\bot\tleft\alpha\vproves\bot$ \hskip7mm $\beta\tleft\bot\vproves\bot$
\> $(\beta\cup\delta)\tleft\alpha\vproves(\beta\tleft\alpha)\cup(\beta\tleft\eta)$
\\[3mm]
$\bot\odot\eta\proves\bot$ \hskip7mm $\alpha\odot\bot\proves\bot$ \> $\infrule{\alpha\proves\alpha_1\hskip4mm\eta\proves\eta_1}{\alpha\odot\eta\proves\alpha_1\odot\eta_1}$
\\[1mm]
$(\alpha\cup\eta)\odot\zeta\proves(\alpha\odot\zeta)\cup(\eta\odot\zeta)$ \>
$\zeta\odot(\alpha\cup\eta)\proves(\zeta\odot\alpha)\cup(\zeta\odot\eta)$
\end{tabbing}
C. Axioms and Rules for Sorted Negation
\begin{tabbing}
$\alpha\proves\alpha''$\hskip2.3cm\= $\beta\vproves\beta''$ 
\hskip2.3cm\=  $\infrule{\beta\vproves\delta}{\delta'\proves\beta'}$ \hskip2.3cm\=  $\infrule{\alpha\proves\eta}{\eta'\vproves\alpha'} $ 
\end{tabbing}
D.  Unit\\[2mm]
$\infrule{\alpha''\proves\alpha\hskip4mm\eta''\proves\eta\hskip4mm\alpha\proves\eta}{\type{u}\proves(\alpha\tright\eta')'}$
\\[2mm]
E. Left-Identity\hskip4.5cm Right-Identity
\begin{tabbing}
$\infrule{\beta''\vproves \beta}{\type{u}\tright \beta\vproves \beta}$\hskip1.8cm\= $\infrule{\beta''\vproves \beta}{\beta\vproves\type{u}\tright \beta}$
\hskip1.8cm\= $\infrule{\beta''\vproves \beta}{\beta\tleft\type{u}\vproves\beta}$ \hskip1.8cm $ \infrule{\beta''\vproves \beta}{\beta\vproves\beta\tleft\type{u}}$
\end{tabbing}
F. Implication Operators $\lspoon, \rspoon$
\begin{tabbing}
$\infrule{\alpha\proves\alpha_1\hskip4mm\eta_1\proves\eta}{\alpha_1\rspoon\eta_1\proves\alpha\rspoon\eta}$ \hskip5mm\= 
$(\alpha\rspoon\eta)\cap(\alpha\rspoon\zeta)\proves\alpha\rspoon\eta\cap\zeta$
\hskip5mm\= $(\alpha\rspoon\zeta)\cap(\eta\rspoon\zeta)\proves(\alpha\cup\eta)\rspoon\zeta$
\\[1mm]
$\infrule{\alpha\proves\alpha_1\hskip4mm\eta_1\proves\eta}{\eta_1\lspoon\alpha_1\proves\eta\lspoon\alpha}$\>
$(\zeta\lspoon\alpha)\cap(\eta\lspoon\alpha)\proves(\eta\cap\zeta)\lspoon\alpha$ \> 
$(\zeta\lspoon\alpha)\cap(\zeta\lspoon\eta)\proves\zeta\lspoon(\alpha\cup\eta)$
\\[1mm]
Residuation \> $\infrule{\alpha\odot\eta\proves\zeta}{\overline{\eta\proves\alpha\rspoon\zeta}}$ \> $\infrule{\alpha\odot\eta\proves\zeta}{\overline{\alpha\proves\zeta\lspoon\eta }}$
\\[1mm]
Restriction to stable \>
$\infrule{\alpha''\proves\alpha\hskip4mm\eta''\proves\eta}{\alpha\rspoon\eta\equiv(\alpha\tright\eta')'}$ \> 
$\infrule{\alpha''\proves\alpha\hskip4mm\eta''\proves\eta}{\eta\lspoon\alpha\equiv (\eta'\tleft\alpha)'}$
\end{tabbing}
G. Association  $\alpha\odot(\zeta\odot\eta)\equiv(\alpha\odot\zeta)\odot\eta$
\\
\hrule
\end{table} 

\subsection{The Associative Calculus} 
\label{associative calculus}
To calculate the axiomatization of the largest frame class in which the associative Lambek calculus is sound we extend the language of its modal companion logic and then apply the (updated) generalized correspondence algorithm of \cite{dfmlC}, briefly reviewed in  \cref{sahlqvist section}.

Given the equivalence $\alpha\rspoon\eta\equiv(\alpha\tright\eta')'$ and $\eta\lspoon\alpha\equiv(\eta'\tleft\alpha)'$, which holds for $\alpha''\equiv\alpha$ and $\eta''\equiv\eta$, there are two ways to translate an implication sentence of the language of the Lambek calculus, 
\[
(\varphi\rfspoon\psi)^\bullet=(\varphi^\bullet\tright\psi^\circ)'=\varphi^\bullet\rspoon\psi^\bullet
\;\mbox{ and }\; 
(\psi\lfspoon\varphi)^\bullet=(\psi^\circ\tleft\varphi^\bullet)'=\psi^\bullet\lspoon\varphi^\bullet.
\]
Because of the equivalence, the full abstraction result (Theorem~\ref{full abstraction of trans in sorted modal}) holds for the extended language, by \mcite{Theorem~3.2}{vb}.

To make use of the extended sorted modal language in the generalized Sahlqvist -- van Benthem algorithm, extend the rule (R5) to include additional rewrite cases and add a new rule (R7) to the reduction rules of \cref{reduction rules}.

The rules with $\rspoon$ were listed and used in \cite{dfmlC} and they were considered in the proof that rule application leads from a system of formal inequations to an equivalent one, see \mcite{Lemma~2}{dfmlC}. Adding the analogous rules for $\lspoon$ is unproblematic. 

Having set things up as needed, we can address the case of frames for the associative Lambek calculus. 

\begin{Proposition}
  \label{association prop}
The axioms   $p_1\circ(p_2\circ p_3)\proves(p_1\circ p_2)\circ p_3$ and $(p_1\circ p_2)\circ p_3\proves p_1\circ(p_2\circ p_3)$ correspond to the first-order frame constraints below, for frames in the respective frame classes, as indicated.\\
\underline{Axiom $p_1\circ(p_2\circ p_3)\proves(p_1\circ p_2)\circ p_3$}
\begin{equation}
\underline{\mathrm{Correspondent}}\hskip1.5cm\forall^1x\forall^1z_1\forall^1z_2\forall^1z_3[\exists^1u(x\mathbf{R}^{111}z_1u\wedge u\mathbf{R}^{111}z_2z_3) \lra\beta(\mathrm{POS})]
\label{assoLK-1}
\end{equation}
\begin{itemize}
\item $\mathfrak{F}\in\mathbb{LK}$, 
\end{itemize}
\begin{tabbing}
$\beta(\mathrm{POS})$\hskip2mm\==\hskip1mm\= $\forall^\partial y[x\mathbf{I}y\ra\exists^1w(w\mathbf{I}y\wedge \exists^1v_1\exists^1\hat{z}_3(w\mathbf{R}^{111}v_1\hat{z}_3\;\wedge\; z_3\leq\hat{z}_3\;\wedge$\\
\>\>$\wedge\forall^\partial y_1(v_1\mathbf{I}y_1\lra\exists^1v(v\mathbf{I}y_1\wedge\exists^1\hat{z}_1\exists^1\hat{z}_2(v\mathbf{R}^{111}\hat{z}_1\hat{z}_2\wedge z_1\leq\hat{z}_1\wedge z_2\leq\hat{z}_2)))\;))]$.
\end{tabbing}

\begin{itemize}
\item $\mathfrak{F}\in\mathbb{LK}^*$, 
\end{itemize}
\begin{tabbing}
$\beta^*(\mathrm{POS})$\hskip2mm\==\hskip1mm\= $\exists^1u\exists^1 \hat{z}_1,\hat{z}_2,\hat{z}_3(z_1\leq \hat{z}_1\wedge z_2\leq\hat{z}_2\wedge z_3\leq\hat{z}_3\wedge (x\mathbf{R}^{111}u\hat{z}_3\wedge u\mathbf{R}^{111}\hat{z}_1\hat{z}_2))$
\end{tabbing}
\begin{itemize}
\item  $\mathfrak{F}\in\mathbb{LK}_*$
\end{itemize}
\begin{tabbing}
$\beta_*(\mathrm{POS})$\hskip2mm\==\hskip1mm\= $\exists^1u (x\mathbf{R}^{111}uz_3\wedge u\mathbf{R}^{111}z_1z_2))$.
\end{tabbing}

\noindent
\underline{Axiom $(p_1\circ p_2)\circ p_3\proves p_1\circ(p_2\circ p_3)$}
\begin{equation}
\underline{\mathrm{Correspondent}}\hskip1.5cm\forall^1x\forall^1z_1\forall^1z_2\forall^1z_3(\exists^1u(x\mathbf{R}^{111}uz_3 \wedge u\mathbf{R}^{111}z_1z_2)\lra\tilde{\beta}(\mathrm{POS}))
\end{equation}
\begin{itemize}
\item  $\mathfrak{F}\in\mathbb{LK}$
\end{itemize}
$\tilde{\beta}(\mathrm{POS})=$\\
\hskip5mm
$\forall^\partial y[x\mathbf{I}y\lra\exists^1w(w\mathbf{I}y\wedge\exists^1\hat{z}_1\exists^1v_1[w\mathbf{R}^{111}\hat{z}_1v_1\wedge z_1\leq \hat{z}_1\wedge\forall^\partial y_1(v_1\mathbf{I}y_1\lra\exists^1v(v\mathbf{I}y_1\wedge$\\
\hspace*{2cm} $\wedge\exists^1\hat{z}_2\exists^1\hat{z}_3(z_2\leq\hat{z}_2\wedge z_3\leq\hat{z}_3\wedge v\mathbf{R}^{111}\hat{z}_2\hat{z}_3)))])]$

\begin{itemize}
\item  $\mathfrak{F}\in\mathbb{LK}^*$
\end{itemize}
\begin{tabbing}
$\tilde{\beta}^*(\mathrm{POS})$\hskip2mm\==\hskip1mm\= $\exists^1u\exists\hat{z}_1\exists\hat{z}_2\exists\hat{z}_3 (z_1\leq\hat{z}_1\wedge z_2\leq\hat{z}_2\wedge z_3\leq\hat{z}_3\wedge x\mathbf{R}^{111}\hat{z}_1u\wedge u\mathbf{R}^{111}\hat{z}_2\hat{z}_3)$
\end{tabbing}

\begin{itemize}
\item $\mathfrak{F}\in\mathbb{LK}_*$
\end{itemize}
$\tilde{\beta}_*(\mathrm{POS})\;=\;\exists^1u(x\mathbf{R}^{111}\hat{z}_1u\wedge u\mathbf{R}^{111}\hat{z}_2\hat{z}_3)$.
\end{Proposition}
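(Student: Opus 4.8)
The plan is to run the generalized Sahlqvist--van Benthem algorithm of \cite{dfmlC}, reviewed in \cref{sahlqvist section} with the reduction rules of \cref{reduction rules} extended as described above (the further (R5)-cases and the new rule (R7)), on each of the two association sequents, and then to use the additional axioms available in the smaller classes $\mathbb{LK}^*\subseteq\mathbb{LK}$ and $\mathbb{LK}_*\subseteq\mathbb{LK}^*$ to successively sharpen the correspondent obtained for $\mathbb{LK}$. Extending the translation of Table~\ref{syntactic translation into sorted} to the product by $(\varphi\circ\psi)^\bullet=(\varphi^\bullet\odot\psi^\bullet)''$ (and dually $(\varphi\circ\psi)^\circ=(\varphi^\bullet\odot\psi^\bullet)'$), the sequent $p_1\circ(p_2\circ p_3)\proves(p_1\circ p_2)\circ p_3$ translates to the $1$-sequent
\[
(P_1''\odot(P_2''\odot P_3'')'')''\proves((P_1''\odot P_2'')''\odot P_3'')'' ,
\]
which, by \cref{full abstraction of trans in sorted modal}, is valid on a frame exactly when the original sequent is; so it suffices to compute the first-order correspondent of the latter.

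First I would reduce the antecedent. Since the right-hand side is a closure, hence stable, the reduction rules of \cref{reduction rules} allow dropping the outermost $(\;)''$ on the left, pushing the remaining closures through the completely additive operator $\odot$, and stripping the variable-guarding closures (this is precisely what they are designed to do, cf.\ \mcite{Lemma~2}{dfmlC}), leaving the equivalent formal system whose antecedent is the pure positive $1$-formula $P_1\odot(P_2\odot P_3)$, with $P_1,P_2,P_3$ occurring only positively and with no residual stability guards; the right-hand side $\psi^\bullet=((P_1''\odot P_2'')''\odot P_3'')''$ is kept as it stands. Forming the second-order translation and substituting the minimal singleton valuations $P_i:=\{z_i\}$ for fresh sort-$1$ parameters $z_1,z_2,z_3$, the antecedent contracts to the relational zig-zag $\exists^1u(x\mathbf{R}^{111}z_1u\wedge u\mathbf{R}^{111}z_2z_3)$, while the standard translation (Table~\ref{std-trans}) of $\psi^\bullet$ turns every surviving $(\;)''$ into a box pattern $\forall^\partial y(w\mathbf{I}y\lra\exists^1w'(w'\mathbf{I}y\wedge\cdots))$ and every occurrence of a closed variable $P_i''=\{z_i\}''=\Gamma z_i$ into a slack ``$z_i\leq\hat z_i$''; collecting these yields exactly $\beta(\mathrm{POS})$, hence \eqref{assoLK-1} for $\mathfrak{F}\in\mathbb{LK}$. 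For $\mathfrak{F}\in\mathbb{LK}^*$ the strengthened unit axioms (F3$^*$.1), (F3$^*$.2) — equivalently $U\bigodot A=A$ for stable $A$ — force the image products of closed elements occurring in $\beta(\mathrm{POS})$ to be themselves stable, so each box pattern collapses to its interior and one is left with $\beta^*(\mathrm{POS})$; finally, for $\mathfrak{F}\in\mathbb{LK}_*$ the monotonicity axiom (M) makes every frame relation decreasing in each argument, so each slack quantifier can be contracted (e.g.\ $\exists^1\hat z_3(x\mathbf{R}^{111}u\hat z_3\wedge z_3\leq\hat z_3)\equiv x\mathbf{R}^{111}uz_3$), giving $\beta_*(\mathrm{POS})$. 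The second sequent $(p_1\circ p_2)\circ p_3\proves p_1\circ(p_2\circ p_3)$ is treated by the argument-mirror of this computation, with the inner product now occupying the first rather than the second slot of the outer $\odot$, producing $\tilde\beta(\mathrm{POS})$, $\tilde\beta^*(\mathrm{POS})$ and $\tilde\beta_*(\mathrm{POS})$ respectively.

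The step I expect to be the main obstacle is the bookkeeping on the right-hand side together with the $\mathbb{LK}^*$ simplification: one must keep exactly those closures $(\;)''$ that are not absorbed during reduction, so that precisely the displayed nest of $\forall^\partial y$-guards and of $\leq$-slacks appears in $\beta(\mathrm{POS})$ (and, dually, one must confirm that the antecedent genuinely reduces to the slack-free zig-zag, i.e.\ that no stability guard on the $P_i$ survives), and then one must verify that (F3$^*$.1), (F3$^*$.2) — together with (F1) and (RES) — really do discharge the internal closures, rather than this being merely a convenient sufficient sharpening. Once those two points are settled, the $\mathbb{LK}_*$ contraction via (M) and the entire mirror treatment of the second sequent are routine.
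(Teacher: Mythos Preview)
Your outline has two genuine gaps.

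First, the reduction of the left-hand side. After (R3) drops the outermost closure you are at $P_1''\odot(P_2''\odot P_3'')''\leq_1\psi^\bullet$, and no rule in \cref{reduction rules} lets you ``push the remaining closures through $\odot$'' while keeping the $P_i''$ on the right untouched. Rule (R4) replaces $P_i''$ by $P_i$ \emph{uniformly on both sides} and adds the guards $P_i''\leq_1 P_i$; it does not discharge the inner closure $(P_2\odot P_3)''$, and after it the right-hand side no longer contains $P_i''$ at all, so your later appeal to ``$P_i''=\{z_i\}''=\Gamma z_i$'' on the right is unavailable. The paper sidesteps this by first applying residuation \emph{at the object level}, rewriting the sequent as $p_2\circ p_3\proves p_1\rfspoon((p_1\circ p_2)\circ p_3)$; after translation the left side then carries a single outermost closure, and (R4), (R3), (R7) take the system to the canonical form
\[
\langle P_1''\leq_1 P_1,\;P_2''\leq_1 P_2,\;P_3''\leq_1 P_3\;\midsp\;P_1\odot(P_2\odot P_3)\leq_1((P_1\odot P_2)''\odot P_3)''\rangle,
\]
in which the stability guards explicitly survive (contrary to your claim).

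Second, and relatedly, the minimal instantiations are \emph{not} singletons. Because the guards force $P_i$ to range over stable sets, Step~5 of the algorithm (see the appendix) sets $\lambda(P_i)=\lambda s.\,z_i\leq s$, i.e.\ $P_i:=\Gamma z_i$. With this choice the antecedent collapses to the slack-free zig-zag (since $P_i(z_i)$ becomes $z_i\leq z_i$) while on the right every unprimed $P_i(\hat z_i)$ becomes $z_i\leq\hat z_i$, producing exactly the slacks displayed in $\beta(\mathrm{POS})$. Your singleton substitution would instead yield $\hat z_i=z_i$ on the right and would not reproduce the stated correspondent.

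Your $\mathbb{LK}^*$ justification is also off: axioms (F3$^*$.1), (F3$^*$.2) concern $U\bigodot A$ only and do not force general products of closed elements to be stable. The paper's point is that in the $\mathbb{LK}^*$ context one runs the algorithm on the \emph{stronger} inequality $P_1\odot(P_2\odot P_3)\leq_1(P_1\odot P_2)\odot P_3$ (closures dropped on the right, guards retained), parallel to how the $\mathbb{LK}^*$ unit axioms strengthen those of $\mathbb{LK}$; this directly yields $\beta^*(\mathrm{POS})$. Your treatment of the $\mathbb{LK}_*$ contraction via (M) and of the mirror direction is fine.
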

\begin{proof}
$p_1\circ(p_2\circ p_3)\proves (p_1\circ p_2)\circ p_3$ is equivalent to $p_2\circ p_3 \proves p_1\rfspoon ((p_1\circ p_2)\circ p_3)$, by residuation, which we can reduce to a system in canonical Sahlqvist form.
\begin{tabbing}
 $\langle(p_2\circ p_3)^\bullet \leq_1 (p_1\rfspoon (p_1\circ p_2)\circ p_3)^\bullet\rangle$\hskip5.6cm\= 
\\
 $\langle (P_2''\odot P_3'')''\leq_1 P_1''\rspoon ((P_1''\odot P_2'')''\odot P_3'')''\rangle$
\\
 $\langle P''_1\leq_1P_1,P''_2\leq_1P_2,P_3''\leq_1P_3\midsp (P_2\odot P_3)''\leq_1 P_1\rspoon ((P_1\odot P_2)''\odot P_3)''\rangle$ \> by (R4), \cref{reduction rules}
\\
$\langle P''_1\leq_1P_1,P''_2\leq_1P_2,P_3''\leq_1P_3\midsp P_2\odot P_3\leq_1 P_1\rspoon ((P_1\odot P_2)''\odot P_3)''\rangle$ \> by (R3), \cref{reduction rules}
\\
$\langle P''_1\leq_1P_1,P''_2\leq_1P_2,P_3''\leq_1P_3\midsp P_1\odot(P_2\odot P_3)\leq_1  ((P_1\odot P_2)''\odot P_3)''\rangle$. \> by (R7), \cref{reduction rules}
\end{tabbing}
The guarded second-order translation is
\begin{tabbing}
$\forall^1P_1\forall^1P_2\forall^1P_3\forall^1x$ \= ${\large [}(\bigwedge_{i=1}^3\forall^1w_i[\texttt{t}(P_i)(w_i)\ra P_i(w_i)])\wedge
\exists^1z_1\exists^1u(P_1(z_1)\wedge$
\\
\> $\wedge\exists^1z_2\exists^1z_3(P_2(z_2)\wedge P_3(z_3)\wedge  x\mathbf{R}^{111}z_1u \wedge u\mathbf{R}^{111}z_2z_3))\lra \mathrm{POS}  {\large ]}$,
\end{tabbing}
where
\begin{tabbing} 
$\mathrm{POS}$\hskip2mm\==\hskip1mm\= $\mathrm{ST}_x(((P_1\odot P_2)''\odot P_3)'')$ \\
\>=\> $\forall^\partial y[x\mathbf{I}y\ra\exists^1w(w\mathbf{I}y\wedge\stx{w}{(P_1\odot P_2)''\odot P_3}\;)]$\\
\>=\> $\forall^\partial y[x\mathbf{I}y\ra\exists^1w(w\mathbf{I}y\wedge \exists^1v_1\exists^1\hat{z}_3(w\mathbf{R}^{111}v_1\hat{z}_3\wedge P_3(\hat{z}_3)\wedge\stx{{v_1}}{(P_1\odot P_2)''}\;))]$\\
\>=\> $\forall^\partial y[x\mathbf{I}y\ra\exists^1w(w\mathbf{I}y\wedge \exists^1v_1\exists^1\hat{z}_3(w\mathbf{R}^{111}v_1\hat{z}_3\wedge P_3(\hat{z}_3)\wedge$\\
\>\>\hskip1.2cm\= $\wedge\forall^\partial y_1(v_1\mathbf{I}y_1\lra\exists^1v(v\mathbf{I}y_1\wedge\stx{v}{P_1\odot P_2}))\;))]   $\\
\>=\> $\forall^\partial y[x\mathbf{I}y\ra\exists^1w(w\mathbf{I}y\wedge \exists^1v_1\exists^1\hat{z}_3(w\mathbf{R}^{111}v_1\hat{z}_3\wedge P_3(\hat{z}_3)\wedge$\\
\>\>\> $\wedge\forall^\partial y_1(v_1\mathbf{I}y_1\lra\exists^1v(v\mathbf{I}y_1\wedge\exists^1\hat{z}_1\exists^1\hat{z}_2(v\mathbf{R}^{111}\hat{z}_1\hat{z}_2\wedge P_1(\hat{z}_1)\wedge P_2(\hat{z}_2))))\;))]   $\\
\end{tabbing}

Pull out existential quantifiers, set $\lambda(P_1)=\lambda s.z_1\leq s$, $\lambda(P_2)=\lambda s.z_2\leq s$, $\lambda(P_3)=\lambda s.z_3\leq s$, substitute the $\lambda$-terms for the $P_i$, perform $\beta$-reduction and, letting $\beta(\mathrm{POS})$ be the result at the consequent position, we obtain
\[
\forall^1x\forall^1z_1\forall^1u\forall^1z_2\forall^1z_3(x\mathbf{R}^{111}z_1u\wedge u\mathbf{R}^{111}z_2z_3 \lra\beta(\mathrm{POS})),
\]
where 
\begin{tabbing}
$\beta(\mathrm{POS})$\hskip2mm\==\hskip1mm\= $\forall^\partial y[x\mathbf{I}y\ra\exists^1w(w\mathbf{I}y\wedge \exists^1v_1\exists^1\hat{z}_3(w\mathbf{R}^{111}v_1\hat{z}_3\;\wedge\; z_3\leq\hat{z}_3\;\wedge$\\
\>\>$\wedge\forall^\partial y_1(v_1\mathbf{I}y_1\lra\exists^1v(v\mathbf{I}y_1\wedge\exists^1\hat{z}_1\exists^1\hat{z}_2(v\mathbf{R}^{111}\hat{z}_1\hat{z}_2\wedge z_1\leq\hat{z}_1\wedge z_2\leq\hat{z}_2)))\;))]$.
\end{tabbing}
The above specify the first-order correspondent for frames in the class $\mathbb{LK}$, axiomatized in \cref{lambek frame class axioms}. 

The smaller class $\mathbb{LK}^*\subseteq\mathbb{LK}$, axiomatized in \cref{LKstar frame axioms}, satisfies the stronger inequality $P_1\odot(P_2\odot P_3)\leq_1(P_1\odot P_2)\odot P_3$, with guard $P''_i\leq P_i$, for $i=1,2,3$. A similar calculation, left to the reader, indeed returns
\begin{tabbing}
$\beta^*(\mathrm{POS})$\hskip2mm\==\hskip1mm\= $\exists^1u\exists^1 \hat{z}_1,\hat{z}_2,\hat{z}_3(z_1\leq \hat{z}_1\wedge z_2\leq\hat{z}_2\wedge z_3\leq\hat{z}_3\wedge (x\mathbf{R}^{111}u\hat{z}_3\wedge u\mathbf{R}^{111}\hat{z}_1\hat{z}_2))$.
\end{tabbing}

The smallest class $\mathbb{LK}_*\subseteq\mathbb{LK}^*\subseteq\mathbb{LK}$, axiomatized in \cref{LKstar frame axioms}, satisfies the same stronger inequality but it also assumes the monotonicity axiom (M) for frame relations and this allows simplifying the consequent to
\begin{tabbing}
$\beta_*(\mathrm{POS})$\hskip2mm\==\hskip1mm\= $\exists^1u (x\mathbf{R}^{111}uz_3\wedge u\mathbf{R}^{111}z_1z_2))$.
\end{tabbing}

For the converse direction $(p_1\circ p_2)\circ p_3\proves p_1\circ(p_2\circ p_3)$, we similarly process the (equivalent) sequent $p_1\circ p_2\proves (p_1\circ(p_2\circ p_3))\lfspoon p_3$, which reduces to
\[
\langle P''_1\leq_1P_1,P''_2\leq_1P_2,P_3''\leq_1P_3\midsp (P_1\odot P_2)\odot P_3\leq_1(P_1\odot (P_2\odot P_3)'')''\rangle.
\]
The computation of the correspondent for each of the frame classes $\mathbb{LK}, \mathbb{LK}^*,\mathbb{LK}_*$ is similar to that presented above. Details can be safely left to the reader.
\end{proof}

In the joint article \cite{dunn-gehrke} of Dunn, Gehrke and Palmigiano, semantics of the Lambek calculus in RS-frames $(W_1,I,W_\partial,R)$ was presented. The frame constraint implying association that was specified is the constraint
\begin{tabbing}
$\forall^1x_1\forall^1x_2\forall^1x_3\forall^\partial m $ \\
\hskip1cm\= $([\forall^1x_2'(\forall^\partial m'[R(x_2,x_3,m')\Lra x_2'\leq m']\Lra R(x_1,x_2',m))] $ \\
\hskip3mm $\Longleftrightarrow[\forall^1x_1'([\forall^\partial m''(R(x_1,x_2,m'')\Lra x_1'\leq m'')]\Lra R(x_1',x_3,m))])$
\end{tabbing}
which is as involved as the constraint we calculated for the $\mathbb{LK}$ frame class.

It is known from the literature on Relevance logic (consult \cite{relevance3}, for example) that the constraint 
\begin{equation}
\label{associativity constraint}
 \exists^1 u(xRz_1u\wedge uRz_2z_3)\;\mbox{ iff }\; \exists^1 u(uRz_1z_2\wedge xRuz_3)
\end{equation}
corresponds to associativity of the fusion operator of the logic. This is precisely the constraint computed for the frame class $\mathbb{LK}_*$ in \cref{association prop}. The relation $R$ is the composition relation and $xRuz$ is to be understood as something like ``$x$ is a composition of $u$ and $z$''. Read in this way, the constraint in~\eqref{associativity constraint} expresses associativity in a direct way.  

\begin{Theorem}
\label{completeness for associative Lambek}
The associative Lambek calculus is sound in the frame class $\mathbb{LK}$ (hence also in its subclasses $\mathbb{LK}^*,\mathbb{LK}_*$) and it is complete in the frame class $\mathbb{LK}_*$ (hence also in $\mathbb{LK}^*$ and in $\mathbb{LK}$).
\end{Theorem}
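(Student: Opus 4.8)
The plan is to bootstrap from Theorem~\ref{lambek completeness} together with the correspondence computations of Proposition~\ref{association prop}, in the standard way one lifts a basic completeness result to a Sahlqvist extension. The frame classes at issue are $\mathbb{LK}$, $\mathbb{LK}^*$ and $\mathbb{LK}_*$ each adjoined the association correspondent(s) of Proposition~\ref{association prop} (the conditions built from $\beta(\mathrm{POS}),\tilde\beta(\mathrm{POS})$ for $\mathbb{LK}$, their starred versions for $\mathbb{LK}^*$, and the constraint~\eqref{associativity constraint} together with its mirror for $\mathbb{LK}_*$). For soundness, the non-associative Lambek calculus is already sound in $\mathbb{LK}$ by Theorem~\ref{lambek completeness}, so it suffices to validate the two association sequents. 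By full abstraction of the modal translation (Theorem~\ref{full abstraction of trans in sorted modal}) and the standard translation into sorted first-order logic (Proposition~\ref{std trans prop}), validity of an association sequent in a frame is equivalent to that frame satisfying the first-order condition computed in Proposition~\ref{association prop}; hence, by construction, every frame in the adjoined class validates both association axioms. Since the three computed conditions are correspondents of the \emph{same} logical axioms over the respective nested base classes, they are equivalent on any frame lying in the smaller class, so the three adjoined classes remain nested and soundness in the largest transfers downward.

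For completeness I would take the canonical frame of the associative Lambek calculus to be the structure of Definition~\ref{canonical frame defn} equipped with the canonical relations $S,R$ of~\eqref{S-R-canonical} and with $U=\Gamma x_{\mathrm e}$, built over the Lindenbaum--Tarski algebra of the \emph{associative} calculus. The verifications in the proof of Theorem~\ref{lambek completeness} and in the propositions around \cref{LKstar frame axioms} that this structure satisfies the $\mathbb{LK}_*$ axioms (F1), (M), (RES), (U$^*$), (F2.1), (F2.2), (F3$_*$.1), (F3$_*$.2) nowhere use associativity, so the canonical frame is again in $\mathbb{LK}_*$; the one new point is that it must also satisfy~\eqref{associativity constraint}. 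Recall that $Rxz=\Gamma(x\medcircle z)$, where $x\medcircle z$ is the filter generated by $\{a\circ b\mid a\in x,\ b\in z\}$, and that $\medcircle$ is monotone in each argument. Using monotonicity, $\exists^1u(xRz_1u\wedge uRz_2z_3)$ holds iff $z_1\medcircle(z_2\medcircle z_3)\subseteq x$ (instantiate $u:=z_2\medcircle z_3$ for the nontrivial direction), and symmetrically the right-hand side of~\eqref{associativity constraint} holds iff $(z_1\medcircle z_2)\medcircle z_3\subseteq x$; so the constraint reduces to associativity of $\medcircle$, which follows from associativity of $\circ$ by a short computation on the generating products, and the mirror constraint is identical. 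Thus the canonical frame lies in the adjoined $\mathbb{LK}_*$ and, by Proposition~\ref{association prop}, validates both association axioms, hence every theorem of the associative calculus. Finally, the representation map $\alpha$ of Definition~\ref{canonical frame defn} embeds the full Lambek signature into the canonical complex algebra --- the homomorphism equations for $\rfspoon$, $\circ$ and $\lfspoon$ are those proved in Proposition~\ref{rep embedding prop} and, for $\circ$ (and analogously $\lfspoon$), in the course of Proposition~\ref{sigma-pi in canonical} --- so $\val\varphi=\alpha[\varphi]$ in the canonical model; therefore $\varphi\not\proves\psi$ yields $[\varphi]\not\leq[\psi]$ in the Lindenbaum--Tarski algebra and hence $\val\varphi\not\subseteq\val\psi$. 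This gives completeness in $\mathbb{LK}_*$, and therefore in the larger classes since $\mathbb{LK}_*\subseteq\mathbb{LK}^*\subseteq\mathbb{LK}$.

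I expect the only genuinely new step --- and thus the main obstacle --- to be the verification that the canonical relation $R$ satisfies~\eqref{associativity constraint}: namely, reducing the two bracketings (via monotonicity of the point operator $\medcircle$) to the inclusions $z_1\medcircle(z_2\medcircle z_3)\subseteq x$ and $(z_1\medcircle z_2)\medcircle z_3\subseteq x$, and then checking that $\medcircle$ inherits associativity from $\circ$. Everything else is either inherited verbatim from the proof of Theorem~\ref{lambek completeness} or is a direct appeal to the already-established full-abstraction and correspondence results.
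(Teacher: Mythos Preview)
Your proposal is correct and follows essentially the same route as the paper: soundness via the Sahlqvist correspondence machinery already established, and completeness by verifying that the canonical frame lies in $\mathbb{LK}_*$ and satisfies the associativity constraint~\eqref{associativity constraint}, the latter reducing to associativity of the point operator $\medcircle$, which in turn follows from associativity of $\circ$ in the Lindenbaum--Tarski algebra. Your reduction of the constraint to associativity of $\medcircle$ is in fact slightly more direct than the paper's: you argue straight from monotonicity of $\medcircle$ that $\exists u(xRz_1u\wedge uRz_2z_3)$ is equivalent to $z_1\medcircle(z_2\medcircle z_3)\subseteq x$ (taking $u=z_2\medcircle z_3$ as witness), whereas the paper routes the same computation through the identity $\Gamma x\bigodot\Gamma z=\Gamma(x\medcircle z)$ and associativity of $\bigodot$ on singletons/closed elements before unfolding to the relational condition; the content is identical.
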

\begin{proof}
Soundness is covered by the correctness proof of the generalized Sahlqvist -- van Benthem algorithm, cf. \mcite{Section~4}{dfmlC}.

For completeness, by the reduction detailed in the proof of \cref{association prop}, the axiom $p_1\circ(p_2\circ p_3)\proves (p_1\circ p_2)\circ p_3$ and the rule $\infrule{P_1''\proves P_1\hskip4mm P_2''\proves P_2\hskip4mm P_3''\proves P_r}{P_1\odot(P_2\odot P_3)\proves((P_1\odot P_2)''\odot P_3)''}$ in the companion modal logic are semantically equivalent in $\mathbb{LK}$. In the canonical frame, the stronger fact $A\bigodot(C\bigodot F)\subseteq (A\bigodot C)\bigodot F$ holds, for $A,C,F\in\gpsi$, and similarly for the converse inclusion and the axiom $(p_1\circ p_2)\circ p_3\proves p_1\circ(p_2\circ p_3)$. This is verified in the sequel.

That the canonical frame relations satisfy the monotonicity axiom (M) was proven for the general case of frames for normal lattice expansions \mcite{Lemma~4.3, Case (3)}{duality2}, but this can be easily verified by the reader for the concrete case at hand.

Observe that the point operator $\medcircle$ is associative in the canonical frame of the associative Lambek calculus. 

Indeed, if $e\in u\medcircle(x\medcircle z)$, let $a\in u, b\in(x\medcircle z)$ such that $a\circ b\leq e$. Since $b\in x\medcircle z$, let $c\in x, d\in z$ such that $c\circ d\leq b$. Then $a\circ(c\circ d)\leq a\circ b\leq e$. By association in the logic $(a\circ c)\circ d\leq e$. This implies $e\in (u\medcircle x)\medcircle z$, hence the inclusion $u\medcircle(x\medcircle z)\subseteq (u\medcircle x)\medcircle z$ holds. The converse inclusion is established similarly. 

Furthermore, as pointed out in the proof of \cref{sigma-pi in canonical}, for any points $x,z\in W_1$, $\Gamma x\bigovert\Gamma z=\Gamma x\bigodot\Gamma z=\Gamma(x\medcircle z)=R^{111}xz$. It follows that the identity $\Gamma x\bigodot(\Gamma z\bigodot\Gamma w)=(\Gamma x\bigodot\Gamma z)\bigodot\Gamma w$ (*) holds in the canonical frame. This implies in particular that $A\bigodot(C\bigodot F)= (A\bigodot C)\bigodot F$.

Rewriting the identity (*) we get $\Gamma x\bigodot Rzw=Rxz\bigodot\Gamma w$. Unfolding definitions we obtain 
\begin{tabbing}
$u\in \Gamma x\bigodot R^{111}zw$ \hskip2mm\= iff \hskip2mm\= $\exists\hat{u}\exists\hat{x}(x\leq\hat{x}\wedge\hat{u}R^{111}zw\wedge uR^{111}\hat{x}\hat{w})$\\
\>iff\> $\exists\hat{u}\exists\hat{x}(x\leq\hat{x}\wedge z\medcircle w\subseteq\hat{u}\wedge \hat{x}\medcircle\hat{w}\subseteq u)$\\
\>iff\> $\exists\hat{u}(z\medcircle w\subseteq\hat{u}\wedge x \medcircle\hat{w}\subseteq u)$\\
\>iff\> $\exists\hat{u}(uR^{111}x\hat{u}\wedge\hat{u}R^{111}zw)$.\\
$u\in R^{111}xz\bigodot\Gamma w$\>iff\> $\exists\hat{u}(uR^{111}\hat{u}w\wedge \hat{u}R^{111}xz)$, 
\end{tabbing}
leaving computation details of the latter to the reader.
\end{proof}

\begin{Proposition}
\label{odot associative}
The following are equivalent in the frame class $\mathbb{LK}_*$.
\begin{enumerate}
\item[{\rm(1)}] the frame constraint $ \exists^1 u(xRz_1u\wedge uRz_2z_3)\;\mbox{ iff }\; \exists^1 u(uRz_1z_2\wedge xRuz_3)$
\item[{\rm(2)}] the image operator $\bigodot$ generated by the ternary relation $R^{111}$ is associative in the dual sorted powerset algebra of the frame
\item[{\rm(3)}] the stable sets operator $\bigovert$ is associative in the full complex algebra of the frame.
\end{enumerate}
\end{Proposition}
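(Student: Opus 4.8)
The plan is to run the cycle in the order (1)~$\Leftrightarrow$~(2), then (2)~$\Rightarrow$~(3), then (3)~$\Rightarrow$~(1); the first two steps are routine and almost all the substance lies in the last.

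\emph{(1) $\Leftrightarrow$ (2).} Since $\bigodot$ is an image operator, $X\bigodot Z=\bigcup_{x\in X,\,z\in Z}Rxz$, so $\bigodot$ preserves arbitrary unions in each argument. Hence $X\bigodot(Y\bigodot Z)=(X\bigodot Y)\bigodot Z$ holds for all $X,Y,Z\subseteq W_1$ iff it holds for all singletons, and unfolding the singleton case it reads
\[
\bigcup\nolimits_{u\in Rz_2z_3}Rz_1u \;=\; \bigcup\nolimits_{u\in Rz_1z_2}Ruz_3 \qquad(\text{all }z_1,z_2,z_3),
\]
which, reading off membership of an arbitrary point $x$ on each side, is exactly the biconditional (1). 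So (1) and (2) are literally the same condition.

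\emph{(2) $\Rightarrow$ (3).} Recall that $A\bigovert C=(A\bigodot C)''$ for $A,C\in\gpsi$. By axiom (F1) the output sections $Rxz$ are Galois, which is precisely what forces the closure to behave idempotently against $\bigodot$, i.e.\ $(X\bigodot Y)''=(X''\bigodot Y'')''$ for all $X,Y\subseteq W_1$ (the commutativity of the diagram displayed between \cref{sorted powerset algebra defn} and \cref{full complex algebra defn}; cf.\ also \mcite{Theorem~3.14}{duality2}). Therefore $(A\bigovert C)\bigovert F=((A\bigodot C)''\bigodot F)''=((A\bigodot C)\bigodot F)''$ and $A\bigovert(C\bigovert F)=(A\bigodot(C\bigodot F)'')''=(A\bigodot(C\bigodot F))''$, so associativity of $\bigodot$ forces the two right-hand sides to coincide.

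\emph{(3) $\Rightarrow$ (1).} Since $\bigovert$ is residuated (the soundness part of \cref{lambek completeness}) it preserves arbitrary joins of $\gpsi$ in each argument, and every Galois set is the join (equivalently the union) of the closed elements $\Gamma u$ it contains; hence associativity of $\bigovert$ on all of $\gpsi$ is equivalent to its restriction to triples of closed elements. Using (F1) together with the monotonicity axiom (M) one gets $\Gamma x\bigovert\Gamma z=Rxz$ (indeed $\Gamma x\bigodot\Gamma z=\bigcup_{x\leq u,\,z\leq v}Ruv=Rxz$ by (M), and $Rxz$ is Galois by (F1)), and then
\[
(\Gamma z_1\bigovert\Gamma z_2)\bigovert\Gamma z_3=\Big(\bigcup\nolimits_{u\in Rz_1z_2}Ruz_3\Big)'' ,\qquad
\Gamma z_1\bigovert(\Gamma z_2\bigovert\Gamma z_3)=\Big(\bigcup\nolimits_{u\in Rz_2z_3}Rz_1u\Big)'' .
\]
Thus (3) asserts that these two closures agree for all $z_1,z_2,z_3$, whereas (1) asserts that the underlying unions $A_L:=\bigcup_{u\in Rz_2z_3}Rz_1u$ and $A_R:=\bigcup_{u\in Rz_1z_2}Ruz_3$ agree. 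It therefore remains to show that $A_L$ and $A_R$ are themselves Galois sets, for then $A_L=A_L''=A_R''=A_R$ and (1) follows. I expect this last point to be the main obstacle: that a union of the Galois sections $Rz_1u$, taken over $u$ ranging in a Galois set, is again Galois. Here axiom (M) is essential --- it makes the section maps $u\mapsto Rz_1u$ and $u\mapsto Ruz_3$ antitone, so that the union over the upward-closed index set is governed by its lower part --- while the Galois-duality axiom (RES) relating $R$, $S$ and $T$ is what lets one pin that union down as a closed element; this is the frame-theoretic counterpart of the fact, explicit in the canonical frame of \cref{completeness for associative Lambek}, that there the sections are genuine principal upsets $\Gamma(x\medcircle z)$. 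Once $A_L=A_L''$ and $A_R=A_R''$ are in hand, the cycle closes.
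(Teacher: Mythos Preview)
Your treatment of (1)$\Leftrightarrow$(2) and (2)$\Rightarrow$(3) is correct and matches the paper's approach: the paper argues (1)$\Leftrightarrow$(2) via complete additivity on singletons, and for (1)$\Leftrightarrow$(3) appeals to the same identity $\Gamma z_1\bigovert\Gamma z_2=Rz_1z_2=\{z_1\}\bigodot\{z_2\}$ together with complete join-distributivity of $\bigovert$ to reduce to closed elements. Both proofs arrive at the same reduced form, namely that (3) on closed elements reads $\big(\bigcup_{u\in Rz_1z_2}Ruz_3\big)''=\big(\bigcup_{u\in Rz_2z_3}Rz_1u\big)''$ while (1) is the equality of the underlying unions.

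The gap you flag in (3)$\Rightarrow$(1) is genuine, and your proposed resolution does not work. You need $A_L=\bigcup_{u\in Rz_2z_3}Rz_1u$ to be Galois, but nothing in the $\mathbb{LK}_*$ axioms forces this. Axiom~(M) does make $u\mapsto Rz_1u$ antitone, but an antitone union of Galois sections over an upward-closed index set is not Galois in general, and axiom~(RES) only relates the Galois \emph{duals} $R',S',T'$ at the level of single points --- it gives no grip on the stability of an iterated $R$-image. Your analogy with the canonical frame is misleading: there $Rxz=\Gamma(x\medcircle z)$ is principal because a genuine point operator $\medcircle$ exists, and it is associativity of $\medcircle$ (already assumed there) that collapses the union to $\Gamma(z_1\medcircle(z_2\medcircle z_3))$; an arbitrary $\mathbb{LK}_*$ frame has no such operator. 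The paper's own one-line proof of (1)$\Leftrightarrow$(3) is equally terse and does not visibly supply this direction either. The clean route to close the cycle is to invoke \cref{association prop}: in $\mathbb{LK}_*$ condition~(1) is exactly the local first-order correspondent returned by the generalized Sahlqvist--van~Benthem algorithm for the association axioms, and frame-validity of those axioms is precisely~(3), so (3)$\Leftrightarrow$(1) is already packaged into the correctness of that algorithm.
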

\begin{proof}
For (1)$\Leftrightarrow$(2),  use complete additivity of the image operator $\bigodot$ and verify that the identity $\{z_1\}\bigodot(\{z_2\}\bigodot\{z_3\})=(\{z_1\}\bigodot\{z_2\})\bigodot\{z_3\}$ is equivalent to the frame constraint \eqref{associativity constraint}, listed as (1) in the statement of the Proposition. 

The proof that (1)$\Leftrightarrow$(3) is similar, based on the fact that $\Gamma z_1\bigovert\Gamma z_2= Rz_1z_2=\{z_1\}\bigodot\{z_2\}$, where the first identity uses the monotonicity axiom (M) of $\mathbb{LK}_*$.
\end{proof}

\subsection{Adding Contraction, Exchange, Weakening}
\label{adding section}
There are different ways to translate a sequent in each case and we choose the one that is most convenient for the needed elementary computations. \cref{corr table} collects together correspondence results for implicative logics. As an example, it displays correspondence conditions for the left-unit axiom that can be stated both as $\type{t}\circ p\equiv p$ and as $p\equiv\type{t}\rfspoon p$.

\paragraph{Weakening}
The sequent $p\proves q\rfspoon p$ translates to $P''\proves Q''\rspoon P''$. Exploiting residuation in the sorted modal logic we obtain the formal inequality $Q\odot P\leq_1P$, with guards $P''\leq_1P, Q''\leq_1Q$ and guarded second-order translation $\forall^1P\forall^1Q\forall^1x[\type{t}-\mathrm{INV}\wedge\exists^1u\exists^1z(Q(u)\wedge P(z)\wedge x\mathbf{R}^{111}uz)\lra P(x)]$. The reader can verify that we obtain the first-order local correspondent 
\[
\forall^1x\forall^1u\forall^1z(x\mathbf{R}^{111}uz\lra z\leq x).
\]
In the canonical frame, $xRuz$ is defined by $u\medcircle z\subseteq x$. Letting $a\in z$ (a filter), for any $b\in u$ we get $b\circ a\in u\medcircle z\subseteq x$. By weakening, $b\circ a\leq a$, hence $a\in x$. Conclude that the weakening axiom is canonical. This implies completeness of the (associative, or not) Lambek calculus extended with the weakening axiom.

\paragraph{Contraction}
The translation of the contraction sequent $p\rfspoon(p\rfspoon q)\proves p\rfspoon q$ leads to the system $\langle P''\leq_1P,Q''\leq_1Q\midsp P\rspoon(P\rspoon Q)\leq_1P\rspoon Q\rangle$. Residuation in the companion modal logic and reduction rules lead to the inequality $P\leq_1P\odot P$, with guard $P''\leq_1P$. The reader can work out details to conclude that in the frame class $\mathbb{LK}_*$, which assumes the monotonicity axiom (M) for frame relations, the returned correspondent is $\forall^1x\;x\mathbf{R}^{111}xx$. 

In the canonical frame of the logic assuming contraction, we verify that the operator $\bigodot$ in the sorted powerset algebra is contractive, i.e. for any set $X$ of filters $X\subseteq X\bigodot X$. Contractivity of $\bigodot$ is equivalent to $xRxx$, for any filter $x$, as well as to the inclusion $X\bigcap Y\subseteq X\bigodot Y$, for sets $X,Y$ of filters. Note also that $xRxx$ is equivalent to the inclusion $\{x\}\subseteq\{x\}\bigodot\{x\}$, i.e. to $x\in\{x\}\bigodot\{x\}=\{u\midsp\exists a,b\in x\;a\circ b\in u\}$. But clearly taking $b=a\in x$ we obtain $a\circ b=a\circ a\geq a\in x$, hence $a\circ a=a\circ b\in x$ and thereby $x\in\{x\}\bigodot\{x\}$.

It follows that for any stable set $A$ of filters we have $A\subseteq A\bigodot A\subseteq(A\bigodot A)''=A\bigovert A$ and so $\bigovert$ is contractive. By the above, the contraction axiom is canonical, which implies completeness of the (associative, or not) Lambek calculus extended with contraction.

\paragraph{Exchange}
From the exchange sequent $p_1\rfspoon(p_2\rfspoon q)\proves p_2\rfspoon(p_1\rfspoon q)$ and translating $\rfspoon$ using $\rspoon$ we obtain the system $\langle P_1''\leq_1 P_1,P_2''\leq_1 P_2, Q''\leq_1 Q\midsp P_1\rspoon(P_2\rspoon Q)\leq_1P_2\rspoon(P_1\rspoon Q)\rangle$. Using residuation in the sorted modal logic and the reduction rules this reduces to the inequality $P_1\odot P_2\leq_1 P_2\odot P_1$, with the guards $P_i''\leq_1P_i$ ($i=1,2$) and $Q''\leq_1Q$. The reader can write-out the guarded second-order translation and carry out the steps of the correspondence algorithm to obtain as a first-order local correspondent 
\[
\forall^1x\forall^1u\forall^1z(x\mathbf{R}^{111}uz\lra x\mathbf{R}^{111}zu).
\]
In the canonical frame, assuming $xRuz$ for filters $x,u,z$, i.e. $u\medcircle z\subseteq x$, it follows immediately that $z\medcircle u\subseteq x$, since for $a\in z, b\in u$, $a\circ b=b\circ a\in u\medcircle z\subseteq x$. Thereby $xRzu$ holds. Conclude that the exchange axiom is canonical, which implies completeness of the (associative, or not) Lambek calculus extended with the exchange axiom.

We conclude with the following result.
\begin{Theorem}
The extensions of the non-associative Lambek calculus obtained by adding any of the ``structural'' axioms of association, exchange, weakening or contraction are sound and complete in the frame class $\mathbb{LK}_*$ with the respective additional frame axioms listed in \cref{corr table}.\telos
\end{Theorem}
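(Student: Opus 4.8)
The plan is to obtain the statement by assembling the soundness and canonicity facts already established for the base calculus and for each structural axiom individually; it is a bookkeeping argument over the four axioms rather than a fresh proof.

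\textit{Soundness.} One starts from the fact recorded in \cref{lambek completeness}, \cref{star corollary} and the Proposition immediately following \cref{star corollary} (the one axiomatising $\mathbb{LK}_*$ in \cref{LKstar frame axioms}), that the base non-associative Lambek calculus is sound in the frame class $\mathbb{LK}_*$. For each of association, exchange, weakening and contraction, \cref{association prop} and the three paragraphs of \cref{adding section} have already computed, via the generalized Sahlqvist--van Benthem algorithm of \mcite{Section~4}{dfmlC} together with the full abstraction of the modal translation (\cref{full abstraction of trans in sorted modal}), a first-order local correspondent over $\mathbb{LK}_*$ --- for exchange $\forall^1x\forall^1u\forall^1z(x\mathbf{R}^{111}uz\lra x\mathbf{R}^{111}zu)$, for weakening $\forall^1x\forall^1u\forall^1z(x\mathbf{R}^{111}uz\lra z\leq x)$, for contraction $\forall^1x\;x\mathbf{R}^{111}xx$, and for association the constraint $\exists^1u(x\mathbf{R}^{111}uz_3\wedge u\mathbf{R}^{111}z_1z_2)$ iff $\exists^1u(u\mathbf{R}^{111}z_1z_2\wedge x\mathbf{R}^{111}uz_3)$ --- all collected in \cref{corr table}. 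Since the algorithm is correctness-preserving, a model whose frame lies in $\mathbb{LK}_*$ and validates the chosen correspondents validates the corresponding axioms; hence any extension by a selection of these axioms is sound in $\mathbb{LK}_*$ augmented by those frame constraints.

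\textit{Completeness.} The strategy is canonicity. The canonical frame of the extended logic is the dual frame of \cref{canonical frame defn} enriched with the relations $S^{\partial\partial 1},R^{111}$ and the set $U=\Gamma x_\type{e}$, exactly as in the proof of \cref{lambek completeness}; that proof, with \cref{star corollary} and the Proposition following it, already shows this frame validates all axioms of $\mathbb{LK}_*$ and that the representation map $\alpha$ embeds the Lindenbaum--Tarski algebra of the base calculus into the full complex algebra of the canonical frame. It remains only to check that, whenever a structural axiom is in the logic, its attached frame constraint from \cref{corr table} holds in the canonical frame. For association this is the content of the proof of \cref{completeness for associative Lambek}: the point operator $\medcircle$ is associative in the canonical frame, so $\Gamma x\bigodot(\Gamma z\bigodot\Gamma w)=(\Gamma x\bigodot\Gamma z)\bigodot\Gamma w$, which unfolds to the stated $R$-constraint. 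For weakening, exchange and contraction the verifications are precisely the three computations in \cref{adding section}: $u\medcircle z\subseteq x$ forces $z\in x$ under weakening, forces $z\medcircle u\subseteq x$ under exchange, and $a\circ a\geq a$ forces $x\mathbf{R}^{111}xx$ under contraction. Since $\varphi\proves\psi$ iff $[\varphi]\leq[\psi]$ in the Lindenbaum--Tarski algebra and $\alpha$ is an embedding into the complex algebra of the canonical frame, any underivable sequent is refuted in the canonical model, which lies in $\mathbb{LK}_*$ with all the required extra frame axioms.

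The only point that needs explicit attention --- and the nearest thing to an obstacle, though a mild one --- is that these canonicity facts must hold \emph{simultaneously} when several structural axioms are adjoined together (say associative Lambek with exchange and weakening). This is immediate because each canonicity verification uses only the single logic axiom whose correspondent is at stake: associativity of $\medcircle$ uses only association, $z\medcircle u\subseteq x$ uses only exchange, $z\in x$ uses only weakening, and $x\mathbf{R}^{111}xx$ uses only contraction. Hence the canonical frame of any combination of these extensions satisfies the conjunction of the corresponding frame constraints, and completeness --- and, with the soundness half, the full soundness-and-completeness statement --- follows for $\mathbb{LK}_*$ together with those axioms, with no computation beyond what \cref{association prop}, \cref{completeness for associative Lambek} and \cref{adding section} already contain.
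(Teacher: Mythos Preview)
Your proposal is correct and takes essentially the same approach as the paper: the theorem is stated there with no proof beyond the \telos\ marker, being a summary of the soundness (via the correspondence computations in \cref{association prop} and \cref{adding section}) and canonicity verifications already carried out in the preceding subsections. Your additional remark about simultaneous canonicity for combinations of structural axioms is a harmless bonus not explicitly addressed in the paper.
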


\begin{landscape}
\begin{table}[!th]
\caption{Modal and First-Order Correspondents (in the frame class $\mathbb{LK}_*$) for the Lambek calculus}
\label{corr table}
\begin{tabular}{|l|c|l|l|l|}
\hline
Axiom/Rule & $\begin{array}{c} \mbox{Modal}\\ \mbox{Correspondent}\end{array}$ & Sorted Powerset Algebra & 
$\begin{array}{cc} \mbox{Sorted Full Complex}\\ \mbox{Algebra}\end{array}$ 
& Local First-Order Correspondent in $\mathbb{LK}_*$\\
\hline
Residuation & $\infrule{\alpha\odot\eta\proves\zeta}{\overline{\eta\proves\alpha\rspoon\zeta}}$
& $X{\bigodot}Y{\subseteq}Z$ iff $Y{\subseteq}X{\Ra}Z$
& $A{\bigovert} F{\subseteq} C$ iff $F{\subseteq}A{\Ra}C$ & $\forall^1x\forall^1z\forall^\partial y(\mbox{$xS'yz$ iff $yR'xz$ iff $zT'xy$})$\\
\hline
$\infrule{p\proves q}{\overline{\type{t}\proves p\rfspoon q}}$ & $\infrule{\eta'\vproves\alpha'}{\overline{\type{u}\proves(\alpha{\tright}\eta')'}}$ & $Y'\subseteq X'$ iff $U\subseteq (X{\ltright}Y')'$ & $A\subseteq C$ iff $U\subseteq A\Ra C$ & $\forall^1x\forall^\partial y(x\upv y\leftrightarrow\forall^1u{\in} U\; uT'xy)$\\
\hline
$p\proves\type{t}\circ p$ & $\alpha\proves\type{u}\odot\alpha$ & $X\subseteq U\bigodot X$ & $A\subseteq U\bigovert A$ ($A\in\gpsi$) & $\forall^1x\exists^1u(u\in U\wedge xRux)$\\
\hline
$p\proves p\circ\type{t}$ & $\alpha\proves\alpha\odot\type{u}$ & $X\subseteq X\bigodot U$ & $A\subseteq A\bigovert U$ & $\forall^1x\exists^1u(u\in U\wedge xRxu)$\\
\hline
$\type{t}\circ p\proves p$ & $\infrule{\alpha''\proves\alpha}{\type{u}\odot\alpha\proves\alpha}$ & $U\bigodot X\subseteq X$, for $X=X^\uparrow$ & $U\bigovert A\subseteq A$ & $\forall^1x\forall^1u\forall^1z(u\in U\wedge xRuz\lra z\leq x)$\\[0.8mm]
\hline
$p\circ\type{t}\proves p$  & $\infrule{\alpha''\proves\alpha}{\alpha\odot\type{u}\proves\alpha}$ & $X\bigodot U\subseteq X$, for $X=X^\uparrow$ & $A\bigovert U\subseteq A$ & $\forall^1x\forall^1z\forall^1u(u\in U\wedge xRzu\lra z\leq x)$\\[0.8mm]
\hline\hline
$p\proves\type{t}\rfspoon p$ & $\infrule{\beta''\vproves\beta}{\type{u}{\tright}\beta\vproves\beta}$ & $U{\ltright}Y\subseteq Y$, for $Y=Y^\uparrow$ & $U\Mtright B\subseteq B$ ($B\in\gphi$) & $\forall^\partial y\forall^\partial v\forall^1x(x\in U\wedge yTxv\lra v\leq y)$\\
\hline
$p\proves p\lfspoon \type{t}$ & $\infrule{\beta''\vproves\beta}{\beta{\tleft}\type{u}\vproves\beta}$ & $Y\ltleft U\subseteq Y$, for $Y=Y^\uparrow$ & $B\Mtleft U\subseteq B$ & $\forall^\partial y\forall^\partial v\forall^1x(x\in U\wedge ySvx\lra v\leq y)$ \\
\hline\hline
Association & $\infrule{\alpha\odot(\eta\odot\zeta)\proves\xi}{\overline{(\alpha\odot\eta)\odot\zeta\proves\xi}}$
& $\begin{array}{l} X\bigodot (Y\bigodot Z) \\ \hskip6mm =(X\bigodot Y)\bigodot Z\end{array}$ & 
$\begin{array}{l} A\bigovert(C\bigovert F)\\ \hskip6mm =(A\bigovert C)\bigovert F\end{array}$
& $\exists^1 u(xRz_1u\wedge uRz_2z_3)\;\mbox{ iff }\; \exists^1 u(uRz_1z_2\wedge xRuz_3)$\\
\hline
Exchange & $\alpha\odot\eta\proves\eta\odot\alpha$ & $X\bigodot Z\subseteq Z\bigodot X$ & $A\bigovert C\subseteq C\bigovert A$ & $\forall^1x\forall^1u\forall^1z(xRuz\lra xRzu)$ \\
\hline
Contraction & $\alpha\proves\alpha\odot\alpha$ & $X\subseteq X\bigodot X$ & $A\subseteq A\bigovert A$ & $\forall^1x\; xRxx$\\
\hline
Weakening & $\infrule{\eta''\proves\eta}{\alpha\odot\eta\proves\eta}$ & $Z\bigodot X\subseteq X$, for $X=X^\uparrow$ & $A\bigovert C\subseteq C$ & $\forall^1x\forall^1u\forall^1z(xRuz\lra z\leq x)$\\
\hline\hline
\hskip-2mm$\begin{array}{cc}\mbox{Visser's axiom}\\ p{\rfspoon}(q{\rfspoon}p{\wedge}q)\end{array}$ & $\infrule{\alpha''\proves\alpha\hskip3mm\eta''\proves\eta}{\alpha\odot\eta\proves\alpha\cap\eta}$  &
$\begin{array}{cc}X\bigodot Z\subseteq X\bigcap Z\\ \mbox{for }X=X^\uparrow, Z=Z^\uparrow\end{array}$ 
& $A\bigovert C\subseteq A\bigcap C$&
$\forall^1u\forall^1x\forall^1z(uRxz\lra x\leq u\wedge z\leq u)$\\
\hline
\end{tabular}
\end{table}
\end{landscape}

\section{Implicative Semilattices and their Logic}
\label{semilattices section}
We simplify the presentation by switching in this section to integral posets, where $\type{t}=1$. If this is unwanted, just include the unit frame axioms of \cref{LKstar frame axioms}.

An implicative meet semilattice $\mathbf{M}=(M,\leq,\wedge,1,\ra)$ is an implicative poset whose reduct $(M,\leq,\wedge,1)$ is a meet semilattice and where implication also satisfies the distribution property $a\ra b\wedge c=(a\ra b)\wedge(a\ra c)$. The minimal logic of implicative meet semilattices, in the language including conjunction, is that of implicative posets extended with the meet semilattice rules from Table~\ref{minimal proof system}, displayed below,

\begin{tabular}{lllllll}
 \multicolumn{2}{l}{Meet semilattice rules} &
 $\infrule{\varphi\proves\vartheta}{\varphi\wedge\psi\proves\vartheta}$  & $\infrule{\psi\proves\vartheta}{\varphi\wedge\psi\proves\vartheta}$ &
 $\infrule{\varphi\proves\psi\hskip5mm\varphi\proves\vartheta}{\varphi\proves\psi\wedge\vartheta}$
\end{tabular}

\noindent
together with the distribution axiom for meet semilattices from Table~\ref{minimal proof system} instantiated to implication
\[
(\varphi\rfspoon\psi)\wedge(\varphi\rfspoon\vartheta)\proves\varphi\rfspoon\psi\wedge\vartheta.
\]
Since the only quasi-operator (other than the (semi)lattice operations) in its Lindenbaum-Tarski algebra is implication, the language is interpreted in frames $\mathfrak{F}=(s,W,I,T,\sigma)$, as for implicative posets, but with an addition of an axiom to validate the axiom of distribution over conjunction in the consequent place.

The frame class $\mathbb{S}$ consists of frames  $\mathfrak{F}=(s,W,I,T,\sigma)$ validating the axioms in Table~\ref{frame axioms for semilattice-based logics}.

\begin{table}[!htbp]
\caption{Frame axioms for $\wedge$-semilattice-based implicative logics}
\label{frame axioms for semilattice-based logics}
\hrule
\begin{enumerate}
\item[(F1)] For all $x\in W_1$ and all $v\in W_\partial$, the section $Txv\subseteq W_\partial$ of the relation $T^{\partial 1\partial}$ is a Galois set.
\item[(F2)] For all $x\in W_1$ and all $v\in W_\partial$, $x\upv v$ holds iff for all $u$ in $W_1$, $uT'xv$ holds.
\item[(F3)] For all $x,z\in W_1$, the section $xT'z[\;]=\{y\in W_\partial\midsp xT'zy\}\subseteq W_\partial$ of the Galois dual $T'$ of the frame relation $T$ is a Galois (co-stable) set.
\end{enumerate}
\hrule
\end{table}

Models $\mathfrak{M}=(\mathfrak{F},V)$ are defined as before and the interpretation of the language is as for the language of implicative posets, with implication interpreted by the clause in \eqref{sat implication}, but with the addition of the standard satisfaction clause for conjunction
\[
W_1\ni x\forces\varphi\wedge\psi\mbox{ iff }x\forces\varphi\mbox{ and }x\forces\psi.
\]

\begin{Theorem}
The minimal logic of implicative $\wedge$-semilattices, axiomatized as detailed above, is sound in the class $\mathbb{S} $ of frames.
\end{Theorem}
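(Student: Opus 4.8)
The plan is to verify, for every model $\mathfrak{M}=(\mathfrak{F},V)$ with $\mathfrak{F}\in\mathbb{S}$, that each axiom of the minimal logic of implicative $\wedge$-semilattices is satisfied on $\mathfrak{M}$ and each rule preserves validity of sequents; this is a mild extension of the soundness proof for $\mathbf{\Lambda}_t$ in \cref{lambda t soundness}. First I would import from there everything that uses only frame axiom (F1): by induction on $\varphi$ the interpretation $\val{\varphi}_\mathfrak{M}$ is a Galois stable set --- the base case by the definition of a valuation, the conjunction case because $\gpsi$ is closed under arbitrary intersections so that $\val{\varphi\wedge\psi}_\mathfrak{M}=\val{\varphi}_\mathfrak{M}\cap\val{\psi}_\mathfrak{M}$ is the meet in $\gpsi$, and the implication case because, with $F_T=\ltright$ and $\Ra=\overline{F}^1_T$ its induced single-sorted operator, $\val{\varphi\rfspoon\psi}_\mathfrak{M}=\val{\varphi}_\mathfrak{M}\Ra\val{\psi}_\mathfrak{M}=(\val{\varphi}_\mathfrak{M}\ltright\yvval{\psi}_\mathfrak{M})'$ --- as well as the membership characterization
\[
u\in A\Ra C\ \mbox{ iff }\ \forall x\in W_1\,\forall v\in W_\partial\,(x\in A\wedge C\upv v\lra uT'xv)\qquad(A,C\in\gpsi),
\]
derived exactly as in \cref{lambda t soundness} from (F1).

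With these in hand I would dispatch the fragment inherited from the integral logic of implicative posets. The sequents $\varphi\proves\varphi$, $\varphi\proves\top$, $\proves\type{t}$ and the Cut and Substitution rules are immediate, bearing in mind that in the integral setting $\type{t}$ coincides with $\top$, so $\val{\type{t}}_\mathfrak{M}=\val{\top}_\mathfrak{M}=W_1$. The rule deriving $\varphi\rfspoon\psi\proves\varphi\rfspoon\vartheta$ from $\psi\proves\vartheta$ and the rule deriving $\vartheta\rfspoon\varphi\proves\psi\rfspoon\varphi$ from $\psi\proves\vartheta$ follow at once from the membership characterization, which exhibits $\Ra$ as $\subseteq$-monotone in its second argument and $\subseteq$-antitone in its first. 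The double-line rule for $\type{t}$ and $\rfspoon$ asks that $\val{\varphi}_\mathfrak{M}\subseteq\val{\psi}_\mathfrak{M}$ iff $W_1\subseteq\val{\varphi}_\mathfrak{M}\Ra\val{\psi}_\mathfrak{M}$; since frame axiom (F2) is precisely axiom (U) of \cref{substructural implicative poset frame axioms} specialized to $U=W_1$, the computation of \cref{lambda t soundness} carries over verbatim. Finally the three meet-semilattice rules hold because $\val{\varphi\wedge\psi}_\mathfrak{M}$ is the meet of $\val{\varphi}_\mathfrak{M}$ and $\val{\psi}_\mathfrak{M}$ in $\gpsi$: it lies below both conjuncts, and any stable set below both $\val{\psi}_\mathfrak{M}$ and $\val{\vartheta}_\mathfrak{M}$ lies below $\val{\psi\wedge\vartheta}_\mathfrak{M}$.

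The one genuinely new point, which I expect to be the crux, is the distribution axiom $(\varphi\rfspoon\psi)\wedge(\varphi\rfspoon\vartheta)\proves\varphi\rfspoon(\psi\wedge\vartheta)$, and it is exactly where frame axiom (F3) enters. Writing $A=\val{\varphi}_\mathfrak{M}$, $C=\val{\psi}_\mathfrak{M}$, $D=\val{\vartheta}_\mathfrak{M}$ (so $\val{\psi\wedge\vartheta}_\mathfrak{M}=C\cap D$), I would prove $(A\Ra C)\cap(A\Ra D)\subseteq A\Ra(C\cap D)$, which together with the reverse inclusion --- automatic from monotonicity of $\Ra$ in the second argument --- gives equality. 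By the membership characterization, $u\in A\Ra C$ iff $C'\subseteq\{v\in W_\partial\midsp uT'xv\}$ for every $x\in A$, where $\{v\midsp C\upv v\}=C'$. The section $\{v\in W_\partial\midsp uT'xv\}$, obtained from $T'$ by fixing the subject $u\in W_1$ together with the sort-$1$ argument $x\in W_1$, is a Galois co-stable set by frame axiom (F3). Hence $u\in(A\Ra C)\cap(A\Ra D)$ iff $C'\cup D'\subseteq\{v\midsp uT'xv\}$ for every $x\in A$, iff $(C'\cup D')''\subseteq\{v\midsp uT'xv\}$ for every $x\in A$, using that $F''\subseteq G$ iff $F\subseteq G$ whenever $G$ is a Galois set. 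Since $(\;)':\gpsi\iso\gphi^\mathrm{op}$ is a dual isomorphism it carries the meet $C\cap D$ in $\gpsi$ to the join of $C'$ and $D'$ in $\gphi$, namely $(C'\cup D')''$; so $(C'\cup D')''=(C\cap D)'=\{v\midsp (C\cap D)\upv v\}$, and the last condition reads $\forall x\in A\,\forall v\,((C\cap D)\upv v\lra uT'xv)$, i.e. $u\in A\Ra(C\cap D)$. This yields $\val{(\varphi\rfspoon\psi)\wedge(\varphi\rfspoon\vartheta)}_\mathfrak{M}\subseteq\val{\varphi\rfspoon(\psi\wedge\vartheta)}_\mathfrak{M}$ and completes the proof. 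The only real difficulty is the bookkeeping of matching the section occurring in the membership condition for $\Ra$ with the one regulated by (F3), and seeing that its co-stability is exactly what collapses $C'\cup D'$ to its closure $(C\cap D)'$.
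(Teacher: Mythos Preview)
Your argument is correct and proceeds along the same lines as the paper: import the poset-level soundness from \cref{lambda t soundness} (which uses (F1) and (F2)$=$(U) with $U=W_1$), handle the meet-semilattice rules via closure of $\gpsi$ under intersection, and isolate the distribution axiom as the one place where (F3) is needed.

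The difference is in how the distribution step is discharged. The paper argues at the level of the operator $\Mtright$: it shows (by appeal to \mcite{Theorem~3.12}{duality2}) that (F3) forces $\Mtright$ to distribute over \emph{arbitrary} joins in its second argument, whence $\Ra$ distributes over arbitrary meets in its second argument, and the binary instance follows. You instead work directly with the membership characterization of $\Ra$, observe that $u\in A\Ra C$ amounts to $C'\subseteq uT'x[\;]$ for all $x\in A$, and use co-stability of the section $uT'x[\;]$ (which is (F3)) to pass from $C'\cup D'\subseteq uT'x[\;]$ to $(C'\cup D')''=(C\cap D)'\subseteq uT'x[\;]$. This is essentially the binary unrolling of the mechanism behind the cited theorem. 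Your route is more self-contained and elementary; the paper's route is terser and yields the stronger conclusion of complete distributivity, which is what one ultimately wants for the canonical-extension picture but is not strictly needed for soundness of the finitary axiom.
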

\begin{proof}
Given a frame $\mathfrak{F}\in\mathbb{S} $, define the sorted image operator $\ltright$, the closure $\Mtright$ of its restriction to Galois sets and the implication operation $\Ra$ as in the case of implicative posets (see \cref{posets section}).

It suffices to prove that $\Mtright$ distributes over arbitrary joins in the second argument place, $A\Mtright\bigvee_{j\in J}B_j=\bigvee_{j\in J}(A\Mtright B_j)$. This is because in that case, by definition of $\Ra$, 
\[
A\Ra\bigcap_{j\in J}C_j=\left(A\Mtright(\bigcap_{j\in J}C_j)'\right)'=(A\Mtright\bigvee_{j\in J}C'_j)'=(\bigvee_{j\in J}(A\Mtright C'_j))'=\bigcap_{j\in J}(A\Mtright C'_j)'=\bigcap_{j\in J}(A\Ra C_j).
\]

Distribution of $\Mtright$ over joins in the second argument place, assuming $xT'z[\;]$ is a Galois set, follows as an instance of  \mcite{Theorem~3.12}{duality2}.
\end{proof}

\begin{Theorem}
\label{completeness for semi}
The minimal logic of implicative $\wedge$-semilattices is complete in the class $\mathbb{S} $ of frames.
\end{Theorem}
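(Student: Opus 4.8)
The plan is to follow the completeness proof for the logic of implicative posets in \cref{posets section}, adding one new ingredient that accounts for the distribution axiom. Let $\mathbf{M}$ be the Lindenbaum--Tarski algebra of the logic; by the discussion of this section it is an implicative meet semilattice, with $\type{e}=1$ in the integral setting. As the canonical frame I would take the structure $\mathbf{M}_+=(s,W,I,T,\sigma)$ built exactly as in \cref{canonical frame defn}: $W_1=\filt(\mathbf{M})$, $W_\partial=\idl(\mathbf{M})$, $xIy$ iff $x\cap y=\emptyset$, and $yTxv$ iff $(a\in x$ and $b\in v)$ imply $(a\ra b)\in y$ for all $a,b\in M$. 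I would then establish: (i) $\mathbf{M}_+$ belongs to the frame class $\mathbb{S}$ of \cref{frame axioms for semilattice-based logics}; (ii) the representation map $\alpha(a)=\Gamma x_a$ is an embedding of implicative meet semilattices into $\gpsi$; and (iii) in the canonical model $\mathfrak{M}_c=(\mathbf{M}_+,V)$ with $V^1(p)=\alpha([p])$ one has $\val{\varphi}_{\mathfrak{M}_c}=\alpha([\varphi])$ for every sentence $\varphi$. Completeness then follows as in \cref{posets section}: if $\varphi\not\proves\psi$ then $[\varphi]\not\leq[\psi]$ in $\mathbf{M}$, hence $\alpha([\varphi])\not\subseteq\alpha([\psi])$, hence $\val{\varphi}_{\mathfrak{M}_c}\not\subseteq\val{\psi}_{\mathfrak{M}_c}$ in the model $\mathfrak{M}_c$, which is based on a frame in $\mathbb{S}$.

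Items (i) and (ii) are largely inherited from the poset case. A meet semilattice is in particular an implicative poset, so $Txv=\Gamma(x{\tright}v)$ is a closed, hence Galois, element by \cref{tright lemma} and \cref{F1 is canonical}, which gives axiom (F1); and axiom (F2) is the integral specialization of \cref{U is canonical}, where $U=\Gamma x_\type{e}=W_1$. For (ii), $\alpha$ preserves $\ra$ by \cref{rep embedding prop} and is an order embedding preserving existing finite meets and the top by \cref{delta1-lemma}; concretely, preservation of $\wedge$ and $1$ is immediate because in a meet semilattice $a,b\in x$ iff $a\wedge b\in x$, so $\alpha(a)\cap\alpha(b)=\alpha(a\wedge b)$ and $\alpha(1)=W_1$. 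For (iii), the satisfaction clauses for $\top$, $\wedge$ and $\rfspoon$ (the last giving $\val{\varphi\rfspoon\psi}=\val{\varphi}\Ra\val{\psi}$ as in \cref{posets section}) together with (ii) make the induction on $\varphi$ routine.

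The genuinely new step is the verification of axiom (F3) for $\mathbf{M}_+$: for all filters $x,z$ the section $xT'z[\;]=\{y\in\idl(\mathbf{M})\midsp xT'zy\}$ must be Galois co-stable, and this is where the distribution axiom $(\varphi\rfspoon\psi)\wedge(\varphi\rfspoon\vartheta)\proves\varphi\rfspoon\psi\wedge\vartheta$ enters. Unfolding the Galois dual of $T$ and using \cref{F1 is canonical} (so $T'xv=\lperp\{x{\tright}v\}$), I would rewrite $xT'zy$ as $x\upv(z{\tright}y)$, i.e.\ as the existence of $a\in z$ and $b\in y$ with $(a\ra b)\in x$; hence, setting $S=\{b\in M\midsp\exists a\in z\ (a\ra b)\in x\}$, one has $y\in xT'z[\;]$ iff $y\cap S\neq\emptyset$. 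The point is that $S$ is a filter of $\mathbf{M}$: it is an upset since $\ra$ is monotone in its second argument and $x$ is an upset; it contains $1$ since $a\ra 1=1\in x$; and it is closed under $\wedge$ because, given $b_1\in S$ via some $a_1\in z$ and $b_2\in S$ via some $a_2\in z$, the element $a_1\wedge a_2$ still lies in the ideal $z$, antitonicity of $\ra$ in the first argument places $(a_1{\wedge}a_2)\ra b_1$ and $(a_1{\wedge}a_2)\ra b_2$ in $x$, and the distribution axiom identifies their meet, which lies in $x$, with $(a_1{\wedge}a_2)\ra(b_1\wedge b_2)$, so $b_1\wedge b_2\in S$. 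Therefore $xT'z[\;]=\{y\in W_\partial\midsp S\upv y\}=\{S\}'$ is an open element of $\gphi$, in particular Galois co-stable, and (F3) holds, so $\mathbf{M}_+\in\mathbb{S}$.

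I expect the (F3) verification to be the main obstacle, since everything else transports essentially verbatim from the implicative-poset development of \cref{posets section}; the crux is the closure of $S$ under finite meets, which rests jointly on the distribution axiom and on $z$ being a downset, and which would fail for a bare implicative poset.
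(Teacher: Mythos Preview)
Your overall strategy matches the paper's: build the canonical filter--ideal frame as in \cref{canonical frame defn}, inherit (F1) and (F2) from \cref{F1 is canonical} and \cref{U is canonical}, and reduce completeness to verifying (F3). Two small terminology slips: you call $z$ an ``ideal'' and later a ``downset'', but $z\in W_1=\filt(\mathbf{M})$ is a filter; what you actually use (correctly) is that $z$ is closed under finite meets, which is exactly the semilattice-filter property.

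Where your argument differs from the paper is in the (F3) verification, and your route is in fact tighter. The paper defines the same set $W=S$, but does \emph{not} observe that it is already a filter; instead it takes $w$ to be the filter generated by $W$, shows $w\in(uT'x[\;])'$, and then for an arbitrary $q\in(uT'x[\;])''$ uses $w\upv q$ to find $b_1,\ldots,b_s\in W$ with $b_1\wedge\cdots\wedge b_s\leq b\in q$, invoking distribution at that stage to push $(e\ra b_1)\wedge\cdots\wedge(e\ra b_s)$ down to $e\ra(b_1\wedge\cdots\wedge b_s)$. You instead apply distribution once to show $S$ is $\wedge$-closed, hence a filter, and then read off $xT'z[\;]=\{y\midsp S\upv y\}=\{S\}\rperp$, an open element of $\gphi$ and therefore Galois. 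This is a cleaner packaging of the same idea; the paper's version follows the template of \cite[Lemma~4.6]{duality2}, which is phrased to work uniformly even when the analogous set need not itself be a filter, at the cost of the extra closure chase.
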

\begin{proof}
Let $\mathfrak{F}=(s,W,I,T,\sigma)$ be the dual frame of the Lindenbaum-Tarski algebra $\mathbf{S}$ of the logic, regarded merely as a poset, as detailed in Definition~\ref{canonical frame defn}. In the carrier sorted set $W=(W_1,W_\partial)=(\filt(\mathbf{S}),\idl(\mathbf{S}))$ a poset filter is now necessarily a $\wedge$-semilattice filter (an upset $x\subseteq S$ such that $a,b\in x$ iff $a\wedge b\in x$). As argued for in \cite{hilary-sem}, $\gpsi$ with the canonical embedding map $\alpha(e)=\{x\in\filt(\mathbf{S})\midsp e\in x\}$ is a canonical extension of the  $\wedge$-semilattice $\mathbf{S}$. 

The canonical relation $T$ is defined as in the poset case and axioms (F1) and (F2) have been verified to hold in Proposition~\ref{F1 is canonical} and Proposition~\ref{U is canonical}. It remains to establish that the frame axiom (F3) also holds in the canonical frame. The proof argument is an adaptation to the meet semilattice case of the argument in the proof of \mcite{Lemma~4.6}{duality2}. It is useful to present a particular instance of that argument as it will also help in clarifying the rather bulky notation involved in the proof of \mcite{Lemma~4.6}{duality2}.

To prove that the section $uT'x[\;]$ of the Galois dual relation of the canonical relation $T^{\partial 1\partial}$ is a Galois (co-stable) set, define the set $W=\{b\in S\midsp\exists a\in S(a\in x \wedge (a\ra b)\in u)\}$ and let $w$ be the semilattice filter generated by $W$, i.e. $b\in w$ iff there exist $b_1,\ldots,b_s\in W$, for some $s$, such that $b_1\wedge\cdots\wedge b_s\leq b$.

Observe first that $w\in(uT'x[\;])'$. Indeed, let $v$ be any ideal such that $uT'xv$. By Proposition~\ref{F1 is canonical}, this is equivalent to $u\upv(x{\tright}v)$, where $x{\tright}v$ was defined in~\eqref{x-tright-v}. Let then $a\in x$ and $b\in v$ such that $(a\ra b)\in u$. This means that $b\in W\subseteq w$ and since also $b\in v$, we obtain $w\upv v$. The ideal $v$ was arbitrary such that $uT'xv$ and therefore it follows that $w\upv uT'x[\;]$, equivalently $w\in(uT'x[\;])'$, as needed.

Let now $q$ be an ideal $q\in(uT'x[\;])''$. We argue that $uT'xq$ holds, which proves the co-stability property we need, i.e. that $(uT'x[\;])''\subseteq uT'x[\;]$.

From $w\upv uT'x[\;]$ and $q\in(uT'x[\;])''$ we conclude $w\upv q$, so there exists a semilattice element $b\in w\cap q\neq \emptyset$. By definition of $w$ as the filter generated by $W$, let $b_1,\ldots,b_s\in W$ such that $b_1\wedge\cdots\wedge b_s\leq b$. 

Since for each $r=1,\ldots,s$ we have $b_r\in W$, let $c_r\in x$ be such that $(c_r\ra b_r)\in u$. 

Define $e=c_1\wedge\cdots\wedge c_s$. We obtain that, for each $r=1,\ldots,s$, $c_r\ra b_r\leq e\ra b_r\in u$, because $u$ is an upset and, because $u$ is a filter, we obtain $(e\ra b_1)\wedge\cdots\wedge(e\ra b_s)\in u$. Implication is assumed to distribute over meets in the consequent position, hence $e\ra b_1\wedge\cdots\wedge b_s\in u$ and since $b_1\wedge\cdots\wedge b_s\leq b$ we get $(e\ra b)\in u$. 

We obtained that for the elements $e\in x$ and $b\in q$, we have $(e\ra b)\in u$. This means that $u\upv x{\tright}q$. The latter is equivalent to $uT'xq$, i.e. $q\in uT'x[\;]$. Hence the inclusion $(uT'x[\;])''\subseteq uT'x[\;]$ holds and thereby the section $uT'x[\;]$ of the Galois dual relation of the canonical relation $T^{\partial 1\partial}$ is a Galois co-stable set.  Thus the canonical frame satisfies axiom (F3), as well, and so it belongs to the frame class $\mathbb{S} $.
\end{proof}

An extension of interest is obtained by adding Visser's axiom, $p\rfspoon(q\rfspoon p\wedge q)$. We trust the reader to calculate its correspondent, listed in \cref{corr table}.

\section{Implicative Lattices and their Logic}
\label{lattices section}
An implicative lattice $\mathbf{M}=(M,\leq,\wedge,\vee,0,1,\ra)$ is an implicative poset whose reduct $(M,\leq,0,1)$ is a lattice and where implication distributes over meets in the consequent position and it co-distributes over joins in the antecedent position $a\vee c\ra b=(a\ra b)\wedge(c\ra b)$. The minimal logic of implicative lattices, in a language including both conjunction and disjunction, is that of implicative meet semilattices extended with the join semilattice rules from Table~\ref{minimal proof system}, displayed below,

\begin{tabular}{lllllll}
\multicolumn{2}{l}{Join semilattice rules} &
  $\infrule{\vartheta\proves\varphi}{\vartheta\proves\varphi\vee\psi}$ & $\infrule{\vartheta\proves\psi}{\vartheta\proves\varphi\vee\psi}$ & $\infrule{\varphi\proves\vartheta\hskip5mm\psi\proves\vartheta}{\varphi\vee\psi\proves\vartheta}$
\end{tabular}

\vspace*{1mm}
\noindent
together with the co-distribution axiom from Table~\ref{minimal proof system} instantiated to implication
\[
\varphi\vee\psi\rfspoon\vartheta\proves(\varphi\rfspoon\vartheta)\wedge(\psi\rfspoon\vartheta).
\]
The language is interpreted in frames $\mathfrak{F}=(s,W,I,T,\sigma)$, as for implicative posets and meet semilattices, but with an addition of an axiom to validate the axiom of co-distribution over disjunction in the antecedent place. 

The frame class $\mathbb{L}$ consists of frames  $\mathfrak{F}=(s,W,I,T,\sigma)$ validating the axioms in Table~\ref{frame axioms for lattice-based logics}.

\begin{table}[!htbp]
\caption{Frame axioms for lattice-based implicative logics}
\label{frame axioms for lattice-based logics}
\hrule
\begin{enumerate}
\item[(F1)] For all $x\in W_1$ and all $v\in W_\partial$, the section $Txv\subseteq W_\partial$ of the relation $T^{\partial 1\partial}$ is a Galois set.
\item[(F2)] For all $x\in W_1$ and all $v\in W_\partial$, $x\upv v$ holds iff for all $u$ in $W_1$, $uT'xv$ holds.
\item[(F3a)] For all $x,z\in W_1$, the section $xT'z[\;]=\{y\in W_\partial\midsp xT'zy\}\subseteq W_\partial$ of the Galois dual $T'$ of the frame relation $T$ is a Galois (costable) set.
\item[(F3b)] For all $x\in W_1$ and $v\in W_\partial$, the section $xT'[\;]v=\{z\in W_1\midsp xT'zv\}\subseteq W_1$ of the Galois dual $T'$ of the frame relation $T$ is a Galois (stable) set.
\end{enumerate}
\hrule
\end{table}

Models $\mathfrak{M}=(\mathfrak{F},V)$ are defined as before and the interpretation of the language is as for the language of implicative meet semilattices, with implication interpreted by the clause in \eqref{sat implication}, but with the addition of the co-satisfaction clause for disjunction from Table~\ref{sat}, repeated below
\[
y\dforces\varphi\vee\psi \mbox{ iff } y\dforces\varphi\mbox{ and }y\dforces\psi.
\]

\begin{Theorem}
The minimal logic of implicative lattices, axiomatized as detailed above, is sound in the class $\mathbb{L}$ of frames.
\end{Theorem}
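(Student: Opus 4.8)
The plan is to follow the proofs of \cref{lambda t soundness} and of the soundness theorem for the frame class $\mathbb{S}$, checking only what is new to the lattice language: the clause for $\vee$ and the co-distribution axiom. Given $\mathfrak{F}\in\mathbb{L}$, form the sorted image operator $\ltright$, its closure $\Mtright$ on Galois sets, and the single-sorted implication $A\Ra C=(A\Mtright C')'=(A\ltright C')'$, exactly as in \cref{posets section}. By axiom (F1) the membership computation of \cref{lambda t soundness} goes through verbatim, so $\val{\varphi\rfspoon\psi}_\mathfrak{M}=\val{\varphi}_\mathfrak{M}\Ra\val{\psi}_\mathfrak{M}$ in every model $\mathfrak{M}=(\mathfrak{F},V)$, and by axiom (F2) the double-line rule characterizing the order through implication is validated just as (U) validates it there. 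Since $\mathfrak{Cm}(\mathfrak{F})=(\gpsi,\subseteq,\bigcap,\bigvee,\emptyset'',W_1)$ is a complete lattice --- with $\bigvee$ supplying the interpretation of $\vee$ via $\yvval{\varphi\vee\psi}=\yvval{\varphi}\cap\yvval{\psi}$ --- the identity, cut, $\top$-, $\bot$-, meet- and join-semilattice rules are valid, and the monotonicity and antitonicity rules for $\rfspoon$ hold directly from the definition of $\Ra$ through the Galois connection.

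Two distribution axioms remain. The axiom $(\varphi\rfspoon\psi)\wedge(\varphi\rfspoon\vartheta)\proves\varphi\rfspoon\psi\wedge\vartheta$ is handled exactly as in the semilattice case: by (F3a) every section $xT'z[\;]$ over the sort-$\partial$ place is a Galois set, so by \mcite{Theorem~3.12}{duality2} the operator $\Mtright$ distributes over arbitrary joins in its second argument, whence $A\Ra\bigcap_jC_j=(A\Mtright\bigvee_jC'_j)'=(\bigvee_j(A\Mtright C'_j))'=\bigcap_j(A\Ra C_j)$, and in particular $(A\Ra C)\cap(A\Ra D)\subseteq A\Ra(C\cap D)$. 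The co-distribution axiom $\varphi\vee\psi\rfspoon\vartheta\proves(\varphi\rfspoon\vartheta)\wedge(\psi\rfspoon\vartheta)$ is already valid from the antitonicity of $\Ra$ in its first argument: from $A,C\subseteq A\vee C$ one gets $(A\vee C)\Ra B\subseteq A\Ra B$ and $(A\vee C)\Ra B\subseteq C\Ra B$, hence $(A\vee C)\Ra B\subseteq(A\Ra B)\cap(C\Ra B)$. This finishes soundness. Axiom (F3b) --- every section $xT'[\;]v$ over the sort-$1$ place is a Galois stable set --- is then used, via the first-argument instance of \mcite{Theorem~3.12}{duality2}, to upgrade this to the identity $(A\vee C)\Ra B=(A\Ra B)\cap(C\Ra B)$, so that $\mathfrak{Cm}(\mathfrak{F})$ is a genuine implicative lattice in the sense defined; this explains why (F3b) accompanies (F3a) in the axiomatization of $\mathbb{L}$, and it is the condition the canonical frame will be checked against for completeness.

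There is no real obstacle here beyond keeping the bookkeeping straight: one must make sure that it is the section of $T'$ over the \emph{consequent} place ((F3a)) that governs distribution over meets in the second argument of $\rfspoon$, and the section over the \emph{antecedent} place ((F3b)) that governs the conversion of joins into meets in the first argument, and that each distribution property of $\Mtright$ is transported correctly to $\Ra$ through the dual isomorphism $(\;)'\colon\gpsi\iso\gphi^{\mathrm{op}}$, using $X''\subseteq G$ iff $X\subseteq G$ for Galois $G$ and the fact that the Galois maps turn unions into intersections. Everything else is an instance of the general framework recorded in \cref{prelims section} together with the computations of \cref{lambda t soundness}.
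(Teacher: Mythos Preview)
Your proof is correct and follows essentially the same architecture as the paper's: reduce to the poset and semilattice cases for (F1), (F2), (F3a), handle the lattice clauses via the complete lattice $\gpsi$, and invoke \mcite{Theorem~3.12}{duality2} for the distribution properties of $\Mtright$.

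There is one noteworthy difference. For the co-distribution axiom $\varphi\vee\psi\rfspoon\vartheta\proves(\varphi\rfspoon\vartheta)\wedge(\psi\rfspoon\vartheta)$, the paper appeals directly to (F3b) and \mcite{Theorem~3.12}{duality2} to obtain complete distributivity of $\Mtright$ over joins in the first argument, hence the identity $(\bigvee_j A_j)\Ra B=\bigcap_j(A_j\Ra B)$, and derives the sequent from that. You observe, correctly, that the sequent as stated is only the inclusion $(A\vee C)\Ra B\subseteq(A\Ra B)\cap(C\Ra B)$, and this already follows from antitonicity of $\Ra$ in its first argument --- no appeal to (F3b) is needed for soundness proper. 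You then assign (F3b) its genuine role: upgrading the inclusion to the identity so that $\mathfrak{Cm}(\mathfrak{F})$ is an implicative lattice in the defined sense, which is what the canonical-frame verification in the completeness proof targets. This is a sharper reading of the axiomatization than the paper's own soundness argument offers; the paper's route is not wrong, just less economical on this point.
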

\begin{proof}
Soundness of the dual satisfaction rule for disjunction is immediate. For the distribution properties of implication, distribution over meets in the consequent place was discussed in the case of meet semilattice based logics.

It suffices to further verify that $\Mtright$ distributes over arbitrary joins in the first argument place, $(\bigvee_{j\in J}A_j)\Mtright C=\bigvee_{j\in J}(A_j\Mtright C)$. Given this and the definition of $\Ra$ from $\Mtright$ the needed conclusion follows. 

For the complete distribution property of $\Mtright$ over joins in the first argument place, given the assumption of stability of the section $uT'[\;]v$ of the Galois dual of the frame relation $T$, we refer the reader again to the general result proven in \mcite{Theorem~3.12}{duality2}.
\end{proof}

\begin{Theorem}
The minimal logic of implicative lattices is complete in the class $\mathbb{L}$ of frames.
\end{Theorem}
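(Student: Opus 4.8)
The plan is to repeat, for implicative lattices, the template already used for implicative posets and implicative meet-semilattices. Since we work with integral posets here (so $\type{t}=1$, as announced at the start of \cref{semilattices section}), let $\mathbf{M}$ be the Lindenbaum--Tarski algebra of the logic, which by the given axiomatization is an implicative lattice, and form its dual frame $\mathfrak{F}=\mathbf{M}_+=(s,W,I,T,\sigma)$ exactly as in \cref{canonical frame defn}, with $W_1=\filt(\mathbf{M})$ and $W_\partial=\idl(\mathbf{M})$ now consisting of lattice filters and ideals and with $yTxv$ iff $\forall a,b\,(a\in x\wedge b\in v\lra (a\ra b)\in y)$. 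By \cref{delta1-lemma} and \cref{compactness of delta1}, together with the fact that the canonical extension of a lattice coincides with that of its poset reduct, $\gpsi$ together with $\alpha(e)=\{x\in\filt(\mathbf{M})\midsp e\in x\}$ is a canonical extension of $\mathbf{M}$; in particular it is a completion, so $\alpha$ is an order embedding preserving all existing finite meets and joins.

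Next I would verify that $\mathfrak{F}$ belongs to the frame class $\mathbb{L}$ of Table~\ref{frame axioms for lattice-based logics}. Axiom (F1) is \cref{F1 is canonical}. Axiom (F2) is the integral instance of \cref{U is canonical} (here $U=\Gamma x_1=W_1$, since every filter contains $1$). Axiom (F3a) is obtained verbatim from the argument in the proof of \cref{completeness for semi}: fixing filters $u,x$, one forms the filter $w$ generated by $\{b\midsp\exists a\in x\;(a\ra b)\in u\}$, shows $w\in (uT'x[\;])'$, and then, for $q\in (uT'x[\;])''$, decomposes a common element of $w$ and $q$ as a finite meet of generators and applies the distribution of $\ra$ over meets in the consequent place (valid in any implicative lattice) to conclude $q\in uT'x[\;]$.

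The new point, and what I expect to be the main obstacle, is axiom (F3b): for a filter $x$ and an ideal $v$, the section $xT'[\;]v=\{z\in W_1\midsp xT'zv\}$ must be Galois stable. Unfolding definitions through \cref{tright lemma} and \cref{F1 is canonical}, $xT'zv$ holds iff $x\upv(z\tright v)$ iff $\exists c\in z\,\exists b\in v\,(c\ra b)\in x$, so the section equals $\bigcup_{c\in C}\alpha(c)$, an upset of $W_1$, where $C=\{c\in M\midsp\exists b\in v\,(c\ra b)\in x\}$ is a lattice filter because $\ra$ is antitone in the antecedent and $x$ is an upset closed under finite meets. To obtain stability I would run the antecedent-position analogue of the argument of \mcite{Lemma~4.6}{duality2} (of which the consequent-position special case is precisely the one reproduced for (F3a) above): from the witnessed decomposition provided by the double-prime closure, extract elements $c_1,\dots,c_s\in C$ and use the co-distribution law $c_1\vee\cdots\vee c_s\ra b=(c_1\ra b)\wedge\cdots\wedge(c_s\ra b)$ of an implicative lattice, together with the fact that $x$ is a filter, to place $c_1\vee\cdots\vee c_s$ in the candidate filter, thereby showing that any $z_0\in (xT'[\;]v)''$ already lies in $xT'[\;]v$. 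This is the only genuinely new computation, and the co-distribution axiom of the logic is exactly what it consumes.

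Finally, by \cref{rep embedding prop} the representation map satisfies $\alpha(a\ra b)=\alpha(a)\Ra\alpha(b)$, and since $\alpha$ additionally preserves finite meets and joins (being a completion embedding) it is an embedding $\alpha:\mathbf{M}\hookrightarrow\mathfrak{Cm}(\mathfrak{F})$ of implicative lattices. Hence, if $\varphi\not\proves\psi$, then $[\varphi]\not\leq[\psi]$ in $\mathbf{M}$, so $\val{\varphi}\not\subseteq\val{\psi}$ in the canonical model on $\mathfrak{F}$, i.e. the sequent $\varphi\proves\psi$ fails in a model over a frame in $\mathbb{L}$; combined with soundness this yields completeness in $\mathbb{L}$.
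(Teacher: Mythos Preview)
Your plan is the paper's: build the canonical filter--ideal frame, carry over (F1), (F2) and (F3a) from the earlier sections, and establish (F3b) by dualizing the (F3a) argument using the co-distribution law for $\ra$ over joins in the antecedent, then conclude via the representation embedding.

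One slip to correct: the set $C=\{c\in M\midsp\exists b\in v\,(c\ra b)\in x\}$ is \emph{not} a lattice filter. Antitonicity of $\ra$ in the first argument makes $C$ a \emph{downset} (if $c'\leq c\in C$ with witness $b$, then $(c\ra b)\leq(c'\ra b)\in x$), and co-distribution together with $v$ an ideal and $x$ a filter makes $C$ closed under binary joins, so $C$ is in fact an ideal. This mislabel does not damage the argument you go on to sketch, which coincides with the paper's: one forms the ideal $w$ generated by $C$ (here $w=C$), shows $w\in(xT'[\;]v)'$, and for any $z_0\in(xT'[\;]v)''$ obtains $a\in z_0\cap w$ with $a\leq c_1\vee\cdots\vee c_s$ ($c_r\in C$ with witnesses $b_r\in v$), then applies co-distribution with the common consequent $e=b_1\vee\cdots\vee b_s\in v$ to get $(a\ra e)\in x$, whence $z_0\in xT'[\;]v$. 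Once the mislabel is fixed there is even a shortcut neither you nor the paper takes: since $C$ is itself an ideal, the section $\{z\midsp z\cap C\neq\emptyset\}$ equals $\rperp\{C\}$, an open element of $\gpsi$, hence stable outright.
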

\begin{proof}
The dual (canonical) frame of the Lindenbaum-Tarski algebra of the logic is constructed as in the case of posets, see Definition~\ref{canonical frame defn}, with the observation that posets filters and ideals, as these were defined, are now immediately seen to be lattice filters and ideals. 

It remains to verify that the canonical frame is in the frame class $\mathbb{L}$. The first three axioms were verified in \cref{F1 is canonical}, \cref{U is canonical} and \cref{completeness for semi}. The verification of the frame axiom (F3b) is again an instance of \mcite{Lemma~4.6}{duality2}, as was that of the frame axiom (F3a). For the reader's benefit, we instantiate the proof to the case of implication and its co-distribution property over disjunctions. The argument is ``dual'' to that in the proof of \cref{completeness for semi}, switching from filters to ideals, from meets to joins etc. We show that for any filter $u$ and ideal $v$ the section $uT'[\;]v$ is a Galois stable set.

Set $W=\{a\in L\midsp\exists b\in L (b\in v\wedge (a\ra b)\in u)\}$ and let now $w$ be the ideal generated by $W$, i.e. $a\in w$ iff there exist $a_1,\ldots,a_s\in W$ such that $a\leq a_1\vee\cdots\vee a_s$.

Observe that $w\in(uT'[\;]v)'$. Indeed, let $z$ be a filter such that $uT'zv$, i.e. $u\upv (z{\tright}v)$. Let then $a\in z, b\in v$ such that $(a\ra b)\in u$. Then $a\in W\subseteq w$ and $a\in z$, hence $z\upv w$. Thus $uT'[\;]v\upv w$, which is to say that $w\in (uT'[\;]v)'$.

Let now $p$ be a filter such that $p\in(uT'[\;]v)''$, i.e. $p\upv (uT'[\;]v)'$.

We show that $uT'pv$ holds, which then implies that $(uT'[\;]v)''\subseteq uT'[\;]v$, i.e. that the section $uT'[\;]v$ is Galois stable.

Note that $p\upv w$, since $w\in (uT'[\;]v)'$. Let then $a\in p\cap w$ and $a_1,\ldots,a_s\in W$ such that $a\leq a_1\vee\cdots\vee a_s$.

Since $a_r\in W$, for each $r=1,\ldots,s$, let $c_r\in v$ be such that $a_r\ra c_r\in u$.

Let $e=c_1\vee\cdots\vee c_s$, so that $a_r\ra c_r\leq a_r\ra e\in u$, for each $r=1,\ldots,s$. Therefore, since $u$ is a filter, we have $\bigwedge_{r=1}^s(a_r\ra e)\in u$. By co-distribution, $\bigwedge_{r=1}^s(a_r\ra e)=(\bigvee_{r=1}^s a_r)\ra e\in u$ and since $a\leq \bigvee_{r=1}^s a_r$ and implication is antitone in the first argument place we obtain $a\ra e\in u$. Since $v$ is an ideal, $e=\bigvee_{r=1}^s c_r\in v$ and also $a\in p$. Therefore $u\upv p{\tright}v$, i.e. $uT'pv$ indeed holds. Thereby the section $uT'[\;]v$, for any filter $u$ and ideal $v$ is a Galois stable set, i.e. the frame axiom (F3a) holds in the canonical frame.
\end{proof}

\section{Relational Semantics for the Full Lambek Calculus and Substructural Logics}
\label{substructural section}
The full Lambek calculus and substructural logics, more generally, have been thoroughly studied from an algebraic perspective, consult for example \cite{galatos-ono-NFL,ono-galatos}, but relational semantics have not been systematically explored. A first study in this direction by the author was \cite{redm}, but results have been clarified and strengthened in the present article, drawing on the representation and duality results of \cite{duality2} and on the correspondence theory of \cite{dfmlC}. 

Combining the results of \cref{posets section}, \cref{lambek section} and \cref{lattices section}, we may define the frame classes corresponding to $\mathbf{NFL}$, the non-associative full Lambek calculus, $\mathbf{FL}$, the associative full Lambek calculus, as well as to the basic structural extensions $\mathbf{FL}_\mathrm{s}$, with $\mathrm{s}\subseteq\{\mathrm{e,c,w}\}$ (indices indicate exchange, contraction and weakening). This provides a classification of frames characterizing the corresponding substructural logics. 

Involutive substructural logics, such as Relevance logic (with, or without distribution) and Linear logic, include negation operators that satisfy the involution law and combining the results in this article and in \cite{choiceFreeStLog} characteristic frame classes for involutive substructural logics, or for logics with various weak negation operators can be specified. 

Subintuitionistic logics \cite{corsi-subint,restall-subint} are a special subclass of distributive substructural logics and we turn to distributive, intuitionistic and classical frames and logics in the next section.

\section{Distributive, Intuitionistic and Boolean Frames and Logics}
\label{classical section}
In \cite{choiceFreeHA} the cases of distributive and intuitionistic logic were discussed, in the context of representation and duality for distributive lattices and Heyting algebras, extending the duality results of \cite{duality2}, hence also extending the generalized J\'{o}nsson-Tarski framework of relational semantics. In particular, the following result characterizes distributive frames.

\begin{Proposition}[\mbox{\mcite{Proposition~3.13}{choiceFreeHA}}]
\label{upper bound rel prop}
Let $\mathfrak{F}=(s,W,I,(R_j)_{j\in J},\sigma)$ be a frame and $\gpsi$ the complete lattice of stable sets.   If all sections of the Galois dual relation $R'_\leq$ of the upper bound relation $R_\leq$ (where $uR_\leq xz$ iff both $x\leq u$ and $z\leq u$) are Galois sets, then $\gpsi$ is completely distributive.\telos
\end{Proposition}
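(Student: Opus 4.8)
The plan is to read the claim off the image-operator calculus of \cref{prelims section}, applied to the relation $R_\leq$ itself. Observe first that $R_\leq xz=\{u\in W_1\midsp x\leq u\text{ and }z\leq u\}=\Gamma x\cap\Gamma z$, so $R_\leq$ has sort $(1:11)$ and the binary image operator it induces has distribution type $(1,1;1)$. Instantiating~\eqref{sorted image ops}, for arbitrary $U,V\subseteq W_1$ one gets $F_{R_\leq}(U,V)=\bigcup_{x\in U,\,z\in V}(\Gamma x\cap\Gamma z)=\bigl(\bigcup_{x\in U}\Gamma x\bigr)\cap\bigl(\bigcup_{z\in V}\Gamma z\bigr)$. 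In particular, for Galois stable sets $A,C\in\gpsi$, which are the unions of their principal closed subsets, $F_{R_\leq}(A,C)=A\cap C$, and this is again stable (an intersection of Galois sets is Galois), so $\overline{F}_{R_\leq}(A,C)=(A\cap C)''=A\cap C$. Since the output type is $1$ and both input types are $1$, no Galois duals enter into~\eqref{2single-sorted}, and the single-sorted operator $\overline{F}^1_{R_\leq}$ is therefore nothing but binary intersection, i.e.\ it \emph{is} the lattice meet of $\gpsi$.

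Next I would feed the hypothesis into the distribution criterion \mcite{Theorem~3.12}{duality2} that already does the work for the semilattice and lattice cases (\cref{semilattices section}, \cref{lattices section}): if the section of the Galois dual relation $R'_\leq$ in a given argument place is a Galois set, then $\overline{F}^1_{R_\leq}$ distributes over arbitrary joins of $\gpsi$ in that place. By assumption \emph{all} sections of $R'_\leq$ are Galois, so $\overline{F}^1_{R_\leq}$ is completely additive in each of its two argument places; combined with the identification $\overline{F}^1_{R_\leq}=\cap$ from the previous step, this is exactly the join-infinite distributive law for $\gpsi$,
\[
\Bigl(\bigvee_{i\in I}A_i\Bigr)\cap C=\bigvee_{i\in I}(A_i\cap C)\qquad(\{A_i\}_{i\in I}\subseteq\gpsi,\ C\in\gpsi),
\]
and symmetrically in the other argument. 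Hence binary meet in $\gpsi$ distributes over arbitrary joins.

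The last step, which I expect to be the genuine obstacle, is the upgrade of this join-infinite distributive law to full complete distributivity, as asserted (and in line with the ``(completely) distributive'' phrasing of \cref{full complex algebra defn}); one cannot stop at ``$\gpsi$ is a frame'' in general. The natural route is to exploit that $\gpsi$ is not an arbitrary complete Heyting algebra but a doubly dense completion — by \cref{delta1-lemma} it is join-generated by its closed elements $\Gamma u$ and meet-generated by its open elements $\{v\}'$ — and to use the structural fact that a perfect lattice in which binary meet distributes over arbitrary joins is completely distributive (a perfect distributive lattice is, up to order-isomorphism, the downset lattice of a poset, which is completely distributive). Verifying that the relevant closed elements are completely join-irreducible, so that this structural input genuinely applies to $\gpsi$, is the one point in the argument that goes beyond the routine image-operator bookkeeping of the first two steps. (If ``completely distributive'' is read merely as ``distributive'' in the sense of \cref{full complex algebra defn}, the first two steps already suffice, since the join-infinite distributive law makes $\gpsi$ distributive.)
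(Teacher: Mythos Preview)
The paper does not actually prove this proposition: the statement is quoted from \cite{choiceFreeHA} and closed immediately with~\telos, so there is no in-paper argument to benchmark your attempt against. What can be assessed is the internal soundness of your outline.

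Your first two steps are clean and correct. The identification $R_\leq xz=\Gamma x\cap\Gamma z$ and the computation $F_{R_\leq}(A,C)=A\cap C$ for stable $A,C$ are right (the equality $\bigcup_{x\in U,z\in V}(\Gamma x\cap\Gamma z)=(\bigcup_{x\in U}\Gamma x)\cap(\bigcup_{z\in V}\Gamma z)$ holds because membership on each side reduces to ``some $x\in U$ lies below $u$ and some $z\in V$ lies below $u$''). Feeding the section hypothesis into \mcite{Theorem~3.12}{duality2} then legitimately yields the join--infinite distributive law, so $\gpsi$ is a complete Heyting algebra.

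The gap you flag is real, and your proposed closure is the weak point. To run the ``perfect distributive lattice $\Rightarrow$ downset lattice $\Rightarrow$ completely distributive'' argument you must show that the closed elements $\Gamma x$ are completely join-irreducible in $\gpsi$. That is exactly the step this paper is designed to \emph{avoid}: the introduction (\cref{intro}) explicitly notes that establishing perfectness of $\gpsi$ is where the RS-frame approach invokes choice, and nothing in the hypothesis on sections of $R'_\leq$ obviously forces $\Gamma x$ to be completely join-irreducible (e.g.\ in a canonical frame with a non-prime filter $x$). So the route through perfectness, as stated, is not justified. A more promising line---and the one I would expect the cited source to take---is to show directly that under the section hypothesis the join in $\gpsi$ is computed as union (i.e.\ $\bigcup_iA_i$ is already stable), which makes $\gpsi$ a complete ring of sets and hence completely distributive by Raney's theorem; but that requires a separate argument you have not supplied. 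Your parenthetical fallback (reading ``completely distributive'' as merely ``distributive'') is not available: the proposition asserts complete distributivity outright.
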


A Heyting frame was characterized in \mcite{Proposition~3.15}{choiceFreeHA} as a frame in which the frame relation $R^{111}$ generating the residual $\bigodot$ of implication in the powerset algebra coincides with the upper bound relation, $R^{111}=R_\leq$. It was then shown in \mcite{Proposition~3.17}{choiceFreeHA} that intuitionistic implication is equivalently interpreted in the generalized J\'{o}nsson-Tarski framework by the standard clause $x\forces\varphi\rfspoon\psi$ iff $\forall z(z\forces\varphi\wedge x\leq z\lra z\forces\psi)$.

Given also the results of \cite{choiceFreeStLog}, applied to negation defined by $\neg\varphi=\varphi\rfspoon\bot$, frame axioms can be stated to characterize a frame class whose dual full complex algebra is a complete Boolean algebra, thus providing semantics for the classical propositional calculus, choice-free, but non-standard. 

If choice is admitted, we may resort to classical frames $\mathfrak{F}=(s,W,I,(R_j)_{j\in J},\sigma)$, in which we include axioms imposing that $W_1=W_\partial$ and $xIy$ iff $x=y$. As detailed in \cref{special case frames}, \cref{Galois map is complement}, \cref{all Galois}, \cref{discrete order} the Galois connection induced by the complement of the distinguished relation $I$ is set-complementation, the order is discrete and every subset is a Galois stable set, hence the dual powerset algebra and the full complex algebra of a frame coincide. Since $\yvval{\varphi}=\val{\varphi}\rperp=-\val{\varphi}$ in such a setting, it follows that the refutation relation $\dforces$ is just the complement $\not\forces$ of $\forces$. Consequently, the co-satisfaction clause for disjunction, $y\dforces \varphi\vee\psi$ iff $y\dforces\varphi$ and $y\dforces\psi$ yields the usual satisfaction clause for disjunction $y\forces\varphi\vee\psi$ iff $y\forces\varphi$ or $y\forces\psi$. 

Canonical frames can be constructed as sorted frames using both filters and ideals (as for any lattice). This is a choice-free construction, leading to non-standard semantics for the classical cases. Alternatively, allowing for choice-principles in the arguments, canonical frames can be constructed in the classical way, using ultrafilters for expansions of the classical propositional calculus, or prime filters for the merely distributive case.

\section*{Appendix A: Generalized Sahlqvist -- van Benthem Correspondence}
\label{sahlqvist section}
Appropriately restricting definitions from \mcite{Section~4.1}{dfmlC}, we say that a {\em positive occurrence} of a propositional variable in a sentence $\zeta$ in the language of sorted modal logic is one in the scope of an even number of applications of the priming operator. The variable {\em occurs positively} in $\zeta$ iff every one of its occurrences is positive. A sentence $\zeta$ is {\em positive} iff every propositional variable that occurs in $\zeta$ occurs positively in it.

\begin{Definition}[Simple Sahlqvist Sequents]
\label{simple sahlqvist}
A {\em simple Sahlqvist sequent} $\alpha\proves\eta$ of the first sort, or $\beta\vproves\delta$ of the second sort, is a sequent with positive consequent $\eta$, respectively $\delta$, and such that the premiss $\alpha$, respectively $\beta$, of the sequent is built from $\top, \bot$  by closing under conjunction and the additive operator $\tright$. 
\end{Definition}

Semantically, a sequent can be equivalently regarded as a formal inequality $\zeta\leq_\sharp \xi$, where if the sequent is $\zeta\proves \xi$, then $\sharp=1$ and if the sequent is $\zeta\vproves \xi$, then $\sharp=\partial$. 

Simple Sahlqvist inequalities are defined in the obvious way, given Definition~\ref{simple sahlqvist}. 

Pre-processing in the generalized Sahlqvist - Van Benthem correspondence algorithm presented in \cite{dfmlC} consists in manipulating (reducing) formal systems of inequalities, which are systems of the following form, where  $n,m\geq 0, \mbox{ and } \sharp,\sharp_i\in\{1,\partial\}$,
\begin{equation}\label{set of inequalities}
S=\langle  Q_1''\leq_{\sharp_1} Q_1,\ldots,Q_n''\leq_{\sharp_n} Q_n, Q_{n+1}=_{\sharp_{P'_1}}P'_1,\ldots,Q_{n+m}=_{\sharp_{P'_m}}P'_m \midsp \zeta\leq_\sharp \xi \rangle, 
\end{equation}
and where $\zeta\leq_\sharp \xi $ is its {\em main inequality} and $Q_i$ are propositional variables of sort determined by $\sharp_i$, for each $i=1,\ldots,n$ and $Q_{n+i}$ are of the same sort as $P_i'$, as indicated by the subscript to the equality symbol. 

Reduction aims at eliminating any occurrence of the priming operator on the left-side of the main inequality. We refer to formal inequalities of the form $Q''\leq_{\sharp Q}Q$ as {\em stability constraints} and to formal equations of the form $Q=_{\sharp_{P'}}P'$ as {\em change-of-variables constraints}. For brevity, we  write $\langle\mathrm{STB,CVC}\midsp\zeta\leq_\sharp\xi\rangle $, at times displaying some constraints of interest included in $\mathrm{STB}$ and/or in $\mathrm{CVC}$.

\begin{Definition}\label{equivalence of sets of inequalities}
For sets of formal inequalities $S_1,S_2$, define an equivalence relation by $S_1\sim S_2$ iff for any model $\mathfrak{M}=(\mathfrak{F},V)$ satisfying all the constraints, of the form $Q''\leq_{\sharp_Q} Q$, or $Q=_{\sharp_{P'}}P'$, in each of $S_1,S_2$, the model validates the main inequality of $S_1$ iff it validates the main inequality of $S_2$.
\end{Definition}

Table~\ref{reduction rules} presents a set of effectively executable reduction rules for sets of inequalities, proven to preserve equivalence in Lemma~\ref{R1-R6}. Note that  rule (R3) is a special instance of (R2), for $n=1$ and $\xi_1=\xi'$, but we include it because of its usefulness.
\begin{table}[!t]
\caption{Reduction Rules}
\label{reduction rules}
\hrule
\begin{enumerate}
\item[(R1)] $\infrule{\langle  \mathrm{STB,CVC}, P''\leq_{\sharp_P}P\midsp \zeta\leq_\sharp\xi\rangle }{\langle  \mathrm{STB,CVC}\midsp \zeta\leq_\sharp\xi\rangle }$,\\ provided the propositional variable $P$ does  not occur in $\zeta$ or $\xi$
    \\
\item[(R2)]  $\infrule{\langle  \mathrm{STB,CVC}\midsp \zeta''\leq_\sharp\xi_1'\cap\cdots\cap\xi'_n\rangle }{\langle  \mathrm{STB,CVC}\midsp \zeta\leq_\sharp\xi_1'\cap\cdots\cap\xi_n'\rangle }$, for $n\geq 1$
\\
\item[(R3)]  $\infrule{\langle  \mathrm{STB,CVC}\midsp \zeta''\leq_\sharp\xi''\rangle }{\langle  \mathrm{STB,CVC}\midsp \zeta\leq_\sharp\xi''\rangle  }$
\\
\item[(R4)] $\infrule{\langle  \mathrm{STB,CVC}\midsp \zeta\leq_\sharp\xi\rangle }{\langle  \mathrm{STB,CVC}, P''\leq_{\sharp_P}P\midsp \zeta[P/P'']\leq_\sharp\xi[P/P'']\rangle }$,\\
provided every occurrence of the propositional variable $P$ in each of $\zeta,\xi$ is double-primed and where
     $\sharp_P$ is the sort of $P$ and $\zeta[P/P''],\xi[P/P'']$ designate the results of uniformly replacing each occurrence of $P''$ by one of $P$ in each of $\zeta,\xi$
\\
\item[(R5)] If a re-write rule from the following $\mathrm{REWRITE}$ list is applicable, update the system of inequalities by carrying out the re-write
 \begin{tabbing}
(R5.1) \hskip2mm\= $P'''$\hskip3cm\= $\mapsto P'$\hskip3cm\=
\\[2mm]
(R5.2)  \> $(\eta\cap\zeta)''$\>$\mapsto   \eta''\cap\zeta''$
\\
\> $(\eta\cup\zeta)'$\>$\mapsto\eta'\cap\zeta'$
\\
(R5.3)\> $P\rspoon Q$\>$\mapsto (P\tright Q')'$  \> $P,Q$ constrained in $\mathrm{STB}$, $\mathrm{CVC}$
\\[2mm]
(R5.4) \> $P_2\rspoon(P_1\rspoon Q)$\>$\mapsto P_1\odot P_2\rspoon Q$ \> for any variables $P_1,P_2,Q$
\\[2mm]
(R5.5)\> $Q\lspoon P$ \> $\mapsto (Q'\tleft P)'$ \> $P,Q$ constrained in $\mathrm{STB}$, $\mathrm{CVC}$
\\[2mm]
(R5.6)\> $Q\lspoon P_2)\lspoon P_1$\> $\mapsto Q\lspoon P_1\odot P_2$ \> for any variables $P_1,P_2,Q$
\end{tabbing}
\item[(R6)] $\infrule{\langle  \mathrm{STB,CVC}\midsp\zeta\leq_\sharp\xi\rangle }{\langle  \mathrm{STB,CVC},Q=_{\sharp_{P'}}P'\midsp\zeta[Q/P']\leq_\sharp\xi[Q/P']\rangle  }$,\\
provided the variable $P$ occurs in the main inequality only single-primed and $Q$ is a fresh variable of the same sort as $P'$.
\item[(R7)] Apply residuation to rewrite the main inequality according to the related rewrite rule
    \begin{tabbing}
    $\alpha\leq_1\eta\rspoon\zeta$ \hskip1cm\=$\mapsto$\hskip4mm\= $\eta\odot\alpha\leq_1\zeta$
    \hskip3cm\= $\eta\leq_1\zeta\lspoon\alpha$ \hskip1cm\=$\mapsto$\hskip4mm\= $\eta\odot\alpha\leq_1\zeta$
    \end{tabbing}
\end{enumerate}
\hrule
\end{table}

\begin{Lemma}\label{R1-R6}
Executing any of the actions listed in Table~\ref{reduction rules} to a system $\mathcal{S}_1$ of inequalities leads to an equivalent system $\mathcal{S}_2$.  
\end{Lemma}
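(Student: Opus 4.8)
The plan is to argue rule by rule that applying any of (R1)--(R7) to a system $\mathcal{S}_1$ produces a system $\mathcal{S}_2$ with $\mathcal{S}_1\sim\mathcal{S}_2$, unwinding \cref{equivalence of sets of inequalities}: we must show that every model $\mathfrak{M}=(\mathfrak{F},V)$ satisfying the stability constraints $Q''\leq_{\sharp_Q}Q$ and change-of-variables constraints $Q=_{\sharp_{P'}}P'$ common to the two systems can be converted to one satisfying the constraints of the other with the same truth value on the main inequality, and conversely. For every case I would draw on a fixed small stock of facts already available: the basic properties of Galois sets (especially $F''\subseteq G$ iff $F\subseteq G$ for $G$ a Galois set, and that $\gpsi,\gphi$ are complete lattices with meet $\bigcap$); the Galois-connection identities $U\subseteq U''$, $U'''=U'$, $U''''=U''$ and $(A\cup B)'=A'\cap B'$; complete additivity of the sorted image operators $\tright,\tleft,\odot$ (hence residuation with $\rspoon,\lspoon$, cf.\ \cref{alternative for implication}); the residuation law $\eta\odot\alpha\subseteq\zeta$ iff $\alpha\subseteq\eta\rspoon\zeta$; the restriction-to-stable equivalences $\alpha\rspoon\eta\equiv(\alpha\tright\eta')'$ and $\eta\lspoon\alpha\equiv(\eta'\tleft\alpha)'$, valid when $\alpha''\equiv\alpha$ and $\eta''\equiv\eta$; and congruence of $\equiv$ for the sorted modal language (soundness of the rules of \cref{extended sorted proof system}), which lets one replace equivalent subformulae inside $\zeta$ or $\xi$.

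For the ``redundant closure on the left'' rules (R2), (R3) the constraint sets of $\mathcal{S}_1,\mathcal{S}_2$ coincide, and the right-hand side $\xi_1'\cap\cdots\cap\xi_n'$ (respectively $\xi''$) of the main inequality denotes a Galois set in every model; hence $\val{\zeta''}\leq_\sharp\val{\xi_1'\cap\cdots\cap\xi_n'}$ iff $\val{\zeta}\leq_\sharp\val{\xi_1'\cap\cdots\cap\xi_n'}$ by the quoted basic fact, uniformly in $\mathfrak{M}$, so $\mathcal{S}_1\sim\mathcal{S}_2$ is immediate. For the local rewrites (R5) and the residuation rule (R7) the constraint set is again unchanged and each rewrite replaces a subterm by a semantically equivalent one: (R5.1), (R5.2) by the Galois-connection identities; (R7) by the residuation law (valid in every model by complete additivity of $\odot$); (R5.3), (R5.5) by the restriction-to-stable equivalences, which is precisely why these carry the side-condition that $P,Q$ be constrained in $\mathrm{STB}$ or $\mathrm{CVC}$, so that their interpretations are Galois-stable and the equivalences apply; and (R5.4), (R5.6) by the ``uncurrying'' identity for iterated residuals, handled as in \cite{dfmlC}. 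Since $\equiv$ is a congruence, each such rewrite leaves the main inequality valid in exactly the same models satisfying the (unchanged) constraints.

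For the rules (R1), (R4), (R6) the constraint set genuinely changes, and here I would exhibit the model correspondence explicitly. For (R1) one deletes a stability constraint on a variable $P$ absent from the main inequality: given a model of $\mathcal{S}_2$, redefine $V(P)$ to be the whole carrier $W_1$ (or $W_\partial$), which is Galois, so $P''=P$ and the constraint is met while the main inequality is unaffected; conversely dropping the constraint only enlarges the admissible model class without changing the extension of the main inequality. Rule (R4) is the converse move: since every occurrence of $P$ in $\zeta,\xi$ is double-primed, passing from a model of $\mathcal{S}_1$ to the one with $V(P)$ replaced by $V(P)''$ makes $P$ stand for $P''$, so $\val{\zeta[P/P'']}=\val{\zeta}$, $\val{\xi[P/P'']}=\val{\xi}$ and the new stability constraint holds; the other direction restricts a model of $\mathcal{S}_2$, in which $P''\leq_{\sharp_P}P$ together with extensivity forces $V(P)=V(P)''$, by the same identification. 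Rule (R6) is analogous with a change-of-variables constraint: a fresh $Q$ is introduced with $Q=_{\sharp_{P'}}P'$, and a model of one system becomes a model of the other by setting $V(Q)=V(P)'$ (respectively reading off $V(P)$ so that $V(P)'=V(Q)$), which makes every single-primed occurrence of $P$ in the main inequality coincide with the matching occurrence of $Q$.

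I expect the main obstacle to be the bookkeeping around the side-conditions of (R4), (R6) and the $\rspoon/\lspoon$-rewrites of (R5): one must check that after the substitution the constraints in $\mathrm{STB}$ and $\mathrm{CVC}$ still pin down exactly those variables whose Galois-stability is invoked, so that the restriction-to-stable equivalences and the ``double-primed occurrence'' conditions really do apply, and that the model modifications do not disturb any other constraint. Once the invariant ``every variable occurring primed or double-primed in the main inequality carries a matching constraint'' is maintained through rule application, each case collapses to the elementary Galois and residuation identities listed above, and no idea beyond those already used in \cite{dfmlC} for the analogous $\rspoon$-rules is needed.
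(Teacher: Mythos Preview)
The paper does not prove this lemma in the present article; it simply cites \cite[Lemma~4.3]{dfmlC}. Your case-by-case analysis is exactly the natural way to establish the result and is presumably what \cite{dfmlC} does---indeed you yourself defer to \cite{dfmlC} for the $\rspoon$-rules, so the two approaches coincide.

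One point worth tightening: your justification of (R5.2) appeals to ``the Galois-connection identities'', but the rewrite $(\eta\cap\zeta)''\mapsto\eta''\cap\zeta''$ is \emph{not} one of the identities you listed, and in fact for an arbitrary closure operator one only has $(\eta\cap\zeta)''\subseteq\eta''\cap\zeta''$. The rewrite is equivalence-preserving in the reduction process because, at the point it is applied, the subterms $\eta,\zeta$ already denote Galois sets (they arise from translations $\varphi^\bullet,\varphi^\circ$ or from variables carrying a constraint in $\mathrm{STB}\cup\mathrm{CVC}$), so that $\eta''=\eta$, $\zeta''=\zeta$ and both sides collapse to $\eta\cap\zeta$. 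Similarly, (R5.4) and (R5.6) rely on associativity of $\bigodot$, which is available because these rules are introduced precisely for the associative calculus (see the paragraph preceding \cref{association prop}). Making these contextual assumptions explicit would close the only visible gaps in your otherwise sound sketch.
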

\begin{proof}
Consult \mcite{Lemma~4.3}{dfmlC}.
\end{proof}

\begin{Definition}
  \label{canonical Sahlqvist form}
A system $\langle \mathrm{STB,CVC}\midsp\zeta\leq_\sharp\xi\rangle  $ is in {\em canonical Sahlqvist form} if the main inequality $\zeta\leq_\sharp\xi$ is simple Sahlqvist and for any stability constraint $P''_1\leq_{\sharp_{P_1}}P_1$ in $\mathrm{STB}$ and change-of-variables constraint $P_2=_{\sharp_{Q'}}Q'$ in $\mathrm{CVC}$, $P_1$ and $P_2$ occur only unprimed in $\zeta,\xi$.
\end{Definition}
Note that in the right-hand-side of the inequality an unprimed variable $P$ may  be within the scope of $(\;)'$, if $P$ occurs as a subterm of a primed term.

\begin{Definition}\label{sahlqvist sequents and inequality systems}
A system of inequalities as in \eqref{set of inequalities} {\em is Sahlqvist} if it can be reduced to canonical Sahlqvist form, using the reduction rules of Table~\ref{reduction rules}. 

A 1-sequent $\zeta\proves\xi$ is Sahlqvist if  the associated inequality system $\langle \zeta\leq_1\xi\rangle  $ is Sahlqvist. Similarly for a $\partial$-sequent $\zeta\vproves\xi$.

A sequent $\varphi\proves\psi$ in the language of distribution-free modal logic with negation and implication is Sahlqvist iff either its translation $\varphi^\bullet\proves\psi^\bullet$, or its co-translation (dual translation) $\psi^\circ\vproves\varphi^\circ$ is Sahlqvist.
\end{Definition}

We outline the structure of the generalized Sahlqvist -- van Benthem algorithm and refer the reader to \cite{dfmlC} for further details.

\paragraph{Step 1 (Reduce to Canonical Sahlqvist Form).}
\noindent
\underline{Input:} A sequent $\varphi\proves\psi$ in the language of DfML.\\
Non-deterministically choose  to process
 either the translation \mbox{$\varphi^\bullet\le_1\psi^\bullet$}, or the co-translation (dual translation) $\psi^\circ\leq_\partial\varphi^\circ$ of the input. Run the reduction process. If neither of the (co)translation sequents reduces to a system of formal inequalities in canonical Sahlqvist form, then FAIL, else continue to step 2, with input either a system $\langle \mathrm{STB,CVC}\midsp\alpha\leq_1\eta\rangle $, or a system $\langle \mathrm{STB,CVC}\midsp\beta\leq_\partial\delta\rangle$, whichever was the output of this step.

\paragraph{Step 2 (Calculate $\mathrm{t}$-Invariance Constraints).}
\noindent
\underline{Input:} A system $\langle \mathrm{STB,CVC}\midsp\alpha\leq_1\eta\rangle $ (or $\langle \mathrm{STB,CVC}\midsp\beta\leq_\partial\delta\rangle$) in canonical Sahlqvist form, where $\mathrm{STB}=\{P''_i\leq_{\sharp_i}P_i\midsp i=1,\ldots,n\}$ and $\mathrm{CVC}=\{P_{n+j}=_{\sharp_{Q'_j}}Q'_j\midsp j=1,\ldots,k\}$.\\

For each constraint $P''\leq_1P$ a conjunct $\forall^\partial y[\forall^1 z (z\mathbf{I}y\lra\exists^\partial v(z\mathbf{I}v\wedge{P_i}(v)))\lra {P_i}(y)]$ is introduced in the antecedent of the second-order translation, ensuring that $P$ interprets to a Galois stable set. Similarly for each constraint $Q=_\partial P'$.

\paragraph{Step 3 (Generate the Guarded Second-Order Translation).}
\noindent
\underline{Input:} A guard $\mathrm{t{-}INV}=\bigwedge_{i=1}^{n+k}\forall^{\sharp_i} u_i[\mathrm{t}(P_i)(u_i)\lra P_i(u_i)]$, where for each $i=1,\ldots,n+k$, $\sharp_i\in\{1,\partial\}$ is the sort of $P_i$ and $\forall^{\sharp_i}\in\{\forall^1,\forall^\partial\}$, according to the value of $\sharp_i$.\\
\underline{Output:} The  {\em guarded second-order translation}, an expression of the form
\begin{equation}\label{guarded second order}
\forall^{\sharp_1} P_1\cdots\forall^{\sharp_{n+k}} P_{n+k}\forall^{\flat_1} Q^*_1\cdots\forall^{\flat_m} Q^*_m\forall^1 x(\mathrm{t{-}INV}\wedge\;\mathrm{ST}_x(\alpha)\lra\mathrm{ST}_x(\eta))
\end{equation}
or of the form 
\begin{equation}\label{guarded second order dual}
\forall^{\sharp_1} P_1\cdots\forall^{\sharp_{n+k}} P_{n+k}\forall^{\flat_1} Q^*_1\cdots\forall^{\flat_m} Q^*_m\forall^\partial y(\mathrm{t{-}INV}\wedge\;\mathrm{ST}_y(\beta)\lra\mathrm{ST}_y(\delta))
\end{equation}
depending on whether the translation, or the co-translation is being processed.

\paragraph{Step 4 (Pull-out Existential Quantifiers).}
This step is the same as in the classical case, using familiar equivalences to pull existential quantifiers in prenex position. It is detailed in the course of the proof of the correspondence result \mcite{Theorem~4.7}{dfmlC}.

\paragraph{Step 5 (Determine Minimal Instantiations).}
For the minimal instantiation in the classical case we set $\lambda(P)=\lambda s.(s=x)$. 
Since $\{x\}$ is not a stable set  $P$ is to be interpreted as a principal upper set, i.e. a closed element $\Gamma x$, hence we set $\lambda(P)=\lambda s.x\leq s$.

\paragraph{Step 6 (Eliminate Second-Order Quantifiers).}
The rationale is the same as in the classical Sahlqvist -- Van Benthem algorithm, substituting $\lambda(P)$ for $P$ and performing $\beta$-reduction. 

\begin{Theorem}\label{Sahlqvist thm}
  Every Sahlqvist sequent in the language of the logic of implicative posets, semilattices or lattices has a first-order local correspondent, effectively computable from the input sequent.
\end{Theorem}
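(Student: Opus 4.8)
The plan is to prove the theorem by \emph{running} the six‑step generalized Sahlqvist -- van Benthem algorithm of \cref{sahlqvist section} on the input sequent and verifying, step by step, that each transformation preserves the pertinent semantic content, so that the first‑order formula produced at the end is a genuine local correspondent. By Definition~\ref{sahlqvist sequents and inequality systems}, a Sahlqvist sequent $\varphi\proves\psi$ in the language of implicative posets, semilattices, or lattices is one whose translation $\varphi^\bullet\leq_1\psi^\bullet$ or co‑translation $\psi^\circ\leq_\partial\varphi^\circ$ (from Table~\ref{syntactic translation into sorted}) reduces, via the rules of Table~\ref{reduction rules}, to a system $\langle\mathrm{STB,CVC}\midsp\zeta\leq_\sharp\xi\rangle$ in canonical Sahlqvist form; so we may assume Step~1 has already delivered such a system, say $\langle\mathrm{STB,CVC}\midsp\alpha\leq_1\eta\rangle$. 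That this reduction preserves equivalence of systems in the sense of Definition~\ref{equivalence of sets of inequalities} is exactly Lemma~\ref{R1-R6}, and full abstraction of the translation (Theorem~\ref{full abstraction of trans in sorted modal}) guarantees that the original object‑level sequent is valid in a model $\mathfrak{N}=(\mathfrak{F},\bar V)$ precisely when the main inequality of the starting system is valid in the corresponding sorted model $\mathfrak{M}=(\mathfrak{F},V)$, where $\bar V^1(p_i)=V_1(P_i)''$.

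Next I would convert the system into the guarded second‑order sentence of the form \eqref{guarded second order}. The point to justify at Step~2 is that the guard $\mathrm{t{-}INV}$ is a faithful first‑order encoding of the requirement that each predicate variable constrained in $\mathrm{STB}$ (respectively, occurring in a change‑of‑variables equation in $\mathrm{CVC}$) be interpreted as a Galois stable set (respectively, co‑stable set): the conjunct $\forall^\partial y[\forall^1 z(z\mathbf{I}y\lra\exists^\partial v(z\mathbf{I}v\wedge P(v)))\lra P(y)]$ says precisely that $\val{P}$ is a fixpoint of the polarity's closure operator, using $P''=\sub{\lbbox}{I}\sub{\largediamond}{I}P$. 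With the guard in place, Proposition~\ref{std trans prop} shows that, over models whose valuations satisfy $\mathrm{STB}$ and $\mathrm{CVC}$, validity of $\alpha\leq_1\eta$ coincides with the truth of $\forall^{\sharp_1}P_1\cdots\forall^1 x(\mathrm{t{-}INV}\wedge\mathrm{ST}_x(\alpha)\lra\mathrm{ST}_x(\eta))$; and Step~4, pulling into prenex position the existential quantifiers introduced by the occurrences of $\tright$ (and, where residuals are present, of $\odot$) inside $\mathrm{ST}_x(\alpha)$, is a routine first‑order equivalence since the quantified variables do not occur in $\mathrm{ST}_x(\eta)$.

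The heart of the argument is Steps~5 and~6. Because the main inequality is \emph{simple} Sahlqvist (Definition~\ref{simple sahlqvist}), the antecedent $\alpha$ is built from $\top,\bot$ by $\cap$ and the additive operator $\tright$ (in residuated cases also $\odot$), all of which are monotone and completely additive in each argument place in the dual sorted powerset algebra of any frame. Hence, after the existential quantifiers have been pulled out, the hypothesis $\mathrm{ST}_x(\alpha)$ only forces each constrained variable $P_i$ to \emph{contain} a designated point; the minimal admissible valuation making the hypothesis true is obtained by interpreting $P_i$ as the principal upset $\Gamma x_i$ of that point, i.e.\ by the substitution $\lambda(P_i)=\lambda s.\,x_i\leq s$ — which, unlike the singleton $\{x_i\}$, is a Galois stable (respectively co‑stable) set and so satisfies $\mathrm{t{-}INV}$. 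Since the consequent $\eta$ is positive, $\mathrm{ST}_x(\eta)$ is monotone in each $P_i$, whence the universal second‑order statement over all constraint‑respecting valuations is equivalent to its instance at this minimal valuation; substituting the $\lambda$‑terms and performing $\beta$‑reduction eliminates all second‑order quantifiers and leaves a first‑order formula over the frame signature. Effectiveness is clear, as every one of the six steps is a purely syntactic, algorithmically executable transformation.

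The step I expect to be the main obstacle is precisely the claim that the closed‑element valuation $\lambda(P_i)=\lambda s.\,x_i\leq s$ is both \emph{admissible} — it must satisfy the $\mathrm{t{-}INV}$ guard and any change‑of‑variables equation, which is where the switch from $\{x_i\}$ to $\Gamma x_i$ and the correct sorting are essential — and genuinely \emph{minimal} among admissible valuations satisfying $\mathrm{ST}_x(\alpha)$, which is where the restriction to \emph{simple} Sahlqvist antecedents and the complete additivity of $\tright$ (and $\odot$) are indispensable, and where the interplay of the priming operator, the sorting, and the closure operator of the polarity has to be tracked with care. Once this is established, positivity of the consequent finishes the argument by monotonicity, exactly as in the classical Sahlqvist -- van Benthem theorem and as carried out in full generality in \mcite{Theorem~4.7}{dfmlC}; the present statement then follows by specializing that argument to the languages of implicative posets, semilattices and lattices and to the translations of Table~\ref{syntactic translation into sorted}.
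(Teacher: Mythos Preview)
Your proposal is correct and follows essentially the same approach as the paper: the paper's own proof is simply a pointer to \mcite{Theorem~4.11}{dfmlC}, while you flesh out the six-step algorithm of \cref{sahlqvist section} and likewise defer the full verification of the minimal-valuation step to \cite{dfmlC}. Your account is more detailed than what the paper provides, but the underlying strategy---reduce to canonical Sahlqvist form via Lemma~\ref{R1-R6}, form the guarded second-order translation, and eliminate second-order quantifiers via the closed-element minimal instantiation $\lambda(P_i)=\lambda s.\,x_i\leq s$---is exactly the one the paper intends and cites.
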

\begin{proof}
We refer the reader to the proof of \mcite{Theorem~4.11}{dfmlC}.
\end{proof}

\bibliographystyle{plain}

\begin{thebibliography}{10}

\bibitem{iulg-bounded}
Gerard Allwein and Chrysafis Hartonas.
\newblock Duality for bounded lattices.
\newblock Technical Report IULG-93-25, Indiana University Logic Group, 1993.

\bibitem{choice-free-dmitrieva-bezanishvili}
Nick Bezhanishvili, Anna Dmitrieva, Jim {de Groot}, and Tommaso Moraschini.
\newblock Positive modal logic beyond distributivity.
\newblock {\em Annals of Pure and Applied Logic}, 175(2):103374, 2024.

\bibitem{choice-free-inquisitive-DBLP:conf/wollic/BezhanishviliGH19}
Nick Bezhanishvili, Gianluca Grilletti, and Wesley~H. Holliday.
\newblock Algebraic and topological semantics for inquisitive logic via
  choice-free duality.
\newblock In Rosalie Iemhoff, Michael Moortgat, and Ruy J. G.~B. de~Queiroz,
  editors, {\em Logic, Language, Information, and Computation - 26th
  International Workshop, WoLLIC 2019, Utrecht, The Netherlands, July 2-5,
  2019, Proceedings}, volume 11541 of {\em Lecture Notes in Computer Science},
  pages 35--52. Springer, 2019.

\bibitem{choice-free-BA}
Nick Bezhanishvili and Wesley Holliday.
\newblock Choice-free {S}tone duality.
\newblock {\em The Journal of Symbolic Logic}, 85(1):109–148, 2020.

\bibitem{gglbook}
Katalin Bimb\'{o} and J.~Michael Dunn.
\newblock {\em Generalized Galois Logics. Relational Semantics of Nonclassical
  Logical Calculi}, volume 188.
\newblock CSLI Lecture Notes, CSLI, Stanford, CA, 2008.

\bibitem{lattice-duality-Celani2020}
Sergio~A. Celani and Luciano~J. Gonz{\'a}lez.
\newblock A categorical duality for semilattices and lattices.
\newblock {\em Applied Categorical Structures}, 28(5):853--875, Oct 2020.

\bibitem{choice-free-Tarski-Celani2025}
Sergio~Arturo Celani and Luciano~Javier Gonz{\'a}lez.
\newblock Choice-free duality for tarski algebras.
\newblock {\em Applied Categorical Structures}, 33(4):24, Jun 2025.

\bibitem{cintula06}
Petr Cintula.
\newblock Weakly implicative (fuzzy) logics {I:} basic properties.
\newblock {\em Arch. Math. Log.}, 45(6):673--704, 2006.

\bibitem{corsi-subint}
Giovanna Corsi.
\newblock Weak logics with strict implication.
\newblock {\em Math. Log. Q.}, 33(5):389--406, 1987.

\bibitem{hilary_davey_2002}
B.~A. Davey and H.~A. Priestley.
\newblock {\em Introduction to Lattices and Order}.
\newblock Cambridge University Press, 2 edition, 2002.

\bibitem{dunn-ggl}
J.~Michael Dunn.
\newblock Gaggle theory: An abstraction of {G}alois coonections and residuation
  with applications to negation and various logical operations.
\newblock In {\em Logics in AI, Proceedings of European Workshop JELIA 1990,
  LNCS 478}, pages 31--51, 1990.

\bibitem{dunn-partial}
J.~Michael Dunn.
\newblock Partial gaggles applied to logics with restricted structural rules.
\newblock In K.~Do\v{s}en and P.~Schroeder-Heister, editors, {\em Substructural
  Logics}, pages 63--108. Clarenton and Oxford University Press, Oxford, UK,
  1993.

\bibitem{star-perp}
J.~Michael Dunn.
\newblock Star and perp: Two treatments of negation.
\newblock In J.~Tomberlin, editor, {\em Philosophical Perspectives Vol 7},
  pages 331--357. Ridgeview Publishing Company, 1993.

\bibitem{dunn-gehrke}
Jon~Michael Dunn, Mai Gehrke, and Alessandra Palmigiano.
\newblock Canonical extensions and relational completeness of some
  substructural logics.
\newblock {\em Journal of Symbolic Logic}, 70:713--740, 2005.

\bibitem{relevance3}
Michael Dunn and Greg Restall.
\newblock Relevance logic.
\newblock In D.~Gabbay and F.~Guenthner, editors, {\em Handbook of
  Philosophical Logic}. Kluwer Academic Publishers, 2002.

\bibitem{esakia2}
Leo Esakia.
\newblock Topological {K}ripke models.
\newblock {\em Soviet {M}ath {D}okl}, 15:147--151, 1974.

\bibitem{font-jansana-book}
Josep~Maria Font and Ramon Jansana.
\newblock {\em A General Algebraic Semantics for Sentential Logics}.
\newblock Cambridge University Press, 1996.

\bibitem{ono-galatos}
N~Galatos, P.~Jipsen, T.~Kowalski, and H.~Ono.
\newblock {\em Residuated lattices: An algebraic glimpse at substructural
  logics}, volume 151 of {\em Studies in logic and the foundations of
  mathematics}.
\newblock Elsevier, 2007.

\bibitem{galatos-ono-NFL}
Nikolaos Galatos and Hiroakira Ono.
\newblock Cut elimination and strong separation for substructural logics: An
  algebraic approach.
\newblock {\em Annals of Pure and Applied Logic}, 161(9):1097 -- 1133, 2010.

\bibitem{delta1}
M~Gehrke, R~Jansana, and A~Palmigiano.
\newblock $\delta_1$-completions of a poset.
\newblock {\em Order: a journal on the theory of ordered sets and its
  applications}, 30(1):39--64, 2013.

\bibitem{mai-gen}
Mai Gehrke.
\newblock Generalized {K}ripke frames.
\newblock {\em Studia Logica}, 84(2):241--275, 2006.

\bibitem{mai-harding}
Mai Gehrke and John Harding.
\newblock Bounded lattice expansions.
\newblock {\em Journal of Algebra}, 238:345--371, 2001.

\bibitem{mai-jons}
Mai Gehrke and Bjarni J\'{o}nsson.
\newblock Bounded distributive lattice expansions.
\newblock {\em Math. Scand.}, 94(2):13--45, 2004.

\bibitem{lattice-duality-Gehrke2014}
Mai Gehrke and Samuel~J. van Gool.
\newblock Distributive envelopes and topological duality for lattices via
  canonical extensions.
\newblock {\em Order}, 31(3):435--461, Nov 2014.

\bibitem{goedel}
Kurt G\"{o}del.
\newblock An interpretation of the intuitionistic propositional calculus.
\newblock In et.~al. Feferman, S., editor, {\em Collected Works}, volume 1,
  Collected Works, pages 300--303. Oxford University Press, 1986 (first
  pubished in German, 1933).

\bibitem{goldb}
Robert Goldblatt.
\newblock Semantic analysis of orthologic.
\newblock {\em Journal of Philosophical Logic}, 3:19--35, 1974.

\bibitem{hilary-sem}
M.~J. Gouveia and H.~A. Priestley.
\newblock Canonical extensions and profinite completions of semilattices and
  lattices.
\newblock {\em Order}, 31(2):189--216, Jul 2014.

\bibitem{dloa}
Chrysafis Hartonas.
\newblock Duality for lattice-ordered algebras and for normal algebraizable
  logics.
\newblock {\em Studia Logica}, 58:403--450, 1997.

\bibitem{odigpl}
Chrysafis Hartonas.
\newblock Order-dual relational semantics for non-distributive propositional
  logics.
\newblock {\em Logic Journal of the IGPL}, 25(2):145--182, 2017.

\bibitem{sdl-exp}
Chrysafis Hartonas.
\newblock Stone duality for lattice expansions.
\newblock {\em Logic Journal of the {IGPL}}, 26(5):475--504, 2018.

\bibitem{pll7}
Chrysafis Hartonas.
\newblock Lattice logic as a fragment of (2-sorted) residuated modal logic.
\newblock {\em Journal of Applied Non-Classical Logics}, 29(2):152--170, 2019.

\bibitem{redm}
Chrysafis Hartonas.
\newblock Modal translation of substructural logics.
\newblock {\em Journal of Applied Non-Classical Logics}, 30(1):16--49, 2020.

\bibitem{kata2z}
Chrysafis Hartonas.
\newblock Reconcilliation of approaches to the semantics of logics without
  distribution.
\newblock In Katalin Bimb\'{o}, editor, {\em Relevance Logics and other Tools
  for Reasoning: Essays in Honor of J. Michael Dunn}, number~46 in Tributes,
  pages 215--236. College {P}ublications, 2022.

\bibitem{choiceFreeHA}
Chrysafis Hartonas.
\newblock Choice-free topological duality for implicative lattices and
  {H}eyting algebras.
\newblock {\em Algebra Universalis}, 85:article 3, 2023.
\newblock \url{https://doi.org/10.1007/s00012-023-00830-8}.

\bibitem{duality2}
Chrysafis Hartonas.
\newblock Duality for normal lattice expansions and sorted residuated frames
  with relations.
\newblock {\em Algebra Universalis}, 84(1), 2023.
\newblock \url{https://rdcu.be/c4qec}.

\bibitem{choiceFreeStLog}
Chrysafis Hartonas.
\newblock Choice-free dualities for lattice expansions: Application to logics
  with a negation operator.
\newblock {\em Studia Logica}, 113(5):1457--1502, Oct 2025.

\bibitem{dfmlC}
Chrysafis Hartonas.
\newblock Distribution-free modal logics: Sahlqvist -- van benthem
  correspondence.
\newblock {\em Logics}, 3, 2025.

\bibitem{dfnmlA}
Chrysafis Hartonas.
\newblock Distribution-free normal modal logics.
\newblock {\em Logics}, 3(2), 2025.

\bibitem{vb}
Chrysafis Hartonas.
\newblock A van benthem characterization result for distribution-free logics.
\newblock {\em Logics}, 3(1), 2025.

\bibitem{iulg}
Chrysafis Hartonas and J.~Michael Dunn.
\newblock Duality theorems for partial orders, semilattices, {G}alois
  connections and lattices.
\newblock Technical Report IULG-93-26, Indiana University Logic Group, 1993.

\bibitem{sdl}
Chrysafis Hartonas and J.~Michael Dunn.
\newblock Stone duality for lattices.
\newblock {\em Algebra Universalis}, 37:391--401, 1997.

\bibitem{hartung}
Gerd Hartung.
\newblock A topological representation for lattices.
\newblock {\em Algebra Universalis}, 29:273--299, 1992.

\bibitem{jt1}
Bjarni J\'{o}nsson and Alfred Tarski.
\newblock Boolean algebras with operators {I}.
\newblock {\em American Journal of Mathematics}, 73:891--939, 1951.

\bibitem{jt2}
Bjarni J\'{o}nsson and Alfred Tarski.
\newblock Boolean algebras with operators {II}.
\newblock {\em American Journal of Mathematics}, 74:8127--162, 1952.

\bibitem{kripke1}
Saul Kripke.
\newblock A completeness theorem in modal logic.
\newblock {\em J. of Symbolic Logic}, 24:1--14, 1959.

\bibitem{kripke2}
Saul Kripke.
\newblock Semantic analysis of modal logic {I}, normal propositional calculi.
\newblock {\em Zeitschrift f\"{u}r mathematische Logik und Grundlagen der
  Mathematik}, 9:67--96, 1963.

\bibitem{Kripke-int}
Saul~A. Kripke.
\newblock Semantical analysis of intuitionistic logic i.
\newblock In J.N. Crossley and M.A.E. Dummett, editors, {\em Formal Systems and
  Recursive Functions}, volume~40 of {\em Studies in Logic and the Foundations
  of Mathematics}, pages 92 -- 130. Elsevier, 1965.

\bibitem{choice-free-deVries}
Guillaume Massas.
\newblock {Choice-Free de Vries Duality}.
\newblock In {\em Advances in Modal Logic}, volume~14. College Publications,
  2022.

\bibitem{choice-free-Ortho}
Joseph McDonald and Kentar\^{o} Yamamoto.
\newblock Choice-free duality for orthocomplemented lattices by means of
  spectral spaces.
\newblock {\em Algebra Universalis}, 83, 2022.

\bibitem{Morton2014}
Wilmari Morton.
\newblock Canonical extensions of posets.
\newblock {\em Algebra universalis}, 72(2):167--200, Oct 2014.

\bibitem{Moshier2014a}
M.~Andrew Moshier and Peter Jipsen.
\newblock Topological duality and lattice expansions, {I}: A topological
  construction of canonical extensions.
\newblock {\em Algebra Universalis}, 71(2):109--126, Apr 2014.

\bibitem{Moshier2014b}
M.~Andrew Moshier and Peter Jipsen.
\newblock Topological duality and lattice expansions, {II}: Lattice expansions
  with quasioperators.
\newblock {\em Algebra Universalis}, 71(3):221--234, May 2014.

\bibitem{plo}
Miroslav Plo\v{s}\v{c}ica.
\newblock A natural representation of bounded lattices.
\newblock {\em Tatra Mountains Math. Publ.}, 5:75--88, 1995.

\bibitem{hilary}
Hilary Priestley.
\newblock Representation of distributive lattices by means of ordered {S}tone
  spaces.
\newblock {\em Bull. Lond. Math. Soc.}, 2:186--190, 1970.

\bibitem{rasiowa1977}
Helena Rasiowa.
\newblock An algebraic approach to non-classical logics.
\newblock {\em Journal of Symbolic Logic}, 42(3):432--432, 1977.

\bibitem{restall-subint}
Greg Restall.
\newblock Subintuitionistic logics.
\newblock {\em Notre Dame Journal of Formal Logic}, 35(1):116--129, 1994.

\bibitem{routley1973semantics}
Richard Routley and Robert Meyer.
\newblock The semantics of entailment.
\newblock In {\em Studies in Logic and the Foundations of Mathematics},
  volume~68, pages 199--243. Elsevier, 1973.

\bibitem{Sofronie-Stokkermans00}
Viorica Sofronie{-}Stokkermans.
\newblock Duality and canonical extensions of bounded distributive lattices
  with operators, and applications to the semantics of non-classical logics
  {I}.
\newblock {\em Studia Logica}, 64(1):93--132, 2000.

\bibitem{Sofronie-Stokkermans00a}
Viorica Sofronie{-}Stokkermans.
\newblock Duality and canonical extensions of bounded distributive lattices
  with operators, and applications to the semantics of non-classical logics
  {II}.
\newblock {\em Studia Logica}, 64(2):151--172, 2000.

\bibitem{stone2}
Marshall~Harvey Stone.
\newblock Topological representation of distributive lattices and brouwerian
  logics.
\newblock {\em Casopsis pro Pestovani Matematiky a Fysiky}, 67:1--25, 1937.

\bibitem{stone1}
Marshall~Harvey Stone.
\newblock The representation of boolean algebras.
\newblock {\em Bull. Amer. Math. Soc.}, 44:807--816, 1938.

\bibitem{urq}
Alasdair Urquhart.
\newblock A topological representation of lattices.
\newblock {\em Algebra Universalis}, 8:45--58, 1978.

\bibitem{urq1979}
Alasdair Urquhart.
\newblock Distributive lattices with a dual homomorphic operation.
\newblock {\em Studia Logica}, 38(2):201--209, Jun 1979.

\bibitem{Urquhart1996}
Alasdair Urquhart.
\newblock Duality for algebras of relevant logics.
\newblock {\em Studia Logica}, 56(1):263--276, Jan 1996.

\end{thebibliography}

\end{document}